\pgfplotsset{compat=1.9}
\DeclareMathOperator{\diam}{diam}
\DeclareMathOperator{\height}{height}
\DeclareMathOperator{\supp}{supp}
\DeclareMathOperator{\poly}{poly}
\DeclareMathOperator{\Cov}{Cov}
\DeclareMathOperator{\Var}{Var}
\DeclareMathOperator{\dist}{dist}  
\DeclareMathOperator{\EP}{EP}
\DeclareMathOperator{\id}{id}
\newcommand{\inn}{\text{in}}
\newcommand{\out}{\text{out}}
\newcommand{\1}{\ensuremath{\mathbbm{1}}}
\newcommand{\Li}{\mathcal{L}}
\newcommand{\KMS}{\text{KMS}}
\newcommand{\GNS}{\text{GNS}}
\newcommand{\ceil}[1]{\lceil {#1} \rceil}
\newcommand{\floor}[1]{\lfloor {#1} \rfloor}
\newcommand{\BH}{\mathcal{B}(\mathcal{H})}
\newtheoremstyle{newdefinition}{}{}{\normalfont}{}{\bfseries}{}{\newline}
{\thmname{#1} \thmnumber{#2}\thmnote{ (#3)}}
\newtheoremstyle{newplain}{}{}{\itshape}{}{\bfseries}{}{1em}
{\thmname{#1} \thmnumber{#2}\thmnote{ (#3)}}
\newtheoremstyle{newremark}{}{}{\normalfont}{}{\bfseries}{}{1em}
{\thmname{#1}}
\theoremstyle{newdefinition}
\newtheorem{definition}{Definition}[section]
\theoremstyle{newplain}
\newtheorem{theorem}[definition]{Theorem}
\newtheorem{lemma}[definition]{Lemma}
\newtheorem{proposition}[definition]{Proposition}
\newtheorem{corollary}[definition]{Corollary}
\newtheorem{remark}[definition]{Remark}
\newtheoremstyle{myplain}{5pt}{5pt}{\itshape}{0pt}{\bfseries}{}{5pt plus 1pt minus 1pt}{}
\theoremstyle{myplain}
\newtheorem*{theorem*}{Theorem}
\newtheorem*{corollary*}{Corollary}
\DeclareMathOperator{\R}{\mathbb{R}} 
\DeclareMathOperator{\Z}{\mathbb{Z}}
\DeclareMathOperator{\HH}{\mathcal{H}}
\definecolor{Turquise}{HTML}{14C7DE}
\definecolor{SkyBlue}{HTML}{3498DB}
\definecolor{RoyalBlue}{HTML}{444FAD}
\definecolor{RoyalYellow}{HTML}{F1C40F}
\definecolor{purplemod}{HTML}{AF82F3}
\definecolor{darkgreen}{RGB}{106, 134, 104}
\definecolor{darkgreen2}{RGB}{90, 168, 143}
\definecolor{bluedots}{RGB}{15, 5, 168}
\definecolor{purpledots}{RGB}{168, 39, 110}
\begin{document}

\title{Rapid thermalization of dissipative many-body dynamics of commuting Hamiltonians}
\author[1,2]{Jan Kochanowski\thanks{jan.kochanowski@inria.fr}}
\author[3]{\'Alvaro M. Alhambra\thanks{alvaro.alhambra@csic.es}}
\author[4,5]{{\'A}ngela Capel\thanks{ac2722@cam.ac.uk}}
\author[2,6]{Cambyse Rouzé\thanks{rouzecambyse@gmail.com}}
\affil[1]{\small Max-Planck-Institute of Quantum Optics, Hans-Kopfermann-Strasse 1, 85748 Garching, Germany}
\affil[2]{Inria, Télécom Paris - LTCI, Institut Polytechnique de Paris, 91120 Palaiseau, France}
\affil[3]{Instituto de F\'isica Te\'orica UAM/CSIC, C/ Nicol\'as Cabrera 13-15, Cantoblanco, 28049 Madrid, Spain}
\affil[4]{Fachbereich Mathematik, Eberhard Karls Universit\"at T\"ubingen, Germany}
\affil[5]{Department of Applied Mathematics and Theoretical Physics, University of Cambridge, United Kingdom}
\affil[6]{Department of Mathematics, Technische Universit\"at München, Germany}

\maketitle
\vspace{-1cm}
\begin{abstract}

Quantum systems typically reach thermal equilibrium rather quickly when coupled to a thermal environment. The usual way of bounding the speed of this process is by estimating the spectral gap of the dissipative generator. However the gap, by itself, does not always yield a reasonable estimate for the thermalization time in many-body systems: without further structure, a uniform lower bound on it only constrains the thermalization time to grow polynomially with system size.

Here, instead, we show that for a large class of geometrically-2-local models of Davies generators with commuting Hamiltonians, the thermalization time is much shorter than one would naïvely estimate from the gap: at most logarithmic in the system size. This yields the so-called rapid mixing of dissipative dynamics. The result is particularly relevant for 1D systems, for which we prove rapid thermalization with a system size independent decay rate only from a positive gap in the generator. We also prove that systems in hypercubic lattices of any dimension, and exponential graphs, such as trees, have rapid mixing at high enough temperatures. We do this by introducing a novel notion of clustering which we call ``strong local indistinguishability'' based on a max-relative entropy, and then proving that it implies a lower bound on the modified logarithmic Sobolev inequality (MLSI) for nearest neighbour commuting models.  

This has consequences for the rate of thermalization towards Gibbs states, and also for their relevant Wasserstein distances and transportation cost inequalities.
Along the way, we show that several measures of decay of correlations on Gibbs states of commuting Hamiltonians are equivalent, a result of independent interest. At the technical level, we also show a direct relation between properties of Davies and Schmidt dynamics, that allows to transfer results of thermalization between both.

\end{abstract}

\newpage 
\setcounter{tocdepth}{2}        
\renewcommand{\contentsname}{Table of Contents}
\tableofcontents

\newpage

\section{Introduction} 

Physical systems in nature are most often coupled to an external environment, with which they eventually equilibrate. For quantum ones, that coupling implies that their dynamics are described by Quantum Markov Semigroups (QMS) of the form $\{e^{t\mathcal{L}}\}_{t\geq 0}$, which are generated by a Lindbladian super-operator $  \mathcal{L}(\cdot)$.

This so-called \emph{dissipative} evolution monotonically converges to a unique fixed point under a weak set of conditions \cite{art:Spohn1976,art:Frigerio1977}, which, roughly speaking, are satisfied as long as the evolution induced by the external coupling is sufficiently ergodic. Additionally, when that external coupling is very weak, and to an environment with a fixed temperature $\beta$, that unique fixed point is the Gibbs state
\begin{equation}
    \sigma^\Lambda=\frac{e^{-\beta H_\Lambda}}{\Tr[e^{-\beta H_\Lambda}]},
\end{equation}
where $H_\Lambda$ is the Hamiltonian of the system. The QMS describing those thermalization processes are then referred to as \emph{Davies maps} \cite{art:Davies1,DaviesBook}.

The Davies evolution is a Markovian approximation of the reduced state dynamics of a many-body spin system weakly-coupled to an infinite-dimensional environment in thermal equilibrium. This type of open system dynamics described by a master equations, which always has a QMS as a solution, is of high interest in the fields of quantum optics, condensed matter, chemical physics, statistical physics, quantum information, and mathematical physics. The interest in Markovian descriptions of open system dynamics has been further motivated by developments in quantum information theory and the study of decoherence. Davies evolutions, originally studied in \cite{art:Davies1archbold_1983}, frequently feature in the literature concerning thermalization of quantum systems, both from the physical and computational perspectives \cite{art:QuantumGibbsSamplers-kastoryano2016quantum, art:QuantumThermalchifang2023quantum,art:2localPaper, art:Bluhm2022exponentialdecayof}. 

One of the more important aspects to understand about these processes is: if the Gibbs state is always reached, independently of the initial conditions, how quickly does that happen? The speed of convergence to equilibrium or \emph{thermalization} can be expressed through the notion of the \emph{mixing time}.  Write $\rho_t:=e^{t\mathcal{L}_*}(\rho)$, and let $\mathcal{D}(\mathcal{H}) $ be the set of normalised density operators. Then for $\epsilon>0$, it is defined as
\begin{align}
    t_\text{mix}(\epsilon) := \inf\{t\geq 0 \, |  \, \forall \rho\in\mathcal{D}(\mathcal{H}) \;  \|\rho_t-\sigma \|_1\leq\epsilon\}.
\end{align}

The most frequent way of estimating this mixing time, both in quantum and classical scenarios, is through the spectral gap of the generator $\lambda(\mathcal{L})$. 
This can be expressed variationally through the \textit{Poincaré inequality} as 
\begin{align}
    \lambda\Var^{\KMS}_\sigma(X_t) \leq -\frac{d}{dt}\bigg|_{t=0}\Var^{\KMS}_{\sigma}(X) = -\langle X,\mathcal{L}(X)\rangle^\KMS_\sigma,
\end{align} where $ \langle X,Y\rangle_\sigma^{\KMS} := \Tr[\sqrt{\sigma}X^*\sqrt{\sigma}Y]$ and  $\Var^{\KMS}_\sigma(X):=  \langle X-\Tr[\sigma X]\1,X-\Tr[\sigma X]\1\rangle_\sigma^{\KMS}$ (see \Cref{subsec:weightednorms}).  The \textit{spectral gap} is the largest constant $\lambda$ which satisfies this inequality for all $X\in\mathcal{B}(\mathcal{H})$ \cite{art:bardet2017estimating,art:QLSIandrapidmixingKastoryano_2013,art:QuantumGibbsSamplers-kastoryano2016quantum}, i.e.
\begin{align}
    \lambda(\mathcal{L}) := \inf_{X\in\mathcal{B}(\mathcal{H})}\frac{-\langle X,\mathcal{L}(X)\rangle^\KMS_\sigma}{\Var^{\KMS}_\sigma(X)}.
\end{align}
This directly implies exponential decay of the variance, i.e.
$
    \Var^{\KMS}_\sigma(X_t)\leq e^{-\lambda(\mathcal{L})t}\Var^{\KMS}_\sigma(X)
$,
from which it follows that $ \|\rho_t-\sigma\|_1\leq \|\sigma^{-1}\|^{-\frac{1}{2}}e^{-\lambda(\Li) t}$, so that
\begin{equation}\label{def:VarianceMixingtime}
 t_\text{mix}(\epsilon)\le \frac{1}{\lambda(\mathcal{L})}\log(\epsilon^{-1}\|\sigma^{-\frac{1}{2}}\|)\,.
\end{equation} 
While this inequality is often a good approximation in small systems, it can be an enormous overestimation of the mixing time in many-particle settings. In that case, $\|\sigma^{-1}\|^{-1}=e^{\mathcal{O}(|\Lambda|)}$ with $|\Lambda|$ the system size and the upper bound of Eq. \eqref{def:VarianceMixingtime} scales as $\text{poly}(|\Lambda|)$. However, when the interactions among the particles have an underlying local structure, we expect that very often the mixing time in the worst case will be of the form $t_\text{mix}(\epsilon) \le \frac{1}{\gamma(\mathcal{L})}(\log(\epsilon^{-1})+\log |\Lambda|$), for $\gamma(\mathcal{L})$ possibly another constant depending on the Lindbladian. 

Heuristically, the reason for this is that the local structure of the interactions, both among the many particles and with the environment, may cause the effective dissipation to be local. In that case, the Lindbladian can be written as a sum of local jump operators, such that we can think of the thermalization of the whole system as a sum of roughly independent processes localized among regions of $\mathcal{O}(1)$ many particles. Since there are polynomially many such regions, the total convergence error $\epsilon$ should not be more than the sum of that of the individual regions. This ``divide and conquer" line of thought then suggests a convergence error $\epsilon \sim |\Lambda| e^{-\Omega (t)}$  \footnote{In fact, this is the scaling that one can trivially find when there are no interactions between all the particles.}. When this scaling holds, the mixing time grows at most as $\log |\Lambda|$ and we say the system displays \emph{rapid mixing}.

Rapid mixing is a defining feature of dissipative many-body dynamics, and comes along with a number of important consequences. The fact that an evolution has rapid mixing can be associated to properties of the correlations of the fixed point: for systems to reach a steady state quickly, it must be the case that the fixed points do not have features akin to long-range order. As such, the study of rapid mixing, both in the classical and quantum case, is very closely linked to the study of the correlation properties of their (thermal) fixed points. 

Along these lines, we know that dissipative evolutions with the rapid mixing property are stable under perturbations \cite{art:StabilityCubitt_2015}, and their fixed points have decay of correlations \cite{art:RapidMixingandDecayofCorrelationsKastoryano_2013}, display concentration properties \cite{art:De_PalmaRouzeConcentrationInequalities}, and equivalence of ensembles, among various other features associated with standard statistical ensembles. Additionally, rapid mixing signals the absence of dissipative phase transitions \cite{art:Diehl_2008,art:DissTheory} and rules out the usefulness of models as self-correcting quantum memories \cite{Rev:BrownSelf}. It is thus of great interest to understand when such property holds. 

While rapid mixing may be an intuitive feature of thermalizing dynamics, proving it is in general highly non-trivial. Nevertheless, progress has been made in recent years \cite{art:QuantumGibbsSamplers-kastoryano2016quantum,art:2localPaper,art:ApproxTensorizationBardet_2021,art:ImplicationsAndRapidTermalization-Cambyse,BardetCapelLuciaPerezGarciaRouze-HeatBath1DMLSI-2019,art:CompleteEntropicInequalities_GaoRouze_2022,art:chen2023fast}, mostly in the context of commuting interactions, through the concept of the MLSI (\textit{modified logarithmic Sobolev inequality}, see Section \ref{sec:cMLSI}) constant, $\alpha\equiv \alpha(\mathcal{L})>0$. This quantity directly yields the estimate
\begin{equation}\label{equ:rapidmixing}
t_\text{mix}(\epsilon) \leq \frac{1}{\alpha}\mathcal{O}\left(\log\frac{1}{\epsilon}+\log|\Lambda|\right),
\end{equation}
so that rapid mixing can be proven via lower bounds on $\alpha$.

For Davies evolutions of 1-dimensional systems with uniform geometrically-local, commuting, and translation invariant Hamiltonians, it was shown in \cite{art:ImplicationsAndRapidTermalization-Cambyse} that there exists a strictly positive MLSI constant $\alpha(\mathcal{L})=\Omega((\log|\Lambda|)^{-1})$ at any temperature. While this guarantees rapid mixing with a polylogarithmic scaling, it does not yet reach the optimal $\mathcal{O}(1)$ constant rate of exponential decay with time that is expected on physical grounds. So far, this optimal scaling was only known for on-site depolarizing noise \cite{art:QuantumConditionalEntropyCapel_2018,BeigiDattaRouze-ReverseHypercontractivity-2018,MullerHermesFrancaWolf-EntropyProduction-2016,MullerHermesFrancaWolf-DepolarizingChannels-2016} and, in a more general context, for the Schmidt generators (which are a less physically motivated thermalization process, see  \Cref{subsec:SchmidtCondExp}) of a system in hypercubic latices in dimensions $D\geq 2$ above a threshold temperature, and uniform nearest neighbour commuting Hamiltonian \cite{art:2localPaper}. 

\subsection{Summary of results}

In this paper, we prove \textit{rapid mixing} for the Davies dynamics of a large class of lattice models of commuting Hamiltonians. We show that in 1D the property of rapid mixing follows directly from the presence of a gap in the Davies generator (which can be proven from first principles), and that it can also be proven at high temperatures with a great degree of generality in higher degree lattices. To do this, we derive relations between different measures of correlations at the fixed point, including a novel notion that we term ``strong local indistinguishability", which is based on the max-relative entropy between the marginals of the fixed point and we find to be directly linked to the strategies for proving rapid mixing.    

Our first main result is for 1D systems, where we achieve the optimal $\Omega(1)$ scaling for the MLSI constant.
\begin{theorem}[Optimal rapid thermalization in 1D, informal]
In 1D, for Davies generators of commuting, local Hamiltonians, having a positive gap is equivalent to the existence of a system-size independent positive MLSI constant $\alpha(\mathcal{L}^D_\Gamma)=\Omega(1)_{|\Gamma|\to \infty}$. This yields optimal rapid mixing at all positive temperatures $\beta^{-1}$ for these models. 
\end{theorem}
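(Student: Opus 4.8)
The plan is to prove the two implications separately and then read off rapid mixing from the MLSI estimate \eqref{equ:rapidmixing}. One direction is soft: a system-size independent MLSI constant always forces a system-size independent gap, because linearising the relative entropy around $\sigma^\Gamma$ inside the MLSI produces a Poincaré inequality with constant proportional to $\alpha(\mathcal{L}^D_\Gamma)$, so $\lambda(\mathcal{L}^D_\Gamma) = \Omega(1)$. The substance is the converse, upgrading a uniform gap to a uniform MLSI constant. I emphasise that in 1D the gap hypothesis is automatically satisfied at every temperature --- Gibbs states of local commuting Hamiltonians cluster exponentially in one dimension, and for commuting models a gapped Davies generator is equivalent to such clustering --- so the theorem ultimately yields \emph{unconditional} rapid thermalization in 1D.

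For the converse I would run the following chain. First, from the gap pass to clustering of the fixed point: combining the gap-to-clustering equivalence for commuting Davies generators with the equivalences between the several decay-of-correlations measures on Gibbs states of commuting Hamiltonians established earlier in the paper, one deduces \emph{strong local indistinguishability}, namely an exponentially small max-relative-entropy distance between the marginal of $\sigma^\Gamma$ on a region and the tensor product of its marginals on sub-regions separated by a buffer. Second, feed strong local indistinguishability into the approximate-tensorization / divide-and-conquer scheme for the entropy production: cover $\Gamma$ by overlapping blocks of $\mathcal{O}(1)$ size, whose individual MLSI constants are trivially $\Omega(1)$, lower bound the global entropy production by the sum of the block-wise ones up to an error term, and absorb that error using the max-relative-entropy clustering; this delivers a system-size independent MLSI constant for the Schmidt (heat-bath) generator associated with $\sigma^\Gamma$. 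Third, transfer this MLSI to the Davies generator through the direct comparison between the Davies and Schmidt Dirichlet forms, whose comparison constant is governed by the spectral gap of the Davies generator --- exactly the quantity assumed positive, and in 1D known to be so from first principles. The outcome is $\alpha(\mathcal{L}^D_\Gamma) = \Omega(1)$, and substituting into \eqref{equ:rapidmixing} gives $t_\text{mix}(\epsilon) \le \mathcal{O}(\log|\Gamma| + \log \epsilon^{-1})$, i.e. optimal rapid mixing at all positive temperatures $\beta^{-1}$.

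The main obstacle is the second step: making the approximate tensorization \emph{lossless} in $|\Gamma|$. The $L_2$-type clustering conditions used in earlier work only close the recursion at the price of a polylogarithmic penalty --- precisely the gap between the previously known $\Omega((\log|\Gamma|)^{-1})$ bound and the optimal one --- whereas the max-relative-entropy formulation of clustering is stable under conditioning and under the coarse-graining steps of the recursion, so the accumulated error stays controlled and the constant does not degrade with system size. A secondary point is that the Davies--Schmidt transfer of the third step must not itself re-introduce a size dependence, which is why the right hypothesis is a \emph{uniform} gap rather than merely a positive gap in each finite volume (the latter being automatic by finite-dimensionality and irreducibility of the generator).
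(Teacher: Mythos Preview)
Your overall architecture --- gap $\Rightarrow$ clustering $\Rightarrow$ approximate tensorization $\Rightarrow$ MLSI, with MLSI $\Rightarrow$ gap being the soft direction --- matches the paper. The substantive divergence is your Step~3, the Schmidt-to-Davies transfer. You propose a Dirichlet-form (or entropy-production) comparison $\EP_{\mathcal{L}^D}\gtrsim c(\lambda)\,\EP_{\mathcal{L}^S}$ with constant controlled by the Davies gap. The paper establishes no such comparison, and obtaining one that is \emph{uniform in system size} is not obvious; this is a genuine gap in your sketch. Instead, the paper's transfer is purely at the level of relative entropies and is essentially free: since the fixed-point algebras satisfy $\mathcal{F}^D_{X\partial}\subset\mathcal{F}^S_X\subset\mathcal{F}^D_X$, one has $D(\rho\|E^S_{X*}\rho)\le D(\rho\|E^D_{X\partial*}\rho)$ (Lemma~\ref{lem:RelEntCondExpBound}) with no multiplicative constant and no appeal to the gap. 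The gap is used exactly once, at the start, to produce clustering; it plays no role in the transfer.

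A secondary structural difference: the paper never proves a global Schmidt MLSI and then transports it. It splits $D(\rho\|\sigma)=D(\rho\|\omega)+D(\omega\|\sigma)$ via the chain rule, where $\omega=E^S_{\Gamma_0*}(\rho)$ is the ``semiclassical'' state coming from the 2-colouring. The first term tensorizes exactly (the single-site Schmidt expectations on one colour class commute); the second is handled by a Ces\`aro-averaged recursion for $\omega$ only, which is where clustering enters via the q$\mathbb{L}_1\!\to\!\mathbb{L}_\infty$ condition (Theorem~\ref{thm:weaktostrongclustering}). After converting each Schmidt relative entropy to a Davies one on the enlarged region $X\partial$ via Lemma~\ref{lem:RelEntCondExpBound}, the local cMLSI constants invoked at the end are already \emph{Davies} constants on $O(1)$-sized regions (Theorem~\ref{thm:finiteregioncMLSI}). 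Your sketch of the tensorization step is correct in spirit, but the intermediate state $\omega$ and the colour-splitting are precisely what make the recursion close without the $\log|\Gamma|$ penalty of earlier work. A minor terminological point: what you describe as strong local indistinguishability (marginal close to tensor product of marginals in max-relative entropy) is the paper's \emph{mixing condition}; strong local indistinguishability is the statement $\tr_{BC}[\sigma^{ABC}]\overset{\epsilon}{\sim}\tr_B[\sigma^{AB}]$, and both feed into the proof of q$\mathbb{L}_1\!\to\!\mathbb{L}_\infty$ clustering.
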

This is a strict strengthening of the previous 1D results \cite{art:EntropyDecayOf1DSpinChain-Cambyse,art:ImplicationsAndRapidTermalization-Cambyse,art:QuantumGibbsSamplers-kastoryano2016quantum}  with an additional extension to the non translation-invariant setting due to \cite{kimura2024clustering}. The formal version of this result can be found in Theorem \ref{thm:mainGap} and Corollary \ref{thm:main1D}. Under the same assumption on the gap, we also get an, over the simple gap assumption, square-root improved mixing time for $2D$ lattices assuming the decay of correlations is strong enough.
\begin{theorem}[Sub-linear thermalization in 2D, informal]
In 2D, for Davies generators of commuting, nearest-neighbour Hamiltonians, having a positive gap and a sufficiently small correlation length is equivalent to the existence of a strictly positive square root decreasing MLSI constant $\alpha(\mathcal{L}^D_\Gamma)=\Omega(|\Gamma|^{-\frac{1}{2}})_{|\Gamma|\to \infty}$. This implies a mixing time that scales at worst with the square root of the system, up to a logarithmic correction. 
\end{theorem}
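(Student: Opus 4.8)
The plan is to prove the forward implication — that a positive gap together with a sufficiently small correlation length force $\alpha(\mathcal{L}^D_\Gamma)=\Omega(|\Gamma|^{-1/2})$ — the converse being the routine direction, following from the comparison $\lambda(\mathcal{L})\ge 2\alpha(\mathcal{L})$ together with the equivalences between the spectral gap and the various decay-of-correlations measures on commuting Gibbs states established earlier. For the forward direction I would first pass, via the direct relation between Davies and Schmidt generators, to the Schmidt generator $\mathcal{L}^S_\Gamma$, whose conditional expectations onto subregions behave transparently on commuting Gibbs states; an MLSI lower bound for $\mathcal{L}^S_\Gamma$ is transported back to $\mathcal{L}^D_\Gamma$ up to a system-size-independent factor.

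Next I would turn the spectral hypothesis into a statement about the fixed point: via the chain of equivalences culminating in \emph{strong local indistinguishability}, a positive gap yields a max-relative-entropy clustering bound for $\sigma^\Gamma$ of the schematic form $D_{\max}\!\big(\sigma^\Gamma_{AB}\,\big\|\,\sigma^\Gamma_{A}\otimes\sigma^\Gamma_{B}\big)\le C\,e^{-\dist(A,B)/\xi}$, where "sufficiently small correlation length'' is the requirement $\xi<\xi_0$ for an explicit threshold $\xi_0$ depending only on the local dimension $d$ and the locality of the model. This is the input that makes the tensorization below converge.

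The heart of the argument is a geometric decomposition adapted to two dimensions. Writing $\Gamma\simeq[L]\times[L]$ with $L=\sqrt{|\Gamma|}$, I would group the sites into the $L$ columns $C_1,\dots,C_L$ and regard $\mathcal{L}^S_\Gamma$ as a nearest-neighbour commuting dynamics on a one-dimensional chain of $L$ supersites of local dimension $d^{L}$. Strong local indistinguishability with a fast enough profile produces an \emph{approximate tensorization} of the relative entropy along this chain,
\begin{equation*}
D\big(\rho\,\big\|\,\mathcal{E}^{*}_{\Gamma}(\rho)\big)\ \le\ c\sum_{i=1}^{L} D\big(\rho\,\big\|\,\mathcal{E}^{*}_{C_i^c}(\rho)\big),
\end{equation*}
with a constant $c$ that stays bounded precisely because $\xi<\xi_0$ controls the clustering across the length-$L$ interface separating consecutive groups of columns. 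Each term on the right is the relative entropy dissipated by resampling a single column; a column with its neighbours frozen is a one-dimensional commuting chain inheriting a positive gap from the global assumption, so our 1D result endows it with an $\Omega(1)$ MLSI constant. Passing from approximate tensorization plus a positive MLSI of the pieces to a global MLSI, as in the commuting-case framework, yields the bound for $\mathcal{L}^S_\Gamma$, and the $|\Gamma|^{-1/2}$ factor is the price of the column's linear extent $L$ entering the one-dimensional estimate — through the effective local dimension $d^{L}$, equivalently the single-supersite interaction strength $\Theta(L)$. Feeding $\alpha(\mathcal{L}^D_\Gamma)=\Omega(|\Gamma|^{-1/2})$ into \eqref{equ:rapidmixing} gives $t_{\text{mix}}(\epsilon)=O\!\big(\sqrt{|\Gamma|}\,(\log|\Gamma|+\log\epsilon^{-1})\big)$.

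I expect the main obstacle to be controlling the approximate-tensorization error across a cut whose interface grows with the system. In genuine 1D the error introduced by splitting the chain is a single exponentially small quantity, whereas here the interface between two groups of columns contains $\Theta(L)$ sites, so the naive bound behaves like $\Theta(L)\,e^{-\Theta(1)/\xi}$ and is useless unless $\xi$ lies below an honest constant threshold. Making this quantitative in the max-relative-entropy language — checking that strong local indistinguishability survives being conditioned on and tensorized over linearly many interfaces, with a rate good enough to leave a clean $\sqrt{|\Gamma|}$ rather than a worse polynomial — is the technical crux, and it is what simultaneously forces the "sufficiently small correlation length'' hypothesis and the loss of the square-root factor relative to the one-dimensional case.
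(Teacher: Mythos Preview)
Your route differs from the paper's and, as stated, has an internal inconsistency that amounts to a genuine gap. You claim the approximate-tensorization constant $c$ along the chain of columns ``stays bounded'' and then separately invoke the 1D result to give each column an $\Omega(1)$ MLSI; but if both held you would conclude $\alpha(\mathcal{L}^D_\Gamma)=\Omega(1)$, not $\Omega(|\Gamma|^{-1/2})$. The place this breaks is the prefactor in the clustering bound: under the gap hypothesis the paper's Theorems~\ref{thm:ddimStrongLocalIndist} and~\ref{thm:weaktostrongclustering} give $\eta_{C,D}(l)=\exp\big(\mathcal{O}(|\partial(C\setminus D)|+|\partial(D\setminus C)|)\big)\,e^{-l/\xi}$, so across an interface of $\Theta(L)$ sites the prefactor is $e^{\Theta(L)}$, not the $\Theta(L)$ you write in your final paragraph. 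Beating $e^{\Theta(L)}$ forces the overlap between column-blocks to be $\Theta(L)$ columns wide, which is exactly the Case~2 regime of Lemma~\ref{thm:technicalMainThm}: the recursion then accumulates a constant multiplicative factor over $\Theta(\log L)$ levels and produces a tensorization constant $C(L)=\Theta(L)=\Theta(\sqrt{|\Gamma|})$, not $\Theta(1)$. That is where the $\sqrt{|\Gamma|}$ actually lives, not in the effective local dimension $d^L$ or the supersite interaction strength.

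For comparison, the paper does \emph{not} reduce to a 1D chain of columns. It runs a genuinely two-dimensional recursion (Lemma~\ref{lemma:OmegacMLSIcubes}): jagged rectangles are split simultaneously in both directions with overlap a constant fraction of the side length, so after $\Theta(\log L)$ steps one reaches regions $R_k$ of $\Theta(1)$ original sites whose local cMLSI is $\Omega(1)$ by Theorem~\ref{thm:finiteregioncMLSI}. The $\sqrt{|\Gamma|}$ loss is then entirely in $C(L)$. Your column idea can be repaired to give the same bound --- run the Case~2 recursion on the chain of column-supersites to obtain $C(L)=\Theta(L)$, then apply the 1D result to the resulting $\Theta(1)$-column strips to get an $\Omega(1)$ local MLSI --- but you must relocate the $\sqrt{|\Gamma|}$ to the tensorization constant and acknowledge the exponential (not linear) interface prefactor; as written, the bookkeeping does not close.
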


The formal version of this result can also be found in Theorem \ref{thm:mainGap}.
For higher dimensional lattices in the high temperature regime we also give strict improvement in the following.
\begin{theorem}[Rapid thermalization at high temperature, informal] Nearest neighbour, commuting potentials at sufficiently high temperature satisfy a MLSI with 
\begin{itemize}
    \item[1)] system-size independent constant $\alpha(\mathcal{L}^D_\Gamma)=\Omega(1)_{|\Gamma|_\to\infty}$ when on a sub-exponential graph, e.g. $\mathbb{Z}^D$ for any $D\in\mathbb{N}$, or
    \item[2)] log-decreasing constant $\alpha(\mathcal{L}^D_\Gamma)=\Omega((\log|\Gamma|)^{-1})_{|\Gamma|_\to\infty}$ when on an exponential graph, e.g. a $b$-ary tree $\mathbb{T}_b$ for $b>1$.
\end{itemize} 
In both of these cases the Davies dynamics displays rapid mixing.
\end{theorem}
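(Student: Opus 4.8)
The plan is to chain together four steps: (i) at sufficiently high temperature the Gibbs state $\sigma$ of a nearest-neighbour commuting Hamiltonian clusters exponentially; (ii) this clustering upgrades to \emph{strong local indistinguishability} (SLI) for $\sigma$; (iii) SLI implies a lower bound on the MLSI constant $\alpha(\mathcal{L}^D_\Gamma)$ whose size is governed by the growth of balls in the underlying graph; and (iv) the MLSI bound yields rapid mixing via \eqref{equ:rapidmixing}. For step (i) I would invoke a convergent high-temperature/cluster expansion for Gibbs states of geometrically local (commuting) Hamiltonians, valid whenever $\beta$ is below a threshold depending only on the local dimension and the degree of $\Gamma$: this gives, for regions $A,B$ at graph distance $\ell$, that $\sigma_{AB}$ and $\sigma_A\otimes\sigma_B$ agree in operator norm up to an error $\lesssim |\partial|\, e^{-\ell/\xi}$ with a correlation length $\xi=\xi(\beta)\to 0$ as $\beta\to 0$. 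To obtain the max-relative-entropy version needed for SLI, I would combine this with the fact that at high temperature $\sigma$ restricted to any region is within a constant multiplicative factor of the maximally mixed state (its minimal eigenvalue is lower bounded by $c^{|\text{region}|}$ for a controlled constant $c$), so that operator-norm closeness of $\sigma_{AB}$ and $\sigma_A\otimes\sigma_B$ turns into two-sided operator inequalities $e^{-\delta(\ell)}\,\sigma_A\otimes\sigma_B \le \sigma_{AB}\le e^{\delta(\ell)}\,\sigma_A\otimes\sigma_B$ with $\delta(\ell)$ still exponentially small. Passing through the equivalences between clustering notions established earlier in the paper then gives SLI with clustering function $\delta$.

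For step (iii) I would apply the paper's main technical result, that SLI with clustering function $\delta$ implies a lower bound on the MLSI constant of the Davies generator of a nearest-neighbour commuting Hamiltonian. This reduces matters to two geometric inputs at a length scale $\ell$: the intrinsic MLSI of the Davies generator restricted to a diameter-$\ell$ block, which is positive by finite-dimensionality and primitivity, and an approximate-tensorization constant obtained by recursively splitting $\Gamma$ into overlapping diameter-$\ell$ blocks with buffers, where the error picked up at each splitting step is controlled by $\delta(\ell)$ times the number of sites in a ball of radius $\ell$.

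The two cases of the theorem now correspond to two regimes of ball growth. On a sub-exponential graph --- in particular $\mathbb{Z}^D$, whose balls have polynomial volume $O(\ell^D)$ --- one has $\delta(\ell)\cdot|B_\ell|\to 0$ as $\ell\to\infty$ for every fixed $\xi$, so there is a fixed $\ell_0=O(1)$ at which the per-step error is below $1/2$; combining this with a colouring of $\Gamma$ by $O(1)$ colours, each colour class a disjoint union of well-separated diameter-$\ell_0$ blocks, yields approximate tensorization of entropy with an $O(1)$ constant onto $O(1)$-size blocks, and hence $\alpha(\mathcal{L}^D_\Gamma)=\Omega(1)$. On an exponential graph such as a $b$-ary tree, $|B_\ell|\sim b^\ell$, so the clustering can outrun the volume growth only above a stricter temperature threshold (roughly, $\xi^{-1}$ must exceed the exponential growth rate of $\Gamma$); even then, no fixed buffer length survives iteration, and one is forced to take blocks of size $\Theta(\log|\Gamma|)$ --- equivalently, to run the multiscale recursion to depth $\Theta(\log|\Gamma|)$ --- which degrades the bound by a single logarithmic factor to $\alpha(\mathcal{L}^D_\Gamma)=\Omega((\log|\Gamma|)^{-1})$. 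In both cases, inserting the resulting MLSI bound into $t_{\mathrm{mix}}(\epsilon)\le \alpha^{-1}\mathcal{O}(\log\epsilon^{-1}+\log|\Gamma|)$ establishes rapid mixing.

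I expect the main obstacle to be the multiscale analysis on exponential graphs: one must organise the decomposition so that the exponentially decaying clustering function beats the exponentially growing ball volumes at \emph{every} scale simultaneously, which is what both pins the ``sufficiently high temperature'' threshold to $\xi^{-1}$ exceeding the growth rate of $\Gamma$ and obstructs an $O(1)$ constant; careful bookkeeping of how the $\delta(\ell)$-errors compound over the $\Theta(\log|\Gamma|)$ scales is then needed to land exactly on the $(\log|\Gamma|)^{-1}$ rate rather than something worse. A secondary difficulty is controlling the \emph{max}-relative entropy clustering uniformly in the sizes of the regions involved, for which the high-temperature full-rank lower bound on $\sigma$ is essential.
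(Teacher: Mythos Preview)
Your proposal has the right multiscale intuition but a real gap in step (ii), and it misidentifies the clustering input actually used in the high-temperature case. You claim the full-rank bound $\lambda_{\min}(\sigma|_R)\ge c^{|R|}$ converts operator-norm closeness into two-sided operator inequalities with $\delta(\ell)$ ``still exponentially small''. But dividing an additive error by $\lambda_{\min}$ yields $\delta(\ell)\lesssim c^{-|R|}\,|\partial|\,e^{-\ell/\xi}$, i.e.\ an \emph{exponential} prefactor in the sizes of the regions being compared --- exactly the boundary prefactors that appear in \Cref{thm:ddimStrongLocalIndist}. In the recursion those regions must grow (to diameter $\sim\ell^{2}$ in Case~1), so on $\mathbb{Z}^D$ with $D\ge 2$ the factor $e^{\mathcal{O}(\ell^{2D})}$ swamps $e^{-\ell/\xi}$ and $\delta$ does not decay. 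This is precisely why the gap-based route of \Cref{thm:mainGap} stalls at $D=2$; your ``secondary difficulty'' is in fact the main obstruction, and the argument you sketch does not overcome it.

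The paper avoids this by \emph{not} passing through SLI in the high-temperature regime. The input to \Cref{thm:technicalMainThm} is the q$\mathbb{L}_1\!\to\!\mathbb{L}_\infty$ decay of the \emph{Schmidt} conditional expectations (\Cref{equ:q1toinfty_clustering}), which at high temperature is obtained directly via cluster expansion with only a \emph{linear} prefactor $|C\cup D|$ (\Cref{thm:clusteringFromHighTemp}); that linear prefactor is what makes \Cref{prop:important3} go through for regions of diameter $\mathcal{O}(\ell^{2})$ on sub-exponential graphs and $\mathcal{O}(\ell)$ on exponential ones. The Schmidt proxy is essential throughout: the approximate tensorization (\Cref{thm:ApproxTensorizationForOmega}) holds only for the semiclassical state $\omega=E^S_{\Gamma_0*}(\rho)$, and the chain-rule splitting $D(\rho\|\sigma)=D(\rho\|\omega)+D(\omega\|\sigma)$ together with \Cref{lem:RelEntCondExpBound} is what transfers the bound to the Davies generator. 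Your proposal omits this machinery entirely; your geometric reading of the two cases is correct, but without the linear-prefactor clustering and the Schmidt-to-Davies comparison it cannot reach $D\ge 2$.
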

For trees the bound on mixing times is novel within the quantum setting, while for hypercubic lattices it generalises the result of \cite{art:2localPaper} from the less physically motivated Schmidt dynamics to the Davies generators. The formal version of this result can be found in Theorem \ref{thm:mainTemp}. An overview of these mixing time results can also be found in \ref{tab:mainresultsinformal}.

\begin{table}[h]
    \centering
    \begin{tabular}{|c|c|c|c|}
    \hline Assumptions  & Lattice & Mixing results  & Previous results \\ \hline \hline
   any positive temperature & $\mathbb{Z}$ & \cellcolor{green} $\begin{array}{c}
         \text{optimal,}  \\
          t_{mix}=\mathcal{O}(\log|\Gamma|)
    \end{array} $      & \cellcolor{lime} \cite{art:EntropyDecayOf1DSpinChain-Cambyse}: $t_{mix}=\mathcal{O}(\log|\Gamma|^2)$ \\ \hline 
    gap, small $\xi$ & $\mathbb{Z}^2$ & \cellcolor{yellow}  $\begin{array}{c}
         \text{sub-linear,}  \\
          t_{mix}=\mathcal{O}(\sqrt{|\Gamma|}\log|\Gamma|)
    \end{array} $    & \cellcolor{orange} \eqref{def:VarianceMixingtime}: linear, $t_{mix}=\mathcal{O}(|\Gamma|)$  \\ \hline 
    high temperature & $\mathbb{Z}^D$,\footnote{and any 2-colorable subexponential graphs} & \cellcolor{green} $\begin{array}{c}
         \text{optimal}  \\
          t_{mix}=\mathcal{O}(\log|\Gamma|)
    \end{array} $ & \cellcolor{orange} \cite{art:QuantumGibbsSamplers-kastoryano2016quantum} linear, $t_{mix}=\mathcal{O}(|\Gamma|)$  \\
    \hline 
    high temperature, small $\xi^\prime$ & $\mathbb{T}_b$,\footnote{and general exponential graphs with finite growth constant} & \cellcolor{lime} $ \begin{array}{c}
         \text{rapid mixing,}  \\
          t_{mix}=\mathcal{O}(\textup{poly}\log|\Gamma|)
    \end{array} $  & \cellcolor{red}  \\  \hline 
\end{tabular}
    \caption{Summary of the main results of this paper in terms of mixing times of Davies generators associated to systems with nearest neighbour commuting interactions ordered by system lattice and assumptions required. In the $1D$ setting the interactions need not be nearest-neighbour. $\xi$ denotes the ($\mathbb{L}_\infty$) correlations decay length, , see def \ref{def:LinfinityClustering}, and $\xi^\prime$ the q$\mathbb{L}_1\to\mathbb{L}_\infty$-decay length, see def \ref{def:qL1toLinfty}. For formal statements of these results see Section \ref{sec:main}. The \sethlcolor{green} \hl{green} denotes optimal scaling; the \sethlcolor{lime}\hl{lime} not optimal, but still rapid thermalization; the \sethlcolor{yellow}\hl{yellow} denotes thermalization that is better than just the gap assumption, but not rapid; the\sethlcolor{orange} \hl{orange}, linear scaling achieved directly from gap; and the \sethlcolor{red}\hl{red} no prior results. } 
    \label{tab:mainresultsinformal}
\end{table}

A key ingredient in obtaining these results is by showing equivalence between several different notions of clustering, meaning that for fixed-size regions, exponential decay of one measure is equivalent to exponential decay of others. More concretely, we have the following implications.
\begin{theorem}[Equivalence of clustering notions (const. size), informal]\label{thm:equiv_decay_corr_informal}
For regions of fixed finite size, an exponential decay in the following notions of clustering of Gibbs states of geometrically-local, commuting Hamiltonians is equivalent, and implied by uniform gap of the Davies generator:
\begin{itemize}
    \item[1.] Uniform $\mathbb{L}_2$-clustering (\Cref{def:L2Clustering})
    \item[2.] Uniform decay of covariance ($\mathbb{L}_\infty$-clustering) (\Cref{def:LinfinityClustering})
    \item[3.] Uniform local indistinguishability (\Cref{def:local_indistinguishability})
    \item[4.] Uniform decay of mutual information (\Cref{def:mutual_information})
    \item[5.] Uniform strong local indistinguishability (\Cref{def:strong_local_indistinguishability})
    \item[6.] Uniform mixing condition (\Cref{def:mixing_condition})
\end{itemize}
\end{theorem}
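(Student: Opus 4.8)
The plan is to prove the chain of equivalences $1\Leftrightarrow 2\Leftrightarrow 3\Leftrightarrow 4\Leftrightarrow 5\Leftrightarrow 6$ together with the implication ``uniform gap of the Davies generator $\Rightarrow 1$'', all at fixed region size. Everything is driven by two structural features of Gibbs states of commuting, geometrically-local Hamiltonians. The first is the \emph{commuting marginal factorization}: for any region $A$ one splits $H_\Lambda = H_A^{\mathrm{bulk}}+H_{\partial A}+H_{A^c}^{\mathrm{bulk}}$ into mutually commuting pieces, so that $\sigma_A^\Lambda \propto e^{-\beta H_A^{\mathrm{bulk}}}\,M_{\partial A}$ with $M_{\partial A}\geq 0$ supported on a boundary belt of fixed width and commuting with $e^{-\beta H_A^{\mathrm{bulk}}}$; all the dependence of $\sigma_A^\Lambda$ on the rest of the system is funnelled through $M_{\partial A}$. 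The second is the \emph{uniform positivity of bounded-region marginals}: for $A$ of bounded size $k$ one has $\sigma_A^\Lambda\succeq \delta\,\1$ with $\delta=e^{-\mathcal{O}(k)}$ independent of $|\Lambda|$, because both factors above have condition number $e^{\mathcal{O}(k)}$. It is this second fact that makes ``const.\ size'' the natural regime and that powers the more quantitative equivalences.

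For the ``standard'' part, $\mathrm{gap}\Rightarrow 1\Leftrightarrow 2\Leftrightarrow 3\Leftrightarrow 4$: I would obtain $\mathrm{gap}\Rightarrow 1$ from the Poincar\'e inequality, which gives $\Var^{\KMS}_\sigma(X_t)\leq e^{-\lambda t}\Var^{\KMS}_\sigma(X)$; testing against observables on two separated fixed-size regions and using that the Davies conditional expectations of a commuting model act locally (so that $e^{t\mathcal{L}}$ cannot transport a far-away observable until a time of order its distance), one optimizes over $t$ to get exponential decay of the KMS covariance, i.e.\ $\mathbb{L}_2$-clustering. The equivalence $1\Leftrightarrow 2$ is norm-equivalence on fixed-dimensional operator spaces: the KMS weights $\sigma_A,\sigma_C$ sit between $\delta\1$ and $\|\sigma_A\|\1$ by uniform positivity, so GNS/KMS and operator norms differ by $|\Lambda|$-independent factors. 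For $2\Leftrightarrow 3$, the marginal factorization identifies an $\mathbb{L}_\infty$-covariance between $A$ and a far region with the change of $\sigma_A$ (equivalently of $M_{\partial A}$) caused by modifying that far region — which is exactly local indistinguishability — and conversely. And $3\Leftrightarrow 4$ is a dimension-free continuity estimate: $I(A{:}C)_\sigma = D(\sigma_{AC}\|\sigma_A\otimes\sigma_C)$ is controlled above by an Alicki--Fannes/Winter-type function of the trace-norm deviation of $\sigma_{AC}$ from a product (supplied by local indistinguishability around $A\cup C$ through the factorization), while Pinsker plus uniform positivity reverses the bound.

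The heart of the theorem is $3\Leftrightarrow 5$ and its companion $5\Leftrightarrow 6$. Since max-relative entropy dominates the trace distance, $5\Rightarrow 3$ is immediate, and the content is $3\Rightarrow 5$: if $\sigma_A,\sigma_A'$ are bounded-region Gibbs marginals with $\|\sigma_A-\sigma_A'\|_1\leq\varepsilon$, then $\sigma_A\preceq\sigma_A'+\varepsilon\1\preceq(1+\varepsilon/\delta)\sigma_A'$ and symmetrically, so $D_{\max}(\sigma_A\|\sigma_A')\leq\log(1+\varepsilon/\delta)\leq\varepsilon/\delta$; hence exponential decay in trace norm upgrades to exponential decay in max-relative entropy with only a constant multiplicative loss in the rate — an argument that collapses for growing regions where $\delta$ decays, which is the structural reason the statement is confined to fixed size. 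The mixing condition (item 6) is a two-sided operator-norm decoupling of $\sigma_{AC}$ from $\sigma_A\otimes\sigma_C$; by the marginal factorization the only obstruction to exact product form lives on the two boundary belts, and by uniform positivity the operators involved are invertible with condition number $e^{\mathcal{O}(1)}$, so the mixing condition becomes equivalent to $D_{\max}$-closeness of the conditioned belt operators, i.e.\ to strong local indistinguishability, closing the chain.

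The hard part will be exactly $3\Rightarrow 5$ — converting a distance-type clustering into a two-sided, multiplicative max-relative-entropy clustering — and, relatedly, checking that \emph{every} continuity and norm-equivalence constant in the ``standard'' part is genuinely $|\Lambda|$-independent rather than secretly growing with the bulk. Both difficulties reduce to the same two commuting-Hamiltonian inputs (the marginal factorization and the uniform bound $\sigma_A^\Lambda\succeq\delta\1$ on bounded regions), so the real labour is in setting these up cleanly and tracking constants. A secondary technical nuisance is making the locality estimate behind $\mathrm{gap}\Rightarrow 1$ quantitative enough to deliver an honest exponential rate rather than a stretched-exponential one, which may require a Lieb--Robinson-type bound for the Davies semigroup or a detectability-lemma style argument.
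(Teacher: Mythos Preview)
Your approach is correct for the fixed-size statement and the key new implication $3\Rightarrow 5$ is done by a genuinely different---and simpler---argument than the paper's. The paper does not pass through uniform positivity of marginals at all; instead it proves $2\Rightarrow 5$ and $2\Rightarrow 6$ directly (Theorem~\ref{thm:ddimStrongLocalIndist}) by rewriting the relevant ratios in terms of Araki's expansionals $E_{A,B}=e^{-\beta H_{AB}}e^{\beta(H_A+H_B)}$ and controlling their norms via Lemma~\ref{lemma:ImportantCommutingLemma1}, then invoking ordinary local indistinguishability inside the computation. Your route is the elementary operator inequality $\omega\preceq\tau+\varepsilon\1\preceq(1+\varepsilon/\delta)\tau$, which upgrades $\|\omega-\tau\|_1\le\varepsilon$ to $D_{\max}(\omega\|\tau)\le\varepsilon/\delta$ once $\tau\succeq\delta\1$; this is clean and conceptually transparent.

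The trade-off is quantitative. Your $\delta$ scales as $e^{-\mathcal{O}(|A|)}$ (the commuting factorization bounds $\lambda_{\min}(\sigma_A)$ in terms of \emph{all} interaction terms touching $A$, including bulk ones), whereas the expansional method tracks only the boundary interactions and yields prefactors $e^{\mathcal{O}(\beta|\partial A|)}$. For fixed-size regions both are constants, so your argument suffices for the informal theorem. But the paper later applies these bounds to \emph{growing} regions in 2D (Theorem~\ref{thm:mainGap}), where the difference between $|\partial A|$ and $|A|$ is exactly what allows the correlation decay to win; your bound would be too coarse there. So: simpler proof for the theorem as stated, but the paper's machinery is what is reused downstream.

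Two small cautions. For $2\Rightarrow 1$ you invoke norm equivalence on fixed-dimensional spaces, but the weighted norm $\|f\|_{2,\sigma}$ involves the \emph{global} $\sigma$, not $\sigma_A$; you need the commuting factorization once more to reduce to a local comparison (this works, but is not quite automatic). And $3\Leftrightarrow 4$ as you phrase it is slightly tangled: Pinsker on $I(A{:}C)$ gives decay of $\|\sigma_{AC}-\sigma_A\otimes\sigma_C\|_1$, which is decay of covariance (item 2), not local indistinguishability (item 3); the paper closes the loop via $4\Rightarrow 2\Rightarrow 3$ rather than a direct $4\Rightarrow 3$. Finally, for $\mathrm{gap}\Rightarrow 1$ the paper simply cites \cite[Corollary~27]{art:QuantumGibbsSamplers-kastoryano2016quantum}, which is indeed the detectability-lemma argument you mention as your fallback.
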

These last two notions together $(5.,6.)$ also imply uniform q$\mathbb{L}_1\to\mathbb{L}_\infty$ clustering (\Cref{equ:q1toinfty_clustering}), which will be instrumental in the proof of the main result. For the formal definition of these notions see the respective definitions in \Cref{sec:static} and for a more detailed picture of the implications see \Cref{fig:enter-label21}. 
\begin{figure}[h]
    \centering
    \includegraphics[width=0.9\textwidth]{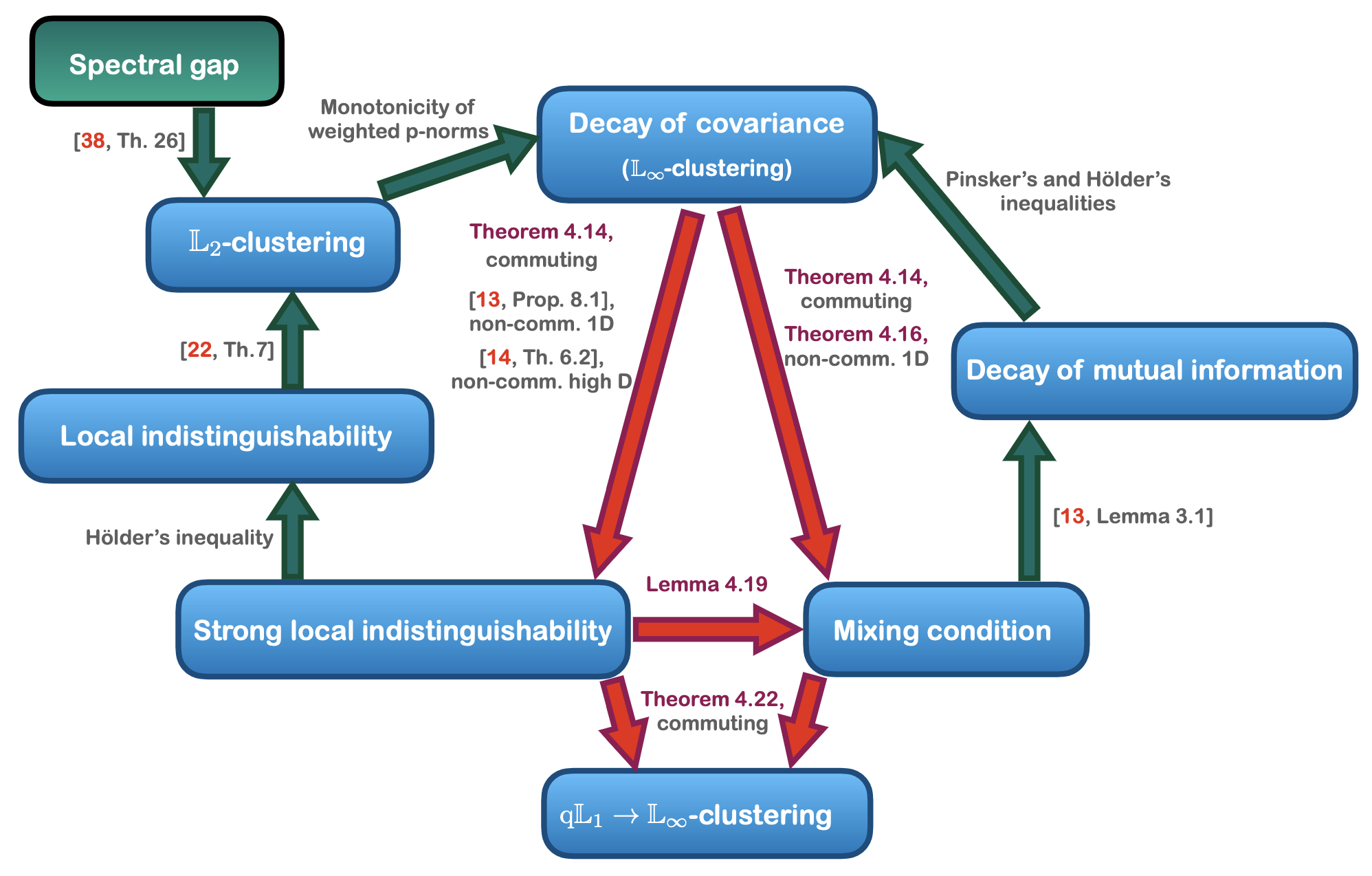}
    \caption{Relation between the notions of decay of correlations listed in \Cref{thm:equiv_decay_corr_informal}. The red arrows represent the new connections derived in this paper, whereas the green arrows signal previously known results.}
    \label{fig:enter-label21}
\end{figure}

In the 1D setting, \Cref{thm:equiv_decay_corr_informal} also holds for arbitrary regions, allowing us to go directly from gap to system-size invariant MLSI. However, for higher dimensional lattices the decay functions of $4.,5.,6.,$ assuming $1.,2.,$ or $3.$ have prefactors depending exponentially on the boundaries of the regions, which is why for rapid mixing of commuting systems on higher dimensional lattices we require the stronger assumption of high temperature.

\section{Preliminaries}\label{sec:prelim}

\subsection{Spin systems on graphs}
We now describe the graphs that underlie the interactions among the particles. A graph is a tuple $\Lambda = (V,E_V)$ of vertex set $V$ and edge set $E_V\subset V\times V$. A complete subgraph $\Gamma \subset\Lambda$ is a tuple $(G,E_G)$, where $G\subset V$ and $E_G$ contains all edges in $E_V$ which contain the vertices in $G$. Abusing notation slightly, we call complete subgraphs \textit{subsets}, writing $\Gamma\subset\Lambda$. For simplicity of notation we associate the graph with its vertex set. Hence we may write $x\in\Lambda$, $A\subset\Lambda$, or $x\in\Gamma$ for an $x\in V$, when the edge set $E_V$ of $\Lambda=(V,E_V)$ is clear from context. 

We define the size of a graph $\Lambda$, or of a subset $\Gamma\subset\Lambda$, denoted as $|\Lambda|, |\Gamma|$, respectively, as the number of vertices it contains. When emphasizing that $\Gamma$ is a finite subset of $\Lambda$, i.e. $|\Gamma|<\infty$, we write $\Gamma\subset\subset\Lambda$.
We write $CD\subset \Lambda$ for the complete subgraph containing all of the vertices of $C$ and $D$, so in this sense $CD = C\cup D$. Note that this does not require $C,D$ to be disjoint.
We call a subset of vertices $\Gamma\subset\Lambda$ connected, if for any two vertices $x,y\in\Gamma$ there exists a sequence of pairwise overlapping edges in $E_G$, such that the first overlaps with $x$ and the last with $y$. 

The graph \textit{distance} $d$ (on $\Lambda$) between two vertices $x\neq y\in\Lambda$ is defined as the minimal length of a connected subset of edges which overlap both with $x$ and $y$. We also set $d(x,x)=0 \, , \; \forall x\in\Lambda$. The length of a subset of edges is given by the number of edges it contains. The distance between two subsets $A,B\subset\Lambda$ is defined as the minimal graph distance between pairs of points in $A$ and $B$, respectively. It is denoted, with slight abuse of notation, with the same symbol $d$. 
We define the diameter of a set $A\subset\Lambda$ as diam$(A):=\sup_{x,y\in A}d(x,y)$. 

The graph has \textit{growth constant} $\nu>0$ defined as the smallest positive number such that, for any $m\in\mathbb{N}$, the number of connected subsets of size $m$ containing some edge, for any edge, is bounded by $\nu^m$:
\begin{align}
    n_m:= \sup_{e\in E_V}|\{F\subset E_V \text{ connected } |\; F\ni e, |F|=m \}| \leq \nu^m.
\end{align}
Note that any \textit{regular graph}, i.e. one where every vertex has the same number of neighbours as every other, has finite growth constant. For example, the growth constant of the $D$-dimensional hypercubic lattice ($\mathbb{Z}^D$) is bounded by $2D e$, where $e$ is Euler's number
\cite{art:LocalityofTemperature, art:GrowthConstant10.1214/ECP.v16-1612}. We say a graph is \textit{2-colorable} if there exists a labeling of the graph with labels 0 and 1, i.e. a map which assigns each vertex one label, such that adjacent vertices, i.e. ones which are connected by some edge, have different labels.

\begin{definition} For an infinite graph $\Lambda$ we define $N(l):=\sup_{x\in\Lambda}|B_l(x)|$, where $B_l(x):=\{v\in \Lambda \, |\, d(x,v)\leq l \}$ is the ball of radius $l$ around vertex $x$.
We call a graph \textit{sub-exponential} if there exists a $\delta\in(0,1)$ s.t. $N(l)\leq \exp(l^\delta)$ holds eventually, i.e. if $\log N(l)=\mathcal{O}(l^\delta)_{l\to\infty}$. Analogously, we call it \textit{exponential} if no such $\delta$ exists, i.e. if $\log N(l)=\Omega(l)_{l\to\infty}$.
\end{definition}
First note that all graphs with finite growth constant are in either of these two classes, since we can crudely bound $|B_l(x)|\leq \nu^l$ and hence $N(l)\leq \nu^l$. Hypercubic lattices are sub-exponential under this definition, whereas $b$-ary trees are exponential. 

We denote the complement of some set $A\subset\Lambda$ as $A^c:=\Lambda\setminus A$.
We will often also consider geometrically-$r$-local interactions, with $r>1$ some integer, on such graphs. For some fixed $r$ we define the boundary of a subset $A\subset\Lambda$, denoted with $\partial A$, to be all vertices in $\Lambda\setminus A$ that are within graph distance $r-1$ from vertices in A \begin{align}
    \partial A &:= \{x\in\Lambda\setminus A \, |\, d(x,A) < r\}, \\
    A\partial &:=A\cup \partial A.
\end{align}
It will be clear from context what $r$ and hence the set-boundary $\partial$ is.
Hence, for nearest neighbour interactions $(r=2)$, $\partial A$ coincides with the usual set-boundary. Whenever we consider $\Lambda =ABC$ with $B$ shielding $A$ from $C$, we denote by $\partial_A B$ the boundary of $B$ in $A$.
An important class of graphs considered here are hypercubic lattices of dimension $D\in\mathbb{N}$, $\Lambda=\mathbb{Z}^D$, with the graph distance equal to the Hamming distance.  Another example is the complete infinite $b$-ary tree $\mathbb{T}_b$, for some integer $b\geq 1$. These are loop-free, exponential, and two-colorable graphs. where each vertex has exactly $b$ neighbours. Each tree has one vertex, called the \textit{root}, from which the tree extends, and whose $b$ neighbours are called its children or leaves. Every other vertex has exactly $b-1$ children or leaves. 

\subsection{General Notation}
A quantum spin system on a finite graph $\Gamma = (V,E_V)$ is described by the Hilbert space \begin{align}
    \mathcal{H}_\Gamma := \bigotimes_{x\in V}\mathcal{H}_x,
\end{align} where each local Hilbert space $\mathcal{H}_x$ has dimension $d<\infty$, i.e.~describes a qudit system. Hence the global dimension of the system is $\dim(\mathcal{H}_\Gamma)=d^{|\Gamma|}$. We will only be considering finite-dimensional Hilbert spaces in this work.
We denote the algebra of bounded linear operators over $\mathcal{H}_\Gamma$ by $\mathcal{B}(\mathcal{H}_\Gamma)$ and the set of density operators with $\mathcal{D}(\mathcal{H}_\Gamma):=\{\rho\in\mathcal{B}(\mathcal{H}_\Gamma) \, | \,  \rho \geq 0, \Tr[\rho]=1 \}$. Note that this algebra is *-homeomorphic to $\mathcal{A}_{\Gamma*}$, the pre-dual of $\mathcal{A}_{\Gamma}=\mathcal{B}(\mathcal{H}_\Gamma)$ w.r.t. to the Hilbert-Schmidt inner product induced by the canonical trace on the finite-dimensional Hilbert space $\mathcal{H}_\Gamma$, i.e. the map $\langle X,Y\rangle=\Tr[X^*Y]$. 
We recall that we can associate to each normalized state (a positive, linear functional) its density operator representation, i.e. for $\omega\in\mathcal{A}_{\Gamma*}$ there exists a $\rho\in\mathcal{D}(\mathcal{H}_\Gamma)$, s.t. $\omega(X) =\Tr[\rho X]$, and the other way around. We denote the trace-class operators on a Hilbert space $\mathcal{H}$ with $\mathcal{B}_1(\mathcal{H})$. The norm on $\mathcal{B}(\mathcal{H})$ is the usual operator norm, denoted by $\|A\|\equiv\|A\|_{\infty}$ for $A\in\mathcal{B}(\mathcal{H})$. The norm on $\mathcal{D}(\mathcal{H})$ is the usual trace-norm, denoted by $\|\rho\|_1:=\Tr[|\rho|]$ for $\rho\in\mathcal{D}(\mathcal{H})$, where the trace on the full Hilbert space $\mathcal{H}_\Gamma$ is denoted as $\Tr[\cdot]$. Moreover, the partial trace on the Hilbert space corresponding to a region $A\subset\Gamma$ is denoted as $\tr_{A}[\cdot]: \mathcal{B}_1(\mathcal{H}_\Gamma)\to \mathcal{B}_1(\mathcal{H}_{\Gamma\setminus A})$.

We denote the identity operator on $\mathcal{H}$ as $\1\equiv\1_\mathcal{H}\in\mathcal{B}(\mathcal{H})$ and the identity map $\mathcal{B}(\mathcal{H})\to\mathcal{B}(\mathcal{H})$ as $\id\equiv\id_{\BH}$.
Given a linear map $\Phi :\mathcal{B}(\mathcal{H})\to\mathcal{B}(\mathcal{H})$ we denote its pre-dual with respect to the Hilbert-Schmidt inner product as $\Phi_*$. We call such a map $\Phi$ a unital CP map if it is completely positive and identity-preserving $\Phi(\1)=\1$.   Their pre-duals $\Phi_*:\mathcal{B}_1(\mathcal{H})\to \mathcal{B}_1(\mathcal{H})$ are completely positive, trace-preserving maps (CPTP or quantum channels). 

We denote the spectrum of an operator $A\in\mathcal{A}$ with $\text{spec}(A)$. 
We define the subregion of the graph on which the operator $A$ is non-trivially supported, denoted as $\supp(A)\subset\Gamma$ as the smallest subregion $X\subset\Gamma$ s.t. $A=A^\prime_X\otimes\1_{\Gamma\setminus X}$, for a suitable $A^\prime_X\in\mathcal{B}(\mathcal{H}_X)$.

We will employ the following \textit{``big-O''}-notation $\mathcal{O}(g(x))_{x\to\infty}$ when meaning that $f(x)=\mathcal{O}(g(x))$ for $x\to\infty$, i.e. to indicate in which limit the scaling $\mathcal{O}(g(x))$ holds for a function $f(x)$, and we use the $``big-\Omega"$ notation similarly.

\subsection{Weighted non-commutative $\mathbb{L}_{p,\sigma}$-spaces and inner products} \label{subsec:weightednorms}
We will make use of so-called (weighted) non-commutative $\mathbb{L}_p$ spaces in this work.
For a general overview and construction of such spaces on von Neumann algebras, see e.g. \cite{PISIER20031459}. 
In our case of finite-dimensional Hilbert spaces, the non-weighted ones are just the $p$-Schatten spaces on $\mathcal{H}$.
The non-commutative $p$-Schatten norm of $X\in \mathcal{B}(\mathcal{H})$ is defined as
\begin{align}
    &\|X\|_p:=\Tr[|X|^p]^\frac{1}{p} \hspace{3.3cm} 1\leq p < \infty, \\
    &\|X\|_\infty = \|X\|.
\end{align}
Given a full-rank state $\sigma\in\mathcal{D}(\mathcal{H})$, we define the \textit{weighted non-commutative} $\mathbb{L}_{p,\sigma}$ norm of $X$ as
\begin{align}
   \|X\|_{p,\sigma}&:=\Tr[|\sigma^{\frac{1}{2p}}X\sigma^{\frac{1}{2p}}|^p]^{\frac{1}{p}} \hspace{2cm}  1\leq p<\infty, \\
   \|X\|_{\infty,\sigma} &:= \|X\|_{\infty} \equiv \|X\|,
\end{align} respectively. These norms turn the $\mathbb{L}_{p,\sigma}$ spaces into Banach spaces for $p\in[1,\infty]$ and satisfy the usual Hölder-type inequality, Hölder duality, and monotonicity in $p$ for fixed $\sigma$, see e.g. \cite{olkiewicz1999hypercontractivity,art:QuantumGibbsSamplers-kastoryano2016quantum}.
Similarly, $\mathbb{L}_{2,\sigma}$ is a Hilbert space with respect to the \textit{KMS-inner product}
\begin{align}\label{eq:KMSprod}
    \langle X,Y\rangle_\sigma^{\KMS} := \Tr[\sqrt{\sigma}X^*\sqrt{\sigma}Y].
\end{align}
There exists a natural embedding $\Gamma_\sigma:\mathbb{L}_{1,\sigma}\to \mathbb{L}_{1}$ via
$
    \Gamma_\sigma(X):= \sqrt{\sigma}X\sqrt{\sigma}$.
Hence the weighted $p,\sigma$-norm can also be expressed as $\|X\|_{p,\sigma}=\|\Gamma_\sigma^{\frac{1}{p}}(X)\|_p$. 
For completeness we define the \textit{modular operator} of $\sigma$ here as
\begin{align}
    \Delta_\sigma(X):=\sigma X \sigma^{-1},
\end{align} and the \textit{modular group} of $\sigma$ as $\{\Delta_{\sigma}^{is}\}_{s\in\mathbb{R}}$. 
Additionally, the \textit{GNS-inner product} on $\mathcal{B}(\mathcal{H})$, for a finite dimensional Hilbert space $\mathcal{H}$ is
\begin{align}\label{eq:GNSprod}
    \langle X,Y\rangle_\sigma^{\GNS} := \Tr[\sigma X^*Y].
\end{align}

\subsection{Uniform families of Hamiltonians} 
In this work we consider families of many-body Hamiltonians $\{H_\Gamma\}_{\Gamma\subset\subset \Lambda}$, such that
\begin{align} \label{def:UnfiformHamiltonian}
    H_\Gamma=\sum_{X\subset \Gamma}\Phi_X
\end{align} for a given fixed \textit{potential} $\Phi:\{X\subset\Lambda\}\to\mathcal{A}_\Lambda:X\mapsto\Phi_X$.\footnote{Here $\mathcal{A}_\Lambda$ represents the closure of the algebra created by local operators on $\Lambda$, sometimes also called the algebra of quasi-local observables, when $|\Lambda|=\infty$ and $B(\mathcal{H}_\Lambda)$ when it is finite $|\Lambda|<\infty$.} Note that for each $X\subset\Lambda$, $\Phi_X$ is a self-adjoint operator acting only non-trivially on the sub-region $X$. 
The potential is called \textit{commuting} (on $\Lambda$) if for each $X,Y\subset\Lambda$, $\Phi_X$ and $\Phi_Y$ commute. 
It is said to have \textit{bounded interaction strength} $J:=\max_{X\subset\Lambda}\{\|\Phi_X\|\}$ and \textit{interaction range} $r:=\max\{\text{diam}(X) \, | \, X\subset\Lambda, \Phi_X\neq 0\}$, where $\text{diam}(X)$ stands for the diameter of region $X$ with respect to the graph distance. We will call potentials with interaction range $r$ \textit{geometrically-r-local}. When $r=2$, we will interchangeably use the terms geometrically-$2$-local and \textit{nearest-neighbour}.
We call this family $\{H_\Gamma\}_{\Gamma\subset\subset \Lambda}$ the \textit{ associated family of Hamiltonians} to $(\Lambda,\Phi)$. It is called a \textit{uniform} $J$-bounded, geometrically-$r$-local, commuting family if the potential is $J$-bounded, geometrically-$r$-local and commuting.\footnote{Hence the constants $J,r$ do not depend on the regions $\Gamma$, and explicitly on $|\Gamma|$, on which the local Hamiltonians are defined.} In this work, we will only consider such uniform families, unless explicitly stated otherwise.

The associated Gibbs state of the local Hamiltonian on $A\subset\Gamma$ at inverse temperature $\beta$ is denoted by
\begin{align}
    \sigma^A:=\frac{e^{-\beta H_A}}{\Tr[e^{-\beta H_A}]}\in\mathcal{D}(\mathcal{H}_\Gamma),
\end{align} while the reduced state onto some subregion $A\subset\Gamma$ is denoted by
\begin{align}
    \sigma_A:=\tr_{\Gamma\setminus A}\sigma^\Gamma\in\mathcal{D}(\mathcal{H}_A),
\end{align} where $\sigma\equiv\sigma^\Gamma=\sigma_\Gamma$.
Given a potential $\Phi$ on $\Lambda$ and some inverse temperature $\beta>0$ we call the family of Gibbs states $\{\sigma^\Gamma\}_{\Gamma\subset\subset\Lambda}$, where the Hamiltonian $H_\Gamma$ is given as in \eqref{def:UnfiformHamiltonian} w.r.t this $\Phi$, the \textit{family of Gibbs states associated to} $(\Lambda,\Phi,\beta)$.
We will employ the convenient notation $E_{X,Y}:=e^{-H_{XY}}e^{H_X+H_Y}$ for Araki's expansionals for two disjoint subsets $X,Y\subset\Lambda$ from \cite{art:Bluhm2022exponentialdecayof}. 

\subsection{Quantum Markov semigroups and Lindbladians} 
A \textit{quantum Markov semigroup} (QMS) is a strongly continuous one-parameter semigroup of unital CP maps $\{\Phi_t\}_{t\geq 0}:\mathcal{B}(\mathcal{H})\to\mathcal{B}(\mathcal{H})$. 
This is a family such that $\Phi_0=\id_{\BH}, \Phi_{s+t}=\Phi_s\circ\Phi_t \, , \; \forall s,t\geq 0,$ and $ \lim_{t\downarrow0} \|(\Phi_t-\id)(X)\|=0\, , \; \forall X\in\mathcal{B}(\mathcal{H})$. 
By the Hille-Yosida theorem there exists a densely defined generator, called the \textit{Lindbladian}
\begin{align}
    \mathcal{L}(X):=\lim_{t\downarrow0}\frac{1}{t}(\Phi_t-\id)(X),
\end{align} such that the semigroup is given as $\Phi_t= e^{t\mathcal{L}} \, , \; \forall t\geq 0$.
In our case of a finite-dimensional Hilbert space, the Lindbladian is defined on all of $\mathcal{B}(\mathcal{H})$ and its pre-dual on all of $\mathcal{B}_1(\mathcal{H})$. 

A QMS with generator $\Li$ gives the unique solution to the master equation $\frac{d}{dt}\rho(t)=\mathcal{L}(\rho(t))$. A state $\sigma$ is invariant (or stationary) if $\Phi_{t*}(\sigma)=\sigma$ for all $t\geq0$, which is equivalent to $\Li_*(\sigma)=0$. 
We call a QMS and its generator \textit{faithful} if the QMS admits a full-rank invariant state $\sigma\in\mathcal{D}(\mathcal{H})$ and \textit{primitive} if this state is unique.

We call a QMS and its generator \textit{reversible} or \textit{KMS-symmetric} w.r.t. a state $\sigma$ if the QMS is symmetric w.r.t the KMS-inner product and similarly for
\textit{GNS-symmetric}.
In the latter case, we also say that the QMS satisfies the \textit{detailed balance condition}, which is equivalent to
\begin{align}
    \Tr[\sigma X^*\Li(Y)] = \Tr[\sigma \Li(X)^*Y] \hspace{1cm} \forall X,Y\in\BH.
\end{align}
If a QMS is GNS-symmetric w.r.t a state $\sigma$, then this state is necessarily a stationary one.

Given a graph $\Lambda$ and a finite subset $\Gamma\subset\subset\Lambda$, we consider a family of Lindbladians $\Li_\Lambda=\{\Li_\Gamma\}_{\Gamma\subset\subset\Lambda}$, such that
\begin{align}\label{eq:locallind}
    \Li_\Gamma = \sum_{X\subset \Gamma}L_X,
\end{align} where $\{L_X\}_{X\subset\Lambda}$ is a fixed family of local Lindbladians, such that $\tilde{J}:=\sup_{X\subset\Lambda}\|L_{X*}\|_{1\to1,\text{cb}}<\infty$ and $L_{X*}=0$ whenever diam$(X)>\tilde{r}$.\footnote{Here $\|\Phi_*\|_{1\to1,\text{cb}}$ is the completely bounded $1\to1$ norm, i.e. $$\|\Phi_*\|_{1\to1,\text{cb}} := \sup_{n\in\mathbb{N}}\sup_{\rho\in\mathcal{D}(\mathbb{C}^n\otimes\mathcal{H})}\|(\id_n\otimes\Phi_*)(\rho)\|_1 .$$ } Hence we call these a \textit{uniform} \textit{ geometrically$-\tilde{r}-$local, $\tilde{J}-$bounded} family of (bulk) Lindbladians, in analogy with the Hamiltonian case. 
The family is called \textit{locally reversible}, if each element $\mathcal{L}_\Gamma$ is reversible (KMS-symmetric) w.r.t. the Gibbs state of any larger region $\sigma^\Pi$ for $\Gamma\subset\Pi$. 
Note that this implies that it also is \textit{frustration-free}, that is for any two finite subsets $\Gamma\subset\Pi\subset\subset\Lambda$, the stationary states of $\Li_\Pi$ are also stationary under $\Li_\Gamma$, i.e. ker($\Li_\Pi)\subset$ker($\Li_\Gamma$).


Note that if a QMS is GNS-symmetric, i.e. satisfies detailed balance, then it is also KMS-symmetric, i.e. reversible.
For a region $\Gamma\subset\subset\Lambda$, we can write the projection onto the fixed point subalgebra of $\Li_\Gamma$ as 
\begin{align} \label{def:FixedpoitProjection}
    E_\Gamma(\cdot):=\lim_{t\to\infty}e^{t\mathcal{L}_\Gamma}(\cdot).
\end{align}
It turns out that for a primitive, frustration-free uniform family, these projections are conditional expectations w.r.t the family of stationary states. See \Cref{sec:ConditionalExpectations} for more details.

Here we will be working with the \textit{Davies generators} $\Li^D_\Lambda=\{\Li^D_\Gamma\}_{\Gamma\subset\subset\Lambda}$, which is a physically motivated uniform family of Lindbladians associated to a uniform family of Hamiltonians. These are introduced in \Cref{subsec:Davies}.
In the setting we are considering, they are a uniform geometrically-bounded, locally GNS-symmetric 
family of Lindbladians which describe thermalization of a spin system.

We call a uniform family of Lindbladians which are locally GNS-symmetric 
and frustration-free w.r.t to a set of Gibbs states $\{\sigma^\Gamma\}_{\Gamma\subset\subset\Lambda}$ a \textit{quantum Gibbs sampler} of the system $(\Lambda,\Phi,\beta)$ when $\{\sigma^\Gamma\}_{\Gamma\subset\subset\Lambda}$ is the uniform family of Gibbs states associated to $(\Lambda,\Phi,\beta)$. The Davies generators (see \Cref{subsec:Davies}), the Heat-bath generators \cite{art:QuantumGibbsSamplers-kastoryano2016quantum} and the Schmidt generators \cite{art:bravyi2004commutative, art:2localPaper} are examples of quantum Gibbs samplers. 

\subsection{Conditional Expectations} \label{sec:ConditionalExpectations}

Given a von Neumann subalgebra $\mathcal{N}\subset\mathcal{B}(\mathcal{H})$, a \textit{conditional expectation} onto $\mathcal{N}$ is a completely positive, unital map $E_\mathcal{N}:\mathcal{B}(\mathcal{H})\to\mathcal{N}$, such that
\begin{align}
    &E_\mathcal{N}(X)=X \, , \; \forall  X\in\mathcal{N} \\
    &E_\mathcal{N}(aXb)= aE_\mathcal{N}(X)b \, , \; \forall  a,b\in\mathcal{N}, X\in\mathcal{B}(\mathcal{H}).
\end{align}
By complete positivity and unitality, it follows that the preadjoint of any conditional expectation with respect to the Hilbert-Schmidt inner product $E_{\mathcal{N}*}:\mathcal{N}_*\to\mathcal{B}(\mathcal{H})_*$ is a completely positive, trace-preserving map, i.e. a quantum channel.
Any \textit{conditional expectation} onto $\mathcal{N}$ for which there exists a full-rank state $\sigma\in\mathcal{D}(\mathcal{H})$ which satisfies
\begin{align}
    E_{\mathcal{N}*}(\sigma)=\sigma \Longleftrightarrow \Tr[\sigma E_\mathcal{N}(X)]=\Tr[\sigma X] \, , \; \forall  X\in\mathcal{B}(\mathcal{H}),
\end{align} is said to be \textit{with respect to the state} $\sigma$ 
\cite{art:2localPaper,art:EntropyDecayOf1DSpinChain-Cambyse}.
Let $E$ be a conditional expectation with respect to a full-rank state $\sigma$ onto $\mathcal{N}$, then from the definition it follows that it is self-adjoint with respect to the $\sigma-$KMS inner product, i.e.
\begin{align}
    \sigma^{\frac{1}{2}}E(X)\sigma^{\frac{1}{2}} = E_*(\sigma^{\frac{1}{2}}X\sigma^{\frac{1}{2}})
\end{align} holds for any $X\in\mathcal{B}(\mathcal{H})$. 

Furthermore, it can be shown that $E$ commutes with the modular automorphism group of $\sigma$, i.e.
\begin{align}
    \Delta_{\sigma}^{is}\circ E = E \circ \Delta_\sigma^{is} \, , \; \forall s\in\mathbb{R}.
\end{align}  
Moreover, given a *-subalgebra $\mathcal{N}\subset\mathcal{B}(\mathcal{H})$ and a full-rank state $\sigma\in\mathcal{D}(\mathcal{H})$, the existence of a conditional expectation w.r.t. $\sigma$ onto $\mathcal{N}$ is equivalent to the invariance of $\mathcal{N}$ under the modular automorphism group $\{\Delta_\sigma^{is}\}_{s\in\mathbb{R}}$. Furthermore, in the case that the  *-subalgebra $\mathcal{N}$ is invariant under the modular automorphism group of said full-rank state $\sigma$, this conditional expectation is uniquely determined by $\sigma$ \cite{art:TAKESAKI1972306, art:2localPaper}.

Conditional expectations with respect to some full-rank state $\sigma$ between finite-dimensional matrix algebras, as all the ones in this work, can be given in an explicit form, see e.g. \cite{art:EntropyDecayOf1DSpinChain-Cambyse}. Any finite dimensional subalgebra $\mathcal{N}\subset\mathcal{B(H)}$ is unitarily isomorphic to the algebra
\begin{align}
    \mathcal{N}=\bigoplus_{i=1}^n \mathcal{B}(\mathcal{H}_i)\otimes \mathbb{C}\1_{\mathcal{K}_i}, \text{ where } \mathcal{H} = \bigoplus_{i=1}^n\mathcal{H}_i\otimes\mathcal{K}_i.
\end{align}
Now there exist density operators $\{\tau_i\in\mathcal{D}(\mathcal{K}_i)\}_{i=1}^n$ and projections $\{P_i\in\mathcal{B}(\mathcal{H}_i)\}_{i=1}^n$, respectively, onto $\{\mathcal{H}_i\otimes\mathcal{K}_i\}$ such that
\begin{align}
    E_\mathcal{N}(X) = \bigoplus_{i=1}^n \tr_{\mathcal{K}_i}[P_iXP_i(\1_{\mathcal{K}_i}\otimes \tau_i)]\otimes \1_{\mathcal{K}_i} \ \Longleftrightarrow \ E_\mathcal{N*}(\rho) = \bigoplus_{i=1}^n\tr_{\mathcal{K}_i}[P_i\rho P_i]\otimes \tau_i,
\end{align} for $X\in\mathcal{B}(\mathcal{H})$ and $\rho\in\mathcal{D}(\mathcal{H})$ \cite{art:EntropyDecayOf1DSpinChain-Cambyse}. 
An important example of conditional expectations are the following.

\begin{definition}[Local Davies Lindbladian Projectors] \label{example:LocalLindbaldProj} 
Let $\Gamma$ be some finite graph. 
Let $\mathcal{L}_\Gamma =\{\Li_A\}_{A\subset\Gamma}$ 
be a uniform locally reversible family of Lindbladians with respective stationary states $\{\sigma^A\}_{A\subset\Gamma}$. The local Lindbladian projector associated with the family $\mathcal{L}_\Gamma$ on $A\subset\Gamma$ is given by
\begin{equation}
    E_A(X):=\lim_{t\to\infty}e^{t\mathcal{L}_A}(X)
\end{equation}
for $X\in\mathcal{B}(\mathcal{H}_\Gamma)$. Notice that each $E_A$ acts only non-trivially on $A^c$ and since $\Li_\Gamma$ is frustration-free, $E_A$ is a conditional expectation with respect to the stationary state $\sigma^\Gamma$ onto the subalgebra $\1_{A\partial}\otimes\mathcal{B}(\mathcal{H}_{(A\partial)^{\text{c}}})$.
\end{definition}
For a proof of these claims see e.g. \cite[Proof of Proposition 9]{art:QuantumGibbsSamplers-kastoryano2016quantum}. In this case the expectation value of any observable w.r.t. the invariant state on the full system is
\begin{align}
    \Tr[\sigma X]= \Tr[\sigma E_\Gamma(X)] = \Tr[\sigma E_A(X)] \; \; \forall A\subset\Gamma.
\end{align}
On the other hand, given a family of local conditional expectation $E_A: \mathcal{B}(\mathcal{H})\to \1_{A\partial}\otimes\mathcal{B}(\mathcal{H})_{(A\partial)^{\text{c}}}$ w.r.t. the same state $\sigma\in\mathcal{D}(\mathcal{H})$, then
\begin{align}
    \overline{\mathcal{L}}_A:= (E_A-\id_A),
\end{align} is a family of locally reversible, frustration-free Lindbladians with invariant state $\sigma$.

\subsection{The relative entropy and strong data processing} 

The \textit{Umegaki relative entropy} \cite{petz_quasi-entropies_1986} between two finite-dimensional quantum states given by their density operators $\rho,\sigma\in\mathcal{D}(\mathcal{H})$ is defined as
\begin{align}
   D(\rho\|\sigma) := \begin{cases}
   \Tr[\rho(\log\rho-\log\sigma)]
    & \text{if supp}(\rho)\subset\text{supp}(\sigma) \\ 
   \infty & \text{else} \end{cases},
\end{align} where the logarithm here is the natural logarithm to base $e$. 
Pinsker's inequality \eqref{equ:Pinsker} gives an upper bound on the trace-distance in terms of the relative entropy:
\begin{align}
    \|\rho-\sigma\|_1^2 \leq 2D(\rho\|\sigma).
    \label{equ:Pinsker}
\end{align}

The relative entropy satisfies the \textit{data processing inequality}, so that no quantum channel, i.e. CPTP map $\Phi_*$, can increase it between any two states,
\begin{align}
    D(\Phi_*(\rho)\|\Phi_*(\sigma))\leq D(\rho\|\sigma).
    \label{equ:DPI}
\end{align}
The core part of the entropic-inequalities approach to thermalization relies upon a strengthening of this inequality. We say a quantum channel $\Phi_*$ satisfies a non-trivial \textit{strong data processing} (sDPI)  with \textit{contraction coefficient} $\eta\equiv\eta(\Phi_*)<1$ if for any pair $(\rho,\sigma)$ of states, with $\rho\neq\sigma$, it holds that
\begin{align}
    D(\Phi_*(\rho)\|\Phi_*(\sigma))\leq \eta(\Phi_*)D(\rho\|\sigma).
    \label{equ:sDPI}
\end{align}
More formally, we define the contraction coefficient for a GNS symmetric QMS $\Phi_*\equiv\Phi_{t_0*}$ as
\begin{align}
    \eta(\Phi_*):=\sup_{\rho\in\mathcal{D}(\mathcal{H})}\frac{\inf_{\sigma\in\Sigma}D(\Phi_*(\rho)\|\Phi_*(\sigma))}{\inf_{\sigma\in\Sigma}D(\rho\|\sigma)},
\end{align} where $\Sigma$ is the set of \textit{stationary states} of $\Phi_*$. These are all the density operators which are left invariant under the action of the channel.\footnote{In the case of a general quantum channel (instead of QMS) one would need to replace $\Sigma$ by the decoherence-free subalgebra. If $\Phi_*(X)=\sum_{k}A_kXA_k^*$ is the Kraus representation of $\Phi_*$, then the decoherence-free subalgebra is $\Sigma :=\bigcap_{k\in\mathbb{N}}\mathcal{N}(\Phi_*^k)$, where $\mathcal{N}(\Phi_*):=\text{Alg}\{X\in\mathcal{B}(\mathcal{H})\, | \, [X,A_i^*A_j]=0 \ \forall i,j\}$} Assume we have some channel $\Phi_*$ which has a contraction coefficient $\eta(\Phi_*)<1$ and a unique invariant state $\sigma$, i.e. $\Phi_*(\sigma)=\sigma$. Then sDPI immediately induces an exponential decay of the relative entropy in the number of times the channel is applied.
\begin{align}
    D(\Phi_*^n(\rho)\|\sigma)=  D(\Phi_*^n(\rho)\|\Phi_*^n(\sigma)) \leq \eta^n D(\rho\|\sigma).
\end{align}
Since conditional expectations are, by definition, projections on closed-*-subalgebras (which are convex), the following chain rule holds for states $\rho,\sigma\in\mathcal{D}(\mathcal{H})$, whenever $E_{\mathcal{N}*}(\sigma)=\sigma$ \cite[Lemma 3.4]{art:ChainRuleForRelEntropyConExpJunge_2022}
\begin{align}
\label{def:relentChainRule}
    D(\rho\|\sigma)= D(\rho\|E_{\mathcal{N}*}(\rho))+D(E_{\mathcal{N}*}(\rho)\|\sigma).
\end{align}

For completeness we define the \textit{max-relative entropy} \cite{Datta_2009} between two finite dimensional quantum states given by their density operators $\rho,\sigma\in\mathcal{D}(\mathcal{H})$ as
\begin{align}
    D_\text{max}(\rho\|\sigma):= \begin{cases}
        \log\inf\{\lambda\in\R| \rho\leq \lambda\sigma \} = \log\|\sigma^{-\frac{1}{2}}\rho\sigma^{-\frac{1}{2}}\| & \text{if supp}(\rho)\subset\text{supp}(\sigma) \\ 
   \infty & \text{else}
    \end{cases},
\end{align}
where the inverses are taken as generalised inverses. It also satisfies the DPI and a triangle inequality and it holds that
\begin{align}
    D(\rho\|\sigma) \leq D_\text{max}(\rho\|\sigma).
\end{align}

The relative entropy also gives rise to the \textit{quantum mutual information} $I$, which measures correlations between two regions. Given a finite graph $\Gamma=ABC$, the mutual information of a state $\rho\in\mathcal{D}(\mathcal{H}_\Gamma)$ between the reduced state on the region $A$ and the one on the region $C$  is defined as
\begin{align}
    I_\rho(A:C):=D(\rho_{AC}\|\rho_A\otimes\rho_C).
\end{align} 
Analogously, the \textit{max-mutual information} $I_\text{max}$ of the state
$\rho\in\mathcal{D}(\mathcal{H}_\Gamma)$ on $\Gamma=ABC$ between the reduced state on the region $A$ and the one on the region $C$  is defined as
\begin{align}
    I_{\text{max},\rho}(A:C):=D_\text{max}(\rho_{AC}\|\rho_A\otimes\rho_C).
\end{align} 

\subsection{Relative entropy decay via the complete modified logarithmic Sobolev Inequality}\label{sec:cMLSI}
Assuming that our QMS has at least one full-rank invariant state $\sigma$ with respect to which it is GNS-symmetric (as is the case for all Davies maps), we can establish the sDPI \eqref{equ:sDPI}  for time-continuous QMS $\{e^{t\mathcal{L}}\}_{t\geq0}$. 

The way to do this is via a differential version of the strong data processing inequality \eqref{equ:sDPI} for the channel $\Phi_{t*}:=e^{t\mathcal{L}_*}$ in which we set $\eta(e^{t\mathcal{L}_*})=e^{-t\alpha}$, yielding
\begin{align}
   -\frac{d}{dt} D(e^{t\mathcal{L}_*}(\rho)\|E_*(\rho)) \big|_{t=0} =: \EP_{\mathcal{L}}(\rho)  \geq \alpha D(\rho\|E_*(\rho)).
   \label{equ:MLSI}
\end{align}
Here $E_*:=\lim_{t\to\infty}e^{t\mathcal{L}_*}$ is the projection onto the stationary states \eqref{def:FixedpoitProjection}, see also Example \ref{example:LocalLindbaldProj}.
We call this inequality \eqref{equ:MLSI} the \textit{modified logarithmic Sobolev inequality} (MLSI) and $\EP_\mathcal{L}(\rho)$ the \textit{entropy production} of the QMS $\{e^{t\mathcal{L}}\}_{t\geq 0}$.
The optimal constant $\alpha$ satisfying the MLSI is called the \textit{modified logarithmic Sobolev constant} (MLSI constant) $\alpha(\mathcal{L})$. It is hence given by
\begin{align}
    \alpha(\mathcal{L}):= \inf_{\rho\in\mathcal{D}(\mathcal{H})}\frac{\EP_\Li (\rho)}{D(\rho\|E_*(\rho))}.
    \label{def:MLSIconstat}
\end{align}
By integration and use of Gronwall's inequality it follows that any QMS $\{e^{t\Li}\}_{t\geq0}$ which satisfies the MLSI with strictly positive MLSI constant $\alpha\equiv\alpha(\mathcal{L})>0$ induces exponential convergence in relative entropy to its stationary states, i.e.
\begin{align}
    D(e^{t\Li_*}(\rho)\|E_*(\rho))\leq e^{-\alpha t}D(\rho\|E_*(\rho)).
\end{align}
One important way to establish the existence of such constants in the classical setting is to exploit its stability under tensorization. This allows us to describe the dynamics of large composite systems via their dynamics on small subregions. This is, however, often not straightforward in the quantum setting, i.e. if we have two QMS $\{e^{t\mathcal{L}}\}_{t\geq0}, \{e^{t\mathcal{K}}\}_{t\geq 0}$, then the joint evolution, given by $\{e^{t\Li}\otimes e^{t\mathcal{K}}\}_{t\ge 0}$ is not necessarily as quickly mixing as the slower individual one, i.e. $\alpha(\Li+\mathcal{K})\not\geq \min\{\alpha(\Li),\alpha(\mathcal{K})\}$. \cite{art:brannan2020complete}.
In order to recover the stability under tensorization we introduce the so-called \textit{complete MLSI} (cMLSI) and the \textit{cMLSI constant} 
\begin{align}
    \alpha_c(\mathcal{L}):=\inf_{n\in\mathbb{N}}\alpha(\mathcal{L}\otimes \id_n),
\end{align} where $\id_n:\mathcal{B}(\mathbb{C}^n)\to \mathcal{B}(\mathbb{C}^n)$ is the identity channel \cite{art:CompleteOrder}.
Hence, we say that the QMS $\{e^{t\Li}\}_{t\geq 0}$ satisfies the cMLSI if the QMS $\{e^{t\Li}\otimes\id_n\}_{t\geq 0}$ satisfies the MLSI for all ancillary systems of arbitrary dimension with the same constant.
In \cite{art:brannan2020complete} it was shown that for two QMS with commuting generators $\mathcal{L},\mathcal{K}$, respectively, it holds that
\begin{align}
    \alpha_c(\Li+\mathcal{K})\geq \min\{\alpha_c(\Li),\alpha_c(\mathcal{K})\}.
\end{align}

Next, the following important result from \cite{art:CompleteEntropicInequalities_GaoRouze_2022,art:gao2021geometric} guarantees the existence of positive cMLSI constants for a sufficiently large class of QMS.
\begin{theorem}[\cite{art:CompleteEntropicInequalities_GaoRouze_2022}]\label{thm:finiteregioncMLSI}
For any GNS-symmetric QMS $\{e^{t\mathcal{L}}\}_{t\geq0}$ on the algebra $\mathcal{B}(\mathcal{H})$ of bounded linear operators over some finite dimensional Hilbert space $\mathcal{H}$, we have $ \alpha_c(\mathcal{L})>0$, with $\alpha_c(\mathcal{L})= \Omega\left(\frac{\lambda(\mathcal{L})}{\log\dim\mathcal{H}}\right)$. 
In particular, for many-body quantum systems, this bound on the cMLSI constant gives a `trivial' lower bound that is decreasing at best as $\Omega(|\Gamma|^{-1})$ as long as the gap $\lambda(\mathcal{L}_\Gamma)$ is constant.
\end{theorem}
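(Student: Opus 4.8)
The plan is to derive the complete MLSI from the spectral gap by routing through hypercontractivity (equivalently, a $2$-log-Sobolev inequality) while keeping careful track of \emph{which} ``dimensional'' constant controls the loss, so that it does not blow up under tensoring with an ancilla. By definition $\alpha_c(\mathcal{L})=\inf_n\alpha(\mathcal{L}\otimes\id_n)$, so it suffices to prove, uniformly in $n$, that $\alpha(\mathcal{L}\otimes\id_n)=\Omega\!\big(\lambda(\mathcal{L})/\log\dim\mathcal{H}\big)$. Two elementary stability facts make this plausible: the spectral gap is tensor-exact, $\lambda(\mathcal{L}\otimes\id_n)=\lambda(\mathcal{L})$ (the Dirichlet form and the KMS-variance of $\mathcal{L}\otimes\id_n$ are computed ``column-wise'' on $\mathcal{B}(\mathcal{H})\otimes\mathcal{B}(\mathbb{C}^n)$), and the fixed-point projection of $\mathcal{L}\otimes\id_n$ is $E_{\mathcal{N}}\otimes\id_n$, a conditional expectation onto $\mathcal{N}\otimes\mathcal{B}(\mathbb{C}^n)$, where $\mathcal{N}$ is the fixed-point algebra of $\mathcal{L}$ (use the structure theorem for conditional expectations from Section \ref{sec:ConditionalExpectations} to reduce to a single block $\mathcal{H}_i\otimes\mathcal{K}_i$). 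Crucially, one should \emph{not} invoke a single-system MLSI bound on $\mathcal{H}\otimes\mathbb{C}^n$: that would cost $\log\dim(\mathcal{H}\otimes\mathbb{C}^n)=\log\dim\mathcal{H}+\log n$, which kills the constant as $n\to\infty$.

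For a fixed finite-dimensional system, the argument I would run is the quantum analogue of the classical Diaconis--Saloff-Coste/Miclo comparison ``gap $\Rightarrow$ log-Sobolev with a logarithmic loss''. Using GNS-symmetry (which, via KMS-symmetry, makes $e^{t\mathcal{L}}$ self-adjoint on $\mathbb{L}_{2,\sigma}$), one shows that $e^{t\mathcal{L}}$ is a contraction from $\mathbb{L}_{2,\sigma}$ into $\mathbb{L}_{q(t),\sigma}$ with $q(t)=2+\Omega\!\big(\lambda(\mathcal{L})\,t/\log C\big)$, by interpolating between the $\mathbb{L}_{2,\sigma}$-contraction with rate $\lambda(\mathcal{L})$ on $\ker E_{\mathcal{N}}$ (this is precisely the Poincaré inequality) and the trivial $\mathbb{L}_{\infty}$-contraction, using the ``ultracontractive'' comparison $\|X\|_{\infty}\le C^{1/2}\,\|X\|_{2,\sigma}$ with $C$ the Pimsner--Popa index of $E_{\mathcal{N}}$ (in the primitive case $C=\|\sigma^{-1}\|_\infty$), together with log-convexity of the $\mathbb{L}_{p,\sigma}$-norms in $1/p$. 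Hypercontractivity with this $q(t)$ (after a Rothaus-type step to pass from $\ker E_{\mathcal{N}}$ to all of $\mathcal{B}(\mathcal{H})$) is equivalent to a $2$-log-Sobolev inequality with constant $\alpha_2=\Omega\!\big(\lambda/\log C\big)$, and a standard operator-convexity step — the pointwise inequality $(a-b)(\log a-\log b)\ge 4(\sqrt a-\sqrt b)^2$ lifted to the GNS Dirichlet form through double operator integrals / the operator logarithmic mean — upgrades this to the MLSI with $\alpha=\Omega\!\big(\lambda/\log C\big)$. That one genuinely uses GNS- and not merely KMS-symmetry matters here: the logarithmic-mean--vs--geometric-mean comparison underlying the last step relies on the Dirichlet form having the Lindblad--Davies structure it has under GNS detailed balance.

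The main obstacle — and the reason the statement is about $\alpha_c$ rather than $\alpha$ — is making the constant $C$ in the interpolation insensitive to the ancilla. Naively using the global state $\sigma\otimes(\1_n/n)$ as the reference for $\mathcal{L}\otimes\id_n$ would give comparison constant $\|\sigma^{-1}\|\cdot n$, which grows. The fix is to run the entire hypercontractivity argument in the \emph{amalgamated} (conditional) $\mathbb{L}_p$-spaces over the subalgebra $\mathcal{N}\otimes\mathcal{B}(\mathbb{C}^n)$ rather than over a global state: there the relevant constant is the complete (cb) Pimsner--Popa index of $E_{\mathcal{N}}$, which equals the index of $E_{\mathcal{N}}$ itself, since the identity factor $\mathcal{B}(\mathbb{C}^n)$ is ``absorbed'' — $E_{\mathcal{N}}\otimes\id_n$ acts on it as the identity, so it contributes only a factor $1$ (concretely, the Choi operator of $E_{\mathcal{N}}\otimes\id_n-C^{-1}\id$ is positive for the same $C$ that works for $E_{\mathcal{N}}-C^{-1}\id$, because the extra $\mathbb{C}^n$ modes appear identically on both sides). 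Setting up these amalgamated $\mathbb{L}_p$-spaces, together with the interpolation, the contraction estimates, the Rothaus step, and the passage back to the MLSI in this framework, is the technically demanding part and is exactly the content of \cite{art:CompleteEntropicInequalities_GaoRouze_2022,art:gao2021geometric} (compare also the tensorization stability of $\alpha_c$ under commuting generators used there \cite{art:brannan2020complete}). The stated bound with $\log\dim\mathcal{H}$, and the corollary that it decays at worst like $\Omega(|\Gamma|^{-1})$ at fixed gap since $\dim\mathcal{H}_\Gamma=d^{|\Gamma|}$, then follow immediately once one notes $\log C=\mathcal{O}(\log\dim\mathcal{H})$ — which holds in our setting because $\|\sigma^{-1}\|=e^{\mathcal{O}(|\Gamma|)}$ for Gibbs states of $J$-bounded geometrically-local Hamiltonians.
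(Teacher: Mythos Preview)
The paper does not prove this theorem; it is quoted with citation to \cite{art:CompleteEntropicInequalities_GaoRouze_2022} (see also \cite{art:gao2021geometric}) and used as a black box in the rest of the argument. There is therefore no in-paper proof to compare your proposal against.

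That said, your sketch is a reasonable high-level outline of the strategy in those references. You correctly identify the decisive point: the loss in passing from gap to (complete) MLSI must be governed by a quantity that is stable under tensoring with $\mathcal{B}(\mathbb{C}^n)$, namely the (complete) Pimsner--Popa index of the fixed-point conditional expectation $E_{\mathcal{N}}$, rather than by a global $\|\sigma^{-1}\|$-type constant which would pick up a factor of $n$. Running the interpolation/hypercontractivity machinery in amalgamated $\mathbb{L}_p$-spaces over $\mathcal{N}\otimes\mathcal{B}(\mathbb{C}^n)$ is exactly how the cited works achieve this uniformity. The final bound $\log C=\mathcal{O}(\log\dim\mathcal{H})$ and the $\Omega(|\Gamma|^{-1})$ corollary are then immediate, as you note.
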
 

Local existence of a strictly positive cMLSI constant is a promising starting point. However, on its own it does not give good bounds for systems in the thermodynamic limit. The strategy used in this work, when showing existence of a cMLSI constant with a better scaling than $\Omega(|\Lambda|^{-1})$, is to use approximate tensorization results to geometrically break down the lattice into finite-size parts. Then, we apply Theorem \autoref{thm:finiteregioncMLSI} to each of these small regions and put them back together into the whole lattice. Such an approach is sometimes called a ``divide-and-conquer strategy", or a global-to-local reduction.

\section{Dynamical properties: Local generators and conditional expectations}
\label{sec:dynamic}

In this work we consider two classes of dissipation dynamics. The first one, commonly known as Davies dynamics, is a physically motivated Quantum Markov Semigroup. It is typically used to model thermalization of finite-dimensional quantum systems weakly coupled to their environment. The main result of this paper is formulated with respect to it, so
we devote Section \ref{subsec:Davies} to the introduction of the Davies evolution and to some technical results concerning it.

The second class of semigroups we explore are the so called \textit{Schmidt generators}, see Section \ref{subsec:SchmidtCondExp} for its definition and the notation used. It serves as a mathematically more tractable model that will be useful in the proofs, even if it lacks a clear physical interpretation. We will use these dynamics as a proxy to derive rigorous bounds on the mixing time of the former through establishment of the modified logarithmic Sobolev inequality.

\subsection{Davies dynamics}\label{subsec:Davies} 
The main dynamics considered here are the Davies dynamics introduced in \cite{art:Davies1archbold_1983}, see also \cite{DaviesBook,art:Davies1}, \cite{art:OpenQuantumSystemsRivas_2012} for a more modern derivation, and \cite{art:QuantumGibbsSamplers-kastoryano2016quantum}, whose notation we will be mostly using, in the context of Gibbs sampling.
On a finite subsystem $\Gamma\subset\subset\Lambda$, its generators, the \textit{Davies Lindbladians} or \textit{generators} (in the Heisenberg picture) are given by
\begin{align}\label{def:globalDavies}
    \Li_\Lambda^D(X) = i[H_\Lambda,X] + \sum_{x\in\Lambda}\mathcal{L}^D_{x}(X) \, ,
\end{align} where $H_\Lambda$ is the lattice Hamiltonian and dissipative terms 
\begin{align}\label{def:dissipativeDavies}
    \Li_x^D(X) =\sum_{\omega,\alpha(x)} \chi_{\alpha(x)}(\omega)\left(S^*_{\alpha(x)}(\omega)XS_{\alpha(x)}(\omega)-\frac{1}{2}(S^*_{\alpha(x)}(\omega)S_{\alpha(x)}(\omega)X+XS^*_{\alpha(x)}(\omega)S_{\alpha(x)}(\omega))\right) \, .
\end{align} Here $\omega\in\text{spec}(H_\Lambda)-\text{spec}(H_\Lambda)$ are the Bohr frequencies, $\chi_{\alpha(x)}(\omega)$ a function satisfying the KMS-condition $\chi_{\alpha(x)}(-\omega)=e^{-\beta\omega}\chi_{\alpha(x)}(\omega)$ and $S_{\alpha(x)}(\omega)$ operators related to the interaction between our lattice and the thermal environment \cite{art:QuantumGibbsSamplers-kastoryano2016quantum}. 
If we assume that our system is described by a uniformly bounded, geometrically-$r$-local, commuting family of Hamiltonians $\{H_\Gamma\}_{\Gamma\subset\subset\Lambda}$, as in \eqref{def:UnfiformHamiltonian}, then the above generator reduces to a \textit{local Davies generator}
\begin{align}\label{def:localDavies}
    \Li^D_\Gamma(X) = i[H_\Gamma,X]+\sum_{x\in\Gamma}\mathcal{L}^D_{x}(X) \, ,
\end{align} for any $\Gamma\subset\subset\Lambda$. 
We call these a \textit{(family of) Davies generators} associated to \textit{ $(\Lambda,\Phi,\beta)$}, whenever the family of Hamiltonians that describe our system is the one associated to $(\Lambda,\Phi)$ and the inverse temperature of the environment is $\beta>0$. Note that these are not unique, and depend on the details of the environment through $\chi_{\alpha(x)}(\omega)$. However, they always correspond to a uniformly bounded, geometrically-local family of Lindbladians satisfying the following properties:

\begin{proposition}(\cite[Lemma 11]{art:QuantumGibbsSamplers-kastoryano2016quantum})
For some inverse temperature $\beta$, some finite graph $\Lambda$, a subset $\Gamma\subset\Lambda$ and $\Phi$ a uniformly bounded, geometrically-$r$-local, commuting potential on $\Lambda$, the associated local Davies generators defined in (\ref{def:globalDavies}-\ref{def:localDavies}) satisfy
\begin{enumerate}
    \item For any subset $\Gamma\subset\Lambda$, $\{e^{t\Li^D_\Gamma}\}_{t\geq 0}$ is a CP unital semigroup with generator $\Li^D_\Gamma$.
    \item The family $\Li^D=\{\Li^D_\Gamma\}_{\Gamma}$ is geometrically-local, in the sense that each individual term $\Li^D_x$ acts only nontrivially on the region $B_{\Tilde{r}}(x)$ for some fixed radius $r\leq \tilde{r}\leq 2r$.
    \item The family $\Li^D=\{\Li^D_\Gamma\}_{\Gamma}$ is locally reversible and satisfies detailed balance w.r.t the global Gibbs state $\sigma^\Lambda$.
    \item The family $\Li^D=\{\Li^D_\Gamma\}_{\Gamma}$ is frustration-free.
\end{enumerate}
\end{proposition}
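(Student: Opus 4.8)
The plan is to read off the four properties directly from the explicit form \eqref{def:globalDavies}--\eqref{def:localDavies} of the Davies generator, using commutativity of the potential only where it is really needed. For item (1), observe that each dissipative term $\Li^D_x$ is already presented in GKLS form: its ``jump part'' $\sum_{\omega,\alpha(x)}\chi_{\alpha(x)}(\omega)\,S^*_{\alpha(x)}(\omega)(\cdot)S_{\alpha(x)}(\omega)$ is completely positive because the bath weights satisfy $\chi_{\alpha(x)}(\omega)\ge 0$ (a consequence of the KMS condition and Bochner's theorem), and the remaining piece is of Hamiltonian type $i[H_\Gamma,\cdot]$. Hence $\Li^D_\Gamma$ generates a CP semigroup by the Hille--Yosida/GKLS theorem, and it is unital since $i[H_\Gamma,\1]=0$ and each term $S^{*}S\1-\tfrac12(S^{*}S\1+\1 S^{*}S)$ vanishes, so $\Li^D_\Gamma(\1)=0$. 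The uniform $\tilde J$-boundedness of the family then follows, once item (2) is in place, from the (region-independent) bounds on $\|S_{\alpha(x)}\|$, on the number of coupling labels per site, and on $\chi_{\alpha(x)}$.

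The key point is item \textbf{(2)}. Fix a site $x$ and set $h_x:=\sum_{X:\,X\cap\operatorname{supp}(S_{\alpha(x)})\neq\emptyset}\Phi_X$, the sum of the potential terms that can fail to commute with the coupling operator $S_{\alpha(x)}$; since $\operatorname{supp}(S_{\alpha(x)})$ and each such $X$ have diameter $\mathcal{O}(r)$, the operator $h_x$ is supported inside a ball $B_{\tilde r}(x)$ with $r\le\tilde r\le 2r$. By construction $[S_{\alpha(x)},H_\Gamma-h_x]=0$, and $[h_x,H_\Gamma-h_x]=0$ by commutativity of the potential, whence
\begin{align*}
  e^{itH_\Gamma}\,S_{\alpha(x)}\,e^{-itH_\Gamma}=e^{ith_x}\,S_{\alpha(x)}\,e^{-ith_x},
\end{align*}
an operator supported on $B_{\tilde r}(x)$, uniformly in $\Gamma$. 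The Bohr components $S_{\alpha(x)}(\omega)$ arise from this operator via the ergodic averages $\lim_{T\to\infty}\tfrac1{2T}\int_{-T}^{T}e^{-i\omega t}\,e^{ith_x}S_{\alpha(x)}e^{-ith_x}\,dt$, so they too are supported on $B_{\tilde r}(x)$; consequently every term of $\Li^D_x$ is supported on $B_{\tilde r}(x)$, i.e.\ the family is geometrically local with range $\tilde r\le 2r$.

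For item (3) I would invoke the classical fact (Davies; Alicki; Frigerio--Gorini--Kossakowski--Verri) that a Davies generator satisfies quantum detailed balance with respect to the Gibbs state whenever the bath functions obey $\chi_{\alpha(x)}(-\omega)=e^{-\beta\omega}\chi_{\alpha(x)}(\omega)$. The two ingredients of this verification are the modular covariance $\Delta_{\sigma^\Gamma}\!\big(S_{\alpha(x)}(\omega)\big)=e^{-\beta\omega}S_{\alpha(x)}(\omega)$, obtained by analytically continuing $e^{itH_\Gamma}S_{\alpha(x)}(\omega)e^{-itH_\Gamma}=e^{i\omega t}S_{\alpha(x)}(\omega)$ to $t=i\beta$, and the reflection identity $S^{*}_{\alpha(x)}(\omega)=S_{\alpha(x)}(-\omega)$ for a self-adjoint coupling; together with the KMS condition these show that $\sum_x\Li^D_x$ is GNS-symmetric w.r.t.\ $\sigma^\Gamma$, while the coherent term $i[H_\Gamma,\cdot]$ commutes with the modular group $\{\Delta^{is}_{\sigma^\Gamma}\}$ and fixes $\sigma^\Gamma$ (since $[H_\Gamma,\sigma^\Gamma]=0$). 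This is precisely detailed balance w.r.t.\ $\sigma^\Gamma$, in particular $\sigma^\Gamma$ is stationary and the dissipative part is reversible. Local reversibility w.r.t.\ a larger Gibbs state $\sigma^\Pi$ ($\Gamma\subset\Pi$) then follows from item (2): each $\Li^D_x$ only involves the potential terms in $B_{\tilde r}(x)$, so $h_x$ and the modular-covariance relations are unchanged when $H_\Gamma$ is replaced by $H_\Pi$ (the extra terms of $H_\Pi$ commute with $S_{\alpha(x)}$), and $H_\Gamma$ commutes with $H_\Pi$.

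Finally, item (4) is the general mechanism already recorded in \Cref{sec:ConditionalExpectations}: local reversibility forces frustration-freeness. Concretely, the symmetrized generator of $\Li^D_\Pi$ is $\sum_{x\in\Pi}\Li^D_x$, a sum of GNS-symmetric, negative semidefinite operators on the GNS Hilbert space of $\sigma^\Pi$; for $X$ with $\Li^D_\Pi(X)=0$ one first deduces $\sum_{x\in\Pi}\Li^D_x(X)=0$ (by splitting $\langle X,\Li^D_\Pi(X)\rangle^{\GNS}_{\sigma^\Pi}$ into its real dissipative part and purely imaginary coherent part, both of which must vanish), and since each summand is $\le 0$ this gives $X\in\ker\Li^D_x$ for every $x\in\Pi\supseteq\Gamma$, hence $X\in\ker\Li^D_\Gamma$; enlarging the region only adds constraints, so $\ker\Li^D_\Pi\subseteq\ker\Li^D_\Gamma$. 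I expect the main obstacle to lie in item (2) and, more precisely, in the bookkeeping needed to make its conclusion uniform over all finite $\Gamma$ and compatible with items (1) and (3): one has to check that the Bohr components entering the local generator on $\Gamma$ genuinely depend only on the potential near $x$ (so that $\tilde r$ and $\tilde J$ are $\Gamma$-independent) and that the modular identities used for detailed balance and local reversibility localize in the same way — all of which hinges on commutativity of the potential.
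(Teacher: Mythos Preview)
The paper does not actually prove this proposition: it is stated with attribution to \cite[Lemma 11]{art:QuantumGibbsSamplers-kastoryano2016quantum} and no proof is given in the present manuscript. Your sketch is essentially the standard argument one finds in that reference and its predecessors (Davies, Alicki, Spohn), and the overall structure is correct: GKLS form for (1), commutativity-based localization of the Bohr components for (2), KMS condition plus modular covariance for (3), and the negative-semidefiniteness argument for (4).

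One point deserves a bit more care. In item (3)--(4) you treat the full generator $\Li^D_\Gamma = i[H_\Gamma,\cdot] + \sum_x \Li^D_x$ as GNS-symmetric, but the coherent part $i[H_\Gamma,\cdot]$ is GNS-\emph{anti}symmetric (since $H_\Gamma$ commutes with $\sigma^\Gamma$), not symmetric. The usual resolution --- and what is implicit in the cited reference --- is that for commuting potentials the coherent and dissipative parts commute, share the same fixed-point algebra, and the coherent part is often dropped or treated separately; the detailed-balance claim then refers to the dissipative part $\sum_x \Li^D_x$. Your item (4) argument already anticipates this by splitting $\langle X,\Li^D_\Pi(X)\rangle^{\GNS}$ into real and imaginary parts, which is the right idea, but you should state explicitly that the coherent term contributes only to the imaginary part and that $i[H_\Gamma,X]=0$ follows because $X$ lies in the commutant of $H_\Gamma$ (itself a consequence of the dissipative kernel containing only operators commuting with all Bohr components). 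Otherwise the sketch is sound and matches the approach of the cited source.
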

Recalling \Cref{sec:ConditionalExpectations}, we see that $E^D_\Gamma(X):=\lim_{t\to\infty}e^{t\Li^D_\Gamma}(X)$ is a conditional expectation, called the \textit{Davies conditional expectation}.

\subsection{Dynamics generated by Schmidt conditional expectations}\label{subsec:SchmidtCondExp}

Although the Davies dynamics are our target due to their physicality, their corresponding conditional expectations are not always mathematically tractable. For this reason it will be very useful to consider the so-called Schmidt conditional expectation, as introduced in \cite{art:2localPaper} inpired by \cite{art:bravyi2004commutative}. As we show below, their conditional expectation are much more tractable, yet the dynamics they induce are closely related to the Davies. The construction and results in this section work for any 2-colorable graph of finite growth constant.

Let $\Lambda = (V,E_V)$ be a quantum spin system with bounded, nearest-neighbour, commuting potential $\Phi:(i,j)\mapsto h_{i,j}$, such that the Hamiltonians on finite sub-lattices $\Gamma=(G,E_G)\subset\subset\Lambda$ can be written as
\begin{align}
    H_\Gamma=\sum_{(i,j)\in E_G} h_{i,j},
\end{align} where each term $h_{i,j}$ acts only non-trivially on vertices $i$ and $j$.
The Gibbs state of $H\equiv H_\Gamma$ with inverse temperature $\beta$ is
\begin{align}
    \sigma = \frac{e^{-\beta H}}{\Tr[e^{-\beta H}]} = \frac{\prod_{\{i,j\}\in E_V}e^{-\beta h_{i,j}}}{\Tr[\prod_{\{i,j\}\in E_V}e^{-\beta h_{i,j}}]}.
\end{align}
Given some $A\subset\subset\Lambda$ we will define a suitable *-algebra $\mathcal{N}_A$ and conditional expectation $E_A^S$ onto it, whose pre-adjoint has the Gibbs state as an invariant state. For simplicity of notations, we do this for a singleton $A=\{a\}$. However, this construction works similarly for all $A\subset\subset\Lambda$.
Given some $a\in\Lambda$, we enumerate the sets
\begin{align}
    \partial\{a\}&:=\{x\in\Lambda|\dist(x,a)=1\} = \{b_i\}_{i\in I_a}, \\
    \partial\{b_i\}&:=\{x\in\Lambda|\dist(x,b_i)=1\} = \{c_{i,j}\}_{j\in J^{(i)}},\text{ s.t. }a=c_{i,0} \ \forall i.
\end{align} 
Hence $\partial(\partial\{a\})\setminus\{a\}=\{y\in\Lambda| \dist(a,y)=2\}=\{c_{i,j}\}_{i\in I_a,j\in J^{(i)}\setminus\{0\}}$. See Figure \ref{fig:Schmidt} for a graphical example of these definitions. 

\begin{figure}[h]
    \centering
    \includegraphics[scale=0.5]{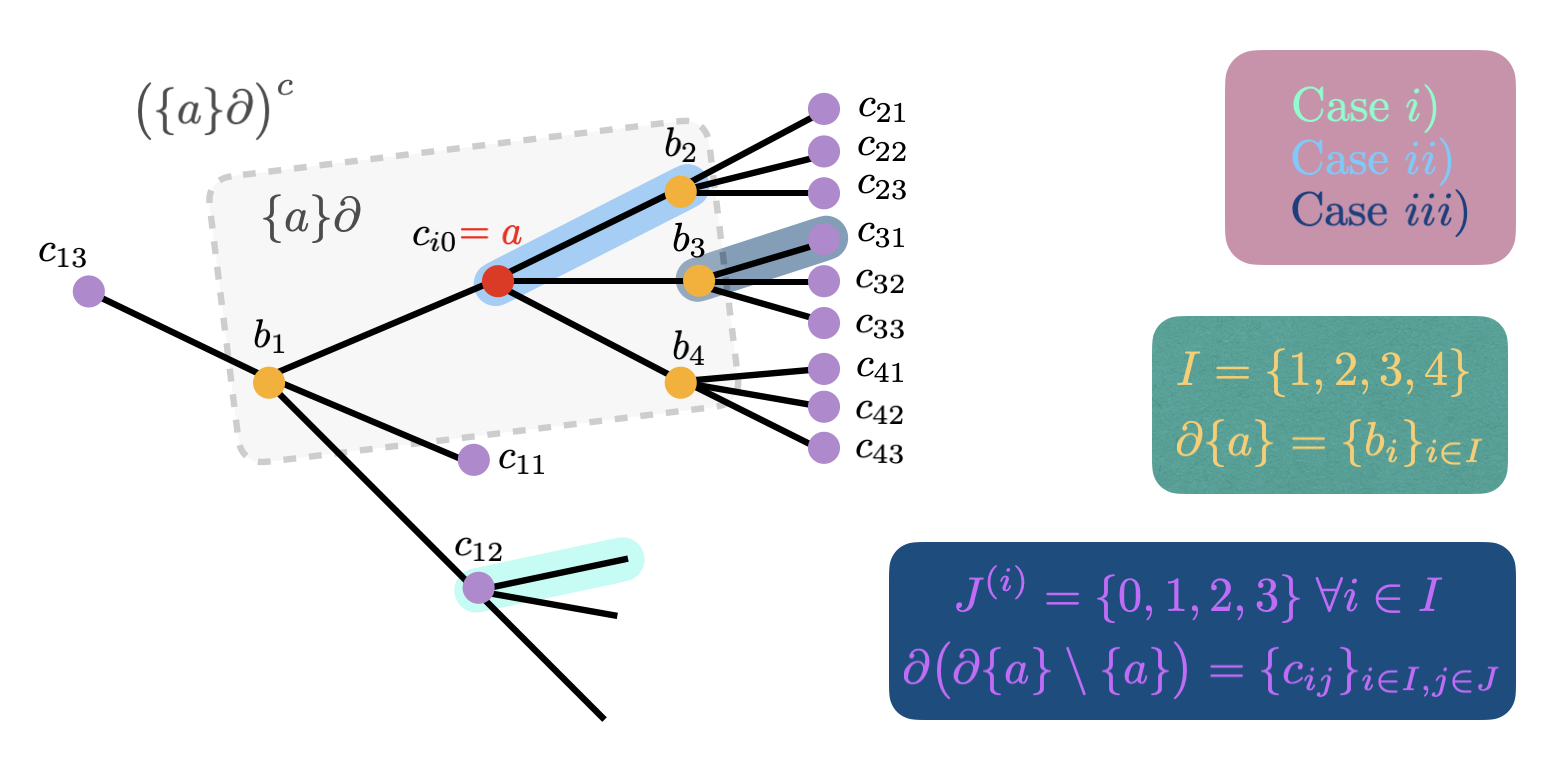}
    \caption{Simple example of the notation required for the local algebras $\mathcal{N}_{a}$ and thus the Schmidt conditional expectation $E^S_{a}$. Depicted is a small region of a 3-ary tree, with a (red) vertex labeled {\color{red} a}=$c_{i,0}$. Its neighbours (yellow) are labeled with {\color{RoyalYellow}$\{b_i\}_{i\in I}= \partial\{a\}$}, where $I\equiv I_a=\{1,2,3,4\}$. The next-nearest neighbours (purple) are  {\color{purplemod}$\partial(\partial\{a\})\setminus\{a\}=\{c_{i,j}\}_{i\in I,j\in J^{(i)}\setminus\{0\}}$}. The central vertex {\color{red}a} is logically the same as $c_{i,0}$ for each $i\in I$. 
    In the proof of proposition \ref{prop:ModularInvariance} the case i) considers for example the {\color{Turquise}turquiose shaded edge}, case ii) for example the {\color{SkyBlue}blue shaded edge} and case iii) for example the {\color{RoyalBlue}dark blue shaded edge}. The boundary between the subsets $\{a\}\partial$ and $(\{a\}\partial)^c$ is marked with a {\color{gray}gray dotted line}.}
    \label{fig:Schmidt}
\end{figure}

We will drop the index $a$ of $I\equiv I_a$, the labeling of all the neighbours of $a$ in the following. 
Now we Schmidt-decompose
\begin{align}
    e^{-\beta h_{b_ic_{ij}}}=\sum_{s}X^{j,s}_{b_i}\otimes X^s_{c_{ij}}
\end{align} for $i\in I$, where the operators $\left\{X^{j,s}_{b_i}\right\}_{j,s}\subset\mathcal{B}(\mathcal{H}_{b_i})$ and for $j\in J^{(i)}$, $\left\{X^{s}_{c_{ij}}\right\}_{s}\subset\mathcal{B}(\mathcal{H}_{c_{ij}})$.
Let us define the *-algebra $\mathscr{A}^j_{b_i}$ to be generated by all $\left\{X^{j,s}_{b_i}\right\}_{s}$, for all $i\in I$. \cite{art:bravyi2004commutative}.
\begin{proposition}
Any two non-identical of these algebras $\left\{\mathscr{A}^j_{b_i}\right\}_{i\in I,j\in J^{(i)}}$ commute.
\end{proposition}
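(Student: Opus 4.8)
The plan is to prove the slightly stronger statement that \emph{all} of the algebras $\mathscr{A}^j_{b_i}$ pairwise commute (the ``non-identical'' hypothesis only excludes the vacuous question of an algebra commuting with itself, which can fail if $\mathscr{A}^j_{b_i}$ is non-abelian), and to do so by a short case analysis that reduces the one non-trivial case to the standard structural fact about commuting two-local operators. First I would split according to whether the two algebras are hosted on the same vertex of $\partial\{a\}$.

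Comparing $\mathscr{A}^{j}_{b_i}$ with $\mathscr{A}^{j'}_{b_{i'}}$ for $i\neq i'$: since $i\mapsto b_i$ is an injective enumeration of $\partial\{a\}$, the vertices $b_i$ and $b_{i'}$ are distinct, and by construction $\mathscr{A}^{j}_{b_i}\subset\mathcal{B}(\mathcal{H}_{b_i})$ and $\mathscr{A}^{j'}_{b_{i'}}\subset\mathcal{B}(\mathcal{H}_{b_{i'}})$ act non-trivially only on disjoint tensor factors of $\mathcal{H}_\Lambda$, hence commute. This leaves the case $i=i'$ and $j\neq j'$, in which both algebras sit inside the single-site algebra $\mathcal{B}(\mathcal{H}_{b_i})$.

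For this case I would argue as follows. Since $j\mapsto c_{i,j}$ is an injective enumeration of $\partial\{b_i\}$ (with $c_{i,0}=a$), the sites $b_i$, $c_{i,j}$ and $c_{i,j'}$ are three \emph{distinct} vertices of $\Lambda$. Because the potential $\Phi\colon(x,y)\mapsto h_{x,y}$ is commuting, $[h_{b_i c_{i,j}},h_{b_i c_{i,j'}}]=0$, so the self-adjoint operators $A:=e^{-\beta h_{b_i c_{i,j}}}$ and $B:=e^{-\beta h_{b_i c_{i,j'}}}$ commute on $\mathcal{H}_{b_i}\otimes\mathcal{H}_{c_{i,j}}\otimes\mathcal{H}_{c_{i,j'}}$. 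Substituting the Schmidt decompositions $A=\sum_s X^{j,s}_{b_i}\otimes X^s_{c_{i,j}}\otimes\1$ and $B=\sum_t X^{j',t}_{b_i}\otimes\1\otimes X^t_{c_{i,j'}}$ into $[A,B]=0$ yields $\sum_{s,t}[X^{j,s}_{b_i},X^{j',t}_{b_i}]\otimes X^s_{c_{i,j}}\otimes X^t_{c_{i,j'}}=0$; as the families $\{X^s_{c_{i,j}}\}_s$ and $\{X^t_{c_{i,j'}}\}_t$ are linearly independent (components of a Schmidt decomposition), so are the $X^s_{c_{i,j}}\otimes X^t_{c_{i,j'}}$ in $\mathcal{B}(\mathcal{H}_{c_{i,j}}\otimes\mathcal{H}_{c_{i,j'}})$, forcing $[X^{j,s}_{b_i},X^{j',t}_{b_i}]=0$ for all $s,t$. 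Since $A$ and $B$ are self-adjoint their Schmidt decompositions may be taken with self-adjoint factors, so the generators of $\mathscr{A}^{j}_{b_i}$ commute with the generators of $\mathscr{A}^{j'}_{b_i}$ and their adjoints, and hence the two $*$-algebras commute; combined with the first case this proves the claim.

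I expect the only point needing care to be the index bookkeeping in the second case — checking that $b_i$, $c_{i,j}$ and $c_{i,j'}$ are genuinely three distinct sites, which is where the overlap convention $c_{i,0}=a$ (the vertex $a$ being simultaneously a neighbour of every $b_i$) must be tracked. Once those sites are pinned down, the remaining computation is exactly the elementary lemma that commuting operators on $\mathcal{H}_1\otimes\mathcal{H}_2$ and $\mathcal{H}_1\otimes\mathcal{H}_3$ have commuting Schmidt components on the shared factor $\mathcal{H}_1$, so no further obstacle arises.
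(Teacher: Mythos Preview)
Your proof is correct and follows essentially the same approach as the paper: the same two-case split on whether the algebras live on the same boundary vertex, and in the non-trivial case the same argument via the Schmidt decomposition and linear independence of the Schmidt factors. Your added remark about choosing self-adjoint Schmidt factors (so that commutation of generators lifts to commutation of $*$-algebras) is a small refinement the paper leaves implicit.
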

\begin{proof}
Consider $\mathscr{A}^j_{b_i}$ and $\mathscr{A}^n_{b_m}$. If $i\neq m$, then the statement is obviously true, since their generators act on different Hilbert spaces $\mathcal{H}_{b_i}, \mathcal{H}_{b_m}$, respectively. If $i=m$, see that
\begin{align}
    &0= \left[e^{-\beta h_{b_ic_{ij}}},e^{-\beta h_{b_ic_{in}}} \right] = \left[\sum_s X_{b_i}^{j,s}\otimes X_{c_{ij}^s},\sum_r X_{b_i}^{n,r}\otimes X^r_{c_{in}}\right] \overset{j\neq n}{=} \sum_{s,r}X^s_{c_{ij}}\otimes X^r_{c_{in}} \otimes \left[X_{b_i}^{j,s} ,X_{b_i}^{n,r} \right] \\
    &\implies \left[X_{b_i}^{j,s} ,X_{b_i}^{n,r} \right]=0 \ \forall s,r ,
\end{align} where the last implication follows since $\{X^s_{c_{ij}}\}_s,\{X^r_{c_{in}}\}_r$ form a set of linear independent operators by Schmidt decomposition.
\end{proof}
Therefore these algebras and the underlying Hilbert spaces admit the following joint decomposition
\begin{align}
    \mathcal{H}_{b_i}:=\bigoplus_{\alpha_i}\bigotimes_{j\in J^{(i)}}\mathcal{H}_j^{\alpha_i}\otimes \mathcal{H}_c^{\alpha_i}=:\bigoplus_{\alpha_i}P^{\alpha_i}\mathcal{H}_{b_i},
\end{align} where $P^{\alpha_i}$ are orthogonal projectors such that
\begin{align}
    P^{\alpha_i}\mathcal{H}_{b_i}=\bigotimes_{j\in J^{(i)}}\mathcal{H}_j^{\alpha_i}\otimes \mathcal{H}_c^{\alpha_i},
\end{align} and $\{\mathcal{H}_j^{(\alpha_i)},\mathcal{H}_c^{(\alpha_i)}\}_{\alpha_i}$ are such that
\begin{align}
    \mathscr{A}^j_{b_i}=\bigoplus_{\alpha_i}\mathcal{B}(\mathcal{H}_j^{\alpha_i})\otimes \1_{\bigotimes_{k\in J^{(i)}\setminus\{j\}}\mathcal{H}_k^{\alpha_i}\otimes \mathcal{H}_c^{\alpha_i}} \hspace{1cm} \forall j\in J^{(i)}.
\end{align}
\begin{definition}
For any $\{a\}\in\Lambda$, define the *-subalgebra $\mathcal{N}_a:=\1_{\{a\}}\otimes \bigotimes_{i\in I}\bigotimes_{j\in J^{(i)}\setminus\{0\}}\mathscr{A}^j_{b_i}\otimes \mathcal{B}(\mathcal{H}_{(\{a\}\partial)})^c \subset \mathcal{B}(\mathcal{H}_\Lambda) $.
\end{definition}
\begin{proposition} \label{prop:ModularInvariance}
The modular group of the Gibbs state $\sigma$ leaves this algebra invariant, for any $a\in\Lambda$, i.e.
\begin{align}
    \Delta^{it}_{\sigma}(\mathcal{N}_a)\subset\mathcal{N}_a \hspace{1cm} \forall t\in\mathbb{R},
\end{align} where $\Delta_\sigma^{it}(X):=\sigma^{it}X\sigma^{-it}, \ \forall X\in\mathcal{B}(\mathcal{H}_\Lambda)$.
\end{proposition}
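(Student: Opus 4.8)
The plan is to use that, because the potential is commuting, the modular automorphism $\Delta_\sigma^{it}$ is just conjugation by a product of mutually commuting single-edge unitaries, and then to split that product into one factor that sits inside $\mathcal{N}_a$ and one that commutes with $\mathcal{N}_a$.

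First I would write $\sigma=e^{-\beta H_\Gamma}/\Tr[e^{-\beta H_\Gamma}]$ with $H_\Gamma=\sum_{(k,l)\in E_G}h_{k,l}$ and all $h_{k,l}$ mutually commuting, so that $\sigma^{it}=\Tr[e^{-\beta H_\Gamma}]^{-it}\prod_{(k,l)\in E_G}e^{-i\beta t\,h_{k,l}}$ (the order being irrelevant). The scalar drops out of $\Delta_\sigma^{it}(X)=\sigma^{it}X\sigma^{-it}$, so $\Delta_\sigma^{it}$ is conjugation by $U_t:=\prod_{(k,l)}e^{-i\beta t h_{k,l}}$. The second ingredient is that for each edge $\{b_i,c_{ij}\}$, including $j=0$ where $c_{i,0}=a$, the Schmidt decomposition $e^{-\beta h_{b_ic_{ij}}}=\sum_s X^{j,s}_{b_i}\otimes X^s_{c_{ij}}$ places this positive invertible operator inside the finite-dimensional C$^*$-algebra $\mathscr{A}^j_{b_i}\otimes\mathcal{B}(\mathcal{H}_{c_{ij}})$; applying the continuous functional calculus ($\log$, then $\exp(-i\beta t\,\cdot)$) stays in that algebra, whence $e^{-i\beta t h_{b_ic_{ij}}}\in\mathscr{A}^j_{b_i}\otimes\mathcal{B}(\mathcal{H}_{c_{ij}})$ for all $t$.

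Then I would classify the edges of $\Gamma$ relative to $a$. By 2-colorability (no triangle $ab_ib_{i'}$ and, since each $c_{ij}$ with $j\neq 0$ lies at even distance from $a$, no triangle $ab_ic_{ij}$) every edge is of exactly one kind: $\{a,b_i\}$ for some $i\in I$; $\{b_i,c_{ij}\}$ with $i\in I$ and $j\in J^{(i)}\setminus\{0\}$, in which case $c_{ij}\in(\{a\}\partial)^c$; or an edge contained in $(\{a\}\partial)^c$. This factors $U_t=U_t^{(0)}V_t$, with $U_t^{(0)}:=\prod_{i\in I}e^{-i\beta t h_{ab_i}}$ and $V_t$ the product over the remaining edges; the factors commute, and so do these two blocks. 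Now $V_t\in\mathcal{N}_a$: each of its factors is either $e^{-i\beta t h_{b_ic_{ij}}}\in\mathscr{A}^j_{b_i}\otimes\mathcal{B}(\mathcal{H}_{c_{ij}})\subseteq\mathcal{N}_a$ (the inclusion because $\mathscr{A}^j_{b_i}$ with $j\neq0$ is a tensor factor of $\mathcal{N}_a$ and $\mathcal{B}(\mathcal{H}_{c_{ij}})\subseteq\mathcal{B}(\mathcal{H}_{(\{a\}\partial)^c})\subseteq\mathcal{N}_a$) or $e^{-i\beta t h_{k,l}}\in\mathcal{B}(\mathcal{H}_{(\{a\}\partial)^c})\subseteq\mathcal{N}_a$, and $\mathcal{N}_a$ is an algebra. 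On the other hand $U_t^{(0)}$ commutes with $\mathcal{N}_a$: each $e^{-i\beta t h_{ab_i}}$ lies in $\mathscr{A}^0_{b_i}\otimes\mathcal{B}(\mathcal{H}_a)$, and by the preceding Proposition $\mathscr{A}^0_{b_i}$ commutes with every $\mathscr{A}^j_{b_i}$, $j\in J^{(i)}\setminus\{0\}$, while the $\mathcal{B}(\mathcal{H}_a)$-part acts on a site where $\mathcal{N}_a$ is trivial. Hence for any $X\in\mathcal{N}_a$, $\Delta_\sigma^{it}(X)=U_t^{(0)}\bigl(V_tXV_t^{*}\bigr)(U_t^{(0)})^{*}=V_tXV_t^{*}\in\mathcal{N}_a$, which is the claim.

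The only genuinely non-formal points are the functional-calculus promotion from $e^{-\beta h_{b_ic_{ij}}}$ to $e^{-i\beta t h_{b_ic_{ij}}}$ (which needs $\mathscr{A}^j_{b_i}$ to be a unital $*$-subalgebra, hence a finite-dimensional C$^*$-algebra closed under functional calculus) and the edge classification, which is exactly where 2-colorability enters and deserves to be spelled out. I expect the main pitfall to be the tempting but false belief that $\Delta_\sigma^{it}$ keeps the support of $X$ small: conjugation by $V_t$ really does spread an operator throughout $(\{a\}\partial)^c$, so the argument must go through the inclusion $V_t\in\mathcal{N}_a$ rather than any support or locality estimate.
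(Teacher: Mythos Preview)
Your argument is correct and follows essentially the same route as the paper's proof: both classify the edges into the same three types and use, on the one hand, the commutation of $\mathscr{A}^0_{b_i}$ with the other $\mathscr{A}^j_{b_i}$ (your $U_t^{(0)}$, the paper's Case~ii), and on the other, membership of $e^{-\beta h_{b_ic_{ij}}}$ in $\mathcal{N}_a$ via the Schmidt decomposition and spectral/functional calculus (your $V_t$, the paper's Cases~i and~iii). The only cosmetic difference is that you bundle the edge-unitaries into the two commuting factors $U_t^{(0)}$ and $V_t$ rather than handling each edge separately.
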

\begin{proof}
We can set $\beta=1$ without loss of generality. Let us fix $a\in\Lambda$. It is enough to show that $e^{ith_{kl}}\mathcal{N}_ae^{-ith_{kl}}\subset\mathcal{N}_a$ holds for any pair $(k,l)\in E_V$ (for examples of the following cases see also \Cref{fig:Schmidt}):

\vspace{0.2cm}

\noindent \underline{\textit{Case i)}} For $(k,l)\subset (\{a\}\partial)^c$, then this is obvious, since $\mathcal{N}_a\big|_{\mathcal{H}_{(\{a\}\partial)^c}}=\mathcal{B}(\mathcal{H}_{(\{a\}\partial)^c})$.

\vspace{0.2cm}

\noindent \underline{\textit{Case ii)}} For
$(k,l)\subset\{a\}\partial$, assume w.l.o.g. $k=a$, hence $l=b_i$ for some $i\in I$. Let $Y\in\mathcal{N}_a$ then via the spectral theorem $[e^{-ith_{ab_i}},Y]=0 \Leftrightarrow [e^{-h_{ab_i}},Y]=0$. It is enough to show this for the generators $Y=\1_{\{a\}}\otimes X^{j,s}_{b_i}$ for all $s$ and $j\in J^{(0)}\setminus\{0\}$, by closedness of the algebra.
\begin{align}
    \left[e^{-h_{ab_i}},\1_{\{a\}}\otimes X^{j,s}_{b_i}\right] = \left[\sum_r X_a^r\otimes X^{0,r}_{b_i},\1_{\{a\}}\otimes X^{j,s}_{b_i}\right] = \sum_r X_a^r \otimes \left[X_{b_i}^{0,r},X_{b_i}^{j,s}\right] = 0,
\end{align} $\forall s,j\in J^{(i)}\setminus\{0\}$ since the algebras $\mathscr{A}^0_{b_i}$ and $\mathscr{A}^j_{b_i}$ commute for $j\in J^{(i)}\setminus\{0\}$. 
\vspace{0.2cm}

\noindent \underline{\textit{Case iii)}} For $(k,l)\subset\partial\{a\}\cup(\{a\}\partial)^c$. W.l.o.g. $k=b_i$ for some $i\in I$, hence $l=c_{ij}$ for some $j\in J^{(i)}\setminus\{0\}$. Then $e^{-\beta h_{kl}}=e^{-\beta h_{b_ic_{ij}}}\otimes \1_{\mathcal{H}_{\{k,l\}^c}}=\sum_s X^{j,s}_{b_i}\otimes X_{c_{ij}}^s\otimes \1_{\mathcal{H}_{\{k,l\}^c}}\in\mathcal{N}_a$. Thus by the spectral theorem $e^{\pm ith_{kl}}\otimes \1 \in\mathcal{N}_a$ and hence by closedness of the algebra $e^{ith_{kl}}\mathcal{N}_ae^{-ith_{kl}}\subset\mathcal{N}_a$.
\end{proof}

Thus by Takesaki's theorem \cite{art:TAKESAKI1972306}, see also \cite[Proposition 10]{art:2localPaper} due to Proposition \ref{prop:ModularInvariance}, there exists a conditional expectation
\begin{align}
    E^S_{a} : \mathcal B(\mathcal{H}_\Lambda) \to \mathcal{N}_a,
\end{align} such that its pre-adjont has the Gibbs state $\sigma$, which is full-rank, as an invariant state, i.e. $E^S_{a*}(\sigma)=\sigma$.
This is called the \textit{Schmidt conditional expectation}.
\begin{remark}
The above construction works exactly the same for any subregion $A\subset\Lambda$ in place of $\{a\}\subset\Lambda$, yielding a *-subalgebra $\mathcal{N}_A$. Hence, we equally define the family of conditional expectations $\{E_A^S\}_{A\subset\Lambda}$ on $\mathcal{B}(\mathcal{H}_\Lambda)$. Similarly, for these it holds that $E^S_{A*}(\sigma)=\sigma$.
We can think of these conditional expectations as replacing any given observable on the local subset with the identity, in such a way that is consistent with the invariance of the Gibbs state under its pre-adjoint.
Hence, the family of Schmidt conditional expectations still has the desirable properties of the Davies expectations, i.e. that the Gibbs state is invariant, but their structure is easier to analyze since we can give an explicit expression for the conditional expectations which does not depend on system-environment couplings \cite{art:2localPaper}. 
\end{remark}
Before giving their explicit form, we highlight another important property of the Schmidt conditional expectations.
\begin{proposition}\label{prop:commutativityofSC}
For any two subsets $A_1,A_2\subset\Lambda$, such that $\dist(A_1,A_2)\geq 2$, the Schmidt conditional expectations $E^S_{A_1}$ and $E^S_{A_2} $ satisfy
\begin{align}
    E^S_{A_1}\circ E^S_{A_2} &=  E^S_{A_2}\circ E^S_{A_1} = E^S_{A_1\cup A_2}, \\
    E^S_{A_1}\circ E^S_{A_1 \cap A_2} &= E^S_{A_1 \cap A_2}\circ E^S_{A_1} = E^{S}_{A_1}.
\end{align}
\end{proposition}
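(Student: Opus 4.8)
The plan is to reduce both identities to statements about the *-subalgebras $\mathcal{N}_{A_1},\mathcal{N}_{A_2}$ via the key fact (Takesaki's theorem) that a conditional expectation onto a modular-invariant subalgebra with respect to a fixed full-rank state is \emph{uniquely determined} by that subalgebra and the state. Since all the $E^S_A$ are conditional expectations with respect to the same state $\sigma$, it suffices to identify the relevant ranges. Concretely, for the first identity I would show that $\mathcal{N}_{A_1}\cap\mathcal{N}_{A_2}=\mathcal{N}_{A_1\cup A_2}$ whenever $\dist(A_1,A_2)\ge 2$, and then argue that the composition $E^S_{A_1}\circ E^S_{A_2}$ is a conditional expectation onto this intersection algebra with $\sigma$ invariant, hence coincides with $E^S_{A_1\cup A_2}$. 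For the second identity I would show $\mathcal{N}_{A_1}\subset \mathcal{N}_{A_1\cap A_2}$ (which is the nesting expected of these local algebras when the smaller region is contained in the larger), so that $E^S_{A_1\cap A_2}$ acts as the identity on the range of $E^S_{A_1}$, giving $E^S_{A_1}\circ E^S_{A_1\cap A_2}=E^S_{A_1}$ directly, and similarly on the other side.

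The first step is to make precise why $\dist(A_1,A_2)\ge 2$ forces $\mathcal{N}_{A_1}$ and $\mathcal{N}_{A_2}$ to "decouple". Recall $\mathcal{N}_A$ has the shape $\1_A\otimes(\text{algebra built from Schmidt operators on }\partial A)\otimes\mathcal{B}(\mathcal{H}_{(A\partial)^c})$. When $\dist(A_1,A_2)\ge 2$, the sets $A_1\partial$ and $A_2\partial$ are disjoint, and in fact $A_2\partial\subset (A_1\partial)^c$ and $A_1\partial\subset (A_2\partial)^c$; on the region $(A_1\partial)^c$ the algebra $\mathcal{N}_{A_1}$ is the full $\mathcal{B}(\mathcal{H}_{(A_1\partial)^c})$, which in particular contains the constrained factor that $\mathcal{N}_{A_2}$ imposes there, and vice versa. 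From this I would extract (i) that $E^S_{A_1}$ and $E^S_{A_2}$ commute — each leaves the other's defining subalgebra invariant, or more concretely their "nontrivial supports" $A_1\partial$ and $A_2\partial$ are disjoint so the two maps act on tensor-complementary factors up to the shared $\mathcal{B}(\mathcal{H}_{(A_1\partial\cup A_2\partial)^c})$ piece — and (ii) that the common range is exactly the algebra $\mathcal{N}_{A_1\cup A_2}$, using that the Schmidt decomposition of $e^{-\beta h_{b_i c_{ij}}}$ used to build $\mathcal{N}_{A_1\cup A_2}$ restricts, on the boundary of $A_1$ alone, to the data building $\mathcal{N}_{A_1}$, and likewise for $A_2$. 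The nesting statement $\mathcal{N}_{A_1}\subset\mathcal{N}_{A_1\cap A_2}$ needed for the second line should follow from a similar bookkeeping: shrinking the region enlarges the set of sites on which the algebra is unconstrained and correspondingly relaxes the Schmidt-operator constraints on the (now larger) boundary region — one must check the constraints are genuinely weaker and not merely different, which is where the explicit structure of $\mathscr{A}^j_{b_i}$ as blocks in a fixed orthogonal decomposition is used.

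The main obstacle I expect is precisely step (ii): verifying that the composition lands on $\mathcal{N}_{A_1\cup A_2}$ and not something strictly larger. Commutativity of $E^S_{A_1}$ and $E^S_{A_2}$ is comparatively cheap from disjointness of boundaries, but pinning down the range of a composition of two conditional expectations onto $\mathcal{N}_{A_1}$ and $\mathcal{N}_{A_2}$ in general gives only $\mathcal{N}_{A_1}\cap\mathcal{N}_{A_2}$ \emph{when they commute}, and one still has to prove the algebraic identity $\mathcal{N}_{A_1}\cap\mathcal{N}_{A_2}=\mathcal{N}_{A_1\cup A_2}$. This requires showing the Schmidt-operator algebras attached to $A_1\cup A_2$'s boundary factor through those attached separately to $\partial A_1$ and $\partial A_2$ — true because the Hamiltonian is nearest-neighbour so no single term $h_{kl}$ straddles the distance-$\ge 2$ gap, hence the Schmidt decompositions used in the two constructions involve disjoint bond sets and the joint block decomposition of the Hilbert space on the boundary sites is the tensor product of the two separate ones. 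Once that tensor factorization of the boundary structure is established, all three equalities drop out, and I would invoke Takesaki uniqueness one last time to package $E^S_{A_1}\circ E^S_{A_2}$ as $E^S_{A_1\cup A_2}$.
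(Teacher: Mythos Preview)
Your approach matches the paper's: both reduce the first identity to the algebra equality $\mathcal{N}_{A_1}\cap\mathcal{N}_{A_2}=\mathcal{N}_{A_1\cup A_2}$ (the paper's Lemma on the Schmidt algebras) together with locality of $E^S_A$ on $A\partial$ and Takesaki uniqueness for the $\sigma$-preserving conditional expectation. One simplification you overlook: under the standing hypothesis $\dist(A_1,A_2)\ge 2$ the intersection $A_1\cap A_2$ is empty, so $E^S_{A_1\cap A_2}=\id$ and the second displayed identity is immediate without any nesting argument.
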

This follows from the fact that the conditional expectation $E^S_A$ is a local map, acting only non-trivially on $A\partial$, and the following Lemma. 
\begin{lemma} \label{lem:SchmidAlgebra}
For any two subsets $A_1,A_2\subset\Lambda$, s.t. $\dist(A_1,A_2) \ge 2$, or such that one is a subset of the other it holds that 
\begin{itemize}
    \item[1)] $\mathcal{N}_{A_1}\cap\mathcal{N}_{A_1}$ = $\mathcal{N}_{A_1\cap A_2}$ .
    \item[2)] $\mathcal{N}_{A_1}\cup\mathcal{N}_{A_2} = \mathcal{N}_{A_1\cup A_2}$.
\end{itemize}
Here, $\mathcal{N}_{A_1}\cup\mathcal{N}_{A_2}$ denotes the *-algebra generated by $\mathcal{N}_{A_1}$ and $\mathcal{N}_{A_2}$. $\mathcal{N}_{A_1}\cap\mathcal{N}_{A_2}$ denotes the *-algebra generated by all elements in both $\mathcal{N}_{A_1}$ and $\mathcal{N}_{A_2}$.
\end{lemma}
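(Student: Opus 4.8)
The plan is to reduce both claims to a careful bookkeeping of the Schmidt-decomposition building blocks $\mathscr{A}^j_{b_i}$ that generate each $\mathcal{N}_A$. Recall that by construction $\mathcal{N}_A = \1_A \otimes \bigotimes_{i,j\neq 0}\mathscr{A}^j_{b_i}\otimes \mathcal{B}(\mathcal{H}_{(A\partial)^c})$, where the inner tensor product runs over neighbours $b_i$ of points of $A$ and over the edges from $b_i$ to its own neighbours $c_{ij}$ that leave $A$ (i.e. $j\neq 0$). So the generating data of $\mathcal{N}_A$ is indexed by a set of ``outgoing'' edges at the boundary layer $\partial A$, together with the full matrix algebra on the far complement $(A\partial)^c$. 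The first step is to package this cleanly: for a subset $S\subset\Lambda$ define $\mathcal{E}(S)$ to be the collection of pairs (boundary vertex $b$, outgoing edge $\{b,c\}$ with $c\notin S$), so that $\mathcal{N}_S$ is determined by $\mathcal{E}(S)$ and by the complement region $(S\partial)^c$, which is itself determined by $S$. The heart of the matter is then to check the set-theoretic identities $\mathcal{E}(A_1\cap A_2) = \mathcal{E}(A_1)\cap\mathcal{E}(A_2)$ and $\mathcal{E}(A_1\cup A_2) = \mathcal{E}(A_1)\cup\mathcal{E}(A_2)$ under the hypothesis that $A_1,A_2$ are at distance $\geq 2$ or nested, and similarly $(A_1\cap A_2)\partial)^c = (A_1\partial)^c \cup (A_2\partial)^c$ and $((A_1\cup A_2)\partial)^c = (A_1\partial)^c\cap (A_2\partial)^c$.

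For the nested case, say $A_2\subset A_1$, these are immediate: $A_1\cup A_2 = A_1$, $A_1\cap A_2 = A_2$, and the corresponding algebra identities are trivial. The interesting case is $\dist(A_1,A_2)\geq 2$, where $A_1\cap A_2=\emptyset$ and $A_1\partial$, $A_2\partial$ are disjoint (since a common boundary point would be within distance $1$ of both $A_1$ and $A_2$, forcing $\dist(A_1,A_2)\leq 2$; one must be slightly careful here for $r=2$, i.e.\ check that distance exactly $2$ still separates the one-step neighbourhoods — the definition $\partial\{a\}=\{x:\dist(x,a)=1\}$ gives $\dist(A_1\partial,A_2\partial)\geq 0$ but what we actually need is that no $\mathscr{A}^j_{b_i}$ factor is shared, which holds because the $b_i$'s lie in disjoint regions). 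Then for claim 2), $\mathcal{N}_{A_1}$ and $\mathcal{N}_{A_2}$ act on disjoint sets of tensor factors except both contain the full matrix algebra on the overlap of the two complements; the algebra generated by them is the tensor product of all the $\mathscr{A}^j_{b_i}$ factors from both sides together with $\mathcal{B}$ on $(A_1\partial)^c\cap(A_2\partial)^c = ((A_1\cup A_2)\partial)^c$, which is exactly $\mathcal{N}_{A_1\cup A_2}$. For claim 1), the intersection of the two algebras: an element of $\mathcal{N}_{A_1}\cap\mathcal{N}_{A_2}$ must be simultaneously supported (in the algebra sense) on $A_1\partial$-data and on $A_2\partial$-data, so on their common refinement; on the factors in $A_1\partial\setminus A_2\partial$ it must be a scalar multiple of the identity because those factors are fully general in $\mathcal{N}_{A_2}$ only as part of $\mathcal{B}(\mathcal{H}_{(A_2\partial)^c})$, whereas in $\mathcal{N}_{A_1}$ they are constrained to $\mathscr{A}^j_{b_i}$ — and the commutant relation plus the Schmidt-decomposition linear independence (as in Proposition \ref{prop:ModularInvariance}) forces an element lying in both to collapse to $\1$ there. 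What survives is precisely $\mathcal{N}_{A_1\cap A_2}=\mathcal{N}_\emptyset = \mathcal{B}(\mathcal{H}_\Lambda)$ restricted appropriately; more generally for non-disjoint-but-overlapping configurations one tracks which $\mathscr{A}^j_{b_i}$ factors appear on both sides.

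I would then deduce Proposition \ref{prop:commutativityofSC} as stated: since $E^S_A$ is the unique $\sigma$-preserving conditional expectation onto $\mathcal{N}_A$ and acts nontrivially only on $A\partial$, composition of two such expectations whose algebras satisfy the lattice identities of Lemma \ref{lem:SchmidAlgebra} yields the conditional expectation onto the generated (resp.\ intersected) algebra — using that when the supports $A_1\partial$ and $A_2\partial$ are disjoint the maps commute on the nose, and the uniqueness from Takesaki's theorem (modular invariance of each $\mathcal{N}_{A_i}$, hence of their intersection and join) pins down the composite.

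The main obstacle I anticipate is the intersection identity $\mathcal{N}_{A_1}\cap\mathcal{N}_{A_2}=\mathcal{N}_{A_1\cap A_2}$ in the genuinely overlapping (non-disjoint, non-nested) situation — one inclusion ($\supseteq$) is easy, but the reverse requires arguing that an operator lying in both algebras cannot use the ``finer'' structure available in one of them on a factor where the other only sees $\mathbb{C}\1$, and this needs the precise block decomposition $\mathcal{H}_{b_i}=\bigoplus_{\alpha_i}\bigotimes_j \mathcal{H}_j^{\alpha_i}\otimes\mathcal{H}_c^{\alpha_i}$ together with the linear independence coming from Schmidt decomposition, rather than a soft tensor-factor argument. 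A secondary subtlety is making sure the distance-$2$ hypothesis is exactly what is needed for the boundary layers (and hence the index sets of $\mathscr{A}^j_{b_i}$ factors) of $A_1$ and $A_2$ to be disjoint, given the nearest-neighbour ($r=2$) convention for $\partial$.
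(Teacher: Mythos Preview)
Your approach --- direct tensor-factor bookkeeping of the $\mathscr{A}^j_{b_i}$ building blocks --- is exactly the paper's, which dispatches the nested case trivially and writes out the generated algebra in the distance-$\ge 2$ case in a couple of lines.

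There is, however, a genuine error in your intersection argument. The assignment $A\mapsto\mathcal{N}_A$ is \emph{order-reversing}: $\mathcal{N}_A$ carries the identity on $A$ and is unconstrained on $(A\partial)^c$, so enlarging $A$ shrinks the algebra; in particular $\mathcal{N}_\emptyset=\mathcal{B}(\mathcal{H}_\Lambda)$ is the \emph{largest} such algebra and cannot equal $\mathcal{N}_{A_1}\cap\mathcal{N}_{A_2}$ for nonempty $A_i$, since each $\mathcal{N}_{A_i}$ already forces the identity on $A_i$. Concretely, your claim that on $\partial A_1\setminus A_2\partial$ an element of the intersection ``collapses to $\1$'' is backwards: on that region $\mathcal{N}_{A_2}$ imposes no constraint whatsoever (it lies in $(A_2\partial)^c$, where $\mathcal{N}_{A_2}$ is the full matrix algebra), so the intersection constraint there is just the $\mathscr{A}^j_{b_i}$ restriction coming from $\mathcal{N}_{A_1}$ --- nothing collapses. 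The identities that actually hold, and that Proposition~\ref{prop:commutativityofSC} needs (commuting conditional expectations project onto the \emph{intersection} of their fixed-point algebras, so $E^S_{A_1}\circ E^S_{A_2}=E^S_{A_1\cup A_2}$ requires $\mathcal{N}_{A_1}\cap\mathcal{N}_{A_2}=\mathcal{N}_{A_1\cup A_2}$), are
\[
\mathcal{N}_{A_1}\cap\mathcal{N}_{A_2}=\mathcal{N}_{A_1\cup A_2},\qquad \mathcal{N}_{A_1}\vee\mathcal{N}_{A_2}=\mathcal{N}_{A_1\cap A_2},
\]
with the right-hand labels swapped relative to the printed statement (which also has a visible typo in item 1). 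The paper's own brief proof inherits this swap, so do not take it as a check on your reasoning. Separately, your parenthetical that ``the $b_i$'s lie in disjoint regions'' when $\dist(A_1,A_2)\ge 2$ fails at distance exactly $2$: a vertex at distance $1$ from each of $A_1,A_2$ is a common boundary vertex, and on it both algebras contain overlapping $\mathscr{A}^j$-factors (those for directions pointing into neither $A_1$ nor $A_2$); one must verify directly that their intersection on that site gives exactly the $\mathcal{N}_{A_1\cup A_2}$ data.
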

\begin{proof}
The proof is elementary from the definition of the algebras $\mathcal{N}$ and may be found in Appendix \ref{sec:proof_Schmidt}.
\end{proof}

For a subset $A\subset\Lambda$, we call a set $(\alpha):=\{\alpha_i\}_{i\in I_A}$ a \textit{boundary condition}.

\begin{proposition}[Explicit Form of Schmidt conditional Expectation, \cite{art:2localPaper}]
For $A\subset\Lambda$, let $(\alpha):=\{\alpha_i\}_{i\in I_A}$ be a fixed boundary condition for the subset $A$ and denote $P^{(\alpha)}:=\bigotimes_{i\in I_A}P^{\alpha_i}$. We set
\begin{align}
\mathcal{H}^{(\alpha)}_{A_\inn}&:=\mathcal{H}_A\otimes\bigotimes_{i\in I_A}\mathcal{H}_0^{\alpha_i}\otimes \mathcal{H}_c^{\alpha_i} \equiv \mathcal{H}_A\otimes \mathcal{H}^{(\alpha)}_{\partial_\inn A}, \\
    \mathcal{H}^{(\alpha)}_{A_\out}&:= \mathcal{H}_{(A\partial)^c}\otimes\bigotimes_{i\in I_A}\bigotimes_{j\in J^{(i)}\setminus\{0\}}\mathcal{H}_j^{\alpha_i}\equiv \mathcal{H}_{(A\partial)^c}\otimes\mathcal{H}^{(\alpha)}_{\partial_\out A},
\end{align}
i.e. $\mathcal{H}_{\Lambda}=\bigoplus_{(\alpha)}{\mathcal{H}^{(\alpha)}_{A_\inn}}\otimes{\mathcal{H}^{(\alpha)}_{A_\out}} $, and write ${\tr_{\mathcal{H}_{A^{(\alpha)}_\inn}}} \equiv \tr_{A_\inn,(\alpha)}$, respectively, $\tr_{\mathcal{H}_{A_\out^{(\alpha)}}}=\tr_{A_\out,(\alpha)}$ for simplicity. 

Since every element of the algebra $\mathcal{N}_A$ is block diagonal w.r.t the sets $(\alpha)$, we can decompose the Schmidt conditional expectation $E^S_A$ and its pre-adjoint $E^S_{A*}$ along those blocks as well, yielding
\begin{align}
    E_A^S(X)&:=\bigoplus_{(\alpha)} E^{S,(\alpha)}_A(X), \\
    E_{A*}^S(\rho)&:=\bigoplus_{(\alpha)} E^{S,(\alpha)}_{A*}(\rho),
\end{align}
for any $X\in \mathcal{B}(\mathcal{H}_\Lambda)$ and $\rho\in\mathcal{D}(\mathcal{H}_\Lambda)$.  $E^{S,(\alpha)}_A$ and $E^{S,(\alpha)}_A$ have the following expressions \cite{art:2localPaper}, defined on the block Hilbert space $P^{(\alpha)}\mathcal{H}_{\Lambda}$
\begin{align}
    E^{S,(\alpha)}_A(X) &= P^{(\alpha)}\left({\tr_{A_\inn,(\alpha)}}[\tau_{A_\inn}^{(\alpha)}X]\otimes \1_{A_{\inn,(\alpha)}}\right)P^{(\alpha)}, \label{def:SchmidtCondExp1} \\
    E^{S,(\alpha)}_{A*}(\rho) &= {\tr_{A_\inn,(\alpha)}} [P^{(\alpha)}\rho P^{(\alpha)}]\otimes\tau^{(\alpha)}_{A_\inn}, \label{def:SchmidtCondExpDual}
\end{align} where the state $\tau_{A_\inn}^{(\alpha)}$ is given by
\begin{align} \label{def:tau}
    \tau^{(\alpha)}_{A_\inn} := \frac{1}{\Tr[...]}\tr_{A_\out,(\alpha)}[P^{(\alpha)}\sigma^{(A\partial)} P^{(\alpha)}] = \frac{1}{\Tr[...]}\tr_{\partial_\out A,(\alpha)}[P^{(\alpha)}\sigma^{(A\partial)} P^{(\alpha)}],
\end{align} where the prefactors $\frac{1}{\Tr[...]}$ contain the trace on $\mathcal{B}(\mathcal{H}^{(\alpha)}_{A_\inn})$ and ensure proper trace normalization $\Tr[\tau_{A_\inn}^{(\alpha)}]=1$, and the partial trace in the last expression traces out the Hilbert space $\mathcal{H}^{(\alpha)}_{\partial_\out A}=\bigotimes_{i\in I_A}\bigotimes_{j\in J^{(i)}\setminus\{0\}}\mathcal{H}_j^{\alpha_i} = \mathcal{H}^{(\alpha)}_{A_\out\setminus(A\partial)^c,}$.
\end{proposition}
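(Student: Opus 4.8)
The plan is to obtain the stated formulas by specialising the general description of conditional expectations onto finite-dimensional $*$-subalgebras from \Cref{sec:ConditionalExpectations} to the algebra $\mathcal{N}_A$, and then pinning down the remaining data — the states $\tau^{(\alpha)}_{A_\inn}$ — from the invariance $E^S_{A*}(\sigma)=\sigma$ together with the product structure of the commuting Gibbs state. First I would record the block decomposition of $\mathcal{N}_A$: since the algebras $\{\mathscr{A}^j_{b_i}\}_{j}$ attached to a fixed boundary vertex $b_i$ were jointly decomposed above, they share the family of central projections $\{P^{\alpha_i}\}_{\alpha_i}$, so $P^{(\alpha)}=\bigotimes_{i\in I_A}P^{\alpha_i}$ is a central projection of $\mathcal{N}_A$ and $\mathcal{H}_\Lambda=\bigoplus_{(\alpha)}\mathcal{H}^{(\alpha)}_{A_\inn}\otimes\mathcal{H}^{(\alpha)}_{A_\out}$. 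Inserting $\mathcal{N}_A=\1_A\otimes\bigotimes_{i\in I_A}\bigotimes_{j\in J^{(i)}\setminus\{0\}}\mathscr{A}^j_{b_i}\otimes\mathcal{B}(\mathcal{H}_{(A\partial)^c})$ and using $\mathscr{A}^j_{b_i}=\bigoplus_{\alpha_i}\mathcal{B}(\mathcal{H}_j^{\alpha_i})\otimes\1$ shows that on the block $P^{(\alpha)}\mathcal{H}_\Lambda$ the algebra $\mathcal{N}_A$ acts exactly as $\1_{\mathcal{H}^{(\alpha)}_{A_\inn}}\otimes\mathcal{B}(\mathcal{H}^{(\alpha)}_{A_\out})$. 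This is the form $\mathcal{N}=\bigoplus_i\mathcal{B}(\mathcal{H}_i)\otimes\mathbb{C}\1_{\mathcal{K}_i}$ of \Cref{sec:ConditionalExpectations} with the dictionary $\mathcal{H}_i\leftrightarrow\mathcal{H}^{(\alpha)}_{A_\out}$ and $\mathcal{K}_i\leftrightarrow\mathcal{H}^{(\alpha)}_{A_\inn}$, so the general formulas for $E_\mathcal{N}$ and $E_{\mathcal{N}*}$ recalled there give \eqref{def:SchmidtCondExp1}--\eqref{def:SchmidtCondExpDual} with some $\tau^{(\alpha)}_{A_\inn}\in\mathcal{D}(\mathcal{H}^{(\alpha)}_{A_\inn})$ still to be determined.

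Next I would identify $\tau^{(\alpha)}_{A_\inn}$. By \Cref{prop:ModularInvariance} and Takesaki's theorem the conditional expectation onto $\mathcal{N}_A$ adapted to $\sigma$ is unique and fixes $\sigma$, i.e.\ $E^S_{A*}(\sigma)=\sigma$. Moreover $\sigma$ commutes with every $P^{\alpha_i}$: any edge term $h_{kl}$ meeting $b_i$ has, through its Schmidt decomposition, its $b_i$-part inside some $\mathscr{A}^j_{b_i}$ whose centre contains $P^{\alpha_i}$, while terms not meeting $b_i$ act on other tensor factors; hence $\sigma=\bigoplus_{(\alpha)}P^{(\alpha)}\sigma P^{(\alpha)}$ is block-diagonal. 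Comparing $E^S_{A*}(\sigma)=\bigoplus_{(\alpha)}\tr_{A_\inn,(\alpha)}[P^{(\alpha)}\sigma P^{(\alpha)}]\otimes\tau^{(\alpha)}_{A_\inn}$ with $\sigma$ in a fixed block, and tracing out $\mathcal{H}^{(\alpha)}_{A_\out}$, forces $\tau^{(\alpha)}_{A_\inn}$ to be the normalised partial trace $\tr_{A_\out,(\alpha)}[P^{(\alpha)}\sigma P^{(\alpha)}]$ and, simultaneously, shows that $P^{(\alpha)}\sigma P^{(\alpha)}$ must be a product across $\mathcal{H}^{(\alpha)}_{A_\inn}\otimes\mathcal{H}^{(\alpha)}_{A_\out}$.

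The remaining — and, I expect, main — step is to replace the global Gibbs state by the local one, i.e.\ to prove $\tr_{A_\out,(\alpha)}[P^{(\alpha)}\sigma P^{(\alpha)}]\propto\tr_{A_\out,(\alpha)}[P^{(\alpha)}\sigma^{(A\partial)}P^{(\alpha)}]$, which is the content of \eqref{def:tau}. Using commutativity to factor $\sigma\propto\prod_{\{k,l\}\in E_V}e^{-\beta h_{kl}}$, I would sort the edges: those inside $A\partial$ assemble into $e^{-\beta H_{A\partial}}$, whose conjugate by $P^{(\alpha)}$ is supported on $\mathcal{H}_A\otimes\bigotimes_i\mathcal{H}_0^{\alpha_i}\subset\mathcal{H}^{(\alpha)}_{A_\inn}$; each straddling edge $\{b_i,c_{ij}\}$ with $j\neq 0$ contributes, via its Schmidt decomposition, a factor in $\mathscr{A}^j_{b_i}\otimes\mathcal{B}(\mathcal{H}_{c_{ij}})$, which on the block lies inside $\mathcal{B}(\mathcal{H}^{(\alpha)}_{A_\out})$; every other edge is supported in $(A\partial)^c\subset\mathcal{H}^{(\alpha)}_{A_\out}$; and the multiplicity spaces $\mathcal{H}_c^{\alpha_i}\subset\mathcal{H}^{(\alpha)}_{A_\inn}$ carry no Hamiltonian term at all. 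Hence $P^{(\alpha)}\sigma P^{(\alpha)}$ splits as a positive operator on $\mathcal{H}^{(\alpha)}_{A_\inn}$, proportional to $P^{(\alpha)}e^{-\beta H_{A\partial}}P^{(\alpha)}\otimes\1_{\bigotimes_i\mathcal{H}_c^{\alpha_i}}$, tensored with one on $\mathcal{H}^{(\alpha)}_{A_\out}$; taking $\tr_{A_\out,(\alpha)}$ removes the second factor up to a scalar and recovers \eqref{def:tau}, the last equality there being the trivial further reduction of tracing $\mathcal{H}_{(A\partial)^c}$ against the identity. The same edge-sorting works verbatim for a general region $A$ in place of a singleton, using that on a $2$-colorable nearest-neighbour graph every vertex of $\partial A$ joins $A$ only to $(A\partial)^c$.

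The hard part is precisely this last bookkeeping: verifying that conjugating by $P^{(\alpha)}$ and tracing out $A_\out$ turns every Hamiltonian term outside $A\partial$ — the straddling terms $h_{b_ic_{ij}}$ included — into an overall positive scalar that never couples the two tensor factors. This is where commutativity of the potential, the nearest-neighbour interaction range, and $2$-colorability of the graph are all genuinely used, and where the nested index sets $I_A$ and $J^{(i)}$ make the combinatorics easy to get wrong; everything else is a substitution into the general conditional-expectation formula plus the uniqueness clause of Takesaki's theorem.
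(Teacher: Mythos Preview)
Your approach is correct and largely parallels the paper's. Both start from the general block form of a conditional expectation onto a finite-dimensional $*$-subalgebra recalled in \Cref{sec:ConditionalExpectations}, specialised to the decomposition $\mathcal{N}_A\cong\bigoplus_{(\alpha)}\1_{\mathcal{H}^{(\alpha)}_{A_\inn}}\otimes\mathcal{B}(\mathcal{H}^{(\alpha)}_{A_\out})$, which gives \eqref{def:SchmidtCondExp1}--\eqref{def:SchmidtCondExpDual} with undetermined $\tau^{(\alpha)}_{A_\inn}$. The difference lies in how $\tau^{(\alpha)}_{A_\inn}$ is identified. The paper invokes invariance of the \emph{local} Gibbs state $\sigma^{(A\partial)}$ directly: from $P^{(\alpha)}\sigma^{(A\partial)}P^{(\alpha)}=E^{S,(\alpha)}_{A*}(P^{(\alpha)}\sigma^{(A\partial)}P^{(\alpha)})=\tr_{A_\inn,(\alpha)}[P^{(\alpha)}\sigma^{(A\partial)}P^{(\alpha)}]\otimes\tau^{(\alpha)}_{A_\inn}$ one reads off \eqref{def:tau} by taking $\tr_{A_\out,(\alpha)}$, and your edge-sorting never enters. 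Your route via the global $\sigma$ followed by a factorisation argument is more explicit about \emph{why} the local Gibbs state appears, at the price of the combinatorics you rightly flag as the hard part; the paper's shortcut buys a one-line derivation but leaves the invariance of $\sigma^{(A\partial)}$ (which follows from the same modular-invariance argument as \Cref{prop:ModularInvariance}) implicit.

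One inaccuracy in your bookkeeping: the closing claim that on a $2$-colorable nearest-neighbour graph ``every vertex of $\partial A$ joins $A$ only to $(A\partial)^c$'' fails for general $A$ --- a $4$-cycle with two opposite vertices in $A$ has its two boundary vertices adjacent. Consequently $P^{(\alpha)}e^{-\beta H_{A\partial}}P^{(\alpha)}$ need not be supported on $\mathcal{H}^{(\alpha)}_{A_\inn}$ alone: any $\partial A$--$\partial A$ edge acts, after projection, on the $j\neq 0$ factors of both boundary sites and hence on $\mathcal{H}^{(\alpha)}_{A_\out}$. This does not actually break your argument --- the factorisation of $P^{(\alpha)}\sigma P^{(\alpha)}$ across $\mathcal{H}^{(\alpha)}_{A_\inn}\otimes\mathcal{H}^{(\alpha)}_{A_\out}$ still holds, the $A_\inn$ factor is still determined solely by edges inside $A$ and between $A$ and $\partial A$, and the $\partial A$--$\partial A$ terms simply move to the $A_\out$ factor and are traced out with the rest --- but the $2$-colorability justification should be dropped and the support assertion for $H_{A\partial}$ adjusted accordingly.
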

It is easy to check that the expressions \eqref{def:SchmidtCondExp1} and \eqref{def:SchmidtCondExpDual} are dual to each other w.r.t. the Hilbert-Schmidt inner product on $\mathcal{B}(\mathcal{H}_\Lambda)$. 
The expression \eqref{def:tau} follows from the invariance of the local Gibbs states $\sigma^{(A\partial)}$ via $P^{(\alpha)}\sigma^{(A\partial)}P^{(\alpha)}=E^{S(\alpha)}_{A*}\left(P^{(\alpha)}\sigma^{(A\partial)}P^{(\alpha)}\right)=\tr_{A_\inn,(\alpha)}\left[P^{(\alpha)}\sigma^{(A\partial)}P^{(\alpha)}\right]\otimes\tau_{A_\inn}^{(\alpha)}$. Taking the partial trace of this expression on $\mathcal{H}_{A_\out}^{(\alpha)}$ gives the above expression.


\begin{remark}
One can think of the $\alpha_i$ as labeling the boundary conditions at site $i\in I_A$, and all $(\alpha):=\{\alpha_i\}_{i\in I_A}$ give a complete labeling of all boundary conditions of some subset $A$.
Hence, one can think of the effect of the Schmidt-conditional expectation on states as effectively replacing the state $\rho$ locally on $A$ with the Gibbs state $\sigma$, where the boundary conditions are set to $(\alpha)=\{\alpha_i\}_{i\in I_A}$.
\end{remark}
As in the Davies evolution, there exists a uniform family of Lindbladians for which the Schmidt conditional expectations are given by the local Lindbladian projectors.
For the Schmidt conditional expectation, the corresponding family of Lindbladians is
\begin{align}
    \Li_A^S(X):=\sum_{x\in A}E^S_{x}(X)-X.
\end{align}  We call them the (family of) \textit{Schmidt generators associated to $(\Lambda,\Phi,\beta)$}\footnote{Explicitly, when $\{\sigma^\Gamma\}_{\Gamma\subset\subset\Lambda}$ is the family of Gibbs states associated to $(\Lambda,\Phi,\beta)$ is the one used to construct these Schmidt conditional expectations. 
} 
\cite{art:2localPaper}. It is straightforward to see that the projection onto their kernel is given by the Schmidt conditional expectation. From the properties established above, they are uniform families of locally primitive, locally GNS-symmetric, frustration-free Lindbladians. Through this definition we immediately obtain additivity in the region. That is, $\mathcal{L}^S_A+\mathcal{L}^S_B=\mathcal{L}^S_{AB}$ for non-overlapping regions $A,B$. 

\subsection{Relating Davies and Schmidt dynamics}
An important observation is that, since the Schmidt and Davies families of conditional expectations almost have the same fixed-point algebras, we can relate the relative entropy distance of any given state to the fixed point subalgebra of one to the other: 
\begin{lemma}\label{lem:RelEntCondExpBound}
Let $X\subset\Lambda$, $\rho\in\mathcal{D}(\mathcal{H}_{\Lambda})$. For $E^S_{X*}(\rho)$ the Schmidt conditional expectation of $\rho$ and $E^D_{X*}(\rho)$ its Davies conditional expectation, it holds that
\begin{align} \label{lemma:SchmidtandDavies}
    D(\rho\|E^D_{X*}(\rho)) \leq D(\rho\|E^S_{X*}(\rho)) \leq D(\rho\|E^D_{X\partial*}(\rho)).
\end{align}
\end{lemma}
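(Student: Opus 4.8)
The plan is to establish the two inequalities separately, each time using the chain rule for relative entropy \eqref{def:relentChainRule} together with the fact that the relevant conditional expectations project onto nested subalgebras. The key structural observation is the following inclusion of fixed-point subalgebras: the Davies conditional expectation $E^D_X$ projects onto $\1_{X\partial}\otimes\mathcal{B}(\mathcal{H}_{(X\partial)^c})$, the Schmidt conditional expectation $E^S_X$ projects onto $\mathcal{N}_X \subset \1_X\otimes\mathcal{B}(\mathcal{H}_{(X)^c})$ which contains $\1_{X\partial}\otimes\mathcal{B}(\mathcal{H}_{(X\partial)^c})$ as a subalgebra (since $\mathcal{N}_X$ is trivial on $X$ but only partially constrained on $\partial X$), and $E^D_{X\partial}$ projects onto $\1_{(X\partial)\partial}\otimes\mathcal{B}(\mathcal{H}_{((X\partial)\partial)^c})$, which is in turn a subalgebra of $\mathcal{N}_X$ (since $\mathcal{N}_X$ acts as the identity on $X$, as full matrix algebra on $(X\partial)^c$, and is only nontrivially but not fully supported on $\partial X$; whereas $E^D_{X\partial}$ is trivial on a larger region but full on its complement). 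So we have the nesting of algebras $\mathrm{Fix}(E^D_X) \subset \mathcal{N}_X \subset \mathrm{Fix}(E^D_X) $... — more carefully, what I need is $\mathrm{Fix}(E^D_X)\supseteq\mathcal{N}_X$ is false; rather the chain runs $\mathrm{Fix}(E^D_{X\partial})\subseteq\mathcal{N}_X\subseteq\mathrm{Fix}(E^D_X)$. I would verify this inclusion carefully from the explicit descriptions in \Cref{example:LocalLindbaldProj} and the definition of $\mathcal{N}_A$; this comparison of algebras is the crux.

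For the \emph{left inequality} $D(\rho\|E^D_{X*}(\rho)) \leq D(\rho\|E^S_{X*}(\rho))$: since $\mathrm{Fix}(E^D_X)\subseteq\mathcal{N}_X$ and all conditional expectations here are with respect to $\sigma$ (so $E^D_{X*}(\sigma)=\sigma$), the chain rule \eqref{def:relentChainRule} applied to the pair of states $(\rho, \sigma)$... actually the cleaner route: apply \eqref{def:relentChainRule} with the outer algebra being $\mathcal{N}_X$ and a composition. Concretely, because $\mathrm{Fix}(E^D_X) \subseteq \mathcal{N}_X$, composing the two conditional expectations gives $E^D_X\circ E^S_X = E^D_X$ (projecting onto the smaller algebra after the larger one), hence $E^D_{X*}\circ E^S_{X*} = E^D_{X*}$. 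Then
\begin{align}
D(\rho\|E^D_{X*}(\rho)) &= D(\rho\|E^D_{X*}(E^S_{X*}(\rho))) \nonumber \\
&\le D(E^S_{X*}(\rho)\|E^D_{X*}(E^S_{X*}(\rho))) + D(\rho\|E^S_{X*}(\rho)) \nonumber
\end{align}
is not quite it either — instead I use the chain rule directly: $D(\rho\|E^D_{X*}(\rho)) = D(\rho\|E^S_{X*}(\rho)) + D(E^S_{X*}(\rho)\|E^D_{X*}(\rho))$, using that $E^D_{X*}(E^S_{X*}(\rho)) = E^D_{X*}(\rho)$ and that $E^S_{X*}(\rho)$ lies in the larger algebra while $E^D_{X*}(\rho)$ is its further projection onto the smaller one. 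The chain rule \eqref{def:relentChainRule} in the form $D(\mu\|\nu) = D(\mu\|E_*(\mu)) + D(E_*(\mu)\|\nu)$ for $E_*(\nu)=\nu$, applied with $\mu=\rho$, $\nu=E^D_{X*}(\rho)$, and $E_*=E^S_{X*}$ (noting $E^S_{X*}(E^D_{X*}(\rho))=E^D_{X*}(\rho)$ since $\mathrm{Fix}(E^D_X)\subseteq\mathcal{N}_X$), yields exactly $D(\rho\|E^D_{X*}(\rho)) = D(\rho\|E^S_{X*}(\rho)) + D(E^S_{X*}(\rho)\|E^D_{X*}(\rho)) \ge D(\rho\|E^S_{X*}(\rho))$, since relative entropy is nonnegative.

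For the \emph{right inequality} $D(\rho\|E^S_{X*}(\rho)) \leq D(\rho\|E^D_{X\partial*}(\rho))$: this is the same argument one level up, using $\mathrm{Fix}(E^D_{X\partial})\subseteq\mathcal{N}_X$, hence $E^S_{X*}(E^D_{X\partial*}(\rho)) = E^D_{X\partial*}(\rho)$, and applying the chain rule with $\mu=\rho$, $\nu=E^D_{X\partial*}(\rho)$, $E_*=E^S_{X*}$ to get $D(\rho\|E^D_{X\partial*}(\rho)) = D(\rho\|E^S_{X*}(\rho)) + D(E^S_{X*}(\rho)\|E^D_{X\partial*}(\rho)) \ge D(\rho\|E^S_{X*}(\rho))$. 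The main obstacle, as flagged above, is the careful verification of the two algebra inclusions $\mathrm{Fix}(E^D_{X\partial}) \subseteq \mathcal{N}_X \subseteq \mathrm{Fix}(E^D_X)$ — in particular that $\mathcal{N}_X$, which is only partially supported on $\partial X$ via the Schmidt-decomposition algebras $\mathscr{A}^j_{b_i}$, genuinely contains the full-matrix-on-$(X\partial)^c$ algebra and is genuinely contained in the full-matrix-on-$((X\partial)\partial)^c$ algebra; and that all three conditional expectations are with respect to the \emph{same} state $\sigma$, which is what licenses the chain rule in each step. Everything else is a two-line application of \eqref{def:relentChainRule} and nonnegativity of relative entropy.
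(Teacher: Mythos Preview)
Your overall strategy --- nest the fixed-point algebras and apply the chain rule \eqref{def:relentChainRule} --- is exactly the paper's, and your treatment of the \emph{right} inequality is correct. But the argument for the \emph{left} inequality contains a genuine error: you end up proving the inequality in the wrong direction.

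Concretely, you correctly identify the chain $\mathrm{Fix}(E^D_{X\partial})\subseteq\mathcal{N}_X\subseteq\mathrm{Fix}(E^D_X)$, but when you execute the first step you slip back to the reversed inclusion (``since $\mathrm{Fix}(E^D_X)\subseteq\mathcal{N}_X$''), take $E_*=E^S_{X*}$ and $\nu=E^D_{X*}(\rho)$, and conclude $D(\rho\|E^D_{X*}(\rho))\ge D(\rho\|E^S_{X*}(\rho))$ --- the opposite of what the lemma asserts. With the correct inclusion $\mathcal{N}_X\subseteq\mathcal{F}^D_X$ one has $E^D_{X*}\circ E^S_{X*}=E^S_{X*}$, so $E^D_{X*}(E^S_{X*}(\rho))=E^S_{X*}(\rho)$; the chain rule must therefore be applied with $E_*=E^D_{X*}$ and $\nu=E^S_{X*}(\rho)$, yielding
\[
D(\rho\|E^S_{X*}(\rho))=D(\rho\|E^D_{X*}(\rho))+D(E^D_{X*}(\rho)\|E^S_{X*}(\rho))\ge D(\rho\|E^D_{X*}(\rho)),
\]
which is the desired direction (and is exactly the paper's computation).

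A related issue is your description of $\mathrm{Fix}(E^D_X)$ as $\1_{X\partial}\otimes\mathcal{B}(\mathcal{H}_{(X\partial)^c})$. With that algebra the inclusion $\mathcal{N}_X\subseteq\mathrm{Fix}(E^D_X)$ would actually \emph{fail}, since $\mathcal{N}_X$ is nontrivial on $\partial X$ through the Schmidt algebras $\mathscr{A}^j_{b_i}$. The paper's proof instead uses that $\mathcal{F}^D_X$ is the \emph{largest} modular-invariant $*$-subalgebra of $\1_X\otimes\mathcal{B}(\mathcal{H}_{X^c})$; since $\mathcal{N}_X$ is by construction such a subalgebra (Proposition~\ref{prop:ModularInvariance}), the inclusion $\mathcal{N}_X\subseteq\mathcal{F}^D_X$ follows immediately. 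That is the characterization you need to invoke for the first inequality.
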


\begin{proof}
First recall that for some region $X\subset\Lambda$, $E_X^D$ is the projection onto the kernel of $\mathcal{L}^D_{X}$, and call it $\mathcal{F}^D_X:=$Fix($E^D_X$). It is also the projection onto the largest *-subalgebra of $\mathcal{B}(\mathcal{H}_{X^c})\otimes\1_X\subset\mathcal{B}(\mathcal{H}_\Lambda)$ which is invariant under the modular group of the Gibbs state $\sigma$, $\{\Delta^{it}_\sigma\}_{t\in\mathbb{R}}$ \cite{art:2localPaper,art:ApproxTensorizationBardet_2021}.
Now, $E^S_X$ is a projection onto, say $\mathcal{F}^S_X:=$Fix($E^S_X$). This is by construction a *-subalgebra of $\mathcal{B}(\mathcal{H}_{X^c})\otimes\1_X\subset\mathcal{B}(\mathcal{H}_\Lambda)$ invariant under the modular group $\{\Delta^{it}_\sigma\}_{t\in\mathbb{R}}$, see \Cref{subsec:SchmidtCondExp}. Thus $\mathcal{F}_X^S\subset\mathcal{F}_X^D$. This implies that
$D(\rho\|E^D_{X*}(\rho)) \leq D(\rho\|E^S_{X*}(\rho))$ for any state $\rho\in\mathcal{D}(\mathcal{H}_\Lambda)$, since
\begin{align}
    D(\rho\|E^S_{X*}(\rho))-D(\rho\|E^D_{X*}(\rho)) &= \Tr\big[\rho\underbrace{(\log E^D_{X*}(\rho)-\log E^S_{X*}(\rho))}_{\in\mathcal{F}^D_X}\big] \\ &= \Tr[E^D_{X*}(\rho)(\log E^D_{X*}(\rho)-\log E^S_{X*}(\rho))] \\&= D(E^D_{X*}(\rho)\|E^S_{X*}(\rho)) \geq 0.
\end{align}
Here we used that if $\omega_1,\omega_2$ are fixed points of some conditional expectation, then so is $\log\omega_1-\log\omega_2$.
Since $\mathcal{F}_{X\partial}^D\subset \1_{X\partial}\otimes B(\mathcal{H}_{(X\partial)^c})\subset\mathcal{F}_X^S$ holds by construction of the Schmidt conditional expectation, it equally follows from the calculation above that $D(\rho\|E^S_{X*}(\rho)) \leq D(\rho\|E^D_{X\partial*}(\rho))$. 
\end{proof}

This is a crucial lemma that allows us to analyse the MLSI for Davies generators in terms of entropic inequalities associated to Schmidt generators, which are easier to analyse. The Schmidt generators hence serves as a proxy QMS to the Davies. Such a comparison is a well known technique for classical Markov chains \cite{Martinelli1999}.

\section{Static properties: Clustering of correlations on Gibbs states}\label{sec:static}

In this section we study the static properties of quantum spin systems. These are used to measure how correlations decay between spatially separated regions on a Gibbs state.

We review various types of spatial clustering and spatial mixing properties ranging from the more commonly considered notions, such as exponential decay of covariance, to stronger ones, such as exponential decay of mutual information. Clustering properties and their refinement ``mixing conditions" are central to the geometric divide-and-conquer arguments that establish rapid mixing for spin systems, both in the quantum \cite{art:2localPaper, art:EntropyDecayOf1DSpinChain-Cambyse, art:SupperaditivityofQuentRelEntCapel_2018} and the classical setting  \cite{art:ClassicaTreesMartinelli_2004}. While the various notions are useful at different steps along the proofs, we will also show equivalence relations between them in the context of Gibbs states of commuting Hamiltonians. For our current purposes these will, however, only be interchangeable on low dimensional lattices, e.g. 1D and 2D hypercubic ones.
In the rest of this section we will assume that $\Lambda$ is some fixed graph with finite growth constant.

\subsection{Decay of correlations} 

The most commonly used quantifier of correlations on Gibbs states is the \textit{covariance}. Given two operators $X,Y \in \BH$ and a full rank state $\sigma \in \mathcal{D}(\mathcal{H})$, it is defined as
\begin{equation}\label{eq:covariance}
    \Cov_{\sigma}^{\KMS,\GNS}(X,Y) :=\langle X-\Tr[\sigma X]\1,Y-\Tr[\sigma Y]\1\rangle_\sigma^{\KMS,\GNS} \, ,
\end{equation}
where the inner product considered is either the KMS (see \eqref{eq:KMSprod}) or the GNS (see \eqref{eq:GNSprod}) one. When $X=Y$, the covariance reduces to the \textit{variance}
\begin{equation}\label{eq:variance}
     \Var^{\KMS,\GNS}_\sigma(X) := \Cov^{\KMS,\GNS}_\sigma(X,X) \, .
\end{equation}
This allows us to introduce the first notion of clustering, the \textit{exponential decay of covariance}, or simply \textit{exponential decay of correlations} (termed $\mathbb{L}_\infty$-clustering in \cite{art:2localPaper}). Since all conditions of clustering of correlations considered in this paper exhibit exponential decay, we will omit the reference to this on the names of the properties hereafter. We will define \textit{uniform} versions of the following notions of clustering. This guarantees that we have clustering independently of the sequence of sub-lattices $\Gamma\subset\subset\Lambda$, since the function $l\mapsto\epsilon_{X,Y}(l)$ and its decay rate do not depend on $\Gamma$. It may, however, still depend on the regions of $X$ or $Y$ or their boundaries. Instead, a non-uniform notion of clustering would be one in which we are only guaranteed the existence of a function $\epsilon_{\Gamma}(l)$ for each $\sigma^\Gamma$, with an explicit dependence on $\Gamma$. 

\begin{definition}[Uniform Decay of covariance] \label{def:LinfinityClustering} 
Let $\Phi$ be a geometrically local potential on $\Lambda$ and consider an inverse temperature $\beta>0$. We say that the family of Gibbs states $\{\sigma^\Gamma\}_{\Gamma\subset\subset\Lambda}$ associated to $(\Lambda,\Phi,\beta)$ 
satisfies \textit{uniform exponential decay of covariance} if there exist a function $(X,Y,l)\mapsto\epsilon_{X,Y}(l)$, exponentially decaying in $l$, with decay rate independent of $X,Y$ such that for any subregion $\Gamma\subset\subset\Lambda$ and any $A,C\subset\Gamma$ with $\dist(A,C)=l$, it holds that 
\begin{align}
    \Cov_{\sigma^\Gamma}^{\GNS}(f,g)\leq \|f\|\|g\| \epsilon_{A,C}(l),
\end{align} for any self-adjoint $f,g\in\mathcal{B}(\mathcal{H}_\Gamma)$ with support on $A,C$, respectively. Uniformity here refers to the fact, that this function does not explicitly depend on $\Gamma$. 
\end{definition}

This condition is also frequently rewritten in terms of 
\begin{align}
    \Cov_{\sigma^\Gamma}^{\GNS}(A,C):= \underset{\substack{
f \in \mathcal{B}(\mathcal{H}_A), \norm{f} \leq 1\\
g \in \mathcal{B}(\mathcal{H}_C), \norm{g} \leq 1}}{\sup}  \Cov_{\sigma^\Gamma}^{\GNS}(f,g), \, 
\end{align} 
so that $ \Cov_{\sigma^\Gamma}^{\GNS}(A,C)\le \epsilon_{A,C}(l)$. We will use this notation when we want to emphasise the systems between which we are studying correlations. Moreover, since we will mostly focus on the GNS covariance in the rest of the text, we will drop the superscript whenever this is the case. 

\begin{remark}
    The decay length of the function $\epsilon_{A,C}(l)$ is called the ($\mathbb{L}_\infty$)-\emph{correlation length} $\xi$, which is the standard decay rate of thermal two-point correlation functions, i.e. $-\log\epsilon_{A,C}(l)=\mathcal{O}\left(\frac{l}{\xi}\right)_{l\to\infty}$. It is by the uniformity assumption independent of $A,C$.
\end{remark}

The decay of covariance is a rather weak notion of clustering, hence sometimes referred to as \emph{weak clustering}. In 1D, it was proven to hold in infinite spin chains, for translation invariant interactions, first in the finite-range regime \cite{Araki1969} at every inverse temperature $\beta >0$, and subsequently at high enough temperature for exponentially-decaying interactions in \cite{PP2023} (with a critical temperature tending to zero when approaching the finite-range case). This was extended to the finite chain regime in \cite{art:Bluhm2022exponentialdecayof} and \cite{CMTW2023LocalityGibbs} for finite-range and exponentially-decaying interactions, respectively. More recently, \cite{kimura2024clustering} proved this decay for commuting Hamiltonians without the translation-invariant assumption, as well as slower decays in other regimes. The high dimensional result appears frequently in the literature, with proofs for the finite-range case in e.g. \cite{art:LocalityofTemperature} and for exponentially-decaying interactions in \cite{FU2015}. Additionally, in relation to dynamical properties of QMS, the exponential decay of correlations is known to hold for steady states of rapidly mixing QMS \cite{art:QuantumGibbsSamplers-kastoryano2016quantum, art:StabilityCubitt_2015, art:RapidMixingandDecayofCorrelationsKastoryano_2013}.

For completeness, we also introduce the very related property of $\mathbb{L}_2$-clustering.  

\begin{definition}[Uniform $\mathbb{L}_2$-clustering] \label{def:L2Clustering}
Let $\Phi$ be a geometrically local potential on $\Lambda$ and consider an inverse temperature $\beta>0$. We say that the family of Gibbs states $\{\sigma^\Gamma\}_{\Gamma\subset\subset\Lambda}$ associated to $(\Lambda,\Phi,\beta)$ 
satisfies \textit{uniform} $\mathbb{L}_2$-\textit{clustering} if there exists a function $(X,Y,l)\mapsto\epsilon_{X,Y}(l)$, exponentially decaying in $l$, with decay rate independent of $X,Y$ such that for any subregion $\Gamma\subset\subset\Lambda$ and any $A,C\subset\Gamma$ with $\dist(A,C)=l$, it holds that
\begin{align}
    \Cov_{\sigma^\Gamma}^{\GNS}(f,g)\leq \|f\|_{2,\sigma^\Gamma} \|g\|_{2,\sigma^\Gamma} \epsilon_{A,C}(l),
\end{align} for any self-adjoint $f,g\in\mathcal{B}(\mathcal{H}_\Gamma)$ with support on $A,C$, respectively. 
\end{definition}

This property immediately implies the previous exponential decay of correlations by the monotonicity of $\mathbb{L}_{p,\sigma}$-norms in $p$. Moreover, the $\mathbb{L}_2$-clustering is important in gapped primitive QMS.  In \cite[Corollary 27]{art:QuantumGibbsSamplers-kastoryano2016quantum}, it was proven that steady states of gapped primitive QMS satisfy $\mathbb{L}_2$-clustering, using the detectability lemma \cite{art:detectibilityLemmaaharonov2011quantum}. 

We now discuss the notion of local indistinguishability \cite{art:FiniteCorrelationLengthImpliesEfficient, art:LocalityofTemperature}, which quantifies the influence of the state at spatially separated regions in that of a fixed subregion $A$.

\begin{definition}[Uniform Local indistinguishability] \label{def:local_indistinguishability}
Let $\Phi$ be a geometrically local potential on $\Lambda$ and consider an inverse temperature $\beta>0$. We say that the family of Gibbs states $\{\sigma^\Gamma\}_{\Gamma\subset\subset\Lambda}$ associated to $(\Lambda,\Phi,\beta)$ 
satisfies \textit{uniform local indistinguishability} \cite{art:FiniteCorrelationLengthImpliesEfficient} 
if there exists a function $(X,Y,l)\mapsto\epsilon_{X,Y}(l)$, exponentially decaying in $l$, with decay rate independent of $X,Y$ such that for any subregion $\Gamma\subset\subset\Lambda$ and any $A,C\subset\Gamma$ with $\dist(A,C)=l$, it holds that
\begin{align}
    \norm{\tr_{BC}[\sigma^{ABC}] -\tr_{B}[\sigma^{AB}] }_1 \leq  \epsilon_{A,C}(l) \, .
\end{align} 

\end{definition}

In \cite{art:FiniteCorrelationLengthImpliesEfficient} it was shown that Gibbs states of geometrically-local, bounded, possibly non-commuting Hamiltonians, which satisfy uniform decay of covariance, also satisfy uniform local indistinguishability (see also \cite{onorati2023efficient}), where the exponentially decaying function in the latter $l\mapsto \epsilon_{A,C}(l)$ has an additional factor of $|\partial C|$, the boundary of the region that is `cut' away, compared to the exponential decay function of the former.
There it was also shown that if the Hamiltonian is commuting then the decay length of the function in the local indistinguishability can be controlled by the thermal correlation length $\xi$.
For a more recent extension of the latter to exponentially-decaying interactions, and an overview on the relation between these properties and the locality of the Hamiltonian, we refer the reader to \cite{CMTW2023LocalityGibbs}.

We now also introduce a stronger version of local indistinguishability, which will be crucial for our proofs later on. 

\begin{definition}[Uniform strong local indistinguishability] \label{def:strong_local_indistinguishability} 
Let $\Phi$ be a geometrically local potential on $\Lambda$ and consider an inverse temperature $\beta>0$. We say that the family of Gibbs states $\{\sigma^\Gamma\}_{\Gamma\subset\subset\Lambda}$ associated to $(\Lambda,\Phi,\beta)$ 
satisfies \textit{uniform strong local indistinguishability} if there exists a function $(X,Y,l)\mapsto\epsilon_{X,Y}(l)$, exponentially decaying in $l$, with decay rate independent of $X,Y$ such that for any subregion $\Gamma\subset\subset\Lambda$ and any $A,C\subset\Gamma$ with $\dist(A,C)=l$, it holds that
\begin{align}
    \norm{(\tr_{B}[\sigma^{AB}])^{-\frac{1}{2}}\tr_{BC}[\sigma^{ABC}](\tr_{B}[\sigma^{AB}])^{-\frac{1}{2}} - \1_A}_\infty \leq  \epsilon_{A,C}(l) \, .
\end{align}  
\end{definition}

We will show that the family of Gibbs states associated to $(\Lambda,\Phi,\beta)$ satisfies uniform strong local indistinguishability with an exponential dependence on the boundaries if it satisfies uniform decay of covariance whenever $\Lambda$ has finite growth constant, $\Phi$ is a bounded, geometrically-local, commuting potential. The requirement for the commuting property may be dropped in 1D systems (see Theorem \ref{thm:StrongLocalIndist1D}). For commuting Hamiltonians, we also give quantitative results in terms of the correlation length $\xi$ and inverse temperature $\beta$ in \Cref{sec:relation_correlations}.

\begin{definition}[Uniform mixing condition] \label{def:mixing_condition}
Let $\Phi$ be a geometrically local potential on $\Lambda$ and consider an inverse temperature $\beta>0$. We say that the family of Gibbs states $\{\sigma^\Gamma\}_{\Gamma\subset\subset\Lambda}$ associated to $(\Lambda,\Phi,\beta)$ 
satisfies the \textit{uniform mixing condition} 
if there exists a function $(X,Y,l)\mapsto\epsilon_{X,Y}(l)$, exponentially decaying in $l$, with decay rate independent of $X,Y$ such that for any subregion $\Gamma\subset\subset\Lambda$ and any $A,C\subset\Gamma$ with $\dist(A,C)=l$, it holds that
\begin{align}
    \norm{\sigma_{AC}(\sigma_A \otimes \sigma_C)^{-1} - \1_{AC}}_\infty \leq  \epsilon_{A,C}(l) \, .
\end{align} 
\end{definition}

For Gibbs states of geometrically-local, possibly non-commuting, bounded, and translation invariant Hamiltonians on a 1D lattice this condition was shown to follow qualitatively from uniform exponential decay of covariance in \cite[Proposition 8.1]{art:Bluhm2022exponentialdecayof} at any inverse temperature $\beta >0$, and subsequently in \cite{art:Bluhm2023decayofcorrelations} for arbitrary dimensions at high enough temperature. In \cite{art:ImplicationsAndRapidTermalization-Cambyse,art:EntropyDecayOf1DSpinChain-Cambyse}, this was crucial to establish the existence of a log-decreasing MLSI constant $\alpha=\Omega(\log|\Gamma|)^{-1}$ for commuting quantum spin chain systems . 

The last quantity to measure correlations explored in this section is the \emph{mutual information}. We recall that, given a bipartite space $\HH_A \otimes \HH_B$ and $\rho_{AB}$ a density matrix on it, it is defined as 
\begin{equation}
    I_\rho(A:B) := \Tr[\rho_{AB}(\log \rho_{AB} - \log \rho_A \otimes \rho_B)] = D(\rho_{AB}\|\rho_A\otimes\rho_B) \, .
\end{equation}

\begin{definition}[Uniform decay of mutual information] \label{def:mutual_information}
Let $\Phi$ be a geometrically local potential on $\Lambda$ and consider an inverse temperature $\beta>0$. We say that the family of Gibbs states $\{\sigma^\Gamma\}_{\Gamma\subset\subset\Lambda}$ associated to $(\Lambda,\Phi,\beta)$ 
satisfies \textit{uniform exponential decay of mutual information} 
if there exists a function $(X,Y,l)\mapsto\epsilon_{X,Y}(l)$, exponentially decaying in $l$, with decay rate independent of $X,Y$ such that for any subregion $\Gamma\subset\subset\Lambda$ and any $A,C\subset\Gamma$ with $\dist(A,C)=l$, it holds that
\begin{align}
    I_{\sigma^\Gamma}(A:C) \leq  \epsilon_{A,C}(l), 
\end{align} holds. 
\end{definition}

By a standard use of Pinsker's and Hölder's inequalities, the decay of the mutual information directly implies decay of the covariance in \Cref{def:LinfinityClustering} with the decay rate halved, since $$I_\sigma(A:C)=D(\sigma_{AC}\|\sigma_A\otimes\sigma_C)\geq\frac{1}{2} \|\sigma_{AC}-\sigma_A\otimes\sigma_C\|^2_1.$$ 
Additionally, by \cite[Lemma 3.1]{art:Bluhm2022exponentialdecayof}, the following inequality holds
\begin{align}\label{eq:ineq_mutualinfo_mixingcondition}
I_\sigma(A:C)\leq \|\sigma_{AC}(\sigma_A\otimes\sigma_C)^{-1}-\1_{AC}\|_\infty \, .
\end{align}
so that the mixing condition from \Cref{def:mixing_condition} also implies the usual decay of covariance ($\mathbb{L}_\infty$-clustering).  
We show in the next subsections that these implications can be reversed in the case of commuting, finite-range Hamiltonians. This allows us to conclude that to establish the mixing condition, or the mutual information decay for the above considered systems, it is enough to establish the decay of the covariance.


\subsection{A useful relation}

To simplify the notation in the rest of this section, we first define the following relation.

\begin{definition}[A strong similarity relation]\label{def:Relation}
Given a finite lattice $\Gamma$ and two states $\omega,\tau \in \mathcal{D}(\mathcal{H}_\Gamma)$ with the same support $\supp(\omega)=\supp(\tau)$, the relation $\sim$ is
\begin{equation}
    \omega \overset{\epsilon}{\sim} \tau :\Longleftrightarrow \|\omega^{\frac{1}{2}}\tau^{-1}\omega^{\frac{1}{2}}-\1\|\leq\epsilon<1,
\end{equation}
where the identity $\1\equiv \1_{\supp(\omega)}=\1_{\supp(\tau)}$ is on the support of the states. The inverse represents the generalized inverse, i.e. the inverse on the support times the projection onto it.
\end{definition} 
\begin{proposition}\label{prop:RelationDmax} 
    When restricted to states, it holds that the notion of similarity induced by this relation is equivalent to the one induced by the max-relative divergence. I.e. for $\omega,\tau\in\mathcal{D}(\mathcal{H})$ such that $\omega\overset{\epsilon}{\sim}\tau$ it holds that
    \begin{align}
        D_\text{max}(\omega\|\tau)\leq \log(1+\epsilon) \leq \epsilon = \|\omega^{\frac{1}{2}}\tau^{-1}\omega^{\frac{1}{2}}-\1\| \leq D_\text{max}(\omega\|\tau)\exp(D_\text{max}(\omega\|\tau)). 
    \end{align}
\end{proposition}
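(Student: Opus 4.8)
The plan is to reduce every quantity appearing in the statement to the spectrum of the single positive operator
\[
T:=\omega^{\frac12}\tau^{-1}\omega^{\frac12},
\]
read off on the common support $\mathcal{K}:=\supp(\omega)=\supp(\tau)$, where $\tau^{-1}$ is a genuine inverse and $T$ is positive definite (with $\1\equiv\1_{\mathcal{K}}$). The first ingredient is the identity $D_{\mathrm{max}}(\omega\|\tau)=\log\lambda_{\max}(T)$: with $A:=\tau^{-1/2}\omega^{1/2}$ one has $A^*A=T$ and $AA^*=\tau^{-1/2}\omega\tau^{-1/2}$, so these two positive operators have the same spectrum, and since $D_{\mathrm{max}}(\omega\|\tau)=\log\|\tau^{-1/2}\omega\tau^{-1/2}\|$ by the definition of the max-relative entropy recalled above, it equals $\log\|T\|=\log\lambda_{\max}(T)$. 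Dually, from $T^{-1}=\omega^{-1/2}\tau\omega^{-1/2}$ one reads off $D_{\mathrm{max}}(\tau\|\omega)=\log\|T^{-1}\|=-\log\lambda_{\min}(T)$, an identity I will use to handle the lower edge of the spectrum of $T$.

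For the first half of the chain, $\omega\overset{\epsilon}{\sim}\tau$ is by \Cref{def:Relation} the statement $\|T-\1\|\le\epsilon$, hence $\mathrm{spec}(T)\subseteq[1-\epsilon,1+\epsilon]$; in particular $\lambda_{\max}(T)\le 1+\epsilon$, so $D_{\mathrm{max}}(\omega\|\tau)=\log\lambda_{\max}(T)\le\log(1+\epsilon)$. The next step $\log(1+\epsilon)\le\epsilon$ is the elementary inequality $\log(1+x)\le x$, and the middle equality merely records that we take $\epsilon$ to be the smallest admissible constant in the relation, $\epsilon=\|\omega^{1/2}\tau^{-1}\omega^{1/2}-\1\|$. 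The same inclusion also gives $\lambda_{\min}(T)\ge 1-\epsilon$, i.e.\ $D_{\mathrm{max}}(\tau\|\omega)\le\log\frac{1}{1-\epsilon}$, so the relation $\overset{\epsilon}{\sim}$ controls the max-relative entropy in \emph{both} orders; this is the precise sense in which $\overset{\epsilon}{\sim}$ and $D_{\mathrm{max}}$ define the same notion of similarity.

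For the last inequality, put $d:=D_{\mathrm{max}}(\omega\|\tau)=\log\lambda_{\max}(T)\ge 0$ and $d':=D_{\mathrm{max}}(\tau\|\omega)=-\log\lambda_{\min}(T)\ge 0$, so every eigenvalue of $T$ lies in $[e^{-d'},e^{d}]$. For $x$ in that interval, $|x-1|\le\max\{e^{d}-1,\,1-e^{-d'}\}$, hence $\|T-\1\|\le\max\{e^{d}-1,\,1-e^{-d'}\}$. Now the scalar estimates $e^{d}-1=\int_0^{d}e^{t}\,dt\le d\,e^{d}$ and $1-e^{-d'}\le d'\le d'e^{d'}$ give
\[
\|\omega^{1/2}\tau^{-1}\omega^{1/2}-\1\|\le\max\{d\,e^{d},\,d'e^{d'}\},
\]
which is the claimed bound $D_{\mathrm{max}}(\omega\|\tau)\exp(D_{\mathrm{max}}(\omega\|\tau))$ once $D_{\mathrm{max}}$ is understood as the larger of the two one-sided divergences — by the previous paragraph both are $\le\log\frac{1}{1-\epsilon}$, hence comparable, so either may be quoted.

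The one delicate point, which I would flag as the main obstacle, is exactly that $D_{\mathrm{max}}(\omega\|\tau)$ alone only controls $\lambda_{\max}(T)$, whereas $\|T-\1\|$ also feels $\lambda_{\min}(T)$; the fix is the dual identity $D_{\mathrm{max}}(\tau\|\omega)=-\log\lambda_{\min}(T)$ coming from $T^{-1}=\omega^{-1/2}\tau\omega^{-1/2}$, together with the hypothesis $\epsilon<1$ which keeps $\lambda_{\min}(T)$ bounded away from $0$. Everything else is the spectral reduction plus the elementary inequalities $\log(1+x)\le x$, $e^{x}-1\le xe^{x}$, and $1-e^{-x}\le x$ for $x\ge0$.
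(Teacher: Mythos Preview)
Your approach is essentially the same as the paper's: reduce everything to the spectrum of $X=T=\omega^{1/2}\tau^{-1}\omega^{1/2}$. For the lower bound the paper uses the one-line estimate $D_{\max}(\omega\|\tau)=\log\|X\|\le\log(\|X-\1\|+1)=\log(1+\epsilon)$; for the upper bound it expands $\1-X=\1-\exp(\log X)$ as a power series and bounds it by $\|\log X\|\exp(\|\log X\|)$.

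You are right to flag the issue you do flag. The paper's proof ends with the bound $\|\log X\|\exp(\|\log X\|)$ and simply declares this to be ``the desired upper bound'', but $\|\log X\|=\max\{D_{\max}(\omega\|\tau),D_{\max}(\tau\|\omega)\}$, not the one-sided $D_{\max}(\omega\|\tau)$ that appears in the statement. Your explicit separation into $d=\log\lambda_{\max}(T)$ and $d'=-\log\lambda_{\min}(T)$ makes this transparent, and your proposed fix (read the right-hand side as the symmetrized max-divergence) is exactly what the paper's own argument actually delivers. Indeed the one-sided inequality as literally written is false: take $\omega=\mathrm{diag}(0.999,0.001)$, $\tau=\mathrm{diag}(0.99,0.01)$, so that $\epsilon=\|T-\1\|=0.9$ while $D_{\max}(\omega\|\tau)\exp(D_{\max}(\omega\|\tau))\approx 0.009$. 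So your caveat is not pedantry but a genuine correction, and your argument is otherwise the paper's argument carried out more carefully.
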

A proof of this can be found in Appendix \ref{app:proofs_relation}.
\begin{remark}
It directly follows that $\omega \overset{\epsilon}{\sim} \tau \implies D(\omega\|\tau)\leq \epsilon$ and by Hölder's inequality it follows that $\omega\overset{\epsilon}{\sim}\tau \implies \|\omega-\tau\|_1\leq \epsilon$, but the converse is in general not true. Hence, this relation quantifies a stronger form of similarity between a pair of states than closeness in relative entropy and $1$-norm. 
Despite the equivalence to the max-relative entropy, when restricting to states, it will often be simpler to work directly with the relation instead of the max-relative entropy, and thereby avoid a logarithm. 
\end{remark}

We will often have $\epsilon$ be some exponentially decaying function depending on the supports of $\omega,\tau$, as the clustering functions discussed above. Thus, by a slight abuse of notation, we may write $\omega \sim \tau$ when meaning that there exists \emph{some} exponentially decaying function $\epsilon(l)$ with $\omega \overset{\epsilon}{\sim} \tau$, where $l$ is a suitable parameter (most often a distance between regions).
The mathematical terminology \textit{relation} is justified as per the following proposition. 

\begin{proposition}[Properties of $\overset{\epsilon}{\sim}$]\label{prop:properties_relation}
Given a bipartite system $\mathcal{K}=\mathcal{H}\otimes \mathcal{H}^\prime$, we consider positive\footnote{In fact self-adjoint would be sufficient, but for simplicity we restrict ourselves to positive operators here.} $A,\tilde{A},B,C\in \mathcal{B}(\mathcal{H})$, $D,E\in\mathcal{B}(\mathcal{K})$, and $F,\tilde{F}\in \mathcal{B}(\mathcal{H}^\prime)$ and $P\in\mathcal{B}(\mathcal{H}^\prime)$ a projection. 
The above defined relation $\overset{\epsilon}{\sim}$ satisfies the following properties:
\begin{enumerate}
    \item[$1)$]  \textbf{Reflexivity:} $A\overset{0}{\sim} A$.  
    \item[$2)$] \textbf{Symmetry:}  $ A\overset{\epsilon}{\sim} B \implies B\overset{\epsilon(1-\epsilon)^{-1}}{\sim} A$, $B^{-1}\overset{\epsilon}{\sim} A^{-1}$. 
    \item[$3)$] \textbf{Transitivity:} $A\overset{\epsilon_1}{\sim} B, B\overset{\epsilon_2}{\sim} C \implies A\overset{\eta}{\sim} C $, where $\eta = \epsilon_1\epsilon_2+\epsilon_1+\epsilon_2$.
    \item[$4)$] \textbf{Tensor multiplicativity:}  $A\overset{\epsilon_1}{\sim} \Tilde{A}, F\overset{\epsilon_2}{\sim} \Tilde{F} \implies A\otimes F \overset{\eta}{\sim} \Tilde{A}\otimes\Tilde{F} $, where $\eta =\epsilon_1\epsilon_2+\epsilon_1+\epsilon_2$.
    \item[$5)$] \textbf{Locality preservation:}  $ D\overset{\epsilon}{\sim} E \implies \tr_{\mathcal{H}^\prime}(\1\otimes P)D(\1\otimes P) \overset{\epsilon}{\sim} \tr_{\mathcal{H}^\prime}(\1\otimes P)E(\1\otimes P)$. 
    \item[$6)$] \textbf{Normalization preservation:} $ D\overset{\epsilon}{\sim} E \implies \frac{\tr_{\mathcal{H}^\prime}(\1\otimes P)D(\1\otimes P)}{\Tr[(\1\otimes P)D(\1\otimes P)]} \overset{\epsilon(2+\epsilon)}{\sim} \frac{\tr_{\mathcal{H}^\prime}(\1\otimes P)E(\1\otimes P)}{\Tr[(\1\otimes P)E(\1\otimes P)]}$. 
\end{enumerate}
    \label{prop:UsefulRelation}
\end{proposition}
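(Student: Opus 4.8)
The plan is to verify each of the six properties by direct manipulation of the defining inequality $\|\omega^{1/2}\tau^{-1}\omega^{1/2}-\1\|\leq\epsilon$, exploiting at each step that the operator norm is submultiplicative and that all operators involved are positive. I will repeatedly use the elementary fact that $\|\omega^{1/2}\tau^{-1}\omega^{1/2}-\1\|\leq\epsilon$ is equivalent to the two-sided operator inequality $(1-\epsilon)\,\omega\leq\ldots$ — more precisely, setting $M:=\omega^{1/2}\tau^{-1}\omega^{1/2}$, the hypothesis says $\operatorname{spec}(M)\subset[1-\epsilon,1+\epsilon]$, hence $M$ is invertible on the support with $\|M^{-1}\|\leq(1-\epsilon)^{-1}$ and $\|M^{-1}-\1\|\leq\epsilon(1-\epsilon)^{-1}$; this last bound is exactly the content of the symmetry statement once one checks that $\omega^{1/2}\tau^{-1}\omega^{1/2}$ and $\tau^{1/2}\omega^{-1}\tau^{1/2}$ have the same spectrum (both are similar to $\tau^{-1/2}\omega\tau^{-1/2}$), and $B^{-1}\overset{\epsilon}{\sim}A^{-1}$ follows because $B^{-1/2}(A^{-1})^{-1}B^{-1/2}=B^{-1/2}AB^{-1/2}$ is similar to $A^{1/2}B^{-1}A^{1/2}$, again the same spectrum. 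Reflexivity (1) is immediate since $A^{1/2}A^{-1}A^{1/2}=\1$ on the support.

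For transitivity (3), write $\omega^{1/2}\tau^{-1}\omega^{1/2}-\1 = \omega^{1/2}(\tau^{-1}-\rho^{-1})\omega^{1/2} + (\omega^{1/2}\rho^{-1}\omega^{1/2}-\1)$ where $\rho$ is the intermediate state; the second term is bounded by $\epsilon_1$ by hypothesis, and for the first I insert $\rho^{1/2}$: $\omega^{1/2}(\tau^{-1}-\rho^{-1})\omega^{1/2} = (\omega^{1/2}\rho^{-1/2})\,(\rho^{1/2}\tau^{-1}\rho^{1/2}-\1)\,(\rho^{-1/2}\omega^{1/2})$, whose norm is at most $\|\omega^{1/2}\rho^{-1}\omega^{1/2}\|\cdot\epsilon_2\leq(1+\epsilon_1)\epsilon_2$, giving $\eta=\epsilon_1+\epsilon_1\epsilon_2+\epsilon_2$ (I must be slightly careful that supports match so the generalized inverses compose correctly; this is part of the standing assumption $\supp(\omega)=\supp(\tau)$, and for three states I assume a common support, which holds in all our applications). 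Tensor multiplicativity (4) reduces to transitivity after noting $(A\otimes F)^{1/2}(\tilde A\otimes \tilde F)^{-1}(A\otimes F)^{1/2}-\1 = (A^{1/2}\tilde A^{-1}A^{1/2})\otimes(F^{1/2}\tilde F^{-1}F^{1/2})-\1$, then adding and subtracting $(A^{1/2}\tilde A^{-1}A^{1/2})\otimes\1$ and using $\|X\otimes\1\|=\|X\|$, $\|\1\otimes Y\|=\|Y\|$ and the bound $\|A^{1/2}\tilde A^{-1}A^{1/2}\|\leq 1+\epsilon_1$.

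For locality preservation (5), set $D':=\tr_{\mathcal H'}[(\1\otimes P)D(\1\otimes P)]$ and likewise $E'$. The hypothesis $D\overset{\epsilon}{\sim}E$ means $(1-\epsilon)E\leq D\leq(1+\epsilon)E$ as operators (on the common support); since $X\mapsto\tr_{\mathcal H'}[(\1\otimes P)X(\1\otimes P)]$ is a positive linear map, it preserves these inequalities, giving $(1-\epsilon)E'\leq D'\leq(1+\epsilon)E'$, which is precisely $D'\overset{\epsilon}{\sim}E'$ after checking the supports of $D',E'$ coincide — this coincidence follows because the operator-order sandwich forces $\ker D' = \ker E'$. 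Finally (6) combines (5) with the normalization: writing $d:=\Tr[(\1\otimes P)D(\1\otimes P)]$, $e:=\Tr[\cdots E\cdots]$, the order bounds give $|d/e-1|\leq\epsilon$, so $D'/d$ and $E'/e$ differ from $D'/e$ and $E'/e$ by a scalar factor within $\epsilon$ of $1$; transitivity-style bookkeeping (the scalar case of the relation, $\lambda\omega\overset{|\lambda-1|/\min(1,\lambda)\cdot(\ldots)}{\sim}\omega$) then yields the stated constant $\epsilon(2+\epsilon)$, after optimizing the crude bounds. The main obstacle I anticipate is purely bookkeeping: tracking the support conditions when composing generalized inverses (especially in (3) with three states and in (5)--(6) after the non-invertible compression by $P$), and being careful that the constants propagate exactly as claimed rather than with a worse dependence; the spectral/operator-order reformulation of $\overset{\epsilon}{\sim}$ is the tool that makes all of this routine.
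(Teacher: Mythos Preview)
Your proposal is correct and follows essentially the same route as the paper's proof: the spectral/operator-order reformulation $(1-\epsilon)E\le D\le(1+\epsilon)E$ is exactly what the paper uses for (2), (5), and (6); your decomposition for transitivity (inserting $\rho^{-1/2}\rho^{1/2}$ and bounding $\|XYX^*\|\le\|XX^*\|\,\|Y\|$) matches the paper's, and your tensor argument is a trivial variant of theirs. The only cosmetic differences are that the paper writes out the geometric series explicitly for the $\epsilon(1-\epsilon)^{-1}$ bound in (2) and does the scalar-factor bookkeeping in (6) slightly more explicitly, but the ideas and constants coincide.
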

For notational simplicity, we may write $A\overset{\epsilon_1}{\sim}B\overset{\epsilon_2}{\sim}C$, implying transitivity, when we mean $A\overset{\epsilon_1}{\sim}B$, $B\overset{\epsilon_2}{\sim}C$. The following corollary can be derived as a consequence of the previous proposition.  

\begin{corollary} \label{cor:Relation}
 If $A_i \overset{\epsilon}{\sim} A_{i+1}$ for $i= 0,...,K-1$, then $A_0\overset{\eta}{\sim}A_K \text{ with } \eta = (1+\epsilon)^K-1$.
\end{corollary}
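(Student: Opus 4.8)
The statement is a simple induction on $K$ using transitivity (property 3 of \Cref{prop:UsefulRelation}), so the plan is to track how the error parameter compounds at each step. First I would set up the induction: let $\eta_k$ denote an error parameter with $A_0 \overset{\eta_k}{\sim} A_k$, with base case $\eta_1 = \epsilon$ (indeed $\eta_1 = (1+\epsilon)^1 - 1 = \epsilon$). For the inductive step, suppose $A_0 \overset{\eta_k}{\sim} A_k$ and $A_k \overset{\epsilon}{\sim} A_{k+1}$; then transitivity gives $A_0 \overset{\eta_{k+1}}{\sim} A_{k+1}$ with $\eta_{k+1} = \eta_k \epsilon + \eta_k + \epsilon = (1+\epsilon)\eta_k + \epsilon$.

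Next I would solve this recursion. The map $x \mapsto (1+\epsilon)x + \epsilon$ has fixed point $x_* = -1$ (since $x_* = (1+\epsilon)x_* + \epsilon \iff -\epsilon x_* = \epsilon \iff x_* = -1$), so writing $\eta_k + 1 = (1+\epsilon)(\eta_{k-1} + 1)$ we get $\eta_k + 1 = (1+\epsilon)^{k-1}(\eta_1 + 1) = (1+\epsilon)^{k-1}(1+\epsilon) = (1+\epsilon)^k$. Hence $\eta_k = (1+\epsilon)^k - 1$, and taking $k = K$ gives the claim.

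One technical point to check is that transitivity as stated in property 3 requires the intermediate parameters to be valid error parameters, i.e. all relevant quantities should remain $< 1$ for the relation $\overset{\epsilon}{\sim}$ to be well-defined (recall \Cref{def:Relation} requires $\epsilon < 1$). The cleanest way to handle this is to note that the identity $\eta_{k+1} = (1+\epsilon)\eta_k + \epsilon$ as an arithmetic relation among the \emph{bounds} holds regardless, and monotonicity of the relation in its parameter (if $\omega \overset{\epsilon}{\sim} \tau$ and $\epsilon \le \epsilon' < 1$ then $\omega \overset{\epsilon'}{\sim} \tau$) lets us always work with the explicit bound $(1+\epsilon)^k - 1$; the corollary is then only meaningful when $(1+\epsilon)^K - 1 < 1$, i.e. $K < \log 2 / \log(1+\epsilon)$, which is the regime in which it will be applied. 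I do not anticipate any real obstacle here — this is a routine induction, and the only thing worth stating carefully is the closed form of the recursion and the implicit smallness requirement on $(1+\epsilon)^K - 1$.
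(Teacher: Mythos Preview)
Your proposal is correct and follows essentially the same approach as the paper: induction on $K$ using transitivity to obtain the recursion $\eta_{k+1} = (1+\epsilon)\eta_k + \epsilon$, then identifying the closed form $(1+\epsilon)^k - 1$. The paper verifies the closed form by direct substitution in the inductive step rather than via the fixed-point change of variables, but this is an immaterial difference; your additional remark about the implicit smallness requirement $(1+\epsilon)^K - 1 < 1$ is a valid clarification that the paper leaves implicit.
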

The above corollary plays an important role in the estimation of the mixing condition between separate regions composed of several connected components.
The proofs of Proposition \ref{prop:properties_relation} and Corollary \ref{cor:Relation} are elementary, but we include them for completeness in \Cref{app:proofs_relation}.

\begin{remark}
    Using the relation introduced in this subsection, we can rewrite the strong local indistinguishability of a state $\sigma^{ABC}$ as 
    \begin{align}
\tr_{BC}[\sigma^{ABC}] \overset{\epsilon_{A,C}(l)}{\sim} \tr_{B}[\sigma^{AB}] \Longleftrightarrow D_\textup{max}(\tr_{BC}[\sigma^{ABC}]\|\tr_{B}[\sigma^{AB}])\lesssim \epsilon_{A,C}(l)  \, ,
\label{def:StrongLocalIndisting}    
\end{align}
while the mixing condition between $A$ and $C$ can be expressed as 
\begin{align}
 \sigma_{AC} \overset{\epsilon_{A,C}(l)}{\sim} \sigma_A\otimes\sigma_C \Longleftrightarrow I_\textup{max}(A:C)\lesssim \epsilon_{A,C}(l) \, .
\label{def:StrongLocalIndisting}    
\end{align} 
\end{remark}

\subsection{Relation between measures of correlations}\label{sec:relation_correlations}

We are now in a position to prove the main result of this section: the equivalence between the seemingly weaker notion of decay of correlations provided by the covariance and stronger notions such as strong local indistinguishability and the mixing condition. Note that this result will only hold in full generality for Gibbs states of commuting Hamiltonians, and that the equivalence is up to exponential pre-factors that grow with the boundaries of the relevant regions. As such, the results are more useful for low-dimensional systems, such as in 1D and 2D, where these boundaries are not too large compared to the distance between regions.

\begin{theorem}[Implications of decay of covariance; commuting case] \label{thm:ddimStrongLocalIndist}
Let $\Phi$ be a bounded, geometrically$-r-$local, commuting potential and $\beta>0$. Then if the family of Gibbs states $\{\sigma^\Gamma\}_{\Gamma\subset\subset\Lambda}$ associated to $(\Lambda,\Phi,\beta)$ satisfies uniform exponential decay of covariance with decay rate $\xi$, it satisfies, 
\\\textbf{1)} Uniform strong local indistinguishability with decay length $\xi$. That is for each $ABC=\Gamma\subset\subset\Lambda$, where $A,B,C$ are disjoint and $l=\dist(A,C)>r$, it holds that
\begin{equation}
    \tr_{BC}[\sigma^{ABC}]\overset{\epsilon(l)}{\sim}\tr_B[\sigma^{AB}] \text{ with } \epsilon_{A,C}(l)= e^{\mathcal{O}(\beta\min\{|\partial A|,|\partial_A B|\})}\mathcal{O}(|\partial C|)\exp{\left(-\frac{l-r}{\xi}\right)}.
    \label{equ:StrongLocalIndistdD}
\end{equation} Here $\partial_AB=\partial B\cap A$.
\vspace{0.2cm}
\\ \textbf{2)} Uniform mixing condition with decay length $\xi$. That is for each $ABC=\Gamma\subset\subset\Lambda$, where $A,B,C$ are disjoint and $l=\dist(A,C)>3r$, $\sigma^\Gamma$ satisfies
\begin{equation} 
    \sigma_{AC}\overset{\eta(l)}{\sim}\sigma_A\otimes\sigma_C \text{ with }
    \eta_{A,C}(l)= e^{\mathcal{O}(\beta(|\partial A|+|\partial C|))}\exp{\left(-\frac{l-2r}{\xi}\right)}.
    \label{equ:StrongTensorization}
\end{equation}
In particular, by \eqref{eq:ineq_mutualinfo_mixingcondition}, it also satisfies 
uniform exponential decay of mutual information with
\begin{align}
    I_{\sigma^{ABC}}(A:C) \leq \eta_{A,C}(l) \, .
    \label{equ:decayofMutualInformation}
\end{align}
\end{theorem}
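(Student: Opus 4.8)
I would exploit the commuting structure to reduce both assertions to clustering statements about reduced states of \emph{genuine} Gibbs states $\sigma^{\Pi}$ ($\Pi\subset\subset\Lambda$) near thin interface regions, and then close the argument with the decay of covariance together with the calculus of the relation $\overset{\epsilon}{\sim}$ (\Cref{prop:properties_relation}, \Cref{cor:Relation}, \Cref{prop:RelationDmax}). The structural input is: since $\Phi$ is commuting, geometrically-$r$-local and $\dist(A,C)=l>r$, the terms of $H_{ABC}$ partition into those touching $A$ (summing to $H_{\hat A}$, supported on $A\partial$), those touching $C$ ($H_{\hat C}$, supported on $C\partial$, disjoint from $A\partial$ once $l>2(r-1)$) and those touching neither (which are exactly the terms of $H_B$); all three commute, so $e^{-\beta H_{ABC}}=e^{-\beta H_{\hat A}}e^{-\beta H_{\hat C}}e^{-\beta H_B}$, and similarly $e^{-\beta H_{AB}}=e^{-\beta H_{\hat A}}e^{-\beta H_B}$ with the \emph{same} factor $e^{-\beta H_{\hat A}}$. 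Whenever the argument must pass between a Gibbs state $\sigma^{XY}$ and a product $\sigma^{X}\otimes\sigma^{Y}$, one uses Araki's expansionals $E_{X,Y}=e^{-H_{XY}}e^{H_X+H_Y}$, which for commuting potentials equal the exponential of the interface terms and hence have operator norm $e^{\mathcal{O}(\beta\cdot|\mathrm{interface}|)}$.

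\textbf{Part 1.} Put $b:=\partial A\subset B$. Tracing out $B\setminus b$ and $C$ in the factorisation above gives $\tr_{BC}[\sigma^{ABC}]=\tr_b[e^{-\beta H_{\hat A}}\sigma^{BC}_b]/\Tr[\cdots]$ and $\tr_B[\sigma^{AB}]=\tr_b[e^{-\beta H_{\hat A}}\sigma^{B}_b]/\Tr[\cdots]$, where $\sigma^{BC}_b,\sigma^{B}_b$ are marginals of honest Gibbs states and, crucially, the normalisation has eliminated all volume-dependent partition functions. The map $M\mapsto\tr_b[e^{-\beta H_{\hat A}}(\1_A\otimes M)]$ is positive and preserves two-sided operator bounds, so combined with renormalisation it preserves $\overset{\epsilon}{\sim}$ (this is \Cref{prop:properties_relation}(5)–(6)); it therefore suffices to prove $\sigma^{BC}_b\overset{\delta}{\sim}\sigma^{B}_b$. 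For this I would: (i) invoke weak local indistinguishability (\Cref{def:local_indistinguishability}), which for commuting Hamiltonians follows from decay of covariance with the same decay length $\xi$ and a prefactor $\mathcal{O}(|\partial C|)$ \cite{art:FiniteCorrelationLengthImpliesEfficient}, giving $\|\sigma^{BC}_b-\sigma^B_b\|_1\leq\mathcal{O}(|\partial C|)\exp(-(l-r)/\xi)$ (as $\dist(b,C)\geq l-r+1$); and (ii) upgrade this to the relation through $\|(\sigma^B_b)^{-1/2}\sigma^{BC}_b(\sigma^B_b)^{-1/2}-\1\|\leq\|(\sigma^B_b)^{-1}\|\,\|\sigma^{BC}_b-\sigma^B_b\|_1$, where $\|(\sigma^B_b)^{-1}\|\leq e^{\mathcal{O}((1+\beta)|\partial A|)}$. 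This last inequality is the heart of the proof: writing $\sigma^B_b\propto e^{-\beta H_b^{\circ}}R$ with $H_b^{\circ}$ the terms inside $b$ and $R=\tr_{B\setminus b}[e^{-\beta H_{\mathrm{out}}}]$ (all commuting), one splits $H_{\mathrm{out}}$ into the commuting piece living only in $B\setminus b$ — which contributes a scalar to $R$ after the partial trace — and the piece straddling the $b$–$(B\setminus b)$ interface, which carries only $\mathcal{O}(|\partial A|)$ commuting terms of norm $\le J$; hence $R$ has condition number at most $e^{\mathcal{O}(\beta|\partial A|)}$, and with $\|e^{\mp\beta H_b^{\circ}}\|\le e^{\mathcal{O}(\beta|\partial A|)}$ and $\dim\mathcal{H}_b=e^{\mathcal{O}(|\partial A|)}$ one gets the claimed bound. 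Putting the pieces together yields $\delta=e^{\mathcal{O}((1+\beta)|\partial A|)}\mathcal{O}(|\partial C|)\exp(-(l-r)/\xi)$, i.e.\ \eqref{equ:StrongLocalIndistdD}. An entirely analogous computation that peels the interactions off on the $A$-side (working on $\partial_A B$ instead of $\partial A$) produces the same estimate with $|\partial A|$ replaced by $|\partial_A B|$, and one keeps the smaller of the two.

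\textbf{Part 2.} Now $\sigma_{AC}=\tr_B[\sigma^{ABC}]$; set $b_A:=\partial A$, $b_C:=\partial C$, both contained in $B$ and at mutual distance $\geq l-2(r-1)$ (we use $l>3r$). Tracing $B\setminus(b_Ab_C)$ out of the triple factorisation gives $\sigma_{AC}=\tr_{b_Ab_C}[(e^{-\beta H_{\hat A}}\otimes e^{-\beta H_{\hat C}})\,\sigma^{B}_{b_Ab_C}]/\Tr[\cdots]$, while the one-region computations of Part 1 give $\sigma_A=\tr_{b_A}[e^{-\beta H_{\hat A}}\sigma^{BC}_{b_A}]/\Tr[\cdots]$ and $\sigma_C=\tr_{b_C}[e^{-\beta H_{\hat C}}\sigma^{AB}_{b_C}]/\Tr[\cdots]$. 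Applying Part 1 inside $\sigma^{BC}$ and $\sigma^{AB}$ replaces $\sigma^{BC}_{b_A}$ by $\sigma^{B}_{b_A}$ and $\sigma^{AB}_{b_C}$ by $\sigma^{B}_{b_C}$ up to $\overset{\epsilon}{\sim}$-errors of the Part-1 form, and then tensor-multiplicativity together with the partial-trace/normalisation rules of \Cref{prop:properties_relation} and transitivity (\Cref{cor:Relation}) reduce the whole claim $\sigma_{AC}\overset{\eta}{\sim}\sigma_A\otimes\sigma_C$ to the single clustering estimate $\sigma^{B}_{b_Ab_C}\overset{\mu}{\sim}\sigma^{B}_{b_A}\otimes\sigma^{B}_{b_C}$ — the mixing condition for the honest Gibbs state $\sigma^{B}$ between the \emph{boundary-sized} regions $\partial A$ and $\partial C$. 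For such small regions I would obtain this directly: decay of covariance bounds $|\Tr[(\sigma^{B}_{b_Ab_C}-\sigma^{B}_{b_A}\otimes\sigma^{B}_{b_C})(f\otimes g)]|=|\Cov^{\GNS}_{\sigma^{B}}(f,g)|\leq\|f\|\|g\|\exp(-(l-2r)/\xi)$, which after expanding a general observable on $b_Ab_C$ into $\mathcal{O}(d^{|\partial A|+|\partial C|})$ products controls $\|\sigma^{B}_{b_Ab_C}-\sigma^{B}_{b_A}\otimes\sigma^{B}_{b_C}\|_1$, and one upgrades this to $\overset{\mu}{\sim}$ by multiplying by $\|(\sigma^{B}_{b_A})^{-1}\|\,\|(\sigma^{B}_{b_C})^{-1}\|\leq e^{\mathcal{O}((1+\beta)(|\partial A|+|\partial C|))}$ (the same commuting-structure estimate as in Part 1). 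This gives $\mu$, hence $\eta$, of the shape \eqref{equ:StrongTensorization}. Finally \eqref{equ:decayofMutualInformation} is immediate: $I_{\sigma^{ABC}}(A:C)=D(\sigma_{AC}\|\sigma_A\otimes\sigma_C)\leq D_{\max}(\sigma_{AC}\|\sigma_A\otimes\sigma_C)\leq\log(1+\eta)\leq\eta$ by \Cref{prop:RelationDmax}, which is exactly the content of \eqref{eq:ineq_mutualinfo_mixingcondition}.

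\textbf{Main obstacle.} The genuinely non-trivial point — and the reason the statement holds for commuting potentials in general, rather than only at high temperature — is getting prefactors that scale with the \emph{boundaries} $|\partial A|,|\partial C|$ and not with the volumes: this hinges on the bound $\|(\sigma^{\Pi}_b)^{-1}\|\leq e^{\mathcal{O}((1+\beta)|\partial b|)}$ for Gibbs marginals onto boundary regions, whose proof uses the commuting decomposition essentially (a non-commuting Hamiltonian would only give a volume bound here). Everything else is bookkeeping with the properties of $\overset{\epsilon}{\sim}$, partial traces, and normalisation.
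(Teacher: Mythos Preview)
Your approach is correct and in several respects cleaner than the paper's, though it follows a genuinely different route. The paper works throughout with Araki expansionals: it rewrites $(\tr_{BC}[\sigma^{ABC}])(\tr_B[\sigma^{AB}])^{-1}=\tr_{BC}[\sigma^{BC}E_{A,BC}](\tr_B[\sigma^B E_{A,B}])^{-1}\lambda_{ABC}^{-1}$ and then bounds each factor by separate claims, the key technical input being \Cref{lemma:ImportantCommutingLemma1}, which controls $\|E_{A,B}^{\pm1}\|$ and $\|\tr_B[\sigma^B E_{A,B}]^{\pm1}\|$ by $e^{\mathcal{O}(\beta|\partial_{A,B}|)}$; the partition-function ratio $\lambda_{ABC}$ is handled on its own. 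For Part~2 the paper similarly expands $\sigma_{AC}(\sigma_A\otimes\sigma_C)^{-1}$ into three pieces and bounds each. Your approach instead factors the commuting Gibbs weight as $e^{-\beta H_{ABC}}=e^{-\beta H_{\hat A}}e^{-\beta H_{BC}}$ and observes that both marginals are the \emph{same} completely positive map $M\mapsto\tr_b[e^{-\beta H_{\hat A}/2}(\1_A\otimes M)e^{-\beta H_{\hat A}/2}]$ applied to $\sigma^{BC}_{\partial A}$ and $\sigma^{B}_{\partial A}$ respectively, so that normalization is absorbed automatically by the $\overset{\epsilon}{\sim}$-calculus and Part~1 reduces to the single inverse-norm estimate $\|(\sigma^B_{\partial A})^{-1}\|\le e^{\mathcal{O}((1+\beta)|\partial A|)}$ together with weak local indistinguishability. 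Part~2 then follows by reusing Part~1 plus one mixing estimate for the honest Gibbs state $\sigma^B$ between the boundary-sized regions $\partial A,\partial C$. Your inverse-norm bound plays exactly the structural role of the paper's \Cref{lemma:ImportantCommutingLemma1}, and in both arguments this is precisely where commutativity enters. Two minor remarks: your appeal to \Cref{prop:properties_relation}(5)--(6) needs a trivial extension (those items are stated for conjugation by a projection $\1\otimes P$, whereas you conjugate by $e^{-\beta H_{\hat A}/2}$; the proofs in \Cref{app:proofs_relation} go through verbatim for any positive map followed by renormalization), and the ``peel on the $A$-side'' variant yielding $|\partial_A B|$ instead of $|\partial A|$ deserves one more line. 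The paper's route is longer but tracks the expansional machinery reused in the non-commuting 1D case (\Cref{thm:StrongLocalIndist1D}); yours is more self-contained for the commuting setting.
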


\begin{remark}
In the statement of the theorem we have an exponential decay w.r.t $\dist(A,C)$ but we also have a spurious pre-factor which is growing exponentially in the size of the boundary of $A$ for the strong local indistinguishability and $A$ and $C$ for the mixing condition. This is a consequence of the proof technique we are using and we expect it to be not physically tight. 
\end{remark}

We will later be using these clustering results on connected and growing sets $A,C$. For the 1 dimensional spin chains these prefactors are just constant and thus negligible. In the $2$-dimensional square lattice this exponentially increasing pre-factor can still be dominated by the decay in the distance if the decay length $\xi$ is short enough. 

Since by \cite{art:QuantumGibbsSamplers-kastoryano2016quantum} the existence of a gap in the QMS with fixed point $\sigma$ implies uniform exponential decay of covariance, we immediately have strong local indistinguishability, mixing condition, and exponential decay of the mutual information from the gap property. This implies that 1-dimensional quantum spin chains satisfy these properties at any temperature for geometrically-local, commuting, bounded Hamiltonians and in $D$-dimensional regular latices, although with exponential prefactors in the boundaries of local regions. 
With this in mind, implication 1) above is a strict strengthening of the local indistinguishability result in \cite{art:LocalityofTemperature,art:FiniteCorrelationLengthImpliesEfficient} for commuting Hamiltonians. Implication 2) can be viewed as an extension of the results in \cite{art:Bluhm2022exponentialdecayof} to any lattice with finite growth constant under the additional condition of commutativity of the Hamiltonian. 

Before proving Theorem \ref{thm:ddimStrongLocalIndist}, we note that for 1-dimensional quantum spin chains we can also establish the above results and implications for non-commuting Hamiltonians. In this case, 2) is the main result of \cite{art:Bluhm2022exponentialdecayof}, and 1) is as follows. 

\begin{theorem}[Strong local indistinguishability in 1D]\label{thm:StrongLocalIndist1D}
Let $\Phi$ be a geometrically-$r$-local and $J-$bounded potential on $\Z$ and $\beta>0$ such that the to $(\Z,\Phi,\beta)$ associated family of Gibbs states $\{\sigma^\Gamma\}_{\Gamma\subset\subset\Z}$ satisfies uniform exponential decay of covariance.
Then there exist constants $K,a>0$, such that for each interval $I=ABC\subset\subset\mathbb{Z}$, where $B$ shields $A$ away from $C$ and $2l:=|B|= \dist(A,C)>r$ it holds that
\begin{equation}
    \|(\tr_{BC}[\sigma^{ABC}])(\tr_{B}[\sigma^{AB}])^{-1}-\1\|\leq Ke^{-al}.
    \label{equ:StrongLocalIndist}
\end{equation}
Note that $K,a>0$ depend only on the interaction range $r$ and effective strength $\beta J$.
\end{theorem}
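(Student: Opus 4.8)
The plan is to derive strong local indistinguishability from two ingredients already available in dimension one: the \emph{mixing condition} of \Cref{def:mixing_condition} (which in 1D follows qualitatively from uniform decay of covariance, the main result of \cite{art:Bluhm2022exponentialdecayof}), and the quantitative locality of Araki's expansionals $E_{X,Y}$ on which that result rests. All comparisons will be phrased through the relation $\overset{\epsilon}{\sim}$ of \Cref{def:Relation}, using its calculus (\Cref{prop:properties_relation}, \Cref{cor:Relation}, \Cref{prop:RelationDmax}) to push errors through partial traces and renormalisations.

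\emph{Structural step.} Split the shielding interval as $B=SR$ with $S$ adjacent to $A$ and $R$ adjacent to $C$, and $|S|,|R|\approx l$, so that $R$ lies at distance $\mathcal{O}(l)$ from $A$. Writing $e^{-\beta H_{ABC}}=E_{AB,C}\,(e^{-\beta H_{AB}}\otimes e^{-\beta H_{C}})$, and using that the \emph{normalised} expansional $(\sigma^{AB}\otimes\sigma^{C})^{-1/2}\sigma^{ABC}(\sigma^{AB}\otimes\sigma^{C})^{-1/2}$ is bounded and bounded-invertible with $\beta J,r$-dependent constants and is approximated in operator norm, up to error $e^{-\mathcal{O}(l)}$, by an operator supported within distance $\mathcal{O}(l)$ of the $B$--$C$ interface --- this is the content of the expansional estimates of \cite{art:Bluhm2022exponentialdecayof, CMTW2023LocalityGibbs} --- I would trace out $BC$ to obtain
\[
\tr_{BC}[\sigma^{ABC}]\ \overset{Ke^{-al}}{\sim}\ \frac{1}{\Tr[\sigma^{AB}M_R]}\,\tr_{B}\!\big[\sigma^{AB}(\1_A\otimes M_R)\big]
\]
for a positive operator $M_R\ge 0$ supported on $R$ with $\|M_R\|\le e^{\mathcal{O}(\beta Jr)}$ (and depending only on the fixed local dimension). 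For commuting $\Phi$ this identity is \emph{exact}, with $M_R=\tr_{C}[e^{-\beta W_{B,C}}(\1\otimes\sigma^{C})]$ and $W_{B,C}$ the sum of interaction terms across the $B$--$C$ cut; the non-commutative statement is the substitute produced by the expansional machinery.

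\emph{Correlation step and conclusion.} Inside the chain $AB=ASR$ the region $R$ is shielded from $A$ by $S$, so the mixing condition gives $\sigma^{AB}_{AR}\overset{\eta}{\sim}\sigma^{AB}_A\otimes\sigma^{AB}_R$ with $\eta\le e^{-\mathcal{O}(l)}$; by \Cref{prop:RelationDmax} and the symmetry of $\overset{\epsilon}{\sim}$ this upgrades to the two-sided operator bound $(1-\mathcal{O}(\eta))\,\sigma^{AB}_A\otimes\sigma^{AB}_R\le\sigma^{AB}_{AR}\le(1+\mathcal{O}(\eta))\,\sigma^{AB}_A\otimes\sigma^{AB}_R$. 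Sandwiching with $\1_A\otimes M_R\ge 0$ and tracing out $R$, using $\tr_{B}[\sigma^{AB}(\1_A\otimes M_R)]=\tr_{R}[\sigma^{AB}_{AR}(\1_A\otimes M_R)]$ and $\tr_{R}[(\sigma^{AB}_A\otimes\sigma^{AB}_R)(\1_A\otimes M_R)]=\Tr[\sigma^{AB}M_R]\,\sigma^{AB}_A$, yields $\Tr[\sigma^{AB}M_R]^{-1}\tr_{B}[\sigma^{AB}(\1_A\otimes M_R)]\overset{\mathcal{O}(\eta)}{\sim}\tr_{B}[\sigma^{AB}]$ (the normalisation is automatic since both sides are states). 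Transitivity of $\overset{\epsilon}{\sim}$ then gives $\tr_{BC}[\sigma^{ABC}]\overset{Ke^{-al}}{\sim}\tr_{B}[\sigma^{AB}]$, which is exactly \eqref{equ:StrongLocalIndist}; every constant entering (the expansional bounds, $\|M_R\|$, the decay rate of $\eta$) depends only on $r$ and $\beta J$.

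\emph{Main obstacle.} The delicate part is the non-commutative structural step: one cannot bound the operator-norm localisation error of $E_{AB,C}$ against $e^{-\beta H_{AB}}\otimes e^{-\beta H_{C}}$, whose operator norm is exponentially large in $|\Gamma|$, nor trace out the whole of $BC$ of an operator that is small only in operator norm. The resolution is to work with the normalised expansional --- so that the accompanying square roots $\sqrt{\sigma^{AB}\otimes\sigma^{C}}$ have norm $\le1$ --- to invoke Araki's analyticity / the quantitative clustering of \cite{art:Bluhm2022exponentialdecayof, CMTW2023LocalityGibbs} (this is where the uniform decay-of-covariance hypothesis is actually consumed), and to perform the partial traces \emph{before} estimating, which is what keeps the error controlled and produces the well-behaved local operator $M_R$ of the structural step.
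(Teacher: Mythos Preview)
Your route is genuinely different from the paper's. The paper factors across the $A$--$B$ cut, writing
\[
(\tr_{BC}[\sigma^{ABC}])(\tr_B[\sigma^{AB}])^{-1}=\tr_{BC}[\sigma^{BC}E_{A,BC}]\,(\tr_B[\sigma^{B}E_{A,B}])^{-1}\,\lambda_{ABC}^{-1},
\]
and then bounds each piece directly using the auxiliary lemmas of \cite{art:Bluhm2022exponentialdecayof}: uniform boundedness of $(\tr_B[\sigma^{B}E_{A,B}])^{\pm1}$, exponential decay of $|\lambda_{ABC}^{-1}-1|$, the approximation $\|E_{A,BC}-E_{A,B_1}\|\le e^{-\mathcal{O}(l)}$, and \cite[Proposition~8.5]{art:Bluhm2022exponentialdecayof} for the residual difference of partial traces. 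You instead factor across the $B$--$C$ cut, localize its effect onto $R\subset B$, and then invoke the mixing condition inside $\sigma^{AB}$ to decorrelate $A$ from $R$. Your ``localize then decorrelate'' structure is more conceptual and consumes the \emph{main} result of \cite{art:Bluhm2022exponentialdecayof} (the mixing condition) as a black box rather than its technical ingredients.

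Your correlation step is sound once one notes $\tr_R[\sigma^{AB}_{AR}(\1_A\otimes M_R)]=\tr_R[(\1_A\otimes M_R^{1/2})\sigma^{AB}_{AR}(\1_A\otimes M_R^{1/2})]$, which legitimises the sandwiching. The structural step is where the real work hides, and your justification is incomplete: operator-norm localization of the normalised expansional $\mathcal{E}=(\sigma^{AB}\otimes\sigma^{C})^{-1/2}\sigma^{ABC}(\sigma^{AB}\otimes\sigma^{C})^{-1/2}$ is \emph{not} what \cite{art:Bluhm2022exponentialdecayof} states --- those references localize $E_{AB,C}$, not its conjugate by $(\sigma^{AB}\otimes\sigma^C)^{\pm1/2}$, which is an imaginary-time modular flow. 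The missing observation is the algebraic identity $\mathcal{E}=\tfrac{Z_{AB}Z_C}{Z_{ABC}}\,(E_{AB,C}^{\beta/2})^{*}E_{AB,C}^{\beta/2}$ with $E_{AB,C}^{\beta/2}=e^{-\tfrac{\beta}{2}H_{ABC}}e^{\tfrac{\beta}{2}(H_{AB}+H_C)}$; Araki-type localization of this half-temperature expansional (which holds unconditionally in 1D and does \emph{not} use decay of covariance) then gives what you need. Decay of covariance is thus consumed only in your correlation step, not in the structural step as you assert. Finally, your calculus of $\overset{\epsilon}{\sim}$ delivers the \emph{symmetrised} bound $\|(\tr_B[\sigma^{AB}])^{-1/2}\tr_{BC}[\sigma^{ABC}](\tr_B[\sigma^{AB}])^{-1/2}-\1\|\le\epsilon$, i.e.\ \Cref{def:strong_local_indistinguishability}; the asymmetric form in \eqref{equ:StrongLocalIndist} is strictly stronger and does not follow from it for non-commuting marginals, though only the symmetrised version is actually used downstream.
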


\Cref{thm:StrongLocalIndist1D} will be proven in \Cref{sec:proof_stronglocalind_1d}. Its proof is analogous to the one for \Cref{thm:ddimStrongLocalIndist}, using some additional technical prerequisites from \cite{art:Bluhm2022exponentialdecayof}.
For the latter we will first need the following technical Lemma.
Let us use the standard notation
\begin{equation}
  E_{A,B}^\beta:=\exp(-\beta H_{AB})\exp(\beta (H_A+H_B))  
\end{equation}
to denote Araki's expansionals. We omit the superscript with the inverse temperature $\beta >0$ when it is unnecessary or clear from the context.

\begin{lemma}
\label{lemma:ImportantCommutingLemma1}
Let $\Phi$ be a geometrically-$r$-local, $J$-bounded, commuting potential on a quantum spin system $\Lambda$ with finite growth constant $\nu$. Let $\Gamma=ABC\subset\subset\Lambda$, with $B$ shielding $A$ from $C$. Then, if we denote by $\partial_A B$ the boundary of $B$ in $A$, i.e. $\partial_A B := (\partial B) \cap A$, similarly $\partial_BA$ and $\partial_{A,B}:=\min\{|\partial_B A|,|\partial_A B|\}$ the following bounds hold with $K_{A,B}:=\exp({\mathcal{O}(\beta |\partial_{A,B}|)})$ independent of $l:=\dist(A,C)$.
\begin{enumerate}
    \item[$1)$] For every $\beta>0$, we have
    \begin{equation}
        \|E_{A,B}^{\pm1}\|\leq K_{A,B}  \, .
    \end{equation}
    \item[$2)$] For any strictly positive $Q\in\mathcal{B}(\mathcal{H}_\Lambda)$, we have 
    \begin{equation}
        \|Q^{\mp1}\|^{-1} \leq \|\tr_B[\sigma^BQ]^{\pm1}\| \leq \|Q^{\pm1}\| \, .
    \end{equation} 
    \item[$3)$] The following bounds also hold respectively
    \begin{equation}
        \left\lbrace \|\tr_B[\sigma^BE_{A,B}^{\pm1}]^{\pm1}\| , \,  \|\tr_B[\sigma^BE_{B,C}^{\pm1}]^{\pm1}\| , \,  \Tr[\sigma^{AB}E_{A,B}^{\pm1}]^{\pm1}  \right\rbrace \leq \left\{K_{A,B},K_{B,C},K_{A,B}\right\} \, .
    \end{equation}
    \begin{equation}
\|\tr_B[\sigma^BE_{A,B}^{\pm1}E_{AB,C}^{\pm1}]^{\pm1}\| \leq K_{A,B}K_{B,C} \, .
    \end{equation}
\end{enumerate}
The same inequalities hold when exchanging the order of $\sigma$ and the expansionals inside the partial traces. 
Note that the \text{big-$\mathcal O$} notation refers to the dependence in $\beta$ and the boundaries and omits dependence on $J,d,r,\nu$. 
\end{lemma}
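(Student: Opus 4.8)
The plan is to exploit the commutativity of $\Phi$ to collapse each Araki expansional $E_{A,B}$ into a single exponential of the interaction terms that \emph{bridge} the two regions, bound the norm of that exponential by a counting argument that invokes the finite growth constant, and finally push all estimates through the partial trace using that $Q\mapsto\tr_B[\sigma^B Q]$ is unital and positive. For \textbf{1)} I would split $H_{AB}=H_A+H_B+W_{A,B}$, where $W_{A,B}:=\sum_X\Phi_X$ runs over the interaction terms $X\subseteq AB$ with $X\cap A\neq\emptyset$ and $X\cap B\neq\emptyset$. Since all $\Phi_X$ commute, $H_A$, $H_B$, $W_{A,B}$ mutually commute, so $e^{-\beta H_{AB}}=e^{-\beta H_A}e^{-\beta H_B}e^{-\beta W_{A,B}}$ and hence $E_{A,B}=e^{-\beta W_{A,B}}$; as $W_{A,B}$ is self-adjoint, $\|E_{A,B}^{\pm1}\|=\|e^{\mp\beta W_{A,B}}\|\le e^{\beta\|W_{A,B}\|}$. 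It then remains to bound $\|W_{A,B}\|\le J\cdot\#\{\text{bridging }X\}$: each bridging $X$ has diameter $O(r)$, hence contains a vertex of $\partial_A B$ and a vertex of $\partial_B A$; by the finite growth constant every vertex lies in at most some $c_0=c_0(\nu,r)$ interaction terms, so $\#\{\text{bridging }X\}\le c_0\min\{|\partial_A B|,|\partial_B A|\}=c_0|\partial_{A,B}|$, giving $\|E_{A,B}^{\pm1}\|\le\exp(\beta J c_0|\partial_{A,B}|)=:K_{A,B}=\exp(\mathcal{O}(\beta|\partial_{A,B}|))$.

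For \textbf{2)} the point is that $\mathcal{E}_B(Q):=\tr_B[\sigma^B Q]$ is a unital positive (in fact completely positive) map from $\mathcal{B}(\mathcal{H}_\Lambda)$ to $\mathcal{B}(\mathcal{H}_{\Lambda\setminus B})$, so it preserves operator intervals: for strictly positive $Q$, from $\|Q^{-1}\|^{-1}\1\le Q\le\|Q\|\1$ one gets $\|Q^{-1}\|^{-1}\1\le\mathcal{E}_B(Q)\le\|Q\|\1$. Reading off the extreme eigenvalues of $\mathcal{E}_B(Q)$, and applying the same to $Q^{-1}$ and inverting, yields all four inequalities $\|Q^{\mp1}\|^{-1}\le\|\mathcal{E}_B(Q)^{\pm1}\|\le\|Q^{\pm1}\|$ simultaneously.

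For \textbf{3)} I would simply feed the expansionals into 2). Taking $Q=E_{A,B}^{\pm1}$ resp.\ $Q=E_{B,C}^{\pm1}$ (both strictly positive, with norm and inverse-norm bounded by $K_{A,B}$ resp.\ $K_{B,C}$ by 1)) gives the first line, and $\Tr[\sigma^{AB}E_{A,B}^{\pm1}]^{\pm1}\le K_{A,B}$ is the scalar version of 2) using $\Tr[\sigma^{AB}]=1$. For the last inequality, I would note that because $B$ shields $A$ from $C$ no interaction term is supported on both $A$ and $C$, so every bridging term between $AB$ and $C$ is actually a bridging term between $B$ and $C$; hence $E_{AB,C}=e^{-\beta W_{AB,C}}$ with $W_{AB,C}=W_{B,C}$ and $\|E_{AB,C}^{\pm1}\|\le K_{B,C}$. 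Commutativity also gives $[W_{A,B},W_{B,C}]=0$, so $E_{A,B}^{\pm1}E_{AB,C}^{\pm1}=e^{\mp\beta W_{A,B}}e^{\mp\beta W_{B,C}}$ is strictly positive with norm and inverse-norm at most $K_{A,B}K_{B,C}$; plugging this $Q$ into 2) finishes the estimate. Finally, exchanging the order of $\sigma$ and the expansionals inside the partial traces is immediate since $\sigma^B\propto e^{-\beta H_B}$ commutes with each $e^{-\beta W}$ (as $[H_B,W]=0$), or directly by cyclicity of the trace over $\mathcal{H}_B$.

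The main obstacle is in fact minimal in the commuting setting: the one genuine idea is the collapse $E_{A,B}=e^{-\beta W_{A,B}}$, which is exactly what commutativity buys and what makes the whole lemma elementary, in contrast with the non-commuting case handled in \cite{art:Bluhm2022exponentialdecayof}. The only care needed is the combinatorial estimate relating $\#\{\text{bridging }X\}$ to $\min\{|\partial_A B|,|\partial_B A|\}$ through the growth constant, and---for the last bound of 3)---the bookkeeping that under the shielding hypothesis the $AB$--$C$ interface coincides with the $B$--$C$ interface, so that $E_{AB,C}$ is governed by $K_{B,C}$ rather than by a constant that would depend on $|A|$. Everything else is operator monotonicity.
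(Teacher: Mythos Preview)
Your proposal is correct and follows essentially the same approach as the paper's proof: the collapse $E_{A,B}=e^{-\beta W_{A,B}}$ via commutativity followed by the counting argument for part 1), the unital positive map $Q\mapsto\tr_B[\sigma^B Q]$ propagating operator intervals for part 2), and then feeding the expansionals into 2) for part 3). If anything, your write-up is slightly more detailed than the paper's (you spell out the shielding observation $E_{AB,C}=e^{-\beta W_{B,C}}$ and the order-exchange via $[H_B,W]=0$, which the paper leaves implicit).
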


The proof of this result is deferred to \Cref{sec:proof_lemma_commuting}.

\begin{remark}
Our proof of Lemma \ref{lemma:ImportantCommutingLemma1} requires the commutativity of the Hamiltonian, since we require $E_{A,B}\geq0$, see 2). If positivity could be proven without this assumption, 
or 3) directly some other way, we believe that we could establish \Cref{thm:ddimStrongLocalIndist} without the additional assumption of commutativity. 
\end{remark}
By the combined use of Lemma \ref{lemma:ImportantCommutingLemma1}, clever rewritings inspired by the proofs in \cite{art:Bluhm2022exponentialdecayof}, and repeated application of local indistinguishability, we can prove the main theorem of this section. 

\begin{proof}[Proof of Theorem \ref{thm:ddimStrongLocalIndist}]
We first note that the following holds: 
\begin{equation}
   \|\omega^{\frac{1}{2}}\tau^{-1}\omega^{\frac{1}{2}}-\1\|\leq \|\omega\tau^{-1}-\1\| ,  
\end{equation}
by e.g. \cite[Proposition IX.1.1]{Bhatia1997}.
Set $l:=\dist(A,C)>r$. Since we are assuming uniform exponential decay of covariance with correlation length $\xi$, we may write
\begin{equation}
    \text{Cov}_{\sigma}(A:C) \leq \tilde{K} \exp\left(-\frac{l}{\xi}\right)
\end{equation}
for some constant $\tilde{K}>0$. 
\vspace{0.2cm}

\noindent \textbf{1)} To show strong local indistinguishability \eqref{equ:StrongLocalIndistdD}, first note that
\begin{equation}
    \tr_{BC}[\sigma^{ABC}] =  e^{-H_A} \tr_{BC}[\sigma^{BC}E_{A,BC}]  \frac{Z_{BC}}{Z_{ABC}} \, , 
\end{equation}
with $Z_\chi :=\Tr[e^{-H_\chi}]$ for any $\chi \subset \Gamma$. Then, we start by rewriting
\begin{align}
    (\tr_{BC}[\sigma^{ABC}])(\tr_B[\sigma^{AB}])^{-1} = \tr_{BC}[\sigma^{BC}E_{A,BC}]\tr_{B}[\sigma^BE_{A,B}]^{-1}\lambda_{ABC}^{-1} \, ,
\end{align}
where 
\begin{equation}
  \lambda_{ABC}=\frac{Z_{ABC} Z_B}{Z_{AB} Z_{BC}} = \frac{\Tr[\sigma^{AB}E^{-1}_{A,B}]}{\Tr[\sigma^{ABC}E_{A,BC}^{-1}]}   \, .
\end{equation}
Note that whenever we omit the subscript in $\Tr[X]$, we are referring to a total trace in the subsystems where $X$ has non-trivial support.
Thus, we need to upper bound 
\begin{equation}
    \norm{ \tr_{BC}[\sigma^{BC}E_{A,BC}]\tr_{B}[\sigma^BE_{A,B}]^{-1}\lambda_{ABC}^{-1} - \1} \, .
\end{equation}
We do this by splitting the proof into several claims, proven independently.

\vspace{0.2cm}

\noindent \underline{\textit{Claim 1:}} $|\lambda_{ABC}^{\mp1}-1|$ is exponentially decaying in $l$ with decay length $\xi$. 

\vspace{0.2cm}

\noindent \underline{\textit{Proof:}} Considering first $\lambda_{ABC}$, we have
\begin{align}
    |\lambda_{ABC}-1|&= \frac{1}{\Tr[\sigma^{ABC}E_{A,BC}^{-1}]}\left|\Tr[\sigma^{AB}E_{A,B}^{-1}]-\Tr[\sigma^{ABC}E_{A,BC}^{-1}]\right| \\&\leq \|E_{A,BC}^{{+1}}\|\left|\Tr[\sigma^{AB}E_{A,B}^{-1}]-\Tr[\sigma^{ABC}E_{A,BC}^{-1}]\right|,
\end{align} 
where the last inequality follows from Lemma \ref{lemma:ImportantCommutingLemma1} 2). Now, set $B=B_1B_2$ with $B_1:=\partial A$, $B_2=B\setminus B_1$, s.t. $\dist(A,B_2)=r, \; \dist(B_1,C)\geq l-r$. We also set $A=A_0A_1$, where $A_1=\partial_AB$ and $A_0=A\setminus A_1$ Then $E_{A,BC}=E_{A,B}=E_{A_1,B_1}$. Therefore, 
\begin{align}
    \left|\Tr[\sigma^{AB}E_{A,B}^{-1}]-\Tr[\sigma^{ABC}E_{A,BC}^{-1}]\right| &= \left|\Tr_{AB_1}[\tr_{B_2}(\sigma^{AB}E_{A,B}^{-1})]-\Tr_{AB_1}[\tr_{B_2C}(\sigma^{ABC}E_{A,BC}^{-1})]\right| \\ &=\left|\Tr_{AB_1}[(\tr_{B_2}[\sigma^{AB}]-\tr_{B_2C}[\sigma^{ABC}])E_{A_1,B_1}^{-1}]\right| \\&\leq\|\tr_{B_2}[\sigma^{AB}]-\tr_{B_2C}[\sigma^{ABC}]\|_1\|E_{A_1,B_1}^{-1}\| ,
\end{align}
where the last line follows from H\"{o}lder's inequality, and where we write the subscript in $\Tr$ to emphasize the systems over which we are tracing out. 
By \cite[Theorem 5]{art:FiniteCorrelationLengthImpliesEfficient}, uniform exponential decay of correlations implies local indistinguishability with the same decay and an additional factor $|\partial C|$. So we have with Lemma \ref{lemma:ImportantCommutingLemma1} 1) that
\begin{align}
 \|(\tr_{B_2}[\sigma^{AB}]-\tr_{B_2C}[\sigma^{ABC}])\|_1\|E_{A_1,B_1}^{-1}\|  &\leq K_{A,B} |\partial C|\tilde{K}\exp(\frac{-1}{\xi}\dist(B_1,C)) \\&=K_{A,B} \tilde{K} |\partial C|\exp\left(-\frac{l-r}{\xi}\right).
\end{align}
This allows us to conclude 
\begin{equation}
   |\lambda_{ABC}-1|\leq K_{A,B}^2 \tilde{K}|\partial C|\exp\left(-\frac{l-r}{\xi}\right) \, . 
\end{equation}
The same follows for $|\lambda^{-1}_{ABC}-1|$ analogously or by application of the geometric series to the above. 
This concludes the proof of Claim 1. 
Now we can rewrite
\begin{align}
 \norm{ \tr_{BC}[\sigma^{BC}E_{A,BC}]\tr_{B}[\sigma^BE_{A,B}]^{-1}\lambda_{ABC}^{-1} - \1} \, 
\end{align}
to
\begin{align}
    \|(\tr_{BC}[\sigma^{ABC}])(\tr_{B}[\sigma^{AB}])^{-1}-\1\|   &=\norm{ \tr_{BC}[\sigma^{BC}E_{A,BC}]\tr_{B}[\sigma^BE_{A,B}]^{-1}\lambda_{ABC}^{-1} - \1}  \\
   & \leq \|\tr_{BC}[\sigma^{BC}E_{A,BC}]\|\|(\tr_{B}[\sigma^{B}E_{A,B}])^{-1}\||\lambda_{ABC}^{-1}-1| \\
   & \quad +\|(\tr_{B}[\sigma^{B}E_{A,B}])^{-1}\| \|\tr_{BC}[\sigma^{BC}E_{A,BC}]-\tr_{B}[\sigma^{B}E_{A,B}]\|.
\end{align}

\noindent \underline{\textit{Claim 2:}} $\|\tr_{BC}[\sigma^{BC}E_{A,BC}]-\tr_{B}[\sigma^{B}E_{A,B}]\|$ is exponentially decaying in $l$ with decay rate $\xi$. 
\vspace{0.2cm}

\noindent \underline{\textit{Proof:}}
Again set $B=B_1B_2$, $B_1:=\partial A$, $B_2:=B\setminus B_1$, and split $A$ into $A_1:= \partial_A B$ and $A_0 := A \setminus A_1$. Thus $\dist(B_1,C)\geq l-r$ and $E_{A,BC}=E_{A,B}=E_{A_1,B_1}$. Then, by local indistinguishability
\begin{align}
    \|\tr_{BC}[\sigma^{BC}E_{A,BC}]-\tr_{B}[\sigma^{B}E_{A,B}]\| &= \|\tr_{B_1}[(\tr_{B_2}[\sigma^{B_1B_2}]-\tr_{B_2C}[\sigma^{B_1B_2C}])E_{A_1,B_1}]\| \\[1mm] 
    &\leq  \|E_{A_1,B_1}\|d^{|A_1|}\, \widetilde{K}\, |\partial C|\exp\left(-\frac{\dist(B_1,C)}{\xi}\right)\\
    &\leq  K_{A,B}d^{|\partial_AB|} \widetilde{K}  |\partial C| \exp\left(-\frac{l-r}{\xi}\right).   
\end{align} 
This concludes the proof of Claim 2.

\vspace{0.2cm}
\noindent Putting everything together, we get the desired result,
\begin{align}
   & \|(\tr_{BC}[\sigma^{ABC}])(\tr_{B}[\sigma^{AB}])^{-1}-\1\|\\
   &\hspace{3cm}\leq  K_{A,B}^4|\partial C|\tilde{K}\exp\left(-\frac{l-r}{\xi}\right)+K_{A,B}^2d^{|\partial_A B|} K^2\widetilde{K}|\partial C|\exp\left(-\frac{l-r}{\xi}\right) \\ &\hspace{3cm}= \exp{\mathcal{O}(\beta|\partial_{A,B}|)}\mathcal{O}(|\partial C|)\exp\left(-\frac{l-r}{\xi}\right).
\end{align}

\noindent \textbf{2)} Assume $l\geq 3r$. To prove the mixing condition/strong tensorization \eqref{equ:StrongTensorization}, following the steps above, or similarly those of \cite[Corollary 8.3]{art:Bluhm2022exponentialdecayof}, we can rewrite
\begin{align}
   \|\sigma_{AC}(\sigma_A\otimes\sigma_C)^{-1}-\1\| &\leq K_{A,B}K_{B,C} \|\tr_{B}[\sigma^{B}E_{A,B}E_{AB,C}]\|{|\lambda_{ABC}-1|} \\
   & \hspace{0.2cm}  + K_{A,B}K_{B,C}  {\|\tr_{BC}[\sigma^{BC}E_{A,BC}]\tr_{AB}[\sigma^{AB}E_{AB,C}]-\tr_{B}[\sigma^BE_{A,B}E_{AB,C}]\|}\\
   &\le K_{A,B}^4K_{B,C}^2\widetilde{K}|\partial C|\exp\left(-\frac{l-r}{\xi}\right)\, \\
   &\hspace{0.2cm}+K_{A,B}K_{B,C}{\|\tr_{BC}[\sigma^{BC}E_{A,BC}]\tr_{AB}[\sigma^{AB}E_{AB,C}]-\tr_{B}[\sigma^BE_{A,B}E_{AB,C}]\|},
\end{align}
where the second inequality follows from Claim 1 as well as Lemma \ref{lemma:ImportantCommutingLemma1}.
\vspace{0.2cm}

\noindent \underline{\textit{Claim 3:}} $\|\tr_{BC}[\sigma^{BC}E_{A,BC}]\tr_{AB}[\sigma^{AB}E_{AB,C}]-\tr_{B}[\sigma^BE_{A,B}E_{AB,C}]\|$ is exponentially decaying in $l$ with correlation length $\xi$. 
\vspace{0.2cm}

\noindent \underline{\textit{Proof:}} Set $B=B_1B_2B_3$ with $B_1:=\partial A, B_3:=\partial C, B_2:=B\setminus(B_1\cup B_3)$, then $\dist(B_1,B_3)\geq l-2r$ and $E_{A,BC}=E_{A,B}=E_{A,B_1}$ and $E_{AB,C}=E_{B,C}=E_{B_3,C}$ and consequently
\begin{align}
    &\|\tr_{BC}[\sigma^{BC}E_{A,BC}]\tr_{AB}[\sigma^{AB}E_{AB,C}]-\tr_{B}[\sigma^BE_{A,B}E_{AB,C}]\| \\[1mm]  & \hspace{4cm} = \|\tr_{BC}[\sigma^{BC}E_{A,B_1}]\tr_{AB}[\sigma^{AB}E_{B_3,C}]-\tr_{B}[\sigma^BE_{A,B_1}E_{B_3,C}]\| \\[1mm]   &\hspace{4cm}  \leq \underbrace{\|\tr_{B}[\sigma^BE_{A,B_1}E_{B_3,C}]-\tr_B[\sigma^BE_{A,B_1}]\tr_B[\sigma^BE_{B_3,C}]\|}_{\text{(I)}} \\& \hspace{4cm} +\underbrace{\|\tr_B[\sigma^BE_{A,B_1}]\tr_B[\sigma^BE_{B_3,C}]-\tr_{BC}[\sigma^{BC}E_{A,B_1}]\tr_{AB}[\sigma^{AB}E_{B_3,C}]\|}_{\text{(II)}}.
\end{align}
Next, we bound (I) and (II) separately. To bound (II) we use that, by the proof of \textit{Claim 2},
\begin{align}
    \|\tr_{B}[\sigma^BE_{A,B_1}]-\tr_{BC}[\sigma^{BC}E_{A,B_1}]\|&\leq \|E_{A,B_1}\|\|\tr_{B_2B_3}[\sigma^{B}]-\tr_{B_2B_3C}[\sigma^{BC}]\|_1 \\ &\leq K_{A,B}d^{|\partial_AB|}|\partial C|\widetilde{K}\exp{\left(-\frac{1}{\xi}\dist(B_1,C)\right)}, \\[1mm]
     \|\tr_{B}[\sigma^BE_{B_3,C}]-\tr_{AB}[\sigma^{AB}E_{B_3,C}]\|&\leq \|E_{B_3,C}\|\|\tr_{B_1B_2}[\sigma^{B}]-\tr_{AB_1B_2}[\sigma^{AB}]\|_1\\ &\leq K_{B,C}d^{|\partial_CB|}|\partial A|\widetilde{K}\exp{\left(-\frac{1}{\xi}\dist(A,B_3)\right)}. 
\end{align}
Then, putting both bounds together, \begin{align}
    \text{(II)} &\leq \|\tr_{B}[\sigma^BE_{A,B_1}] \| \|\tr_{B}[\sigma^BE_{B_3,C}]-\tr_{AB}[\sigma^{AB}E_{B_3,C}]\| \\[1mm] & \quad + \|\tr_{B}[\sigma^BE_{B_3,C}]\| \|\tr_{B}[\sigma^BE_{A,B_1}]-\tr_{BC}[\sigma^{BC}E_{A,B_1}]\| \\[1mm] &\leq K_{A,B}K_{B,C}d^{|\partial_CB|}|\partial A|\tilde{K}\exp{\left(-\frac{l-r}{\xi}\right)}+K_{B,C}K_{A,B}d^{|\partial_AB|}|\partial C|\tilde{K}\exp{\left(-\frac{l-r}{\xi}\right)} \\ &=\exp(\mathcal{O}(\beta|\partial A|+\beta|\partial C|))\mathcal{O}(|\partial A|+|\partial C|)\exp\left(-\frac{l-r}{\xi}\right). 
\end{align}

To bound (I), we use that exponential decay of covariance directly implies
\begin{equation}
  \|\sigma_{AC}-\sigma_{A}\otimes\sigma_{C}\|_1\leq \tilde{K}\exp\left(-\frac{\dist(A,C)}{\xi}\right)  
\end{equation}
for $\sigma\equiv\sigma^{ABC}$, by Hölder duality. Thus,
\begin{align}
    \text{(I)} &= \|\tr_{B_1B_3}[\sigma_{B_1B_3}(E_{A,B_1}\otimes E_{B_3,C})]-\tr_{B_1}[\sigma_{B_1}E_{A,B_1}]\otimes \tr_{B_3}[\sigma_{B_3}E_{B_3,C}]\| \\[1mm] &= \|\tr_{B_1B_3}[(\sigma_{B_1B_3}-\sigma_{B_1}\otimes\sigma_{B_3})(E_{A,B_1}\otimes E_{B_3,C})] \| \\ & \hspace{-0.95cm}\overset{\text{Proof of Claim 2}}{\leq} \|E_{A,B_1}\otimes E_{B_3,C}\| \|\sigma_{B_1B_3}-\sigma_{B_1}\otimes\sigma_{B_3}\|_1d^{|\partial_AB|+|\partial_CB|} \\[1mm] &\leq K_{A,B}K_{B,C}\widetilde{K}\exp\left(-\frac{\dist{B_1,B_3}}{\xi}\right) = \exp{\mathcal{O}(\beta|\partial A|+|\partial C|)}\exp\left(-\frac{l-2r}{\xi}\right).
\end{align}
Therefore, we conclude the proof of \textit{Claim 3}, since putting the bounds above together yields
\begin{align}
   & \|\tr_{BC}[\sigma^{BC}E_{A,BC}]\tr_{AB}[\sigma^{AB}E_{AB,C}]-\tr_{B}[\sigma^BE_{A,B}E_{AB,C}]\| \\
    & \hspace{3.6cm} = \exp(\mathcal{O}(\beta|\partial A|+\beta|\partial C|))\mathcal{O}(|\partial A|+|\partial C|)\tilde{K}\exp\left(-\frac{l-2r}{\xi}\right) \, .
\end{align}
 Finally, we can obtain the following bound for the mixing condition 
\begin{align}
    \|\sigma_{AC}(\sigma_{A}\otimes\sigma_{C})^{-1}-\1\|\leq \exp{\mathcal{O}(\beta(|\partial A|+|\partial C|))}\mathcal{O}(|\partial A|+|\partial C|)\exp\left(-\frac{l-2r}{\xi}\right).
\end{align}
The result on the mutual information follows directly from the one for the mixing condition by \eqref{eq:ineq_mutualinfo_mixingcondition}. 
\end{proof}
We can also get the mixing condition directly from strong local indistinguishability. 

\begin{lemma}[Strong local indistinguishability implies mixing conditon]
Let  $\Lambda$ be a graph with finite growth constant. Define $S_\Lambda(l):=\sup_{x\in\Lambda}|\{v\in\Lambda|\dist(x,v=l)\}|$ the surface area of the maximal $l$-sphere in $\Lambda$. Note that we have
\begin{align}
    &\Z^D: D\textup{-dim hypercubic lattice: } S_{\Z^D}(l) = \mathcal{O}(l^{D-1}), \\
    &\Lambda \textup{ sub-exponential: } S_{\Lambda}(l) = \exp(o(l)), \\
    &\mathbb{T}_b: b\textup{-ary tree: } S_{\mathbb{T}_b}(l) = b^l.
\end{align}
Now if $\{\sigma^\Gamma\}_\Gamma$ is a family of states which satisfy uniform strong local indistinguishability with decay function $l\mapsto\epsilon_{A,C}(l)$, i.e. each $\sigma\equiv\sigma^\Gamma$ satisfies $\|\tr_{BC}\sigma^{ABC}(\tr_{B}\sigma^{AB})^{-1}-\1\|\leq \epsilon_{A,C}(l)$, then there exists a suitable function $l\mapsto\eta_{A,C}(l)$ such that this family also satisfies the uniform mixing condition as $\|\sigma_{AC}(\sigma_A\otimes\sigma_C)^{-1}-\1\|\leq\eta_{A,C}(l)$ satisfying the following implications.
\begin{align}
    \epsilon_{A,C}(l) = \exp\left(\mathcal{O}(|\partial A|)\right)\mathcal{O}(|\partial C|)\exp\left(-\frac{l}{\xi}\right) &\implies \eta_{A,C}(l) = \exp\left(\mathcal{O}(|\partial A|+|\partial C|)\right)S_\Lambda\left(\frac{l}{3}\right)^3\exp\left(-\frac{l}{3\xi}\right), \\
    \epsilon_{A,C}(l) = \mathcal{O}(|\partial A|,|\partial C|)\exp\left(-\frac{l}{\xi}\right) &\implies \eta_{A,C}(l) = \mathcal{O}\left(\poly(|\partial A|,|\partial C|)\right)S_\Lambda\left(\frac{l}{3}\right)^3\exp\left(-\frac{l}{3\xi}\right).
\end{align} 
\end{lemma}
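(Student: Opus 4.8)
The plan is to re-read the target inequality as the strong-similarity statement $\sigma_{AC}\overset{\eta_{A,C}(l)}{\sim}\sigma_A\otimes\sigma_C$ of \Cref{def:Relation} (so it suffices to bound $\eta$), and to connect $\sigma_{AC}$ to $\sigma_A\otimes\sigma_C$ by a short chain of intermediate states, each link of which is an instance of the hypothesised uniform strong local indistinguishability. The hypothesis, in the form $\|\tr_{BC}[\sigma^{ABC}](\tr_B[\sigma^{AB}])^{-1}-\1\|\le\epsilon_{A,C}(l)$, is (a shade stronger than) a $\overset{\epsilon_{A,C}(l)}{\sim}$ relation between the two marginals by \cite[Proposition IX.1.1]{Bhatia1997}, so the whole argument runs inside the relation $\overset{\epsilon}{\sim}$, composing links via transitivity, tensor-multiplicativity and symmetry from \Cref{prop:properties_relation} together with \Cref{cor:Relation}.

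First I would set up the geometry. Assuming $l=\dist(A,C)$ is large enough (say $l>3r$; for bounded $l$ only a trivial adjustment of the slab widths is needed), split the shielding region $B=\Gamma\setminus(A\cup C)$ into three slabs: $B_1:=\{v\in B:\dist(v,A)\le\lfloor l/3\rfloor\}$, $B_3:=\{v\in B:\dist(v,C)\le\lfloor l/3\rfloor\}$ (disjoint, since $\dist(A,C)=l$), and $B_2:=B\setminus(B_1\cup B_3)$. I would then record the elementary geometric facts needed: $B_1$ shields $A$ from $\Gamma\setminus(AB_1)$, with $\dist(A,\Gamma\setminus AB_1)>l/3$, and symmetrically for $B_3$ and $C$; $B_1\cup B_3$ shields $A\cup C$ from $B_2$, with $\dist(A\cup C,B_2)>l/3$; and, crucially, $\dist(AB_1,CB_3)\ge l-2\lfloor l/3\rfloor>r$, so no interaction term of the geometrically-$r$-local Hamiltonian couples $AB_1$ to $CB_3$. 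Consequently $\sigma^{AB_1\cup CB_3}=\sigma^{AB_1}\otimes\sigma^{CB_3}$, whence $\tr_{B_1B_3}[\sigma^{AB_1\cup CB_3}]=\sigma^{AB_1}_A\otimes\sigma^{CB_3}_C$.

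The chain then has two links, built from three applications of strong local indistinguishability. Link (i): apply it with ``near'' region $A\cup C$, shield $B_1\cup B_3$, and ``far'' region $B_2$; since $\sigma^{(A\cup C)\cup(B_1\cup B_3)\cup B_2}=\sigma^{ABC}$ and $\sigma^{(A\cup C)\cup(B_1\cup B_3)}=\sigma^{AB_1\cup CB_3}$, this gives $\sigma_{AC}\overset{\epsilon_0}{\sim}\sigma^{AB_1}_A\otimes\sigma^{CB_3}_C$ with $\epsilon_0=\epsilon_{A\cup C,\,B_2}(l')$, $l'>l/3$. Link (ii): apply it with near region $A$ and far region $\Gamma\setminus(AB_1)$ to get $\sigma^{AB_1}_A\overset{\epsilon_A}{\sim}\sigma_A$, and symmetrically $\sigma^{CB_3}_C\overset{\epsilon_C}{\sim}\sigma_C$; tensor-multiplicativity (\Cref{prop:properties_relation}(4)) yields $\sigma^{AB_1}_A\otimes\sigma^{CB_3}_C\overset{\epsilon'}{\sim}\sigma_A\otimes\sigma_C$ with $\epsilon'=\epsilon_A+\epsilon_C+\epsilon_A\epsilon_C$, and transitivity (\Cref{prop:properties_relation}(3)) gives $\sigma_{AC}\overset{\eta}{\sim}\sigma_A\otimes\sigma_C$ with $\eta=\epsilon_0+\epsilon'+\epsilon_0\epsilon'$. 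It remains to substitute the hypothesised form of $\epsilon_{X,Y}$: each of $\epsilon_0,\epsilon_A,\epsilon_C$ carries a factor $e^{-l'/\xi}\le e^{-l/(3\xi)}$, and a prefactor depending on $|\partial A|,|\partial C|$ in the prescribed (exponential resp.\ polynomial) way, together with the boundary of the region traced away — which is, in each of the three instances, a piece of the outer surface of a slab of width $\approx l/3$, hence of size $\mathcal{O}(\poly(|\partial A|,|\partial C|)\,S_\Lambda(l/3))$. Bounding each of the three slab boundaries by $S_\Lambda(l/3)$ and collecting them through the composition produces the stated $S_\Lambda(l/3)^3$, while $\eta\lesssim 3\max\{\epsilon_0,\epsilon_A,\epsilon_C\}$ preserves the exponential-versus-polynomial nature of the prefactor, yielding both displayed implications.

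The hard part will be link (i) and the geometric bookkeeping around it: one must choose the middle slab $B_2$ so that it is simultaneously far enough from $A\cup C$ (distance $>l/3$, to extract the rate $1/(3\xi)$ from strong local indistinguishability) and thick enough ($>r$, so the Hamiltonian truly decouples across it and $\tr_{B_1B_3}[\sigma^{AB_1\cup CB_3}]$ is exactly the product state), and then control $|\partial B_2|$ and the outer boundaries of the two collars $B_1,B_3$ in terms of $S_\Lambda(l/3)$ with the correct dependence on $|\partial A|,|\partial C|$ — this is what fixes the final prefactor. Everything else is a routine application of the calculus of the $\overset{\epsilon}{\sim}$ relation established in \Cref{prop:properties_relation} and \Cref{cor:Relation}.
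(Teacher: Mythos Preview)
Your proposal is correct and follows essentially the same route as the paper's proof: split $B$ into three collars $B_1B_2B_3$ of width $\approx l/3$, use strong local indistinguishability on the triple $(A\cup C,\,B_1\cup B_3,\,B_2)$ together with the factorisation $\sigma^{AB_1\cup CB_3}=\sigma^{AB_1}\otimes\sigma^{CB_3}$ (from $\dist(AB_1,CB_3)>r$) to get $\sigma_{AC}\sim\sigma^{AB_1}_A\otimes\sigma^{CB_3}_C$, then two more applications (one for $A$, one for $C$) plus tensor multiplicativity and transitivity of the relation to reach $\sigma_A\otimes\sigma_C$. The paper's argument is terser and leaves the boundary bookkeeping at the level of ``crude bounds $|B_1|\le|\partial A|S_\Lambda(l/3)$, $|B_3|\le|\partial C|S_\Lambda(l/3)$'', but the structure is identical to yours.
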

\begin{remark} {
    In the case where strong local indistinguishability follows from decay of covariance, this derivation provides no improvement in scaling over \Cref{equ:StrongLocalIndistdD}. However, whenever one can assume strong local indistinguishability with a polynomial prefactor on subexponential, respectively, polynomial graphs, then this directly gives us the mixing condition with a sub-exponential, respectively, polynomial prefactor. } 
\end{remark}
\begin{proof}
Fix some $\Gamma\subset\subset\Lambda$ and some $A,C\subset\Gamma$ such that $\dist(A,C)=l>3r$. W.l.o.g. we assume $l$ is divisible by 3, else use $l-1$ or $l-2$. Split $B:=\Gamma\setminus(AC)=B_1B_2B_3$ into 3 regions, where $B_1:=\partial_{\frac{l}{3}}A:=\{x\in B|\dist(x,A)\leq \frac{l}{3}\}$ and $B_3:=\partial_{\frac{l}{3}}C$. Then $\dist(A,B_2),\dist(B_1,B_3),\dist(B_2,C)\geq\frac{l}{3}$. Now by the transitivity and tensor multiplicativity of the relation $\sim$ (see proposition \ref{prop:properties_relation}) we have that
\begin{align}
    \sigma_{AC} = \tr_{B_1B_2B_3}\sigma^{AB_1B_2B_3C} \overset{\epsilon_1}{\sim} \tr_{B_1B_3}\sigma^{AB_1}\otimes\sigma^{B_3C} =  \overset{\epsilon_2\epsilon_3+\epsilon_2+\epsilon_3}{\sim}\tr_{BC}\sigma^{ABC}\otimes\tr_{AB}\sigma^{ABC} = \sigma_A\otimes \sigma_C.
\end{align}
Using the crude bounds $|B_1|\leq |\partial A|f_\Lambda(\frac{l}{3})$ and $|B_3|\leq |\partial C|f_\Lambda(\frac{l}{3})$, and the assumptions on the functions $\epsilon_i$ arising from strong local indistinguishability directly yields the claim. Note that the $A,C$ in the mixing condition are not necessarily the same regions $A,C$ as in the strong local indistinguishability.
\end{proof}

\subsection{q$\mathbb{L}_1\to\mathbb{L}_\infty$-clustering from decay of covariance or temperature}
Here we consider a notion of clustering introduced in \cite[Definition 8]{art:2localPaper} as q$\mathbb{L}_1\to\mathbb{L}_\infty$-clustering, where it is instrumental in implying rapid thermalization of the Schmidt dynamics. As such, this notion of clustering will be key in our proofs of rapid mixing.

This is in principle a more abstract notion of clustering defined w.r.t the family of invariant states of a family of Lindbladians with certain properties. In this section we will only consider the family of Lindbladians $\{\mathcal{L}^S_\Gamma\}_{\Gamma\subset\subset\Lambda}$ to be the one corresponding to the Schmidt conditional expectations $\{E^S_X\}_{X\subset\Gamma}$ introduced in \ref{subsec:SchmidtCondExp}. In this case 
\begin{definition}[q$\mathbb{L}_1\to\mathbb{L}_\infty$-clustering]\label{equ:q1toinfty_clustering}
The uniform family of primitive, reversible, and frustration-free Schmidt generators $\{\mathcal{L}^S_{\Gamma}\}_{\Gamma\subset\subset\Lambda}$ with fixed points $\{\sigma^\Gamma\}_{\Gamma\subset\subset\Lambda}$ satisfies \textit{uniform} q$\mathbb{L}_1\to\mathbb{L}_\infty$-\textit{clustering of correlations} if 
there exists a function $(C,D,l)\mapsto\eta_{C,D}(l)$, exponentially decaying in $l$, with decay rate independent of $C,D$ such that for two overlapping subregions $C,D\subset\Gamma$ with $l:=\dist(C\setminus D,D\setminus C)$, it holds that
\begin{align}
    \max_{\alpha=\{\alpha_i\}_{i\in I_{C\cup D}}} \|E^{S,(\alpha)}_C\circ E^{S,(\alpha)}_D-E^{S,(\alpha)}_{C\cup D}: \mathbb{L}_1(\tau_{C\cup D}^{\alpha})\to \mathbb{L}_\infty\| \leq \eta_{C,D}(l)\equiv \eta(l),
    \label{def:qL1toLinfty}
\end{align} where $\{E^S_X\}_{X\subset\subset\Lambda}$ are the Schmidt conditional expectations and $(\alpha)$ are the boundary conditions of the subset $CD:=C\cup D\subset\Lambda$. 
\end{definition}

Recall that the Schmidt conditional expectations can only sensibly be defined for nearest neighbour interacting systems, hence this notion of clustering can only exist on systems with geometrically-$2$-local potentials.

In the following, we show that this notion of clustering is implied by uniform decay of covariance via strong local indistinguishability, 
which constitutes another result of independent interest from this work.  

\begin{theorem}[Decay of covariance is equivalent to q$\mathbb{L}_1\to\mathbb{L}_\infty$-clustering]\label{thm:weaktostrongclustering}
Let $\Lambda$ be a 
graph with finite growth constant and let $\Phi$ be a bounded, commuting, nearest-neighbour potential on $\Lambda$.  Then if, for some $\beta>0$, the family of Gibbs states $\{\sigma^\Gamma\}_{\Gamma\subset\subset\Lambda}$ associated to $(\Lambda,\Phi,\beta)$ satisfies uniform exponential decay of covariance, then the Schmidt generators associated to $(\Lambda,\Phi,\beta)$ 
satisfy uniform q$\mathbb{L}_1\to\mathbb{L}_\infty$-clustering as in \eqref{def:qL1toLinfty}, i.e.
\begin{center}
   uniform $\epsilon(l)$-decay of covariance $ \implies $ $\eta_{C,D}(l)$-q$\mathbb{L}_1\to\mathbb{L}_\infty$-clustering, 
\end{center}
 where $\eta_{C,D}(l) = \textup{exp}(\mathcal{O}(|\partial (C\setminus D)|+|\partial (D\setminus C)|))\mathcal{O}(1)\epsilon(l)$.
The $\Longleftarrow$ implication and its proof can be found in \cite{art:2localPaper}. That implication also follows from  \Cref{thm:mainGap} in the case $\Lambda=\mathbb{Z}$.
\end{theorem}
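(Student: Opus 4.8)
We only need the forward implication, since the converse is \cite{art:2localPaper}. The plan is to estimate the $\mathbb{L}_1\to\mathbb{L}_\infty$ operator norm appearing in \eqref{def:qL1toLinfty} by unpacking the explicit block form of the Schmidt conditional expectations \eqref{def:SchmidtCondExp1}--\eqref{def:tau} and reducing everything to the closeness, in the strong similarity sense of \Cref{def:Relation}, of a few reduced Gibbs states with frozen boundary conditions. First I would fix $\Gamma\subset\subset\Lambda$, overlapping $C,D\subset\Gamma$ with $l=\dist(C\setminus D,D\setminus C)$, and a boundary configuration $(\alpha)=\{\alpha_i\}_{i\in I_{CD}}$ of $CD:=C\cup D$. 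On the block $P^{(\alpha)}\mathcal{H}_\Gamma$ the maps $E^{S,(\alpha)}_C$, $E^{S,(\alpha)}_D$, $E^{S,(\alpha)}_{CD}$ are KMS-self-adjoint with respect to $\tau^{(\alpha)}_{C_\inn},\tau^{(\alpha)}_{D_\inn},\tau^{(\alpha)}_{(CD)_\inn}$, and for any map $T$ one has $\|T:\mathbb{L}_1(\tau^{(\alpha)}_{CD})\to\mathbb{L}_\infty\|=\sup_{\|Y\|_1\le1}\|T(\Gamma_{\tau^{(\alpha)}_{CD}}^{-1}(Y))\|_\infty$, which, using $\|\tr_{S}[MY]\|_\infty\le\|M\|_\infty\|Y\|_1$, is bounded by the operator norm of the ``kernel'' obtained once the conjugations by $(\tau^{(\alpha)}_{CD})^{\pm1/2}$ have been absorbed into the $\tau$-weights sitting inside the $E^{S,(\alpha)}$'s; after this cancellation the kernel is a finite product of operators of the form $\omega^{1/2}\tau^{-1}\omega^{1/2}-\1$.

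The core step is to identify this kernel and bound it via strong local indistinguishability. Composing $E^{S,(\alpha)}_C$ after $E^{S,(\alpha)}_D$ amounts to first stamping $\tau^{(\alpha)}_{D_\inn}$ on $D\partial$ and then $\tau^{(\alpha)}_{C_\inn}$ on $C\partial$, while $E^{S,(\alpha)}_{CD}$ stamps $\tau^{(\alpha)}_{(CD)_\inn}$ on all of $(CD)\partial$ at once; since $C\setminus D$ and $D\setminus C$ sit at distance $l$, each of the $\mathcal{O}(1)$-many partial traces involved only removes a piece of the system at distance $\gtrsim l$ from the region being compared against, so \Cref{thm:ddimStrongLocalIndist} (which, for commuting bounded nearest-neighbour $\Phi$, follows from uniform decay of covariance) gives that replacing each global marginal by its locally-stamped counterpart changes it by $\overset{\epsilon_j(l)}{\sim}$ with $\epsilon_j(l)=\exp(\mathcal{O}(\beta b_j))\,\mathcal{O}(b_j)\,\exp(-l/\xi)$, $b_j$ the boundary size of the region cut at that step. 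Chaining these via transitivity, tensor-multiplicativity and locality/normalization-preservation of $\sim$ (\Cref{prop:properties_relation} and \Cref{cor:Relation}) yields $\tau^{(\alpha)}_{(CD)_\inn}\overset{\eta'(l)}{\sim}(\text{first }\tau^{(\alpha)}_{D_\inn}\text{ then }\tau^{(\alpha)}_{C_\inn}\text{ stamped})$ with $\eta'(l)=\exp(\mathcal{O}(|\partial(C\setminus D)|+|\partial(D\setminus C)|))\,\mathcal{O}(1)\,\epsilon(l)$, uniformly in $(\alpha)$, since neither the hypothesis nor the boundary sizes depend on $(\alpha)$.

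Finally I would convert this back into the operator-norm bound: by \Cref{def:Relation} and \Cref{prop:RelationDmax}, $\omega\overset{\eta'}{\sim}\tau$ means $\|\omega^{1/2}\tau^{-1}\omega^{1/2}-\1\|\le\eta'$, and the kernel isolated in the first step is exactly such an expression (or a finite product of them, handled by \Cref{cor:Relation}), hence has $\infty$-norm at most $\eta'(l)$ up to constants; taking the supremum over $(\alpha)$ gives uniform q$\mathbb{L}_1\to\mathbb{L}_\infty$-clustering with $\eta_{C,D}(l)=\exp(\mathcal{O}(|\partial(C\setminus D)|+|\partial(D\setminus C)|))\,\mathcal{O}(1)\,\epsilon(l)$. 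I expect the main obstacle to be the geometric/combinatorial bookkeeping in the second step: the inner and outer boundary decompositions of $C$, $D$ and $CD$ genuinely differ, and one must organize the partial traces in comparing $E^{S,(\alpha)}_C\circ E^{S,(\alpha)}_D$ with $E^{S,(\alpha)}_{CD}$ so that each one removes a region separated by distance of order $l$ (so \Cref{thm:ddimStrongLocalIndist} yields the decay $e^{-l/\xi}$ rather than a shorter-range estimate), and one must check that the $(\tau^{(\alpha)}_{CD})^{\pm1/2}$ weights cancel to leave a clean $\|\omega^{1/2}\tau^{-1}\omega^{1/2}-\1\|$-type quantity; the $\sim$-calculus of \Cref{prop:properties_relation} is designed precisely to make error propagation through these rewritings routine once the geometry is set up.
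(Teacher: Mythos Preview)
Your proposal is correct and follows essentially the same route as the paper: reduce the $\mathbb{L}_1\to\mathbb{L}_\infty$ bound on $E^{S,(\alpha)}_C\circ E^{S,(\alpha)}_D-E^{S,(\alpha)}_{CD}$ to a strong-similarity statement $\sigma_1\overset{\eta}{\sim}\sigma_2$ between two explicit reduced states with frozen boundary conditions, then invoke \Cref{thm:ddimStrongLocalIndist} (both the mixing condition and strong local indistinguishability) together with the $\sim$-calculus of \Cref{prop:properties_relation} to conclude. The paper's algebraic reduction is slightly sharper than your ``product of kernel factors'' sketch: after setting $N=E^{S,(\alpha)}_D(X)$ and computing explicitly, the difference collapses to a \emph{single} factor $\max_{(\beta_r)}\|\sigma_1^{-1/2}\sigma_2\sigma_1^{-1/2}-\1\|\cdot\|N\|_{\mathbb{L}_{1,\sigma_1}}$, and the clean cancellation you anticipate is the exact identity $\|N\|_{\mathbb{L}_{1,\sigma_1}}=\|X\|_{\mathbb{L}_{1,\tau^{(\alpha)}_{CD_\inn}}}$, which holds because $\sigma_1$ is precisely the $D_\inn$-marginal of $\tau^{(\alpha)}_{CD_\inn}$ and $E^S_{D*}$ fixes $\tau^{(\alpha)}_{CD_\inn}$.
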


An example of subsets $C,D$ can be found in \Cref{fig:Tree}. 
The exponential scaling of the prefactors of the decay in the q$\mathbb{L}_1\to\mathbb{L}_\infty$-clustering comes directly from Theorem \ref{thm:ddimStrongLocalIndist}, so if one had strong local indistinguishability with a linear (or polynomial) dependence on the boundary regions, then the decay in the q$\mathbb{L}_1\to\mathbb{L}_\infty$-clustering would also only be at worst polynomial in the boundaries of $C\setminus D$ and $D\setminus C$.

Such a decay is already known to hold at high enough temperatures. Before proving \Cref{thm:weaktostrongclustering}, we also present it in the context of our work here, as it will be a basis for our second main result.

\begin{theorem}[q$\mathbb{L}_1\to\mathbb{L}_\infty$-clustering from high temperature; Theorems 6,7, Proposition 2 in \cite{art:2localPaper}]\label{thm:clusteringFromHighTemp}
Let $\Lambda$ be a graph with finite growth constant $\nu$ and $\Phi$ a uniformly $J$-bounded, commuting, geometrically-2-local potential on it. Then if the temperature $\beta^{-1}$ is high enough, as $0\leq\beta<\frac{1}{10eJ\nu}$, the family of Schmidt generators associated to $(\Lambda,\Phi,\beta)$ satisfies uniform q$\mathbb{L}_1\to\mathbb{L}_\infty$-clustering as in \eqref{def:qL1toLinfty} with decay function $$\eta_{C,D}(l)\leq K|C\cup D|\exp\left(-\frac{l}{\xi^\prime}\right),$$
where $l=\dist(C\setminus D,D\setminus C)$ and $K,\xi^\prime>0$ are some fixed constants independent of the local regions $C,D$, or $\Gamma$.    
\end{theorem}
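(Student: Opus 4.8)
The plan is to reduce the operator-norm quantity in \eqref{def:qL1toLinfty} to a near-factorization statement for the local states $\tau^{(\alpha)}$ that appear in the explicit form \eqref{def:SchmidtCondExp1}--\eqref{def:tau} of the Schmidt conditional expectations, and then to establish that factorization by a convergent high-temperature polymer expansion. Fix boundary conditions $(\alpha)$ of $C\cup D$ and work block-by-block under the projection $P^{(\alpha)}$. After conjugation by $P^{(\alpha)}$ both $E^{S,(\alpha)}_C\circ E^{S,(\alpha)}_D$ and $E^{S,(\alpha)}_{C\cup D}$ are compositions of partial traces against the states $\tau^{(\alpha)}_{C_{\inn}}$, $\tau^{(\alpha)}_{D_{\inn}}$, $\tau^{(\alpha)}_{(C\cup D)_{\inn}}$, which by \eqref{def:tau} are normalized marginals of the single local Gibbs state $\sigma^{(C\cup D)\partial}$. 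Using KMS self-adjointness of these conditional expectations and Hölder duality between $\mathbb{L}_1(\tau)$ and $\mathbb{L}_\infty$, the norm of the difference is controlled by the $\mathbb{L}_\infty$-distance between the marginal of $\sigma^{(C\cup D)\partial}$ on the ``far inside'' of $C$ taken within $C\cup D$ and the corresponding product over the far-inside pieces of $C$ and $D$ separately; this is a mixing-condition-type bound at distance $l=\dist(C\setminus D, D\setminus C)$.

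The key analytic input is the convergent polymer (cluster) expansion valid in the stated regime $\beta<(10eJ\nu)^{-1}$. Since $\Phi$ is commuting and nearest neighbour, $e^{-\beta H_\Gamma}=\prod_Z e^{-\beta\Phi_Z}$ with all factors mutually commuting, so expanding $e^{-\beta\Phi_Z}=\1+(e^{-\beta\Phi_Z}-\1)$ yields $e^{-\beta H_\Gamma}=\sum_S\prod_{Z\in S}(e^{-\beta\Phi_Z}-\1)$, a sum over edge subsets $S$ with each term supported on $\bigcup_{Z\in S}Z$. Grouping $S$ into connected components (polymers), bounding $\|e^{-\beta\Phi_Z}-\1\|\le e^{\beta J}-1\le 2\beta J$, and counting connected polymers of size $m$ through a fixed edge by $\nu^m$, the standard tree-graph convergence criterion gives absolute convergence of $\log Z$ and of the polymer expansion of every marginal of $\sigma^{(C\cup D)\partial}$ exactly once $\beta J\nu$ falls below a universal constant, which $(10eJ\nu)^{-1}$ secures. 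The decisive geometric observation is then that in the difference between the marginal on the inside of $C$ computed within $C\cup D$ and the product over the inside pieces of $C$ and $D$, every surviving polymer must contain a connected edge-path joining $C\setminus D$ to $D\setminus C$, hence has size $\ge l$; summing the resulting geometric series over such polymers anchored anywhere in $C\cup D$ gives a bound $K\,|C\cup D|\,(c\beta J\nu)^{l}$, i.e. $\eta_{C,D}(l)\le K|C\cup D|\exp(-l/\xi')$ with $\xi':=1/\log(1/(c\beta J\nu))$, the factor $|C\cup D|$ absorbing the choice of anchor.

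To upgrade this marginal estimate to the claimed $\mathbb{L}_1\to\mathbb{L}_\infty$ operator bound one must control, uniformly in $(\alpha)$ and in the ambient volume, the normalization ratios of the partition-function-like prefactors in \eqref{def:tau} and the condition numbers $\|(\tau^{(\alpha)})^{\pm1}\|$; these are again dominated by the same convergent expansion and contribute only to the constant $K$. A softer alternative is to invoke high-temperature decay of covariance for commuting Hamiltonians (itself a consequence of the same expansion, cf.\ \cite{art:LocalityofTemperature,FU2015}) and then apply \Cref{thm:weaktostrongclustering}; this reproduces q$\mathbb{L}_1\to\mathbb{L}_\infty$-clustering but only with a prefactor exponential in $|\partial(C\setminus D)|+|\partial(D\setminus C)|$, so the direct expansion is what yields the sharper linear prefactor $|C\cup D|$. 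The main obstacle is exactly this bookkeeping: simultaneously matching the threshold $(10eJ\nu)^{-1}$ with explicit combinatorial constants, extracting clean exponential decay in $l$ with only a linear prefactor, and handling the block decomposition $P^{(\alpha)}$ together with all normalization and condition-number factors uniformly over boundary conditions and volumes.
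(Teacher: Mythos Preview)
The paper does not prove this statement: it is quoted verbatim from \cite{art:2localPaper} (Theorems 6, 7 and Proposition 2 there), with only the remark that the arguments of that reference, written for hypercubic lattices, extend to arbitrary graphs of finite growth constant. There is therefore no proof in the present paper to compare against.

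That said, your plan is the right one and matches what the cited reference does. The reduction of the $\mathbb{L}_1\to\mathbb{L}_\infty$ norm to a near-factorization statement for the block states $\tau^{(\alpha)}$ is exactly the computation the present paper carries out in the proof of \Cref{thm:weaktostrongclustering}, where the norm is shown to be bounded by $\max_{(\beta_r)}\|\sigma_1^{-1/2}\sigma_2\sigma_1^{-1/2}-\1\|$ for two explicit marginals $\sigma_1,\sigma_2$ of $\sigma^{(CD)\partial}$; you can simply reuse that algebra. The analytic input you propose, a convergent cluster expansion for $e^{-\beta H}=\prod_Z e^{-\beta\Phi_Z}$ using commutativity, with the geometric observation that any surviving polymer in the difference must bridge $C\setminus D$ and $D\setminus C$ and hence have size at least $l$, is precisely the mechanism behind the linear prefactor $|C\cup D|$ (anchoring the polymer) and the exponential decay $e^{-l/\xi'}$. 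You are also right that the alternative route through \Cref{thm:weaktostrongclustering} would only give an exponential boundary prefactor, which is why the direct expansion is needed here. The bookkeeping you flag (uniformity in $(\alpha)$, normalization ratios, matching the explicit threshold $(10eJ\nu)^{-1}$) is real but routine once the expansion converges; the cited paper handles it, and the present paper does not reproduce those details.
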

Note that in this so-called the ``high temperature" regime we have a linear dependence of the pre-factor on the size of the local regions, as compared to an exponential one in Theorem \ref{thm:weaktostrongclustering}.
Although in \cite{art:2localPaper,Harrow_2020} they only explicitly consider hyper-cubic lattice, the proofs there works equally for lattices with bounded degree, including trees. 

\begin{proof}[Proof of \Cref{thm:weaktostrongclustering}]
We show that, under the assumptions of the theorem, the strong local indistinguishability and mixing condition imply 
\begin{align}
  \max_{(\alpha)}  \|E^{S,(\alpha)}_C\circ E^{S,(\alpha)}_D-E^{S,(\alpha)}_{C\cup D}: \mathbb{L}_{1,\tau^{C\cup D}_{\alpha}}\to \mathbb{L}_\infty\| \leq \eta(l)
\end{align} explicitly for the Schmidt conditional expectations. 
Since both of these are, by Theorem \ref{thm:ddimStrongLocalIndist}, implied by uniform decay of correlations, this proves the theorem. 
Here $(\alpha)$ is a boundary condition of the subset $CD:=C\cup D\subset\Lambda$ and $\eta(l)$ is exponentially decaying with decay length $\xi$. We split the proof into two steps: First, we establish that what we need to show is algebraically equivalent to a statement $\sigma_1\overset{\eta(l)}{\sim}\sigma_2$ for two states $\sigma_1,\sigma_2$, and we then employ results of \Cref{thm:ddimStrongLocalIndist} to prove this statement from the assumptions of the theorem.

Before we continue, let us first establish some nomenclature for the subregions we are considering. We split the region $CD$ into the following disjoint subsets $\tilde{l}_\out\tilde{l}_{\inn}El_\out l_\inn F r_\inn r_\out G \tilde{r}_\inn \tilde{r}_\out$, where $C=ElF, C_\inn=\tilde{l}_{\inn}ElFr_\inn, D=FrG, D_\inn = l_\inn FrG \tilde{r}_\inn$. The projectors $P^{(\alpha_{\tilde{l}})},P^{(\gamma_l)},P^{(\beta_r)},P^{(\alpha_{\tilde{r}})}$ act on the regions $\tilde{l},l,r,\tilde{r}$ respectively. For a graphical representation of this see \Cref{fig:Tree}.

\begin{figure}[H]
    \centering
   \includegraphics[width=0.7\linewidth]{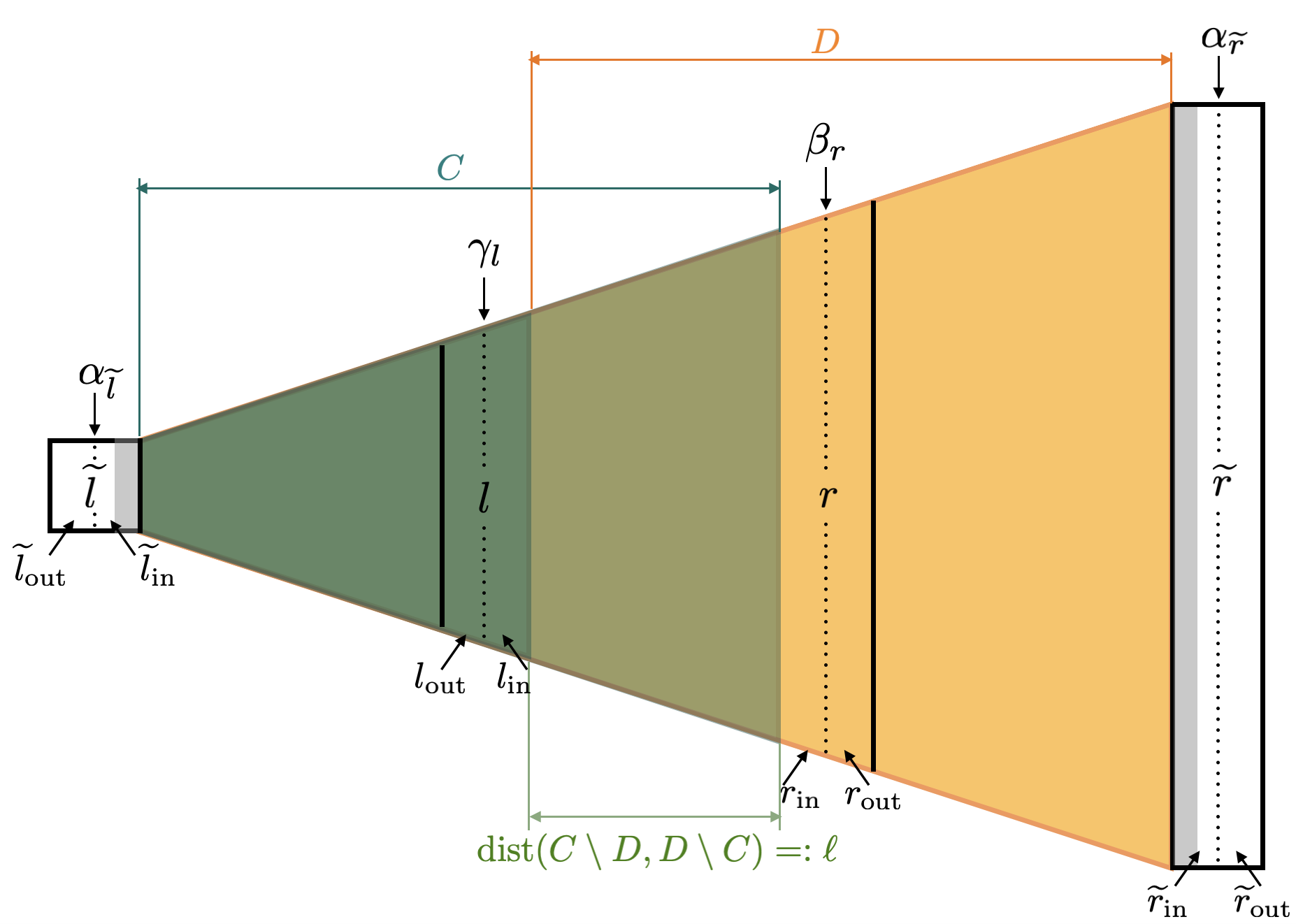}
    \caption{Partition of a subregion $CD$ of a tree into two overlapping subregions {\color{darkgreen} C} and {\color{RoyalYellow} D}. We have $E:=C\setminus(D\partial), F:=C\cap D, G:=D\setminus(C\partial)$. The splitting of the boundary Hilbert spaces corresponding to a boundary site $\{b_i\}$ in the boundary of a region $A\in\{E,F,G\}$ into $\bigotimes_{i\in I_A}P^{\alpha_i}\mathcal{H}_{b_i}=\mathcal{H}^{\alpha_i}_{\partial_\inn A}\otimes\mathcal{H}^{\alpha_i}_{\partial_\out A} =\left(\mathcal{H}^{\alpha_i}_0\otimes\mathcal{H}_c^{\alpha_i}\right)\otimes\left(\bigotimes_{j\in J^{(i)}\setminus\{0\}}\mathcal{H}_j^{\alpha_i}\right)$ is represented by a dotted line. Hence e.g. the Hilbert space of the region $\tilde{l}_\inn$ is $\mathcal{H}_{\partial^{\text{left}}_\inn E}^{(\alpha_{\tilde{l}})}$ and of $\tilde{l}_\out$ it is $\mathcal{H}_{\partial^{\text{left}}_\out E}^{(\alpha_{\tilde{l}})}$. Here the superscript \textit{left}, refers to the part of $\partial_\inn E$, which is located in the geometric region $\tilde{l}$, in the figure the left boundary of the region $E$. Respectively, this is the same with the other boundaries. We fix the boundary conditions $(\alpha_{\tilde{l}})$ on region $\tilde{l}$, $(\gamma_l)$ on $l$, $(\beta_r)$ on $r$, and $(\alpha_{\tilde{r}})$ on $\tilde{r}$. } 
    \label{fig:Tree}
\end{figure}

Recall the notation for the Schmidt conditional expectation established in \Cref{subsec:SchmidtCondExp} and equation \eqref{def:tau}.
Fix a boundary condition $(\alpha)=(\alpha_{\tilde{l}},\alpha_{\tilde{r}})=\{\alpha_i\}_{i\in I_{CD}}$ for $CD$, where $(\alpha_{\tilde{l}}):=\{\alpha_i\}_{i\in I_{\partial C\setminus D}}$ labels the boundary of $C$ not in $D$ and $(\alpha_{\tilde{r}}):=\{\alpha_i\}_{i\in I_{\partial D\setminus C}}$ labels the boundary of $D$ not in $C$. Similarly, denote with $(\beta_r):=\{\beta_i\}_{i\in I_{\partial C\cap D}}$ and $(\gamma_l):=\{\gamma_i\}_{i\in I_{\partial D\cap C}}$ the boundaries of $C$ in $D$ and of $D$ in $C$ respectively. For visualization, see also \Cref{fig:Tree}.
Let $0\leq X\equiv X^{(\alpha)}\in \mathbb{L}_{1,\tau^{(\alpha)}_{CD_{\inn}}}$ such that $\|X\|_{1,\tau^{(\alpha)}_{CD_\inn}}=1$. Set
\begin{align}
    N\equiv N^{(\alpha_{\tilde{r}})}:=E_D^{S,(\alpha_{\tilde{r}})}(X)=\bigoplus_{(\gamma_l)}E_D^{S,(\gamma_l,\alpha_{\tilde{r}})}(X) \in \mathcal{B}(\mathcal{H}_{D^{(\alpha_{\tilde{r}})}_\out})\otimes\1_{D^{(\alpha_{\tilde{r}})}_\inn}.
\end{align} By construction it holds that $P^{(\alpha_{\tilde{r}})}NP^{(\alpha_{\tilde{r}})}=N$. Recall that since $D\subset CD$, it follows that $E^S_{CD}\circ E_C^S=E^S_{CD}$ and hence
\begin{align}
    &\Big(E_C^{S,(\alpha)}\circ E_D^{S,(\alpha)}-E_{CD}^{S,(\alpha)}\Big)(X) \\ & \hspace{2cm}= \left(\bigoplus_{(\beta_r)}E_C^{S,(\alpha_{\tilde{l}},\beta_r)}-E_{CD}^{S,(\alpha_{\tilde{l}},\alpha_{\tilde{r}})}\right)(N) \\ & \hspace{2cm}= \bigoplus_{(\beta_r)}\left(P^{(\alpha_{\tilde{l}})}P^{(\beta_r)}\left(\tr_{C_\inn,(\alpha_{\tilde{l}},\beta_r)}\left[\tau_{C_\inn}^{(\alpha_{\tilde{l}},\beta_r)}N\right]\otimes \1_{C_\inn}\right)P^{(\alpha_{\tilde{l}})}P^{(\beta_r)}\right) \\ & \hspace{2.4cm}-  P^{(\alpha_{\tilde{l}})}P^{(\alpha_{\tilde{r}})}\left(\tr_{CD_\inn,(\alpha_{\tilde{l}},\alpha_{\tilde{r}})}\left[\tau_{CD_\inn}^{(\alpha_{\tilde{l}},\alpha_{\tilde{r}})}N\right]\otimes \1_{CD_\inn}\right)P^{(\alpha_{\tilde{l}})}P^{(\alpha_{\tilde{r}})} \\ &\hspace{2cm}=P^{(\alpha_{\tilde{l}})}\left[\bigoplus_{(\beta_r)}P^{(\beta_r)}\left(\tr_{C_\inn,(\alpha_{\tilde{l}},\beta_r)}\left[\tau_{C_\inn}^{(\alpha_{\tilde{l}},\beta_r)}P^{(\alpha_{\tilde{r}})}(N\otimes\1_{D_\inn})P^{(\alpha_{\tilde{r}})}\right]\otimes \1_{C_\inn}\right)P^{(\beta_r)} \right. \\ &\hspace{2.4cm}- \left. \left(\bigoplus_{(\beta_r)}P^{(\beta_r)}\right)P^{(\alpha_{\tilde{r}})}\left(
    \tr_{CD_\inn,(\alpha_{\tilde{l}},\alpha_{\tilde{r}})}\left[\tau_{CD_\inn}^{(\alpha_{\tilde{l}},\alpha_{\tilde{r}})}(N\otimes\1_{D_\inn})\right]\otimes\1_{CD_\inn}
    \right)P^{(\alpha_{\tilde{r}})} \right] P^{(\alpha_{\tilde{l}})}.
\end{align}
In the first line here we used the definition of $X$, in the second the explicit expressions for the Schmidt conditional expectation from \Cref{subsec:SchmidtCondExp}. Then in the third we factor out a common $P^{\alpha_{\tilde{l}}}$ and introduce an identity $\1=\bigoplus_{(\beta_r)}P^{(\beta_r)},$ which commutes with the term just after. We also employ the fact that $N\equiv P^{\alpha_r}N\otimes\1_{D_\inn}P^{\alpha_r}$. Therefore, we can write:
\begin{align}
    &\Big(E_C^{S,(\alpha)}\circ E_D^{S,(\alpha)}-E_{CD}^{S,(\alpha)}\Big)(X)\\
&\quad \quad =\left(P^{(\alpha_{\tilde{l}})}P^{(\alpha_{\tilde{r}})}\bigoplus_{(\beta_r)}P^{(\beta_r)}\right)\tr_{C_\inn\setminus D_\inn,(\alpha_{\tilde{l}})}\Bigg[\Big(\underbrace{\tr_{C_\inn\cap D_\inn,(\beta_r)}\left[\tau_{C_\inn}^{(\alpha_{\tilde{l}},\beta_r)}\right]}_{=:\sigma_2^{(\alpha_{\tilde{l}},\beta_r)}\equiv \sigma_2}-\underbrace{\tr_{D_\inn,(\alpha_{\tilde{r}})}\left[\tau_{CD_\inn}^{(\alpha_{\tilde{l}},\alpha_{\tilde{r}})}\right]}_{=:\sigma_1^{\alpha_{\tilde{l}}}\equiv\sigma_1}\Big)N\Bigg]\times \\
    &\quad \quad \quad \quad \left(P^{(\alpha_{\tilde{l}})}P^{(\alpha_{\tilde{r}})}\bigoplus_{(\beta_r)}P^{(\beta_r)}\right) \equiv \bigoplus_{(\beta_r)}P^{(\alpha_{\tilde{l}},\beta_r,\alpha_{\tilde{r}})}\left[\tr_{C_\inn\setminus D_\inn}\left[(\sigma_2-\sigma_1)N\right]\right]P^{(\alpha_{\tilde{l}},\beta_r,\alpha_{\tilde{r}})},
\end{align}
where $\sigma_2\equiv\sigma_2^{(\alpha_{\tilde{l}},\beta_r)}:=\tr_{C_\inn\cap D_\inn,(\beta_r)}\left[\tau_{C_\inn}^{(\alpha_{\tilde{l}},\beta_r)}\right]$ and $\sigma_1\equiv\sigma_1^{(\alpha_{\tilde{l}})}:=\tr_{D_\inn,(\alpha_{\tilde{r}})}\left[\tau_{CD_\inn}^{(\alpha_{\tilde{l}},\alpha_{\tilde{r}})}\right]$.
Here, we factored out the projections and rearrange the partial traces suitably. The last equality is then just introducing a simplifying notation. 
For simplicity we suppress the boundary conditions $(\alpha_{\tilde{l}})$ index on the states. 
Hence
\begin{align}
    \left\|\left(E_C^{S(\alpha)}\circ E_D^{S(\alpha)}-E_{CD}^{S(\alpha)}\right)(X)\right\|&\leq \max_{(\beta_r)}\|\Tr_{C_\inn\setminus D_\inn,(\alpha_{\tilde{l}})}\left[(\sigma_2-\sigma_1)N\right]\| \\ &= \max_{(\beta_r)}\left|\Tr_{C_{\inn}\setminus D_{\inn}, (\alpha_{\tilde{l}})}[(\sigma_2-\sigma_1)N]\right| \\ &= \max_{(\beta_r)}|\Tr_{C_{\inn}\setminus D_{\inn}, (\alpha_{\tilde{l}})}[(\1-\sigma_1^{-\frac{1}{2}}\sigma_2\sigma_1^{-\frac{1}{2}})(\sigma_1^\frac{1}{2}N\sigma_1^\frac{1}{2})]| \\ &\leq \max_{(\beta_r)}\|(\1-\sigma_1^{-\frac{1}{2}}\sigma_2\sigma_1^{-\frac{1}{2}})(\sigma_1^\frac{1}{2}N\sigma_1^\frac{1}{2})]\|_1 \\ &\leq \max_{(\beta_r)}  \|\sigma_1^{-\frac{1}{2}}\sigma_2\sigma_1^{-\frac{1}{2}}-\1\|_\infty \|N\|_{\mathbb{L}_{1,\sigma_1}},
\end{align} 
where the equality in the second line follows since $N=E_D^{S(\gamma_l,\alpha_{\tilde{r}})}(X)$ is the identity on the complementary Hilbert space to $(\mathcal{H}_{C_{\inn}\setminus D_{\inn}}^{(\alpha_{\tilde{l}})})$. In the last inequality we used Hölder and the definition of the $\mathbb{L}_{1,\sigma_1}$ norm.
By definition of $N$, we have
\begin{align}
    \|N\|_{\mathbb{L}_{1,\sigma_1}} &= \|\sigma_1^{\frac{1}{2}}N\sigma_1^{\frac{1}{2}}\|_1 \overset{N\geq0}{=} \Tr[\sigma_1N] = \Tr[\tau_{CD_\inn}^{(\alpha_{\tilde{l}},\alpha_{\tilde{r}})}N] = \Tr[\tau_{CD_\inn}^{(\alpha)}E_D^{S(\alpha)}(X)] \\ &= \Tr[E^{S(\alpha)}_{D*}(\tau_{CD_\inn}^{(\alpha)})X]=\Tr[\tau_{CD_\inn}^{(\alpha)}X] \overset{X\geq 0}{=} \|X\|_{\mathbb{L}_{1,\tau_{CD_\inn}^{(\alpha)}}} = 1.
\end{align}
Hence, to prove the theorem, we need to establish that $\sigma_1\overset{\epsilon}{\sim}\sigma_2$ for any boundary condition $(\alpha_{\tilde{l}},\beta_r,\alpha_{\tilde{r}})$. We will do it with $l\mapsto\epsilon(l)$ an exponentially decreasing function in $l=\dist(C\setminus D,D\setminus C)$ with decay length $\xi$. The two states can be written as
\begin{align}
    \sigma_1 &:=\tr_{D_\inn,(\alpha_{\tilde{r}})}\left[\tau_{CD_\inn}^{(\alpha_{\tilde{l}},\alpha_{\tilde{r}})}\right] = \frac{1}{\Tr[...]}\tr_{\partial_\out (CD)\cup D_\inn,(\alpha_{\tilde{r}})}\left[P^{(\alpha_{\tilde{l}})}P^{(\alpha_{\tilde{r}})}\sigma^{(CD\partial)}P^{(\alpha_{\tilde{l}})}P^{(\alpha_{\tilde{r}})}\right] \\ &= \frac{1}{\Tr[...]}\tr_{\tilde{l}_\out l_\inn D\tilde{r}}\left[P^{(\alpha_{\tilde{r}})}P^{(\alpha_{\tilde{l}})}\sigma^{(CD\partial)}P^{(\alpha_{\tilde{l}})}\right] \in \mathcal{B}(\mathcal{H}_{C_\inn\setminus D_\inn}^{(\alpha_{\tilde{l}})}), \\
    \sigma_2 &:=\tr_{F_\inn,(\beta_r)}\left[\tau_{C_\inn}^{(\alpha_{\tilde{l}},\beta_r)}\right] = \frac{1}{\Tr[...]}\tr_{\partial_\out (C)\cup F_\inn,(\beta_r)}\left[P^{(\alpha_{\tilde{l}})}P^{(\beta_r)}\sigma^{(C\partial)}P^{(\alpha_{\tilde{l}})}P^{(\beta_r)}\right] \\ &= \frac{1}{\Tr[...]}\tr_{\tilde{l}_\out l_\inn Fr}\left[P^{(\beta_r)}P^{(\alpha_{\tilde{l}})}\sigma^{(C\partial)}P^{(\alpha_{\tilde{l}})}\right] \in \mathcal{B}(\mathcal{H}_{C_\inn\setminus D_\inn}^{(\alpha_{\tilde{l}})}).
\end{align}
Both are full-rank states on $\mathcal{H}_{C_\inn\setminus D_\inn}^{(\alpha_{\tilde{l}})}$ and thus have the same support.
The intuition now is as follows: since we have a Gibbs state that satisfies exponential decay of covariance, strong local indistinguishability, and the mixing condition (see \Cref{thm:ddimStrongLocalIndist}), these two states should be approximately the same in the bulk, where we compare them. This is because they might only differ significantly on $D\setminus C$, but we look at them on $C \setminus D$. 

From the assumptions of the Theorem we have that the family of Gibbs states satisfy decay of covariance and hence by \Cref{thm:ddimStrongLocalIndist} also strong local indistinguishability and the mixing condition. We start by applying the mixing condition from \eqref{equ:StrongTensorization} to each of the states, which guarantees the existence of two exponentially decaying functions $\epsilon_1,\epsilon_2$ in $\dist(l,\tilde{r})>\dist(C\setminus D,D\setminus C)$ and $\dist(l,r)=\dist(C\setminus D,D\setminus C)$, respectively, s.t.
\begin{align}
    &\tr_{l_\inn D}\sigma^{CD\partial} \overset{\epsilon_1}{\sim} \tr_{l_\inn D\tilde{r}}\sigma^{CD\partial}\otimes\tr_{\tilde{l}ElD}\sigma^{CD\partial}  \\&\overset{\autoref{prop:UsefulRelation}}{\implies} 
    \tr_{\tilde{r}}P^{(\alpha_{\tilde{r}})}\tr_{l_\inn D}\sigma^{CD\partial}P^{(\alpha_{\tilde{r}})} \overset{\epsilon_1}{\sim} \tr_{l_\inn D\tilde{r}}\sigma^{CD\partial}\Tr[P^{(\alpha_{\tilde{r}})}\sigma^{CD\partial}P^{(\alpha_{\tilde{r}})}], \\
    &\tr_{l_\inn F}\sigma^{C\partial} \overset{\epsilon_2}{\sim} \tr_{l_\inn Fr}\sigma^{C\partial}\otimes\tr_{\tilde{l}ElF}\sigma^{C\partial}  \\&\overset{\autoref{prop:UsefulRelation}}{\implies} 
    \tr_rP^{(\beta_r)}\tr_{l_\inn F}\sigma^{C\partial}P^{(\beta_r)} \overset{\epsilon_2}{\sim} \tr_{l_\inn Fr}\sigma^{C\partial}\Tr[P^{(\beta_r)}\sigma^{C\partial}P^{(\beta_r)}].
\end{align} Here the implication follows from Proposition \ref{prop:UsefulRelation} 4) applied to the projections $P^{(\alpha_{\tilde{r}})},P^{(\beta_r)}$ respectively. Now, by strong local indistinguishability (\autoref{thm:ddimStrongLocalIndist}) there exists an exponentially decaying function $\epsilon_3$ in $\dist(l,r)=\dist(C\setminus D,D\setminus C)$, s.t.
\begin{align}
    \tr_{l_\inn D\tilde{r}}\sigma^{CD\partial} = \tr_{l_\inn FrG\tilde{r}}\sigma^{CD\partial} \overset{\epsilon_3}{\sim} \tr_{l_\inn Fr} \sigma^{C\partial}
\end{align}
and from Theorem \ref{thm:ddimStrongLocalIndist} one gets that uniform decay of covariance with decay function, say $\epsilon(\dist)=Ke^{-\frac{\dist}{\xi}}$, implies that $\epsilon_1\propto\exp{\mathcal{O}(|\tilde{r}|+|\partial(El)|)}\epsilon(\dist)$, $\epsilon_2\propto\exp{\mathcal{O}(|r|+|\partial(El)|)}\epsilon(\dist)$, and $\epsilon_3\propto\exp{\mathcal{O}(|\partial(El)|)}\mathcal{O}(|r|)\epsilon(\dist)$.
Hence by transitivity and symmetry of the strong similarity relation, see \Cref{prop:UsefulRelation} 1) and 2), it follows that 
\begin{align}
    \tr_{l_\inn D\tilde{r}}\left[P^{(\alpha_{\tilde{r}})}\sigma^{CD\partial}P^{(\alpha_{\tilde{r}})}\right] &\overset{\epsilon_1}{\sim} \tr_{l_\inn D\tilde{r}}\sigma^{CD\partial}\Tr[P^{(\alpha_{\tilde{r}})}\sigma^{CD\partial}P^{(\alpha_{\tilde{r}})}] \overset{\epsilon_3}{\sim} \tr_{l_\inn Fr} \sigma^{C\partial}\Tr[P^{(\alpha_{\tilde{r}})}\sigma^{CD\partial}P^{(\alpha_{\tilde{r}})}] \\
    &\overset{\epsilon_2(1-\epsilon_2)^{-1}}{\sim} \tr_{l_\inn Fr}\left[P^{(\beta_r)}\sigma^{C\partial}P^{(\beta_r)}\right]\frac{\Tr[P^{(\alpha_{\tilde{r}})}\sigma^{CD\partial}P^{(\alpha_{\tilde{r}})}]}{\Tr[P^{(\beta_r)}\sigma^{C\partial}P^{(\beta_r)}]},
\end{align} 
and hence
\begin{align}
    \widetilde{\sigma_1}:= \frac{\tr_{l_\inn D\tilde{r}}\left[P^{(\alpha_{\tilde{r}})}\sigma^{CD\partial}P^{(\alpha_{\tilde{r}})}\right]}{\Tr[P^{(\alpha_{\tilde{r}})}\sigma^{CD\partial}P^{(\alpha_{\tilde{r}})}]} \overset{\tilde{\eta}}{\sim} \frac{\tr_{l_\inn Fr}\left[P^{(\beta_r)}\sigma^{C\partial}P^{(\beta_r)}\right]}{\Tr[P^{(\beta_r)}\sigma^{C\partial}P^{(\beta_r)}]} =: \widetilde{\sigma_2},
\end{align} where $\tilde{\eta}$ is exponentially decreasing in $\dist(C\setminus D, D\setminus C)$ with decay length $\xi$ and prefactor scaling as $\exp(\mathcal{O}(|\partial (C\setminus D)|+|\partial D\setminus C)|))$, since $\epsilon_1,\epsilon_2,\epsilon_3$ are. By Proposition \ref{prop:UsefulRelation} $4^\prime)$ it follows that this strong similarity of the states also holds within each block $(\alpha_{\tilde{l}})$:
\begin{align}
    \sigma_1= \frac{1}{\Tr[...]}\tr_{\tilde{l}_\inn}\left[P^{\alpha_{\tilde{l}}}\widetilde{\sigma_1}P^{\alpha_{\tilde{l}}}\right] \overset{\eta:=\tilde{\eta}(2+\tilde{\eta})}{\sim} \frac{1}{\Tr[...]}\tr_{\tilde{l}_\inn}\left[P^{\alpha_{\tilde{l}}}\widetilde{\sigma_2}P^{\alpha_{\tilde{l}}}\right] = \sigma_2.
\end{align} This establishes the bound for any boundary condition $(\alpha_{\tilde{l}},\beta_r,\alpha_{\tilde{r}})$, which concludes the proof.
Note that this also implies that the exponential decay rate of $\eta$ is the same as the one of the decay of covariance we assumed, which is the correlation length of the Gibbs state $\xi$.
\end{proof}

\section{Main results}\label{sec:main}

In this section we give sufficient conditions for rapid thermalization of Davies evolutions with unique fixed point the Gibbs state of a nearest-neighbour, commuting Hamiltonian. 

The results essentially show that a suitable q$\mathbb{L}_1\to\mathbb{L}_\infty$ decay (Definition \ref{equ:q1toinfty_clustering}), together with some geometric features of the lattice, implies the existence of a constant or log decreasing cMLSI constant $\alpha(\mathcal{L}^D_\Gamma)$, which directly implies rapid mixing. The main technical lemma is as follows.

\begin{lemma}\label{thm:technicalMainThm}
    Let $\Lambda$ be a 2-colorable graph with finite growth constant, $\Phi$ a uniformly bounded, nearest-neighbour, commuting potential, and $\beta>0$ some inverse temperature. If the family of Schmidt generators associated to $(\Lambda,\Phi,\beta)$ satisfies an exponentially decaying q$\mathbb{L}_1\to\mathbb{L}_\infty$ decay function $\eta_{C,D}(l)$ in $l=\dist(C\setminus D,D\setminus C)$ whenever picking the regions $C,D$ convex and such that 
    \begin{itemize}
        \item[\phantom{a}]  \underline{\text{Case 1:}} $\diam(C),\diam(D)=\mathcal{O}(l^{2})$, 
        \item[\phantom{a}] \underline{\text{Case 2:}} $\diam(C),\diam(D)=\mathcal{O}(l)$, 
    \end{itemize}
 then a family of Davies generators associated to $(\Lambda,\Phi,\beta)$ satisfies the MLSI with
    \begin{itemize}
        \item[\phantom{a}] \underline{\text{Case 1:}} a system size independent MLSI constant $\alpha(\mathcal{L}^D_\Gamma)=\Omega(1)_{|\Gamma|\to\infty}$,
        \item[\phantom{a}] \underline{\text{Case 2:}} a linear in system diameter decaying MLSI constant $\alpha(\mathcal{L}^D_\Gamma)=\Omega((\diam(\Gamma))^{-1})_{|\Gamma|\to\infty}$.
    \end{itemize}
\end{lemma}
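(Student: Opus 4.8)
The plan is to prove the equivalent statement in terms of entropy production, $\EP_{\mathcal{L}^D_\Gamma}(\rho)\geq \alpha(\mathcal{L}^D_\Gamma)\,D(\rho\|\sigma^\Gamma)$. Since the Hamiltonian part $i[H_\Gamma,\cdot]$ generates a unitary group it does not contribute to the entropy production, so $\EP_{\mathcal{L}^D_\Gamma}(\rho)=\sum_{x\in\Gamma}\EP_{\mathcal{L}^D_x}(\rho)$, and each $\mathcal{L}^D_x$ is a GNS-symmetric generator supported on a bounded ball $B_{\tilde r}(x)$; by \Cref{thm:finiteregioncMLSI} it has a positive cMLSI constant bounded below uniformly over the finitely many ``types'' of such a local generator, so $\EP_{\mathcal{L}^D_x}(\rho)\geq c_0\,D(\rho\|E^D_{x*}(\rho))$ for a fixed $c_0>0$. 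Thus it suffices to establish an \emph{approximate tensorization} of the relative entropy of the form $D(\rho\|\sigma^\Gamma)\leq C(\Gamma)\sum_{x\in\Gamma}D(\rho\|E^D_{x\partial*}(\rho))$, with $C(\Gamma)=\Theta(1)$ in Case 1 and $C(\Gamma)=O(\diam\Gamma)$ in Case 2; indeed, using \Cref{lem:RelEntCondExpBound} one has $D(\rho\|E^S_{X*}(\rho))\leq D(\rho\|E^D_{X\partial*}(\rho))$, so it is in fact enough to prove the AT estimate for the \emph{Schmidt} conditional expectations, which are the ones controlled by the hypothesis.

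\textbf{The AT building block.} I would first record a two-region approximate tensorization for Schmidt conditional expectations: if $\Pi=C\cup D$ with $C,D$ convex, $l=\dist(C\setminus D,D\setminus C)$ and the $\mathrm q\mathbb{L}_1\!\to\!\mathbb{L}_\infty$ bound $\eta_{C,D}(l)$ of \Cref{equ:q1toinfty_clustering} available, then
$$ D(\rho\|E^S_{\Pi*}(\rho))\leq \frac{1}{1-c\,\eta_{C,D}(l)}\Big(D(\rho\|E^S_{C*}(\rho))+D(\rho\|E^S_{D*}(\rho))\Big), $$
and, crucially, the same inequality holds verbatim when the reference fixed points are the intermediate subalgebra fixed points rather than a global Gibbs state (this is what allows iteration). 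This is essentially the content of the $\Longleftarrow$ direction discussed after \Cref{thm:weaktostrongclustering}, and can be re-derived from the explicit form of $E^S_A$, the chain rule \eqref{def:relentChainRule}, and the commutation/composition rules of \Cref{prop:commutativityofSC}; the operator-norm strength of the $\mathrm q\mathbb{L}_1\!\to\!\mathbb{L}_\infty$ decay is exactly what keeps the error from growing with the sizes of $C,D$ beyond the stated prefactors.

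\textbf{Recursive decomposition.} Build a binary tree of nested convex subregions: a node $\Pi$ of diameter $R_\Pi$ is split into overlapping convex $C,D$ with $C\cup D=\Pi$, $\diam C,\diam D\leq \theta R_\Pi$ for a fixed $\theta<1$, and overlap $l_\Pi$ chosen as small as the geometric hypothesis permits --- $l_\Pi=\Theta(\sqrt{R_\Pi})$ in Case 1, $l_\Pi=\Theta(R_\Pi)$ in Case 2 --- using $2$-colourability to keep the split geometrically clean and to collapse one level of same-colour blocks via the exact tensorization of commuting conditional expectations on distant regions. Iterate until every leaf has diameter below a fixed constant $R_*$, and at the leaves bound $D(\rho\|E^S_{\mathrm{leaf}*}(\rho))\leq \alpha_*^{-1}\sum_{x\in\mathrm{leaf}}\EP_{\mathcal{L}^D_{x\partial}}(\rho)$ via \Cref{lem:RelEntCondExpBound} and \Cref{thm:finiteregioncMLSI} (uniform over the finitely many leaf types). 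The global relative entropy is then bounded by the product of the per-level AT factors $\prod_k (1-c\,\eta(l_k))^{-1}$ times $\sum_{\mathrm{leaves}}\sum_{x}\EP_{\mathcal{L}^D_{x\partial}}(\rho)$, and since the leaves and their boundaries have bounded overlap multiplicity the last sum is $O(1)\cdot\EP_{\mathcal{L}^D_\Gamma}(\rho)$. In Case 1 the overlaps up the tree form a sequence $l_k=\Theta(\sqrt{R_*}\,\theta^{-k/2})$ growing geometrically, so $\sum_k \eta(l_k)$ is a sum of exponentials of a geometric sequence and is $O(1)$ (and can be made $<1/2$ by enlarging $R_*$), giving a bounded product and $\alpha(\mathcal{L}^D_\Gamma)=\Omega(1)$. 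In Case 2 the hypothesis only affords clustering between regions whose diameter is comparable to the overlap, which --- once one accounts for the entropy cost of the removed overlap/boundary layers at every step, and the fact that on exponential graphs those layers are not of bounded size --- forces $\Theta(\diam\Gamma)$ effective levels and yields $\alpha(\mathcal{L}^D_\Gamma)=\Omega(\diam(\Gamma)^{-1})$.

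\textbf{Where the difficulty lies.} The routine parts are the Davies$\leftrightarrow$Schmidt comparison, the use of Gao--Rouz\'e on $O(1)$-size regions, and the multiplicity counting. The delicate part is the recursive decomposition: one must (i) choose the splits so that at \emph{every} level the pieces are convex and respect the diameter-vs-$l$ window of the $\mathrm q\mathbb{L}_1\!\to\!\mathbb{L}_\infty$ assumption; (ii) ensure the subregion reference states arising as intermediate fixed points still satisfy the clustering needed to reapply the AT block --- here one uses that marginals of a commuting Gibbs state onto a sub-lattice are themselves Gibbs states of the same potential; and (iii) bookkeep the product of AT factors together with the entropy of the removed layers. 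Point (iii) is precisely where the two regimes diverge and where the sub-exponential versus exponential growth of the graph enters; getting the $\Theta(1)$ versus $\Theta(\diam\Gamma)$ dichotomy sharp, rather than a cruder $\mathrm{poly}(\diam\Gamma)$, is the main obstacle.
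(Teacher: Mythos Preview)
Your overall architecture---Schmidt/Davies comparison via \Cref{lem:RelEntCondExpBound}, local cMLSI from \Cref{thm:finiteregioncMLSI}, and a global-to-local reduction through approximate tensorization---matches the paper. But there is a genuine gap in the step you treat as routine.

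\textbf{The AT building block does not hold for arbitrary $\rho$.} The paper's approximate tensorization (\Cref{thm:ApproxTensorizationForOmega}, from \cite[Theorem~8]{art:2localPaper}) is proved \emph{only} for the special state $\omega:=E^S_{\Gamma_0*}(\rho)$, where $\Gamma_0$ is one colour class. The reason is structural: the $\mathrm q\mathbb{L}_1\!\to\!\mathbb{L}_\infty$ condition in \Cref{equ:q1toinfty_clustering} is block-diagonal over boundary conditions $(\alpha)$, and the proof of AT needs the state to respect this block decomposition, which $\omega$ does but a general $\rho$ does not. Your claim that the two-region AT ``can be re-derived from the explicit form of $E^S_A$, the chain rule, and the commutation rules'' for all $\rho$ is precisely the missing step that the whole machinery is designed to circumvent. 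The paper's fix is the chain rule split $D(\rho\|\sigma)=D(\rho\|\omega)+D(\omega\|\sigma)$: the first term enjoys \emph{exact} tensorization because the Schmidt expectations on the mutually distant singletons of $\Gamma_0$ commute (\Cref{prop:commutativityofSC}), and only the second term needs the recursive AT, now legitimately applied to $\omega$. Your use of $2$-colourability ``to collapse one level of same-colour blocks'' gestures in this direction but misses that this collapse must happen \emph{once at the outset}, producing $\omega$, rather than inside the recursion.

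\textbf{A secondary bookkeeping issue.} The paper does not run a binary-tree recursion; at each scale it performs a \emph{Ces\`aro average} over $\Theta(\sqrt{L})$ (Case~1) or $N$ (Case~2) partitions with \emph{disjoint} overlaps. This is what produces the per-level factor $(1+2/\sqrt{L})$ in Case~1---summable over $O(\log L)$ levels---rather than a constant-factor blowup from double-counting the overlap. Your binary scheme would need an additional argument to show that the overlap contributions do not accumulate; the paper's averaging over disjoint overlaps together with the additivity identity for $D_R(\omega)$ in \eqref{eq:additiverels} is exactly that argument, and it again relies on working with $\omega$ and the fixed coarse-grained sets $\{R_k\}$.
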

The proof is deferred to  \Cref{subsec:proofs}. 
The intuition behind separating these two cases is as follows.  The bound on the MLSI requires certain choices of coarse-grainings of the lattice into overlapping regions $C,D$. The choice of how large those regions are as compared to their overlap is limited both by the geometry of the lattice, and the decay of $\eta_{C,D}(l)$. \underline{Case 1} is the one where it is possible to choose the overlap to be a vanishing fraction of the regions, leading to a better scaling, while in \underline{Case 2} the overlap is a finite fraction of the region. The diameter is the relevant quantity here, because the decay of correlations is a function of the one dimensional distance between two subregions and not the number of sites in the overlap in general.

In the next two subsections, Theorem \ref{thm:technicalMainThm} is then applied to two types of decay of $\eta_{C,D}(l)$, and for each of them we consider two different lattice geometries, corresponding to the two cases of the theorem. \underline{\text{Case 1:}} will apply to the 1D quantum spin chain under gap and among others the $D$-dimensional hypercubic lattices at high temperature. On the other hand, \underline{\text{Case 2:}} will apply to the 2D hypercubic lattice under the gap condition and $b$-ary trees at high enough temperature. In particular:

\begin{itemize}
    \item In \Cref{subsec:Gap} we consider the case where the q$\mathbb{L}_1\to\mathbb{L}_\infty$ decay is implied by the existence of a unformly bounded strictly positive gap $\lambda(\mathcal{L}^D_\Gamma)=\Omega(1)_{|\Gamma|\to\infty}$ of the Davies Lindbladians established via the  clustering results \Cref{thm:ddimStrongLocalIndist} and \Cref{thm:weaktostrongclustering}, and \cite[Corollary 27]{art:QuantumGibbsSamplers-kastoryano2016quantum}.  This shows rapid thermalization with a constant decay rate of commuting 1D quantum systems here. This is visualised in \Cref{fig:Visualization}. We also show that the same argument can also be applied to 2D systems. There, we obtain a better bound on the thermalization time than one would naïvely get from the gap, but still too slow to be rapidly mixing.
    
\item In \Cref{subsec:Temp}, we consider the high temperature setting, where the q$\mathbb{L}_1\to\mathbb{L}_\infty$ is guaranteed to decay fast, see \Cref{thm:clusteringFromHighTemp}. This setting notably includes $D$-dim hypercubic lattices and $b$-ary trees, for which we hence show rapid mixing at high enough temperatures. 
\end{itemize}

\begin{figure}
    \centering
    \includegraphics[width=\linewidth]{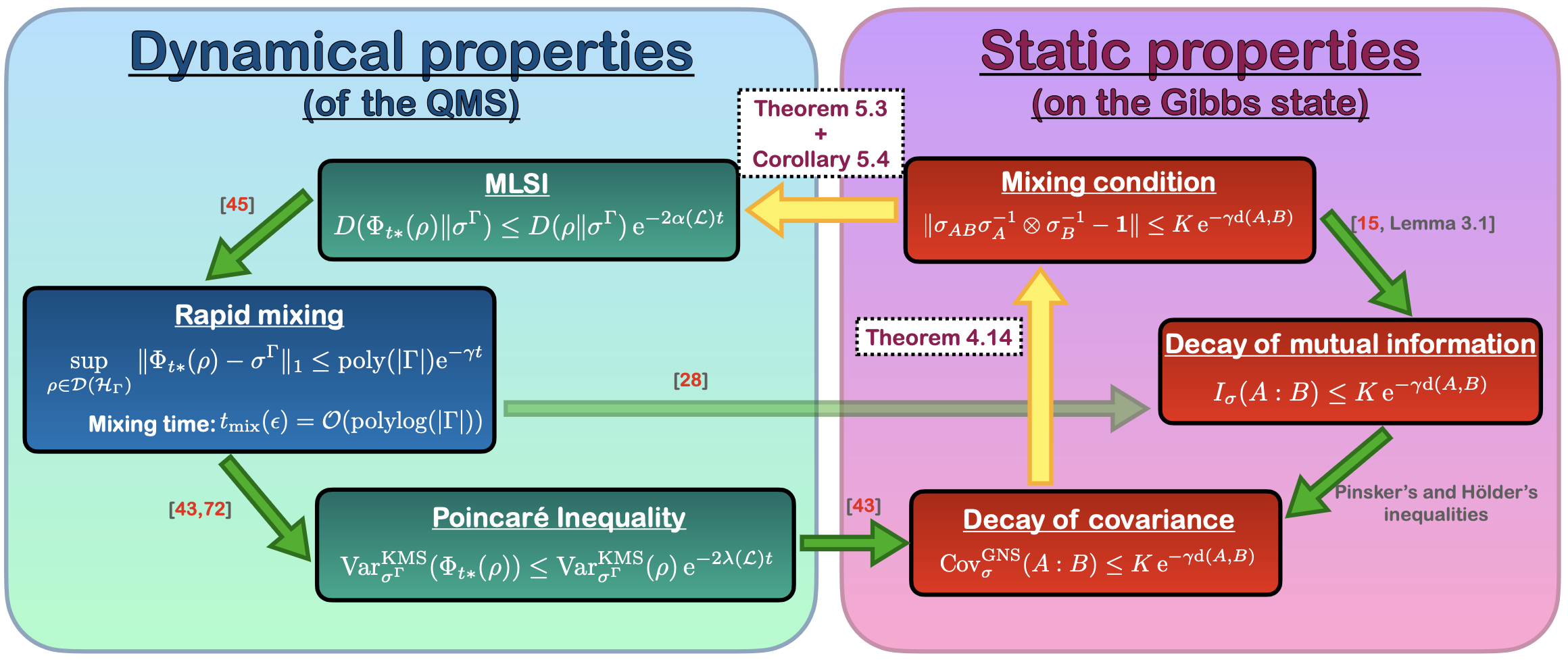}
    \caption{Relation between the main different notions of clustering (static properties on Gibbs states) and their connection to thermalization of Davies QMS (dynamical properties) in 1D systems. For those systems, we show that system size invariant gap is equivalent to a system size independent MLSI constant via the visualised chain of implications. Yellow arrows signify implications proved in this work.}
    \label{fig:Visualization}
\end{figure}

\subsection{Rapid Thermalization from Gap}\label{subsec:Gap}

Here we show two direct applications of \Cref{thm:technicalMainThm}. The first is that for 1D quantum spin chains with commuting nearest-neighbour Hamiltonians, having a gap is a sufficient (and necessary) condition for rapid thermalization. We also show how in 2D we obtain an improvement over previous results on mixing. Before stating both results, we recall what was previously established in Section \ref{sec:static}. 

Let $\Lambda$ be a graph with finite growth constant, $\Phi$ a uniformly bounded, nearest-neighbour, commuting potential and $\beta$ a suitable inverse temperature, such that the family of Davies Lindbladians associated to $(\Lambda,\Phi,\beta)$ has a positive spectral gap, i.e. $\inf_{\Gamma\subset\subset\Lambda}\lambda(\mathcal{L}^D_\Gamma)>0$ . Then, for any $\Gamma \subset \subset \mathbb{Z}$ or $\mathbb{Z}^2$ and $C,D \subset \Gamma$ with $C \cap D \neq \emptyset$,  by \cite[Corollary 27]{art:QuantumGibbsSamplers-kastoryano2016quantum}, \Cref{thm:ddimStrongLocalIndist}, and \Cref{thm:weaktostrongclustering} we have q$\mathbb{L}_1\to\mathbb{L}_\infty$-decay with decay rate 
\begin{align}\label{eq:etaexp}
    \eta_{\text{gap},C,D}(l)=\exp{\mathcal{O}(|\partial(C\setminus D)|+|\partial(D\setminus C)|)}\mathcal{O}\left(\exp\left(-\frac{\dist(C\setminus D,D\setminus C)}{\xi}\right)\right), 
\end{align}where $\xi$ is the thermal correlation length of the family of Gibbs states associated to $(\Lambda,\Phi,\beta)$. 
Together with this scaling, two judicious choices of regions $C,D$ (one for each case in Theorem \ref{thm:technicalMainThm}) and their corresponding overlap allows us to obtain the following proposition. 
\begin{proposition} \label{prop:important2}
    It holds that $\eta_{\text{gap},C,D}(l)$ is exponentially decaying in $l$ when picking $C,D$ convex with $\dist(C\setminus D,D\setminus C)=l$ and such that 
        \begin{itemize}
        \item[\phantom{a}] \underline{\text{Case 1:}} For $\Lambda=\mathbb{Z}$, we have $\diam(C),\diam(D)=\mathcal{O}(l^2)$  ,
        \item[\phantom{a}] \underline{\text{Case 2:}} For $\Lambda=\mathbb{Z}^2$, and the thermal correlation length $\xi$ is $\xi < \xi_{\text{max}}(\Phi,\beta)= \mathcal{O}(1)$, we have $\diam(C),\diam(D)=\mathcal{O}(l)$
    \end{itemize}
\end{proposition}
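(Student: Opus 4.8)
The plan is to choose the regions $C,D$ so that the exponential prefactor $\exp(\mathcal{O}(|\partial(C\setminus D)|+|\partial(D\setminus C)|))$ in \eqref{eq:etaexp} is controlled by the exponential decay $\exp(-\dist(C\setminus D,D\setminus C)/\xi)$, giving a net exponentially decaying function of $l=\dist(C\setminus D,D\setminus C)$. The key geometric observation is that in $\mathbb{Z}$ the boundary of a convex (interval) region is of size $\mathcal{O}(1)$, whereas in $\mathbb{Z}^2$ the boundary of a convex region of diameter $m$ has size $\mathcal{O}(m)$. This is why the two cases behave differently.

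First I would treat \underline{Case 1}, $\Lambda=\mathbb{Z}$. Take $C,D$ to be intervals (hence convex) with $\dist(C\setminus D,D\setminus C)=l$. Since $\Lambda$ is one-dimensional, $|\partial(C\setminus D)|$ and $|\partial(D\setminus C)|$ are each bounded by a constant $2(r-1)$ depending only on the interaction range, independently of $\diam(C),\diam(D)$. Therefore $\eta_{\text{gap},C,D}(l)=\exp(\mathcal{O}(1))\,\mathcal{O}(\exp(-l/\xi))=\mathcal{O}(\exp(-l/\xi))$, which is exponentially decaying in $l$ with no constraint on the diameters; in particular the choice $\diam(C),\diam(D)=\mathcal{O}(l^2)$ is admissible. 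This is immediate once one notes the boundary is $\mathcal{O}(1)$.

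Next I would treat \underline{Case 2}, $\Lambda=\mathbb{Z}^2$. Here, taking $C,D$ convex with $\diam(C),\diam(D)=\mathcal{O}(l)$, we have $|\partial(C\setminus D)|,|\partial(D\setminus C)|=\mathcal{O}(l)$ by the isoperimetric bound for convex subsets of the square lattice (the boundary of a convex region is linear in its diameter). Hence the prefactor is $\exp(\mathcal{O}(l))$ and
\begin{align}
\eta_{\text{gap},C,D}(l) = \exp\!\big(\mathcal{O}(l)\big)\,\mathcal{O}\!\left(\exp\!\left(-\frac{l}{\xi}\right)\right) = \mathcal{O}\!\left(\exp\!\left(-\Big(\frac{1}{\xi}-c\Big)l\right)\right),
\end{align}
for a constant $c>0$ coming from the implicit constants in the $\mathcal{O}(l)$ boundary bound and in \eqref{eq:etaexp}. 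This is exponentially decaying in $l$ precisely when $1/\xi>c$, i.e. when $\xi<\xi_{\text{max}}(\Phi,\beta):=1/c=\mathcal{O}(1)$, which is the stated threshold. The threshold $\xi_{\text{max}}$ depends only on $J,r,\nu,\beta$ through the constants appearing in Lemma \ref{lemma:ImportantCommutingLemma1}, Theorem \ref{thm:ddimStrongLocalIndist}, and Theorem \ref{thm:weaktostrongclustering}.

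The main obstacle, such as it is, is bookkeeping the implicit constants: one must verify that the constant $c$ in the boundary-size estimate $|\partial(C\setminus D)|\le c\,l$ (for convex $C,D$ with $\diam=\mathcal{O}(l)$ and the precise relation between the coarse-graining parameters) together with the constant hidden in the $\mathcal{O}(|\partial\cdot|)$ of \eqref{eq:etaexp} can be made smaller than $1/\xi$ for $\xi$ below an explicit, system-size independent value. The actual work of the proposition is therefore to fix once and for all an admissible family of convex coarse-grainings of $\mathbb{Z}$ (resp. $\mathbb{Z}^2$) realizing the diameter/overlap relation required by the two cases of Lemma \ref{thm:technicalMainThm}, and to plug the resulting boundary sizes into \eqref{eq:etaexp}; everything else is the elementary geometry of intervals and of convex lattice polygons.
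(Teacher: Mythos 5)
Your proof is correct and follows essentially the same route as the paper's: in $\mathbb{Z}$ convex sets are intervals with $\mathcal{O}(1)$ boundary so the prefactor in \eqref{eq:etaexp} is constant and decay holds for any $\xi$, while in $\mathbb{Z}^2$ the boundary of a convex region scales linearly in its diameter $\mathcal{O}(l)$, so the prefactor $\exp(\mathcal{O}(l))$ is dominated by $\exp(-l/\xi)$ precisely when $\xi$ is below an $\mathcal{O}(1)$ threshold. Your additional bookkeeping of the constants is consistent with, and slightly more explicit than, the paper's argument.
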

\begin{proof}
    For $\Lambda=\mathbb{Z}$, the boundary of convex sets is $2$, since they are intervals of the chain. Since this is a constant number, the decay holds independently of the correlation length $\xi$. 
   For $\Lambda=\mathbb{Z}^2$, the boundary of convex sets scales linearly in their diameter. Hence so do $|\partial(C\setminus D)|,|\partial(D\setminus C)|$ and thus, if the correlation length $\xi$ is small enough to dominate over the linear growth in the exponent, we get exponential decay of $l\mapsto\eta_{\text{gap},C,D}(l)$.
\end{proof}

Together with \Cref{thm:technicalMainThm} this directly yields the main result of the section.

\begin{theorem} \label{thm:mainGap}
The Davies generator $\Li^D_\Lambda:=\{\mathcal{L}^D_\Gamma\}_{\Gamma\subset\subset\Lambda}$ corresponding to a uniform, nearest-neighbour, commuting family of Hamiltonians acting on the locally-finite dimensional quantum spin system $\mathcal{H}_\Lambda$  satisfies a MLSI with constant 
\begin{enumerate}
    \item[$1)$] $\alpha(\mathcal{L}^D_\Gamma) = \Omega(1)_{|\Gamma|\to\infty}$ independent of system size, when $\Lambda=\Z$ is the spin-chain, or
    \item[$2)$] $\alpha(\mathcal{L}^D_\Gamma) = \Omega\left((\sqrt{|\Gamma|})^{-1}\right)_{|\Gamma|\to\infty}$ square-root decreasing in system size, when $\Lambda=\mathbb{Z}^2$ and the correlation length of the fixed point (Gibbs state) is $\xi < \xi_{\text{max}}(\Phi,\beta)= \mathcal{O}(1)$,
\end{enumerate}
whenever the family of thermal states satisfies uniform exponential decay of covariance or when the Davies Lindbladians are uniformly gapped, i.e.  $\inf_{\Gamma\subset\subset\Lambda}\lambda(\mathcal{L}^D_\Gamma)>0$, which implies the former.
\end{theorem}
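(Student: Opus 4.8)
The plan is to obtain this theorem as a direct corollary of the technical lemma \ref{thm:technicalMainThm}, feeding it the q$\mathbb{L}_1\to\mathbb{L}_\infty$ decay established in Proposition \ref{prop:important2}, and to then convert the MLSI constants into mixing-time bounds at the very end. The logical skeleton is: (i) a positive uniform gap $\inf_{\Gamma}\lambda(\mathcal{L}^D_\Gamma)>0$ implies uniform $\mathbb{L}_2$-clustering of the Gibbs states by \cite[Corollary 27]{art:QuantumGibbsSamplers-kastoryano2016quantum}, hence uniform exponential decay of covariance; (ii) by Theorem \ref{thm:ddimStrongLocalIndist} (applicable since the potential is commuting and nearest-neighbour) this upgrades to uniform strong local indistinguishability and the mixing condition, with the exponential-in-boundary prefactors recorded there; (iii) by Theorem \ref{thm:weaktostrongclustering} the Schmidt generators associated to $(\Lambda,\Phi,\beta)$ satisfy q$\mathbb{L}_1\to\mathbb{L}_\infty$-clustering with decay function $\eta_{\text{gap},C,D}(l)$ as in \eqref{eq:etaexp}; (iv) Proposition \ref{prop:important2} identifies, in each of the two lattice geometries, the admissible scaling of $\diam(C),\diam(D)$ versus $l$ for which this $\eta$ still decays exponentially — quadratic in $l$ for $\mathbb{Z}$ (because convex sets in $\mathbb{Z}$ have constant boundary $2$, so the prefactor is $\mathcal{O}(1)$ regardless of $\xi$), and only linear in $l$ for $\mathbb{Z}^2$, and even that only when $\xi<\xi_{\max}(\Phi,\beta)$ so that the linear-in-boundary exponential prefactor is dominated by $e^{-l/\xi}$; (v) these are exactly the hypotheses of Case 1 and Case 2 of Lemma \ref{thm:technicalMainThm} respectively, so the lemma outputs $\alpha(\mathcal{L}^D_\Gamma)=\Omega(1)$ for $\mathbb{Z}$ and $\alpha(\mathcal{L}^D_\Gamma)=\Omega(\diam(\Gamma)^{-1})$ for $\mathbb{Z}^2$.

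For the first statement I would additionally observe that in $1$D the gap assumption is not really an extra input: by \cite{kimura2024clustering} (and the earlier translation-invariant results) commuting nearest-neighbour $1$D Hamiltonians satisfy uniform exponential decay of covariance at every positive temperature unconditionally, and conversely by \cite{art:RapidMixingandDecayofCorrelationsKastoryano_2013} a system-size independent MLSI implies a positive gap, so the hypothesis "uniform exponential decay of covariance" and "uniformly gapped" are genuinely equivalent here — this is the "(and necessary)" clause. For the second statement I would note $\diam(\Gamma)=\Theta(\sqrt{|\Gamma|})$ on the $\mathbb{Z}^2$ lattice, turning $\Omega(\diam(\Gamma)^{-1})$ into the claimed $\Omega(|\Gamma|^{-1/2})$.

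To finish, I would translate the MLSI bounds into mixing times via the standard chain: a positive MLSI constant $\alpha$ gives $D(e^{t\mathcal{L}_*}(\rho)\|\sigma)\le e^{-\alpha t}D(\rho\|\sigma)$, then Pinsker \eqref{equ:Pinsker} gives $\|\rho_t-\sigma\|_1\le\sqrt{2e^{-\alpha t}D(\rho\|\sigma)}$, and bounding $D(\rho\|\sigma)\le\log\|\sigma^{-1}\|=\mathcal{O}(|\Gamma|)$ yields $t_{\text{mix}}(\epsilon)\le\frac{1}{\alpha}\mathcal{O}(\log\epsilon^{-1}+\log|\Gamma|)$; substituting $\alpha=\Omega(1)$ gives the optimal rapid mixing $t_{\text{mix}}=\mathcal{O}(\log|\Gamma|)$ in $1$D, and $\alpha=\Omega(|\Gamma|^{-1/2})$ gives $t_{\text{mix}}=\mathcal{O}(\sqrt{|\Gamma|}\log|\Gamma|)$ in $2$D, recovering the entries of Table \ref{tab:mainresultsinformal}.

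I expect essentially all the difficulty to live inside the two results I am invoking rather than in the assembly: the hard technical step is Lemma \ref{thm:technicalMainThm} itself — the divide-and-conquer / approximate-tensorization argument that turns a fast-enough q$\mathbb{L}_1\to\mathbb{L}_\infty$ decay on overlapping convex regions into a global MLSI lower bound, together with the use of Lemma \ref{lem:RelEntCondExpBound} to pass from Schmidt to Davies dynamics. The only genuinely delicate point at the level of this corollary is the bookkeeping in Proposition \ref{prop:important2}: one must check that the overlap region can be chosen so that $l$ grows like $\sqrt{\diam}$ (Case 1) or like $\diam$ (Case 2) while keeping the regions convex and the $\exp(\mathcal{O}(|\partial(C\setminus D)|+|\partial(D\setminus C)|))$ prefactor of \eqref{eq:etaexp} beaten by $e^{-l/\xi}$ — which for $\mathbb{Z}^2$ forces the smallness condition $\xi<\xi_{\max}$, and for $\mathbb{Z}$ is automatic.
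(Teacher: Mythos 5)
Your proposal is correct and follows essentially the same route as the paper: gap $\Rightarrow$ $\mathbb{L}_2$-clustering $\Rightarrow$ decay of covariance $\Rightarrow$ (Theorem \ref{thm:ddimStrongLocalIndist}, Theorem \ref{thm:weaktostrongclustering}) q$\mathbb{L}_1\to\mathbb{L}_\infty$-clustering with the prefactor \eqref{eq:etaexp}, then Proposition \ref{prop:important2} to verify the two geometric cases and Lemma \ref{thm:technicalMainThm} to conclude, with $\diam(\Gamma)=\Theta(\sqrt{|\Gamma|})$ on $\mathbb{Z}^2$ giving the stated exponent. The added remarks on the $1$D equivalence with gap and the mixing-time conversion likewise match the discussion the paper places around the theorem and in Corollary \ref{thm:main1D}.
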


The square root decay of the MLSI constant in the 2-dimensional setting in Theorem \ref{thm:mainGap} does not imply rapid mixing, since it only guarantees a polynomial thermalization time. However, it still improves the bound on the mixing time over the one we would get only from having a gap, i.e. we get $t_\text{mix}(\epsilon)=\mathcal{O}(\sqrt{|\Gamma|}\log|\Gamma|,\log\frac{1}{\epsilon})$, as opposed to an arbitrary $t_\text{mix}(\epsilon)=\mathcal{O}(\text{poly}(\vert \Gamma \vert),\log\frac{1}{\epsilon})$.

Concerning the 1D result, we first of all note that
we do not require $r=2$, i.e. nearest-neighbour interactions assumption, since by coarse-graining, we can always map a geometrically-$r$-local 1D system to a nearest neighbour one.\footnote{This, however, only works in the 1D case. The problem in higher dimensions is that the coarse grained graph is no longer 2-colorable or 2-colorable under further coarse-graining.}
This result has important implications in the 1D setting, 
where Theorem \ref{thm:mainGap} not just gives rapid mixing of the Davies dynamics but also an optimal scaling of the cMLSI constant from either gap or exponential decay of covariance.

A priori, assuming gap is the stronger assumption, since in \cite{art:QuantumGibbsSamplers-kastoryano2016quantum}, it was proved that existence of a system-size independent gap implies $\mathbb{L}_2$-clustering, and thus exponential decay of covariance (or $\mathbb{L}_\infty$-clustering). 
However, for the 1D systems considered here,\footnote{In fact, this holds more generally for uniformly bounded, commuting, geometrically-local systems on any graph with a finite growth constant.} it is known that rapid thermalization implies existence of a striclty positive gap of the generator \cite{art:temme2015fast,art:QuantumGibbsSamplers-kastoryano2016quantum}. This means that in 1D we have shown equivalence of strictly positive gap of generators, decay of covariance of the invariant state, MLSI with uniformly strictly positive constant, and rapid thermalization. 
Therefore, at least in 1D quantum spin chains,
this answers in the affirmative an open question from \cite{art:QuantumGibbsSamplers-kastoryano2016quantum}, namely whether for commuting systems on $\mathbb{Z}$, existence of exponential decay of covariance is sufficient to prove existence of a spin-system size invariant strictly positive spectral gap for $\mathcal{L}^D_\Gamma$. This is due to the fact that, as a combination of \Cref{thm:ddimStrongLocalIndist} and \Cref{thm:mainGap}, exponential decay of covariance of a thermal state of a uniform geometrically-local, commuting Hamiltonian implies a strictly positive and system size invariant MLSI constant, which implies gap. We refer the reader to Figure \ref{fig:Visualization} for a more clear depiction of these implications. 
We thus highlight the strength of our result for the 1D setting in the following corollary.


\begin{corollary}[1D constant MLSI at any temperature and locality]\label{thm:main1D}
Let $\Lambda=\mathbb{Z}$ be the one-dimensional quantum spin chain endowed with a uniformly $J$-bounded, geometrically-$r$-local, commuting potential $\Phi$ for some $r\in\mathbb{N}$ and let $\beta>0$ be any inverse temperature. 
There exists a strictly positive constant $\alpha$ independent of the spin-chain length that lower bounds the MLSI constant of any element of the Davies generators associated to $(\mathbb{Z},\Phi,\beta)$, i.e. $\alpha(\Li^D_\Gamma)=\Omega(1)_{|\Gamma|\to\infty}$. Hence 
for any finite $\Gamma\subset\subset\Lambda$ we have 
\begin{align}
    D(\rho_t\|\sigma^\Gamma) \leq e^{-t\alpha(\mathcal{L}_\Gamma^D)}D(\rho_0\|\sigma^\Gamma),
\end{align}
and thus also rapid thermalization with mixing time $t_\text{mix}(\epsilon)=\mathcal{O}\left(\log|\Gamma|,\log\frac{1}{\epsilon}\right)$.
\end{corollary}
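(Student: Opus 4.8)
The plan is to deduce the corollary from \Cref{thm:mainGap}, part~1 (equivalently, from \Cref{thm:technicalMainThm}, Case~1), whose hypothesis is a uniform \emph{nearest-neighbour} commuting potential on $\mathbb Z$ together with uniform exponential decay of covariance of the associated Gibbs states. So two things must be supplied: a reduction of the geometrically-$r$-local case to the nearest-neighbour case, and the fact that decay of covariance actually holds in this $1$D commuting setting at every temperature.

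For the first point I would coarse-grain: group the sites of $\mathbb Z$ into consecutive blocks of some fixed size $b=\mathcal O(r)$, chosen large enough (e.g.\ $b=2r$) that every term $\Phi_X$ of the potential, having $\diam(X)\le r$, is supported on at most two consecutive blocks. The blocked chain is again $\mathbb Z$, hence $2$-colourable with finite growth constant; the blocked potential $\tilde\Phi$ is nearest-neighbour, uniformly bounded and commuting; and the Hamiltonian $H_\Gamma$, the Gibbs state $\sigma^\Gamma$, and the Davies generator $\Li^D_\Gamma$ are literally the same operators viewed on $\mathcal H_\Gamma=\bigotimes_{\text{blocks}}\mathcal H_{\text{block}}$. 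In particular, regrouping the dissipative part $\sum_{x}\Li^D_x$ by blocks exhibits $\Li^D_\Gamma$ as (one of the) Davies generators associated to $(\mathbb Z,\tilde\Phi,\beta)$, still uniformly geometrically-local (with range $\mathcal O(1)$ in blocked units), GNS-symmetric and frustration-free. So it suffices to prove the statement for the blocked system. This footnote-level reduction is the main place where care is needed: one must check that the Davies dynamics of the $r$-local chain really \emph{is} a Davies dynamics of the blocked nearest-neighbour chain so that \Cref{thm:mainGap} applies without modification (and one should note why this fails in higher dimensions, where blocking destroys $2$-colourability).

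For the second point I would invoke the known fact that a $1$-dimensional, uniformly bounded, nearest-neighbour, commuting Hamiltonian has Gibbs states satisfying uniform exponential decay of covariance at every $\beta>0$: classical for translation-invariant finite-range interactions \cite{Araki1969}, extended to finite chains by \cite{art:Bluhm2022exponentialdecayof}, and---crucially using commutativity---valid without translation invariance by \cite{kimura2024clustering}. One should also observe that this property passes from the original chain to the blocked one, since blocked operators are original operators and the blocked distance is, up to the constant factor $b$, the original one. Feeding this into \Cref{thm:ddimStrongLocalIndist} and \Cref{thm:weaktostrongclustering} gives uniform q$\mathbb L_1\to\mathbb L_\infty$-clustering of the Schmidt generators with prefactors $\exp(\mathcal O(|\partial(C\setminus D)|+|\partial(D\setminus C)|))$, which on a chain---where convex regions are intervals with $\mathcal O(1)$ boundary---is a genuine exponential decay in $l=\dist(C\setminus D,D\setminus C)$ with no constraint linking $\diam(C),\diam(D)$ to $l$; this is exactly \Cref{prop:important2}, Case~1. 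Then \Cref{thm:technicalMainThm}, Case~1, yields a system-size-independent $\alpha:=\inf_\Gamma\alpha(\Li^D_\Gamma)=\Omega(1)_{|\Gamma|\to\infty}>0$.

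The quantitative consequences are then routine: primitivity of the Davies generator gives $E_*(\rho)=\sigma^\Gamma$, so the MLSI with constant $\alpha>0$ integrates (Gr\"onwall, as in \Cref{sec:cMLSI}) to $D(\rho_t\|\sigma^\Gamma)\le e^{-t\alpha(\Li^D_\Gamma)}D(\rho_0\|\sigma^\Gamma)$; combining Pinsker's inequality \eqref{equ:Pinsker} with the crude bound $D(\rho_0\|\sigma^\Gamma)\le\log\|(\sigma^\Gamma)^{-1}\|=\mathcal O(|\Gamma|)$ gives $\|\rho_t-\sigma^\Gamma\|_1\le e^{-\alpha t/2}\,\mathcal O(\sqrt{|\Gamma|})$, whence $t_{\mathrm{mix}}(\epsilon)=\tfrac1\alpha\mathcal O(\log|\Gamma|+\log\epsilon^{-1})=\mathcal O(\log|\Gamma|+\log\epsilon^{-1})$ since $\alpha=\Omega(1)$. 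The genuinely hard part---the divide-and-conquer estimate behind \Cref{thm:technicalMainThm} and the equivalence \Cref{thm:weaktostrongclustering}---is already established earlier in the paper; for this corollary the only real obstacle is the coarse-graining bookkeeping described above.
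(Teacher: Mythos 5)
Your proposal is correct and follows essentially the same route as the paper: invoke the decay of covariance for 1D commuting (not necessarily translation-invariant) Hamiltonians at every temperature from \cite{kimura2024clustering}, reduce the geometrically-$r$-local case to the nearest-neighbour one by coarse-graining into $\mathcal O(r)$-sized blocks (which the paper also notes only works in 1D because blocking destroys 2-colourability in higher dimensions), and then apply Theorem \ref{thm:mainGap}, part 1 (i.e.\ \Cref{thm:technicalMainThm}, Case 1, via \Cref{prop:important2}). The extra bookkeeping you supply on the blocked Davies generator and the standard Pinsker/Gr\"onwall derivation of the mixing time are consistent with what the paper leaves implicit.
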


This follows from the recent result in \cite{kimura2024clustering}, where it is shown that exponential decay of covariance holds for such systems with correlation length $\xi=\exp\mathcal{O}(\beta)$  and any finite temperature $\beta>0$. Alternatively, for translation-invariant systems, a uniform lower bounded gap is known by \cite{art:QuantumGibbsSamplers-kastoryano2016quantum}. 
We do not require $r=2$, i.e. nearest-neighbour interactions, since by coarse-graining, we can always map a geometrically-$r$-local 1D system to a nearest neighbour one.\footnote{This, however, only works in the 1D case. The problem in higher dimensions is that the coarse grained graph is no longer 2-colorable or 2-colorable under further coarse-graining.}

Corollary \ref{thm:main1D} is a strict improvement over the best previous results \cite{art:EntropyDecayOf1DSpinChain-Cambyse,art:ImplicationsAndRapidTermalization-Cambyse}, where $\alpha(\Li^D_\Gamma)=\Omega(\log|\Gamma|^{-1})$ is logarithmically decreasing in the system size for translation-invariant Hamiltonians. With our result, we obtain the same system-size independence as with the so-called LSI constant in the classical setting, which is known to be optimal.

\subsection{Rapid Thermalization from High Temperature}\label{subsec:Temp}

In this section we consider rapid thermalization as a consequence of being at high temperature, as given in Theorem \ref{thm:clusteringFromHighTemp}. Being at high temperature means that we can assume uniform-q$\mathbb{L}_1\to\mathbb{L}_\infty$-clustering with decay function 
\begin{align}
    \eta_{C,D;\textup{thermal}}(l) = \mathcal{O}(|C\cup D|)\exp\left(-\frac{\dist(C\setminus D,D\setminus C)}{\xi^\prime}\right). \label{Condition2}
\end{align}
See \cite{art:2localPaper}[Theorem 6,7, Proposition 2] for the proof.  The fact that we now have a linear dependence $\mathcal{O}(|C\cup D|)$ of the pre-factor as opposed to the exponential one in Eq. \eqref{eq:etaexp} means that we can define more favourable choices of regions $C,D$ in the proofs, thus yielding stronger results. This is of particular importance for $D>1$, where we prove rapid mixing under the high temperature assumption.

We now present the following condition on the decay length $\xi^\prime$, which will feature in the main theorem below.
\begin{definition}[Geometric condition on decay length] 
For an infinite graph $\Lambda$, recall $N(l):=\sup_{x\in\Lambda}|B_l(x)|$, where $B_l(x):=\{v\in\Lambda|\dist(x,v)\leq l\}$ is the ball around $x$ of radius $l$. 
We require the decay length $\xi^\prime$ of the uniform-q$\mathbb{L}_1\to\mathbb{L}_\infty$-clustering to satisfy 
\begin{align}
    \xi^\prime < \frac{l}{2\log N(l)}
    \label{def:treecondition}
\end{align} eventually in $l$, i.e. for $l\geq l_0$ for some $l_0\in\mathbb{N}$. 
\end{definition}

The decay length $\xi^\prime$ is not just a function of the graph alone, but also of the potential $\Phi$ and the inverse temperature $\beta$. If condition \eqref{def:treecondition} holds, then $N(l)\exp\left(-\frac{l}{\xi^\prime}\right)$ (and in turn, $\eta_{C,D}(l)$) is at least exponentially decaying in $l$.

Recall that hypercubic lattices are sub-exponential graphs, whereas $b-$ary trees are exponential.  Hence for infinite hypercubic lattices of dimension $D$ we have that $N(l)\propto l^D$ and thus this condition is trivially fulfilled for any $\xi^\prime>0$. For $b$-ary trees we have that $N(l)=\sum_{k=0}^lb^k=\frac{b^{l+1}-1}{b-1}\propto b^l$, and hence this condition becomes
$\xi^\prime<\frac{1}{2\log b}$, which is an implicit condition on the temperature $\beta^{-1}$.  Since it holds trivially for $\beta=0$, it should also hold for a small enough constant $\beta>0$.

Having established this, we now have the following proposition, similarly to Proposition \ref{prop:important2}. 
\begin{proposition} \label{prop:important3}
    The function $\eta_{C,D;\textup{thermal}}(l)$ is exponentially decaying in $l$ when choosing $C,D$ convex such that 
    \begin{itemize}
         \item[\phantom{a}] \underline{\text{Case 1:}} For $\Lambda=\mathbb{Z}^D$ a $D-$dim hypercubic lattice, $\diam(C),\diam(D)=\mathcal{O}(l^{2})$.  \footnote{This also holds for any subexponential graph, when requiring that we still have exponential decay when on can choose $\diam(C),\diam(D)=\mathcal{O}(l^{1+\delta})$ for any $0<\delta\leq 1$. The proof still goes through under this slightly weakend 'Case 2' condition, hence the main result directly extends from hypercubic to subexponential graphs.}  
           \item[\phantom{a}] \underline{\text{Case 2:}}   For $\Lambda$ exponential graph, $\diam(C),\diam(D)=\mathcal{O}(l)$ and the the decay length $\xi^\prime$ is small enough (see e.g. Condition \eqref{def:treecondition}).
    \end{itemize}
\end{proposition}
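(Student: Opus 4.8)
The plan is to read the claim off directly from the explicit high-temperature decay of \Cref{thm:clusteringFromHighTemp} combined with crude volume estimates in terms of the ball-growth function $N$. Recall that \Cref{thm:clusteringFromHighTemp} gives, in the high-temperature regime,
\[
\eta_{C,D;\textup{thermal}}(l)=\mathcal{O}(|C\cup D|)\,\exp\!\left(-\frac{l}{\xi'}\right),\qquad l=\dist(C\setminus D,D\setminus C).
\]
Since the regions overlap, $C\cap D\neq\emptyset$, any two points of $C\cup D$ can be joined through a point of $C\cap D$, so $\diam(C\cup D)\leq\diam(C)+\diam(D)$; and fixing any $x_0\in C\cup D$ we have $C\cup D\subseteq B_{\diam(C\cup D)}(x_0)$, hence $|C\cup D|\leq N(\diam(C\cup D))\leq N(\diam(C)+\diam(D))$. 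It thus suffices to show that $N(\diam(C)+\diam(D))\exp(-l/\xi')$ is exponentially small in $l$ under each of the two diameter regimes. Convexity of $C,D$ is not used here; it is only retained because it is required when these regions are fed into \Cref{thm:technicalMainThm}.

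For Case~1 with $\Lambda=\mathbb{Z}^D$ one has $N(m)=\mathcal{O}(m^D)$, so $\diam(C),\diam(D)=\mathcal{O}(l^{2})$ gives $|C\cup D|=\mathcal{O}(l^{2D})$ and therefore $\eta_{C,D;\textup{thermal}}(l)=\mathcal{O}(l^{2D})\exp(-l/\xi')$, which is exponentially decaying in $l$ once the polynomial prefactor is absorbed, e.g. into the bound $\exp(-l/(2\xi'))$ for $l$ large. For a general sub-exponential $\Lambda$, $\log N(m)=\mathcal{O}(m^{\delta_0})$ for some $\delta_0\in(0,1)$; choosing $\diam(C),\diam(D)=\mathcal{O}(l^{1+\delta})$ with $0<\delta<\delta_0^{-1}-1$ makes $\log|C\cup D|=\mathcal{O}(l^{(1+\delta)\delta_0})=o(l)$ (since $(1+\delta)\delta_0<1$), so $\eta_{C,D;\textup{thermal}}(l)=\exp(o(l)-l/\xi')$ again decays exponentially. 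Since $\mathbb{Z}^D$ permits $\delta_0$ arbitrarily small, $\delta=1$ (the quadratic regime) is admissible, which is precisely the assertion of the footnote.

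For Case~2, $\Lambda$ exponential together with its finite growth constant $\nu$ gives $\log N(m)\leq m\log\nu=\mathcal{O}(m)$, so $\diam(C),\diam(D)=\mathcal{O}(l)$ yields $|C\cup D|\leq N(\mathcal{O}(l))=\exp(\mathcal{O}(l))$ and $\eta_{C,D;\textup{thermal}}(l)=\exp(\mathcal{O}(l)-l/\xi')$; this is exponentially small exactly when $1/\xi'$ beats the implicit linear constant. Condition \eqref{def:treecondition}, i.e. $\log N(l)<l/(2\xi')$ for large $l$, is the clean sufficient form: choosing the regions so that $\diam(C)+\diam(D)\leq l$ — still an $\mathcal{O}(l)$ regime, achievable by taking the overlap a sufficiently large fraction of $C\cup D$ — gives $|C\cup D|\leq N(l)$ and $\eta_{C,D;\textup{thermal}}(l)\leq\mathcal{O}\!\big(\exp(\log N(l)-l/\xi')\big)\leq\mathcal{O}(\exp(-l/(2\xi')))$. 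For the $b$-ary tree $\log N(l)=l\log b+\mathcal{O}(1)$, so the condition reduces to $\xi'<1/(2\log b)$, and the strictness of this inequality provides the margin needed for genuine exponential decay rather than mere boundedness of $\eta_{C,D;\textup{thermal}}$.

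The whole argument is bookkeeping; the only point requiring genuine care is the constant matching in Case~2 — namely keeping $\diam(C\cup D)$ small enough relative to the factor $2$ in \eqref{def:treecondition}, or equivalently taking $\xi'$ a little smaller when the overlap is a thinner fraction of $C\cup D$ (recall $\diam(C\cup D)\geq l$ always, so this multiple cannot be below $1$). Once the coarse-graining regions employed in the proof of \Cref{thm:technicalMainThm} are fixed this multiplicative constant becomes explicit, and the exponential decay of $\eta_{C,D;\textup{thermal}}$ in both cases follows by the direct comparison above.
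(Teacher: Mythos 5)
Your proposal is correct and follows essentially the same route as the paper: bound $|C\cup D|$ by the ball-growth function $N(\diam(C)+\diam(D))$, giving a polynomial prefactor $\mathcal{O}(l^{2D})$ on $\mathbb{Z}^D$ (Case 1) and an $\exp(\mathcal{O}(l))$ prefactor on exponential graphs (Case 2), which the exponential $e^{-l/\xi'}$ absorbs under condition \eqref{def:treecondition}. One intermediate claim in Case 2 is off — regions with $\diam(C)+\diam(D)\leq l$ do not exist nondegenerately, since $\diam(C)+\diam(D)\geq\diam(C\cup D)\geq\dist(C\setminus D,D\setminus C)=l$ — but your closing paragraph already identifies the correct resolution, which is the paper's: the coarse-graining in Lemma \ref{lemma:OmegacMLSI} uses $\diam(C\cup D)=2l$, and the factor $2$ in \eqref{def:treecondition} is calibrated to exactly that choice.
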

\begin{proof} Let $\Lambda$ be a $D$-dim hyper-cubic graph. Then $|C\cup D|$ scales at worst as $(\diam(C)+\diam(D))^D= \mathcal{O}(l^{2D})$, so $\eta_{C,D;\textup{thermal}}(l)\leq\mathcal{O}(l^{2D})\exp(-l\xi^{-1})$. This is eventually exponentially decaying.  \\
For exponential graphs there exists a finite growth constant, say $b$, such that we have that $|C\cup D|\leq b^{\diam(C)+\diam(D)}=\exp(\mathcal{O}(l))$, and thus $\eta_{C,D;\textup{thermal}}(l)\leq\exp(\mathcal{O}(l)-l\xi^{\prime -1})$ is exponentially decaying when $\xi^\prime$ is small enough. In the proof of Lemma \ref{lemma:OmegacMLSI} we will only be requiring exponential decay for sets $C,D$ such that $\diam(C\cup D)=2l=2\dist(C\setminus D,D\setminus C)$. In this case we have that condition \eqref{def:treecondition} is sufficient to guarantee the claimed exponential decay, as it holds that $|C\cup D|\leq b^{\diam(C\cup D)}\leq\exp(2l\log b)$, and thus $\eta_{C,D;\textup{thermal}}(l)\leq\exp(2l\log b-l\xi^{\prime -1})$.
\end{proof}

This allows us to apply Theorem \ref{thm:technicalMainThm} directly. In this section, as opposed to Proposition \ref{prop:important2} above, \underline{Case 1} will involve generic sub-exponential lattices, and \underline{Case 2} will be exponential trees. With those geometries, and given the choice of regions that Proposition \ref{prop:important3} allows for, the arguments from Sec. \ref{subsec:proofs} below allow us to obtain the following main result in the high temperature setting.

\begin{theorem}[Davies MLSI and rapid mixing from high temperature] \label{thm:mainTemp}
Let $\Lambda$ be a 2-colorable graph with finite growth constant $\nu$ and $\Phi$ a uniformly $J-$bounded, nearest-neighbour, commuting potential on it. If the temperature is high enough, i.e. $\beta^{-1}>10eJ\nu$, the Davies generators $\Li^D_\Lambda:=\{\mathcal{L}^D_\Gamma\}_{\Gamma\subset\subset\Lambda}$ associated to $(\Lambda,\Phi,\beta)$ acting on the locally-uniformly-finite dimensional quantum spin system $\mathcal{H}_\Lambda$ satisfy a MLSI with constant 
\begin{enumerate}
    \item[$1)$] $\alpha(\mathcal{L}^D_\Gamma) = \Omega(1)_{|\Gamma|\to\infty}$ independent of system size, when $\Lambda$ is a sub-exponential graph, such as all $\mathbb{Z}^D$, or
    \item[$2)$] $\alpha(\mathcal{L}^D_\Gamma) = \Omega\left((\log|\Gamma|)^{-1}\right)_{|\Gamma|\to\infty}$ logarithmically decreasing in system size, when $\Lambda$ is an exponential graph and the correlation length of the fixed point (Gibbs state) satisfies condition \eqref{def:treecondition}. For tree graphs, this condition is $\xi^\prime<(2\log b)^{-1}$.
\end{enumerate}
\end{theorem}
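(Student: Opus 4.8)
The plan is to reduce Theorem \ref{thm:mainTemp} to the technical Lemma \ref{thm:technicalMainThm} by chaining together the clustering and geometric ingredients already assembled in the excerpt. First I would invoke Theorem \ref{thm:clusteringFromHighTemp}: since $\Lambda$ is a 2-colorable graph with finite growth constant $\nu$ and $\Phi$ is $J$-bounded, nearest-neighbour and commuting, the high-temperature hypothesis $\beta^{-1}>10eJ\nu$ (equivalently $0\le\beta<\frac{1}{10eJ\nu}$) guarantees that the Schmidt generators associated to $(\Lambda,\Phi,\beta)$ satisfy uniform q$\mathbb{L}_1\to\mathbb{L}_\infty$-clustering with decay function $\eta_{C,D;\textup{thermal}}(l)\le K|C\cup D|\exp(-l/\xi')$ for $l=\dist(C\setminus D,D\setminus C)$, with $K,\xi'$ independent of $C,D,\Gamma$. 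The key point here is the \emph{linear} prefactor $|C\cup D|$, which is what distinguishes this regime from the exponential-prefactor bound \eqref{eq:etaexp} obtained merely from the gap, and which is exactly what will let us pick favourable coarse-grainings.

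Next, I would feed this decay function into Proposition \ref{prop:important3}, which records precisely when $\eta_{C,D;\textup{thermal}}(l)$ remains exponentially decaying in $l$ for suitably chosen convex regions. For case 1), $\Lambda$ sub-exponential (in particular any $\mathbb{Z}^D$): here $|C\cup D|$ is at worst polynomial in $\diam(C)+\diam(D)$, so choosing $\diam(C),\diam(D)=\mathcal{O}(l^2)$ still leaves $\eta$ exponentially small in $l$ — this is the hypothesis of \underline{Case 1} of Lemma \ref{thm:technicalMainThm}. For case 2), $\Lambda$ an exponential graph (e.g. $\mathbb{T}_b$): now $|C\cup D|\le b^{\diam(C)+\diam(D)}$, so one can only afford $\diam(C),\diam(D)=\mathcal{O}(l)$, and even then only if $\xi'$ is small enough that $\exp(2l\log b - l/\xi')\to 0$; the geometric condition \eqref{def:treecondition}, which for a $b$-ary tree reads $\xi'<(2\log b)^{-1}$, is exactly what ensures this, since $N(l)\propto b^l$ there. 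This matches the hypothesis of \underline{Case 2} of Lemma \ref{thm:technicalMainThm}.

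With the hypotheses of Lemma \ref{thm:technicalMainThm} verified in each of the two regimes, its conclusion gives directly: in case 1) a system-size independent MLSI constant $\alpha(\mathcal{L}^D_\Gamma)=\Omega(1)_{|\Gamma|\to\infty}$; in case 2) an $\alpha(\mathcal{L}^D_\Gamma)=\Omega((\diam(\Gamma))^{-1})_{|\Gamma|\to\infty}$. For case 2) one then uses that on an exponential graph $\diam(\Gamma)=\mathcal{O}(\log|\Gamma|)$ (indeed $|\Gamma|\le N(\diam(\Gamma))\le\nu^{\diam(\Gamma)}$, so $\diam(\Gamma)\ge\log_\nu|\Gamma|$, while connectedness gives the matching upper bound up to constants), converting the diameter-dependent rate into the claimed $\Omega((\log|\Gamma|)^{-1})$. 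Finally, the rapid-mixing statement follows from \eqref{equ:rapidmixing}: a (c)MLSI constant $\alpha$ yields $t_{\mathrm{mix}}(\epsilon)\le\frac{1}{\alpha}\mathcal{O}(\log\epsilon^{-1}+\log|\Lambda|)$, which in case 1) is $\mathcal{O}(\log|\Gamma|)$ and in case 2) is $\mathcal{O}(\mathrm{poly}\log|\Gamma|)$.

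The genuinely hard part of the overall argument is \emph{not} in this theorem — it is in Lemma \ref{thm:technicalMainThm} itself (the divide-and-conquer construction of coarse-grainings, the approximate-tensorization/global-to-local reduction turning q$\mathbb{L}_1\to\mathbb{L}_\infty$ decay into an MLSI bound, and the passage from Schmidt to Davies dynamics via Lemma \ref{lem:RelEntCondExpBound}) and in Theorem \ref{thm:clusteringFromHighTemp} (the high-temperature cluster expansion). Conditioned on those, the present theorem is essentially bookkeeping; the only genuinely new check is the geometric accounting in Proposition \ref{prop:important3} — in particular verifying that on exponential graphs condition \eqref{def:treecondition} really does suffice even though we only need exponential decay of $\eta$ for the specific sets with $\diam(C\cup D)=2l$ that appear in the proof of Lemma \ref{thm:technicalMainThm}. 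I would be careful to state that the proofs of \cite{art:2localPaper}, though written for hypercubic lattices, extend verbatim to bounded-degree graphs including trees, since this is what licenses applying Theorem \ref{thm:clusteringFromHighTemp} in the exponential-graph case.
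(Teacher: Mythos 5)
Your proposal follows essentially the same route as the paper: Theorem \ref{thm:clusteringFromHighTemp} supplies the q$\mathbb{L}_1\to\mathbb{L}_\infty$ decay with linear prefactor, Proposition \ref{prop:important3} verifies the region-size hypotheses of the two cases of Lemma \ref{thm:technicalMainThm}, and that lemma then delivers the stated MLSI constants, with the diameter-to-$\log|\Gamma|$ conversion on exponential graphs. One caveat: your parenthetical justification that ``connectedness gives the matching upper bound'' $\diam(\Gamma)=\mathcal{O}(\log|\Gamma|)$ is false for arbitrary connected subsets of an exponential graph (a path in a $b$-ary tree has diameter $|\Gamma|-1$); the conversion holds because the geometric argument of Lemma \ref{lemma:OmegacMLSI} is run on complete subtrees $B_{x_j,L}$, for which $|\Gamma|\sim b^{L}$, exactly as the paper does.
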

This means that in both cases the dynamics generated by the Davies generators are rapidly mixing. Importantly, the result also holds when we have a polynomial dependence on $|C\cup D|$ instead of the linear prefactor of $\eta_{C,D;\textup{thermal}}(l)$. The requirement on the high temperature comes solely through \Cref{thm:clusteringFromHighTemp}, hence if one could prove \eqref{Condition2} from some other starting point, such as having a gap, then one would equally obtain Theorem \ref{thm:mainTemp}. 

The result constitutes a strict extension of the main Theorem of \cite{art:2localPaper}, since we are able to prove rapid thermalization for Davies dynamics, as opposed to for the more artificial setting of Schmidt dynamics. For the sub-exponential graph such as hypercubic lattices, the constant bound on the MLSI yields an optimal scaling of the thermalization time of the semi-group. It should be possible to extend it to wider ranges of temperatures, as long as the right notion of decay of correlations holds.

The result for trees is, to the authors knowledge, the first of its kind in the quantum setting. Classically it is known, however, that the exponential decay rate of the relative entropy towards the equilibrium is tree-size independent \cite{art:ClassicaTreesMartinelli_2004}. We expect this to also hold in the quantum case, since the proof there hinges upon a very analogous condition on the temperature as the one here, which is implicit through $\xi^\prime<(2\log b)^{-1}$.

\subsection{Proof of main results}\label{subsec:proofs}

In this section we prove \Cref{thm:technicalMainThm}. The structure is as follows. We first reproduce an important result from \cite{art:2localPaper} that tells us that exponential q$\mathbb{L}_1\to\mathbb{L}_\infty$ decay implies an approximate tensorization statement in \Cref{thm:ApproxTensorizationForOmega}. This result, the work in \cite{art:2localPaper}, and Lemma \ref{lemma:SchmidtandDavies} are the ingredients we need to establish the main result \ref{thm:mainTemp} for quantum systems on hypercubic latices of dimension $D>1$. 
The  main part of the proof in \Cref{subsec:DivideandConquer} then consists of a geometric argument where we apply the aforementioned approximate tensorization to show the main result separately for \underline{Case 1} and \underline{Case 2}. This is done by proving the result for the two prototypical instances of both cases: quantum spin chains and $b$-ary trees, respectively. The derivations then generalise straightforwardly to all the instances of Cases 1 and 2. 

Denote with $\eta(l)\equiv\eta_{C,D}(\dist(C\setminus D,D\setminus C))$ the decay function of the q$\mathbb{L}_1\to\mathbb{L}_\infty$-clustering of the family of Schmidt generators associated to $(\Lambda,\Phi,\beta)$.
In what follows, it is important that, by assumption, in \underline{Case 1} we have that $\eta(l)$ is exponentially decaying when picking $C,D$ convex s.t. $\diam(C),\diam(D)=\mathcal{O}(l^2)$ and in \underline{Case 2} that $\eta(l)$ is exponentially decaying when picking $C,D$ convex s.t. $\diam(C),\diam(D)=\mathcal{O}(l)$. The two cases, as well as their consequences as MLSIs for the various models, are depicted in Figure \ref{fig:proof_mainthm}. 

\begin{figure}[H]
    \centering
    \includegraphics[width=\linewidth]{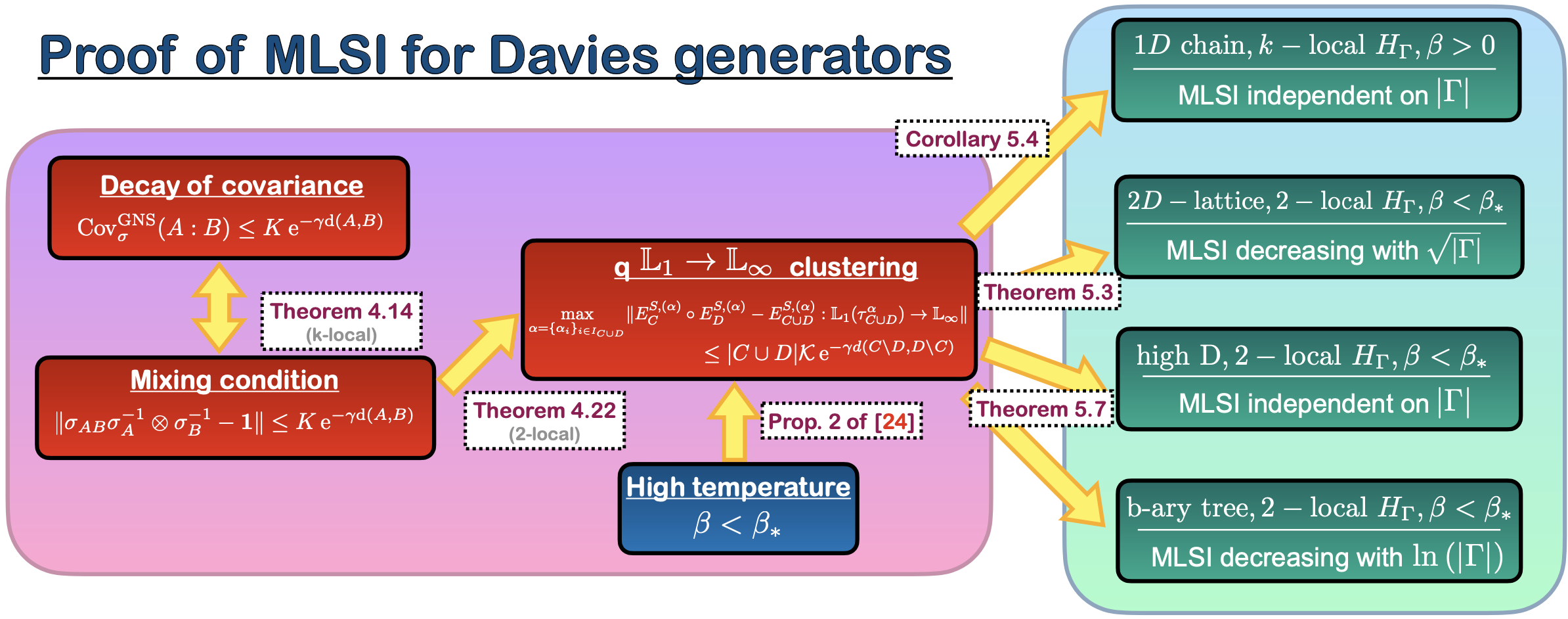}
    \caption{Sketch of the proof of \Cref{thm:technicalMainThm}. In a first part, we recall the results from Section \ref{sec:static} that showed the equivalence between clustering conditions in the case of nearest neighbour interactions. In a second part, at the right side of the diagram, we derive the various cases of \Cref{thm:technicalMainThm} as a combination of several lemmas in a geometric argument to derive positive MLSI constants for several classes of Hamiltonians.}
    \label{fig:proof_mainthm}
\end{figure}

\subsubsection{Approximate tensorization for an almost classical state $\omega$}\label{subsec:ApproxTensorization}

In this step of the proof, we connect the exponential decay from uniform q$\mathbb{L}_1\to\mathbb{L}_\infty$-clustering from Definition \ref{def:LinfinityClustering} with a decomposition of a relative entropy of a conditional expectation on some region with respect to that of smaller overlapping regions, which we refer to as ``approximate tensorization".

Given a 2-colorable graph $\Lambda$ and a fixed 2-coloring, we denote the set of vertices with labels 0 as $\Lambda_0$ and the set of vertices with labels 1 as $\Lambda_1=\Lambda\setminus\Lambda_0$.

\begin{definition}\label{def:OmegaState} 
Given the 2-colorable graph $\Lambda$ with 2-coloring $\Lambda = \{\Lambda_0,\Lambda_1\}$, denote a finite subgraphs $\Gamma\subset\subset\Lambda$ with induced 2-colorings $\{\Gamma_i:=\Gamma\cap\Lambda_i\}_{i=0}^1=:\{\Gamma_0,\Gamma\setminus \Gamma_0\}$. For a quantum state $\rho\in\mathcal{D}(\mathcal{H}_\Gamma)$, we define 
\begin{equation}\label{def:omegastate}
    \omega := E_{\Gamma_0*}^S(\rho) = (\bigcirc_{a\in\Gamma_0}E_{a*}^S)(\rho),
\end{equation}
where the second equality follows from Proposition \ref{prop:commutativityofSC}. 
\end{definition}
The importance of this ``semiclassical" state is that in \cite{art:2localPaper} it was shown to satisfy the following approximate tensorization statement from the q$\mathbb{L}_1\to\mathbb{L}_\infty$-decay we assume here. 


\begin{theorem}[Approximate tensorization for $\omega$]\label{thm:ApproxTensorizationForOmega}
Let $(\Lambda,\Phi,\beta)$ be as assumed in this section. Fix some $\Gamma\subset\subset\Lambda$ and given some $\rho\in\mathcal{D}(\mathcal{H}_\Gamma)$ set $\omega:=E^S_{A*}(\rho)$ as defined above. Let $C,D\subset\Gamma$ be two convex subsets, s.t. for $\#\in\{R:=C\cup D$, $\partial C, \partial D, \partial R\}$ we have $ \#\cap \Lambda_0=\emptyset,$ and $l:=\dist(C\setminus D,D\setminus C)>1$. Then it holds that
\begin{equation}\label{equ:approximateTensorization}
    D(\omega\|E_{C\cup D*}^S(\omega)) \leq \frac{1}{1-2\eta_{C,D}(l)}[D(\omega\|E_{C*}^S(\omega))+D(\omega\|E_{D*}^S(\omega))],
\end{equation}
where $\eta(l)$ is the decay function of the uniform q$\mathbb{L}_1\to\mathbb{L}_\infty$-clustering. 
\end{theorem}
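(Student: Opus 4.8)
The plan is to reduce this to a one-step comparison between the conditional expectations $E^S_{R*}$ (with $R=C\cup D$) and the composition $E^S_{C*}\circ E^S_{D*}$, exploiting the fact that $\omega$ is the ``semiclassical'' state obtained by conditioning out all of $\Gamma_0$. First I would recall that, by Proposition \ref{prop:commutativityofSC} and the hypothesis that $R,\partial C,\partial D,\partial R$ all avoid $\Lambda_0$, the three Schmidt conditional expectations $E^S_C,E^S_D,E^S_{R}$ all act trivially on the $\Gamma_0$-sites whose conditioning defines $\omega$ — more precisely they commute with $E^S_{\Gamma_0}$ — so that on $\omega$ they reduce to honest conditional expectations onto commuting (in the sense of Lemma \ref{lem:SchmidAlgebra}) subalgebras. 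The key structural point is that $E^S_C\circ E^S_D$ is \emph{not} in general equal to $E^S_{R}$, but the q$\mathbb{L}_1\to\mathbb{L}_\infty$-clustering bound \eqref{def:qL1toLinfty} says precisely that the difference $E^{S,(\alpha)}_C\circ E^{S,(\alpha)}_D - E^{S,(\alpha)}_{R}$ is small in the relevant norm, uniformly over boundary conditions $(\alpha)$, with bound $\eta_{C,D}(l)$.

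The main computation is then the chain-rule bookkeeping. Using the relative-entropy chain rule \eqref{def:relentChainRule} for conditional expectations (valid since $E^S_{R*}(\sigma)=\sigma$ and the relevant subalgebra inclusions hold), one writes
\begin{align}
D(\omega\|E^S_{R*}(\omega)) = D(\omega\|E^S_{C*}(\omega)) + D(E^S_{C*}(\omega)\|E^S_{R*}(\omega)),
\end{align}
and similarly with $C$ replaced by $D$; adding these gives $D(\omega\|E^S_{R*}(\omega))$ plus a sum of two ``remainder'' terms $D(E^S_{C*}(\omega)\|E^S_{R*}(\omega)) + D(E^S_{D*}(\omega)\|E^S_{R*}(\omega))$, which must be controlled. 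The point is that $E^S_{R*}(\omega) = E^S_{R*}(E^S_{C*}(\omega))$ (since $\mathcal{F}^S_{R}\subset\mathcal{F}^S_C$), so $D(E^S_{C*}(\omega)\|E^S_{R*}(\omega))$ measures how far $E^S_{C*}(\omega)$ is from being $E^S_{R}$-invariant; one bounds this by the action of $E^S_D$, i.e. by $D(E^S_{C*}(\omega)\|E^S_{D*}E^S_{C*}(\omega))$ up to the clustering error, and then iterating/closing the loop one extracts exactly a prefactor $2\eta_{C,D}(l)$ multiplying $D(\omega\|E^S_{R*}(\omega))$ on the wrong side. Rearranging and dividing by $1-2\eta_{C,D}(l)$ (which is positive since $\eta_{C,D}(l)<1/2$ in the exponentially-decaying regime with $l$ large enough; for small $l$ the statement is vacuous or follows by monotonicity) yields \eqref{equ:approximateTensorization}. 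Since this is essentially the content of \cite{art:2localPaper}, I would cite their argument for the precise constant-tracking and reproduce only the structure.

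\textbf{Main obstacle.} The delicate part is the second step: turning the operator-norm smallness of $E^{S,(\alpha)}_C\circ E^{S,(\alpha)}_D - E^{S,(\alpha)}_{R}$ (as a map $\mathbb{L}_1(\tau^{(\alpha)}_{R})\to\mathbb{L}_\infty$) into a bound on the relative-entropy remainder terms, and doing so uniformly over the boundary-condition blocks $(\alpha)$ so that the block-diagonal structure of the Schmidt conditional expectations is respected. This requires an entropic ``continuity'' or perturbation estimate that converts the multiplicative-error form $\sigma_1\overset{\eta}{\sim}\sigma_2$ into an additive relative-entropy bound — precisely the kind of estimate packaged in Proposition \ref{prop:RelationDmax} ($D\le D_{\max}\lesssim\eta$) — combined with a careful argument that the relevant pair of states appearing in $D(E^S_{C*}(\omega)\|E^S_{R*}(\omega))$ is related by exactly this relation on each block. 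I expect that verifying the constant is genuinely $2\eta_{C,D}(l)$ and not, say, $c\,\eta_{C,D}(l)$ for a larger $c$, is where one must follow \cite{art:2localPaper} most closely; everything else is routine chain-rule manipulation.
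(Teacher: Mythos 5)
The paper's own ``proof'' of this statement is a one-line citation to \cite[Theorem 8]{art:2localPaper}, and your proposal --- set up the chain-rule/Pythagoras identities $D(\omega\|E^S_{R*}(\omega))=D(\omega\|E^S_{C*}(\omega))+D(E^S_{C*}(\omega)\|E^S_{R*}(\omega))$ (and likewise for $D$), control the two remainder terms by $(1+2\eta_{C,D}(l))\,D(\omega\|E^S_{R*}(\omega))$ using the q$\mathbb{L}_1\to\mathbb{L}_\infty$ bound uniformly over the boundary-condition blocks, and defer to that reference for the precise constant-tracking --- is exactly the structure of the cited argument, so this is essentially the same approach. One bookkeeping correction: adding the two chain-rule identities gives $2D(\omega\|E^S_{R*}(\omega))$ on the left-hand side (not $D(\omega\|E^S_{R*}(\omega))$), and it is this factor of $2$, combined with the bound on the remainders, that produces the prefactor $(1-2\eta_{C,D}(l))^{-1}$ after rearranging.
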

The proof of this approximate tensorization statement, given either \Cref{thm:weaktostrongclustering} or \Cref{thm:clusteringFromHighTemp}, is just an application of \cite[Theorem 8]{art:2localPaper}. 
This theorem will allow us to execute what is sometimes called a ``divide-and-conquer" strategy for the state $\omega$. That is, approximate tensorization allows us to bound the relative entropy distance between $\omega$ and the projection onto its fixed point on some lattice region $CD$ by the ones on the smaller regions $C$ and $D$, up to a suitable factor that depends on the decay of correlations of the fixed point. We will then iterate this procedure to end up with finite size regions, on which we can bound the relative entropies by a local MLSI constant, and finally generalize this to arbitrary states $\rho$.

From this we will show that, as long as we have exponential decay in this approximate tensorization argument for convex regions $C,D$ with the right diameter, as per the two cases, we can establish a lower bound on the MLSI. In \textit{\underline{Case 1}}, with diameter $\mathcal{O}(l^2)$, we can establish a constant lower bound on the MLSI constant $\alpha(\mathcal{L}^D_\Gamma)=\Omega(1)_{|\Gamma|\to\infty}$ and for \textit{\underline{Case 2}}, where the convex regions $C,D$ have diameters $\mathcal{O}(l)$, we can establish a logarithmic (for trees at high temperature) and square root (for 2D lattices also at high temperature) lower bound on the MLSI constant $\alpha(\mathcal{L}^D_\Gamma)=\Omega((\diam|\Gamma|)^{-1})_{|\Gamma|\to\infty}$. To get there, we will require a repeated application of Theorem \ref{thm:ApproxTensorizationForOmega} with appropriate choices of regions $C,D$, as we show in the next subsection.


\subsubsection{Geometric argument}\label{subsec:DivideandConquer}

We now construct the aforementioned geometric divide-and-conquer-strategy employing a Cesaro averaging over different choices of regions $C,D$. 
The target is an upper bound on the relative entropy between the state $\omega$ and its image under the Schmidt conditional expectation on the whole $\Gamma$. This bound is based on Theorem \ref{thm:ApproxTensorizationForOmega}, and will be expressed in terms of the sum of the relative entropies between $\omega$ and its image under the Schmidt conditional expectation on smaller \textit{coarse-grained} sets of a fixed size, called $\{R_k\}_k$.

For the geometric argument, we will explicitly define the construction of the relevant coarse-grained-sets for the $b$-ary tree, for $b\in\mathbb{N}$. The $1$-D case is then covered as the $b=1$ case in trees. Since the Cesaro averaging and clustering works analogously for $b$-ary trees and $D$-dim hypercubic lattices, we will for simplicity only conduct it for the former. For the definition of the subsets and further details on the $D$-dimensional hypercubic construction see Section \ref{sec:hypercube} and also \cite{art:2localPaper}.

Denote with $K:=\{k|x_k\in\Gamma_0\}$ the index set of $\Gamma_0=\Gamma\cap\Lambda_0$. 
The $\{R_k\}_{k}$ will be of a sufficiently large but finite minimal size and s.t. $E^S_{R_k*}\circ E^S_{\Gamma_0*}=E^S_{\Gamma_0*}\circ E^S_{R_k*}$ holds, which is always the case if their boundaries satisfy $\partial R_k\cap\Gamma_0=\emptyset$. As $\Gamma_0$ is the union of single vertices each with distance 2 from each other, each $x_j\in\Gamma_0$ is either an element of $R_k$, or $\dist(x_j,R_k)=2$, so the commutation of the conditional expectations follows from Proposition \ref{prop:commutativityofSC}. For trees, we define the subsets as follows \footnote{Note that the constructions are not unique.}. 

\begin{definition}[Construction of coarse-grained sets for $b$-ary tree]
Denote the infinite $b$-ary tree with $\mathbb{T}_b$. Denote with $B_{x,l}$ the subtree rooted at site $x\in\mathbb{T}_b$ of height $l$. 
We define the following set of subsets
\begin{align}\label{def:R-k-sets}
    \{R_k\}_{k\in K}:=\{B_{x_k,l_0}\cap\Gamma\}_{k\in K}=\{B_{x,l_0}\cap\Gamma\}_{x\in \Gamma_0},
\end{align}where $l_0\in2\mathbb{N}$ is a a suitably large constant to be fixed later on. The $\{R_k\}_k$ form a coarse-graining into subtrees based on each vertex of the same label (i.e. in $\Gamma_0$) of finite fixed height $l_0$. We consider the cMLSI constant of our evolution on these sets, and extend these via a Cesaro averaging argument to the whole lattice.
\end{definition}

Through these sets, we define the following quantity with respect to the state $\omega$ from \eqref{def:omegastate}
\begin{align}
    D_R(\omega):=\sum_{R_k\subset R}D(\omega\|E^S_{R_k*}(\omega)),
\end{align} where the sets $\{R_k\}_{k\in K}$ are the coarse grained sets as specified above. Observe that $D_R(\omega)$ is monotonically increasing in $R$, i.e. if $A\subset B\subset\Gamma$ are two subregions, then $D_A(\omega)\leq D_B(\omega)$ by positivity of the relative entropy. It is also additive up to boundary terms, i.e. if $A,B\subset\Gamma$ are two disjoint subregions, then 
\begin{align}\label{eq:additiverels}
D_{AB}(\omega)=D_A(\omega)+D_B(\omega)+\sum_{\underset{R_k\cap A\neq \emptyset}{R_k\cap B\neq \emptyset}}D(\omega\|E^S_{R_k*}(\omega)).
\end{align}
Next we define a function $L\mapsto C(L):\mathbb{N}\to\mathbb{R}$, such that $C(L)$ is the smallest number for which 
\begin{align}\label{def:C-Funktion}
    D(\omega\|E^S_{\Gamma*}(\omega)) \leq C(L)D_{\Gamma}(\omega),
\end{align} where $L:=\height(\Gamma)$. It follows from this definition that $C(L)$ is monotonically non-decreasing. The main target of this section is thus to upper bound $C(L)$. These bounds are given in the following key lemma, which is proven by inductively applying the approximate tensorization and averaging suitably over the choices of partitions $C,D$. It is the main ingredient of the geometric part of the proof. 
\begin{lemma}\label{lemma:OmegacMLSI} For sub-trees of $\mathbb{T}_b$ of the form $\Gamma=B_{x_j,L}$ for some $j\in K$, \eqref{def:C-Funktion} holds with  
\begin{enumerate}
       \item[\phantom{a}] \underline{\text{Case 1:}}  $C(L)=\mathcal{O}(1)_{|\Gamma|\to\infty}$ is uniformly upper bounded by $C(\infty)<\infty$.
      \item[\phantom{a}] \underline{\text{Case 2:}}  $C(L)=\mathcal{O}(L)_{L\to\infty}=\mathcal{O}(\diam|\Gamma|)_{|\Gamma|\to\infty}=\mathcal{O}(\log|\Gamma|)_{|\Gamma|\to\infty}$.
\end{enumerate}
\end{lemma}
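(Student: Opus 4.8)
The plan is to prove the bound on $C(L)$ by strong induction on the height $L$ of the subtree $\Gamma = B_{x_j,L}$, using Theorem \ref{thm:ApproxTensorizationForOmega} together with a Cesàro averaging over a family of ``cut'' choices of overlapping convex regions $C,D$. First I would set up the geometry: given $\Gamma = B_{x_j,L}$, for each integer $m$ in a suitable range $[\,m_{\min}, m_{\max}\,]$ I define $C = C^{(m)}$ to be the subtree of $\Gamma$ consisting of all vertices within the first $\approx L - m/2$ levels (suitably adjusted to avoid $\Gamma_0$ on the boundary, so that $E^S_{C*}$ and $E^S_{D*}$ commute with $E^S_{\Gamma_0 *}$), and $D = D^{(m)}$ to be the ``shell'' consisting of the last $\approx m$ levels together with a thin connecting collar, so that $\dist(C^{(m)}\setminus D^{(m)}, D^{(m)}\setminus C^{(m)}) = l(m) \gtrsim m$ and $C^{(m)}\cup D^{(m)} = \Gamma$. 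Crucially, $D^{(m)}$ is a disjoint union of $b^{\,\text{const}}$ subtrees of height $\le m$, and $C^{(m)}$ is a subtree of height $\le L$; in \underline{Case 1} we take $l(m)$ so that $\diam(C),\diam(D) = \mathcal{O}(l(m)^2)$ is allowed (i.e.\ the overlap is a vanishing fraction, $m \sim \sqrt{L}$-ish scaling per step), and in \underline{Case 2} we take $\diam(C),\diam(D) = \mathcal{O}(l(m))$ (overlap a fixed fraction). For each $m$, Theorem \ref{thm:ApproxTensorizationForOmega} gives
\begin{align}
D(\omega\|E^S_{\Gamma*}(\omega)) \le \frac{1}{1-2\eta(l(m))}\Big[D(\omega\|E^S_{C^{(m)}*}(\omega)) + D(\omega\|E^S_{D^{(m)}*}(\omega))\Big].
\end{align}

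Next I would average this inequality over $m$ and iterate. Applying the induction hypothesis to $C^{(m)}$ (height $\le L$, but with one ``layer'' peeled) and to each connected component of $D^{(m)}$ (height $\le m < L$), and using monotonicity and approximate additivity of $D_R(\omega)$ from \eqref{eq:additiverels}, one bounds $D(\omega\|E^S_{C^{(m)}*}(\omega)) \le C(L') D_{C^{(m)}}(\omega)$ and $D(\omega\|E^S_{D^{(m)}*}(\omega)) \le C(m) D_{D^{(m)}}(\omega)$ for an appropriate $L' < L$. Averaging over the $\approx (m_{\max}-m_{\min})$ values of $m$ and exploiting that a fixed vertex of $\Gamma$ lies in the ``overlap region'' $C^{(m)}\cap D^{(m)}$ for only a bounded number (or a controlled fraction) of the values of $m$, the sum $\sum_m [D_{C^{(m)}}(\omega) + D_{D^{(m)}}(\omega)]$ is bounded by $(1 + o(1))\,(\#\text{values of }m)\, D_\Gamma(\omega)$ — this is exactly the point of the Cesàro average: it turns the doubling $C + D$ into a harmless near-copy of $\Gamma$. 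This yields a recursion of the schematic form $C(L) \le \big(1 + \tfrac{c}{m_{\min}}\big)\max_{L',m} \{C(L'), C(m)\} + (\text{correction})$, where the correction comes from $\sup_m \frac{1}{1-2\eta(l(m))}$ being close to $1$ since $\eta(l(m))$ decays exponentially and $l(m)\ge m_{\min}$ is taken large. In \underline{Case 1}, because the overlap is a vanishing fraction, each recursion step only multiplies $C$ by a factor $1 + O(2^{-cL})$ type term (or, more carefully, $1 + c/m(L)$ with $m(L)\to\infty$), and the height drops geometrically, so the infinite product $\prod_L (1 + \varepsilon_L)$ converges, giving $C(L) \le C(\infty) < \infty$. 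In \underline{Case 2}, the overlap is a constant fraction so each step multiplies $C$ by a bounded constant $>1$ but reduces the height by a constant factor, hence after $O(\log L)$ steps we reach $O(1)$-size regions; this gives $C(L) = 2^{O(\log L)}\cdot$ wait — more precisely, the recursion $C(L) \le (1+\epsilon) C(\theta L) + O(1)$ with $\theta < 1$ fixed and $\epsilon$ small unrolls to $C(L) = O(L)$ (not exponential), because the multiplicative factors $(1+\epsilon)^{O(\log L)} = L^{O(\epsilon)}$ can be made $o(L)$ by choosing $m_{\min}$ large, and the additive $O(1)$ terms accumulate to $O(\log L) = O(\log\log|\Gamma|)$, dominated by the $L = \Theta(\log|\Gamma|)$ bound; the cleanest route is to prove directly by induction that $C(L)\le aL + b$ for suitable constants, checking the recursion closes. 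The base case is a fixed finite subtree of height $l_0$, where $D(\omega\|E^S_{\Gamma*}(\omega)) \le D_\Gamma(\omega)$ trivially (the single term $R_k = \Gamma$ appears in the sum), so $C(l_0) \le 1$.

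The main obstacle I anticipate is the bookkeeping in the Cesàro average: one must choose the range of cut parameters $m$ and the coarse-grained family $\{R_k\}$ so that (i) for every $m$, the regions $C^{(m)},D^{(m)}$ are convex, avoid $\Gamma_0$ on their boundaries, and satisfy the diameter-vs-distance constraint of the relevant Case (so that Theorem \ref{thm:ApproxTensorizationForOmega} applies with exponentially small $\eta$), (ii) the induction hypothesis is applicable to $C^{(m)}$ and the components of $D^{(m)}$ with strictly smaller height, and (iii) the overcounting of $D_{R_k}$ terms across different $m$ is controlled — a fixed $R_k$ should be ``cut'' (i.e.\ intersect both $C^{(m)}\setminus D^{(m)}$ and $D^{(m)}\setminus C^{(m)}$) by at most $O(l_0)$ values of $m$, so that $\sum_m D_{C^{(m)}}(\omega)+D_{D^{(m)}}(\omega) \le (\text{count} + O(l_0)) D_\Gamma(\omega)$. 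Getting the constants to line up so that the recursion genuinely closes — in particular so that in Case 1 the product of $(1+\varepsilon_L)$ factors converges rather than diverges — requires carefully taking $l_0$ (equivalently $m_{\min}$) large enough depending only on $b$, $J$, $\beta$, $\nu$ and the decay rate of $\eta$, and then verifying the height-reduction schedule $L \mapsto L'$ decreases fast enough in Case 1. Once the recursion is set up correctly, solving it is routine; the details of the tree geometry and the hypercubic analogue (deferred to Section \ref{sec:hypercube}) are where the real work lies, but they follow the same template as in \cite{art:2localPaper}.
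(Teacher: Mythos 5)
Your overall strategy --- Ces\`aro-averaging the approximate tensorization of Theorem~\ref{thm:ApproxTensorizationForOmega} over a family of cuts of $B_{x_j,L}$ so that the double-counting of the coarse-grained terms costs only a $(1+o(1))$ factor, then iterating the resulting recursion down to the base scale $l_0$ where $C(l_0)=1$ --- is exactly the paper's. The gap lies in your concrete choice of the cut family, and it breaks the quantitative bookkeeping in two related ways. First, your overlaps are all anchored at the leaves: the overlap of the cut indexed by $m$ occupies (roughly) the levels between $L-m$ and $L-m/2$, so a site at distance $t$ from the leaves lies in the overlap for every $m\in[t,2t]$, which can be a constant fraction of your whole range of $m$. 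Consequently $\frac{1}{\#m}\sum_m D_{C^{(m)}\cap D^{(m)}}(\omega)$ need not be $o(1)\cdot D_\Gamma(\omega)$ but can be $\Theta(1)\cdot D_\Gamma(\omega)$; the per-step factor is then $1+\Theta(1)$ rather than $1+o(1)$, and after the $\Theta(\log L)$ iterations you land at $\poly(L)$ instead of $\mathcal{O}(1)$ in Case 1. Second, your piece $C^{(m)}$ keeps the root and has height $L-m/2\approx L$ (with $m\sim\sqrt{L}$ in Case 1), so a single averaged step does not reduce the height by a constant factor; you would need $\Theta(\sqrt{L})$ steps to halve the height, and even a benign per-step factor $1+c/\sqrt{L}$ then accumulates to $e^{\Theta(1)}$ per halving and $L^{\Theta(1)}$ overall. (A smaller slip: $D^{(m)}$ as you define it is a disjoint union of $b^{L-m}$ subtrees, not $b^{\mathrm{const}}$ many; this is harmless since only the height enters $C(\cdot)$.)

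The paper avoids both problems by decoupling the overlap width from the cut position: it fixes the overlap height at $l=\floor{\sqrt{L}}$ (Case 1) or $l=L/N$ with $N$ constant (Case 2) and slides only the cut depth $\tilde{l}$ over the middle band $[(1-\epsilon)L+l,\,\epsilon L]$, so that (i) the $L/l$ partitions have pairwise disjoint overlaps --- each $R_k$ is double-counted in at most one partition, giving the averaged factor $1+2l/L$ --- and (ii) both $C_j^{\tilde{l}}$ and $D_j^{\tilde{l},l}$ have height at most $\epsilon L$ for a fixed $\epsilon\in(\tfrac12,1)$, so a single averaged step yields the purely multiplicative recursion $C(L)\le\frac{1}{1-2\eta(l)}\bigl(1+\frac{2l}{L}\bigr)C(\epsilon L)$ and $\mathcal{O}(\log L)$ iterations reach $l_0$. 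With that modification the rest of your argument goes through as you describe: the base case $C(l_0)=1$, and in Case 1 the convergence of $\prod_{k\ge 0}\frac{1}{1-2\eta(\sqrt{l_0\epsilon^{-k}})}\bigl(1+2/\sqrt{l_0\epsilon^{-k}}\bigr)$ because the arguments grow geometrically; in Case 2 the constant number $N$ of partitions gives a bounded factor per step and hence the stated $\mathcal{O}(L)=\mathcal{O}(\log|\Gamma|)$ scaling.
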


This lemma can be viewed as an approximate tensorization statement on many regions for a state of the form of $\omega$. That is, it upper bounds the relative entropy between it and its Schmidt conditional expectation on the whole lattice $\Gamma$ in terms of the relative entropies of it and the Schmidt conditional expectations of the fixed finite size regions $\{R_k\}_{k\in K}$. \\



In order to prove it, we will make a repeated use of the approximate tensorization result Theorem \ref{thm:ApproxTensorizationForOmega}. First breaking up a region $B_{x_k,L}$ into the following regions, then breaking these up in a recursive way until we reach a decomposition in terms of the initial defined $\{R_k\}_k$. To apply it, we now proceed with the choice of regions $C,D$ to be used in the result. 
For a set $B_{x_k,L}$ define 
\begin{align}
    C_k^{\tilde{l}} := B_{x_k,\tilde{l}}  \, , \hspace{1cm} D_k^{\tilde{l},l} := \bigcup_{\underset{\dist(x_m,x_k)=\tilde{l}-l}{m\in K}} B_{x_m,L+l-\tilde{l}} \, .
\end{align}
Hence we cover the subtree $B_{x_k,L}$ of height $L$ by a subtree with the same root of height $\tilde{l}$, called $C_k^{\tilde{l}}$, and a union of disjoint subtrees of height $L+l-\tilde{l}$, called $D_k^{\tilde{l},l}$, s.t. their overlap has height $l$ and is s.t. we can apply the approximate tensorization result \Cref{thm:ApproxTensorizationForOmega} with the function $\eta_{C_k^{\tilde{l}},D_k^{\tilde{l},l}}(l)$.  

Importantly we require that each of these sets \textit{begins} and \textit{ends} with some vertices of the same index 0, i.e. in $\Gamma_0$.
In formulae, this is $C_k^{\tilde{l}}\cup D_k^{\tilde{l},l}=B_{x_k,L}$ and $\dist\big(C_k^{\tilde{l}}\setminus D_k^{\tilde{l},l},D_k^{\tilde{l},l}\setminus C_k^{\tilde{l}}\big)=l$ for all $0\leq\tilde{l}\leq L$.
Hence each set $B_{x_k,L}$ has the family of non-trivial partitions $\{C_k^{\tilde{l}},D_k^{\tilde{l},l}\}_{\tilde{l}=1}^{L-1}$ and for each of these it holds, due to \Cref{thm:ApproxTensorizationForOmega}, that
\begin{align}
    D\big(\omega \big\|E^S_{B_{x_k,L}*}(\omega)\big) \leq \frac{1}{1-2\eta_{C_k^{\tilde{l}},D_k^{\tilde{l},l}}(l)}\left[D\Big(\omega \Big\|E^S_{C_k^{\tilde{l}}*}(\omega)\Big)+D\Big(\omega \Big\|E^S_{D_k^{\tilde{l},l}*}(\omega)\Big)\right], \label{equ:approxtens}
\end{align}
For an example of these regions see \Cref{fig:Partition}.

\begin{figure}[h]
    \centering
    \includegraphics[width=0.6\linewidth]{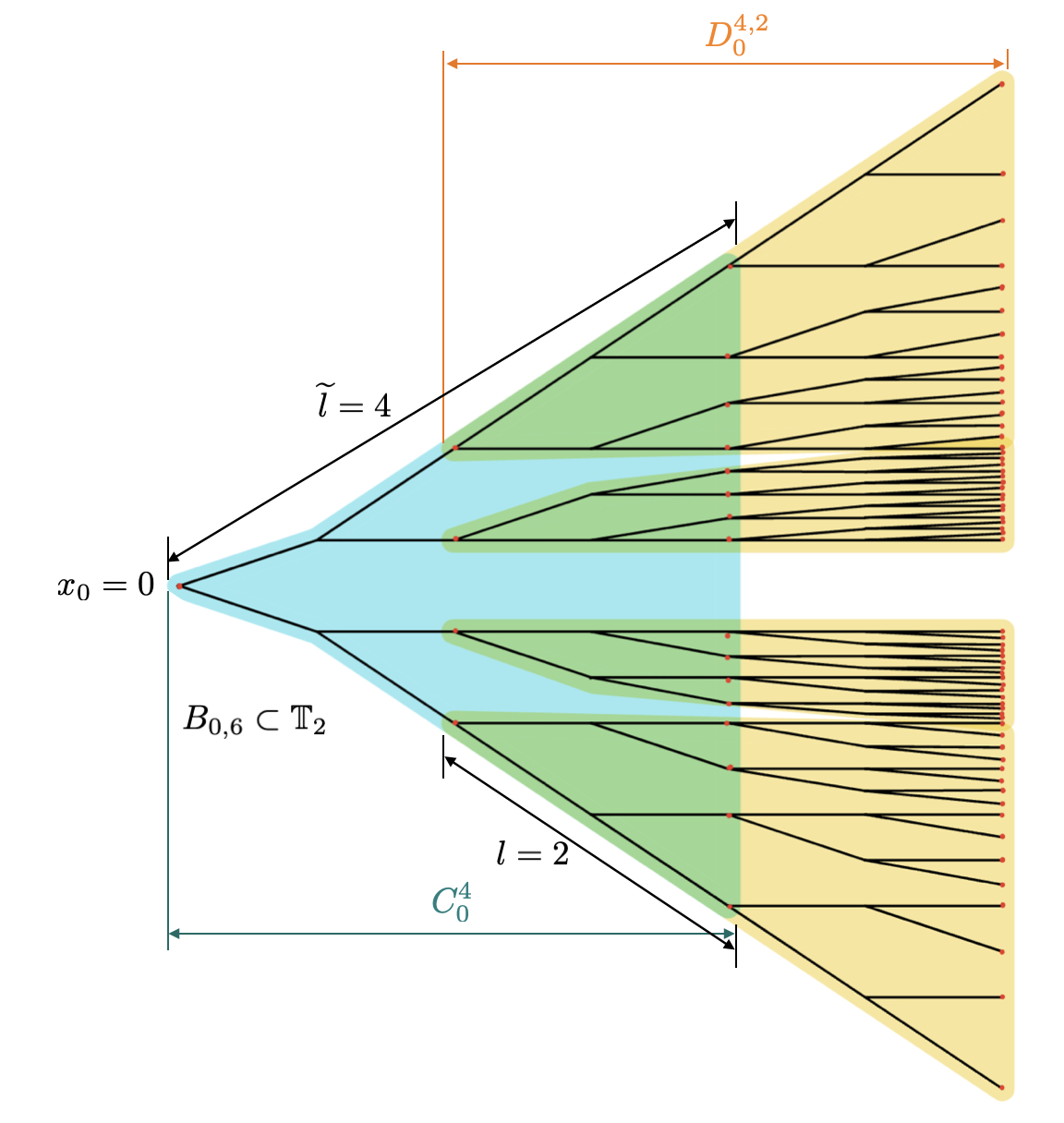}
    \caption{Example of a partition of $B_{0,6}\subset\mathbb{T}_2$ into the regions {\color{Turquise}$C^{\tilde{l}}_{0}$}, {\color{RoyalYellow}$D^{\tilde{l},l}_{0}$}  with height of the regions $\tilde{l}=4 $ and $  l=6-\tilde{l}+2=4$ and the height of their overlap being $l=2$. This is in $B_{0,6}$ as part of a 2-ary tree. The {\color{red} red vertices}  are the ones of index 0. Notice that each of these sets `begins' and `ends' in these sets.}
    \label{fig:Partition}
\end{figure}

In the $1D$ setting, these can be written as $C_k^{\tilde{l}}=[x_k,x_k+\tilde{l}]$ and $D_k^{\tilde{l},l}=[x_k+\tilde{l}-l,x_k+L]$, where the $\height$ becomes $\diam$, which is just the length of the intervals.

Now, recalling the definition of the two cases we are considering, we have the following key observation.
\begin{enumerate}\label{enumerate2}
       \item[\phantom{a}] \underline{\text{Case 1:}}  We can pick the overlap $l=\floor{\sqrt{L}}= \mathcal{O}(\sqrt{L})$. Then $\eta(l)\equiv\eta_{C_k^{\tilde{l}},D_k^{\tilde{l},l}}(l)$ 
    is exponentially decaying in $L$ for any $\xi>0$, by assumption. Set $l_{\text{min},1}$ to be the smallest $L$, s.t. $\eta(l)<\frac{1}{2}$ for all $l\geq l_{\text{min},1}$.
      \item[\phantom{a}] \underline{\text{Case 2:}}  We have to pick the overlap $l=\frac{L}{N}=\mathcal{O}(L)$ for some $N\in\mathbb{N}\setminus\{1\}$ and require the decay length ($\xi,\xi^\prime$) to be small enough such that we have eventual exponential decay in $\eta(l)\equiv\eta_{C_k^{\tilde{l}},D_k^{\tilde{l},l}}(l)$ in $l$. This suffices for $N=2$, but we will keep it general in the following derivation. 
    We set $l_0$ to be the smallest $l$ such that $\eta(l)<\frac{1}{2}$ for all $l\geq l_0$.
\end{enumerate}
We now write the proof for each of these cases separately.

\begin{proof}[Proof of \underline{Case 1} in Lemma \ref{lemma:OmegacMLSI}] 
The proof idea is to average the approximate tensorization result, \Cref{thm:ApproxTensorizationForOmega}, over all the above defined coverings $\{C_k^{\tilde{l}},D_k^{\tilde{l},l}\}_{\tilde{l}}$, for different $\tilde l$, to get the relative entropy between $\omega$ and its Schmidt conditional expectation on the whole of $B_{x_k,L}$ in terms of the relative entropy between it and the Schmidt conditional expectation on subregions of height $\epsilon L$, where $\epsilon<1$.  
This allows us to bound $C(L)$ in terms of the smaller $C(\epsilon L)$ times a multiplicative factor, which allows us to ultimately bound $C(L)$ by a constant through a recursive process. Hence to bound $C(L)$ by $C(l_0)$, where $l_0$ is a fixed finite size, requires $\mathcal{O}(\log(L))$ steps. However, since in \underline{Case 1} the decay is assumed strong enough, we will be able to upper bound this product uniformly by a constant.

Fix $x_j\in\Gamma_0$ and let $\frac{1}{2}<\epsilon<1$. 
We enumerate a maximal set of partitions of $B_{x_j,L}$ into $\{C_j^{\tilde{l}},D_j^{\tilde{l},\floor{\sqrt{L}}}\}$, s.t. $\height(C_j^{\tilde{l}})=\tilde{l},\height(D_j^{\tilde{l},\floor{\sqrt{L}}})=L+\floor{\sqrt{L}}-\tilde{l}\leq \epsilon L$ and s.t. different partitions have disjoint overlaps, i.e. $\left(C_j^{\tilde{l}_1}\cap D_j^{\tilde{l}_1,\floor{\sqrt{L}}}\right)\cap \left(C_j^{\tilde{l}_2}\cap D_j^{\tilde{l}_2,\floor{\sqrt{L}}}\right) = \emptyset$, whenever $\tilde{l}_1\neq \tilde{l}_2$.
This works as long as $\sqrt{L}\lesssim(2\epsilon-1)L$ which gives another condition on the minimal size of $l=\floor{\sqrt{L}}\geq: l_{\text{min},2}$. There exist $\frac{L}{\floor{\sqrt{L}}}=\mathcal{O}(\sqrt{L})$ of these partitions, since their overlap is of height $\floor{\sqrt{L}}=\mathcal{O}(\sqrt{L})$. To simplify notation, we refer to these partitions $\{C_i,D_i\}^{\mathcal{O}(\sqrt{L})}_{i=1}$, where the partition index $i$ is not to be confused with the fixed root $j$. We now average over all the approximate tensorization results of these partitions to get
\begin{align}
    D(\omega\|E^S_{B_{x_j,L}*}(\omega)) &\leq \frac{1}{\mathcal{O}(\sqrt{L})}\sum_{i=1}^{\mathcal{O}(\sqrt{L})} \frac{1}{1-2\eta_{C_i,D_i}(\sqrt{L})}\left[D(\omega\|E^S_{C_i*}(\omega))+D(\omega\|E^S_{D_i*}(\omega)) \right] \\
        &\leq \frac{1}{1-2\eta(\sqrt{L})}\frac{1}{\mathcal{O}(\sqrt{L})}\sum_{i=1}^{\mathcal{O}(\sqrt{L})} C(\height(C_i))D_{C_i}(\omega)+C(\height(D_i))D_{D_i}(\omega) \\
    &\leq C(\epsilon L)\frac{1}{1-2\eta(\sqrt{L})}\frac{1}{\mathcal{O}(\sqrt{L})}\Bigg(\sum_{i=1}^{\mathcal{O}(\sqrt{L})}\left(2D_{C_i\cap D_i}(\omega)+D_{C_i\setminus D_i \cup D_i\setminus C_i}(\omega)\right).\\ &\quad \quad +\sum_{\underset{R_k\cap (C_i\setminus D_i)\neq \emptyset}{R_k\cap (C_i\cap D_i)\neq \emptyset}}D(\omega\|E^S_{R_k*}(\omega))+\sum_{\underset{R_k\cap (D_i\setminus C_i)\neq \emptyset}{R_k\cap (C_i\cap D_i)\neq \emptyset}}D(\omega\|E^S_{R_k*}(\omega)) \Bigg) \\
    &\leq C(\epsilon L)\frac{1}{1-2\eta(\sqrt{L})}\frac{1}{\mathcal{O}(\sqrt{L})}\left(2D_{\bigcup_{i=1}^{\mathcal{O}(\sqrt{L})}(C_i\cap D_i)}(\omega)+\sum_{i=1}^{\mathcal{O}(\sqrt{L})}D_{(C_i\setminus D_i)\partial\cup (D_i\setminus C_i)\partial}(\omega)\right) \\
    &\leq C(\epsilon L)\frac{1}{1-2\eta(\sqrt{L})}\frac{\floor{\sqrt{L}}}{L}\left(2+\frac{L}{\floor{\sqrt{L}}}\right)D_{B_{x_j,L}}(\omega),
\end{align} 
where in the second line we used that $\eta_{C_i,D_i}(\sqrt{L})$ does not depend on $i$, and then the definition of $C(\height(\cdot))$. In the third line we used that $C(\height(C_i)),C(\height(D_i))\leq C(\epsilon L)$ since $\height(C_i)=\tilde{l},\height(D_i)=L+\floor{\sqrt{L}}-\tilde{l}\leq \epsilon L$, and also used Eq. \eqref{eq:additiverels}.
Hence it follows that $C(L)\leq C(\epsilon L)\frac{1}{1-2\eta(\sqrt{L})}(1+\frac{2}{\sqrt{L}})=:C(\epsilon L)f(L)$. This allows us to bound $C(L)$ through an iterative process, since by definition in Eq. \eqref{def:C-Funktion} and the sets $\{R_k\}_{k\in K}$, $C(l_0)=1$. Using $C(L)\le C(\epsilon L)f(L)$,   $M=\mathcal{O}(\log L)$ times, s.t. $\epsilon^M L=l_0=:\ceil{\max\{l_{\text{min},1},l_{\text{min},2}\}}$ then gives
\begin{align}
    C(L)\leq C(l_0)\prod_{k=1}^Mf(\epsilon^kL)\leq C(l_0)\prod_{k=0}^\infty f(l_0\epsilon^{-k})<\infty,
\end{align}
where it can be checked by inspection that the infinite product converges to a constant since $\eta(\sqrt{L})$ 
is exponentially decaying in $\sqrt{L}$ and $(1+\frac{2}{\sqrt{L}})\to 1$ fast enough. Note that $C(l_0)=1$, independent of the $x_j$ which we fixed initially. Hence the result follows.
\end{proof}
\begin{proof}[Proof of \underline{Case 2} in Lemma \ref{lemma:OmegacMLSI}]
The proof follows exactly the same idea and techniques as the one above. The main difference is that we need to choose the overlap of the coverings $C,D$ to scale as $\mathcal{O}(L)$, where $L=\height{B_{x_j,L}}$, to still get decay in the approximate tensorization. Hence, the number of partitions has to be constant in system size, which yields a constant multiplicative factor in each inductive step. Since we again need $\mathcal{O}(\log(L))$ steps in the iteration to bound $C(L)$, this gives the $\mathcal{O}(L)$, where $L=\height(B_{x_k,L})$ of the original set, scaling of the bound on $C(L)$. The details of this proof can be found in \Cref{sec:proof_2ndMartinelli} for completeness.
\end{proof}

After this key lemma on approximate tensorization, in Sec. \ref{sec:puttingtogether} below we will generalise this statement to arbitrary states $\rho$ and the Davies maps instead of the Schmidt conditional expectation. This will then allow us to extend the existence of a local cMLSI constant, see \Cref{thm:finiteregioncMLSI}, to the whole lattice with the cost of $C(L)^{-1}$. 


\subsubsection{Extension to hypercubic lattices}\label{sec:hypercube}

The main idea behind the proof of Lemma \ref{lemma:OmegacMLSI}, where we  average over all suitable partitions of a given set to obtain the desired upper bound, naturally extends to hypercubic lattices ($D>1$). To adapt it, we need to adap the definitions of both the coarse-grained sets $\{R_k \}_k$ and the `suitable partitions'. This then yields the second case of Theorem \ref{thm:mainGap} and the first case of Theorem \ref{thm:mainTemp}.

First, for the $\{R_k \}_k$, instead of the definition in equation \eqref{def:R-k-sets} we have the following definition.
\begin{definition}[Construction of Coarse-grain sets for $D$-dim hypercubic lattice]\label{def:hypercubes}
Denote with $x+A$ the set $A\subset\mathbb{Z}^D$ shifted by $x\in\mathbb{Z}^D$. Then, define the sets 
\begin{align}
    \{R_k\}_{k\in K}:=&\{(x_k+[-l_0-1,l_0-1]^D\cup(\Gamma_0\cap(x_k+[l_0,l_0]^D)))\cap \Gamma\}_{k\in K}  \\=& \{(x+[-l_0-1,l_0-1]^D\cup(\Gamma_0\cap(x+[l_0,l_0]^D)))\cap \Gamma\}_{x\in \Gamma_0}, \label{def:R_kcubes}
\end{align} where again $l_0\in\mathbb{N}$ is a suitably large constant to be fixed later on. These are essentially jagged hyper-cubes of side-length $2l_0+1$ around a center $x_k\in\Gamma_0$.  An example for one of these jagged hyper-cubes in 2D is shown in \Cref{fig:ameba}.
\end{definition}

\begin{figure}[H]
    \centering
    \includegraphics[scale=0.3]{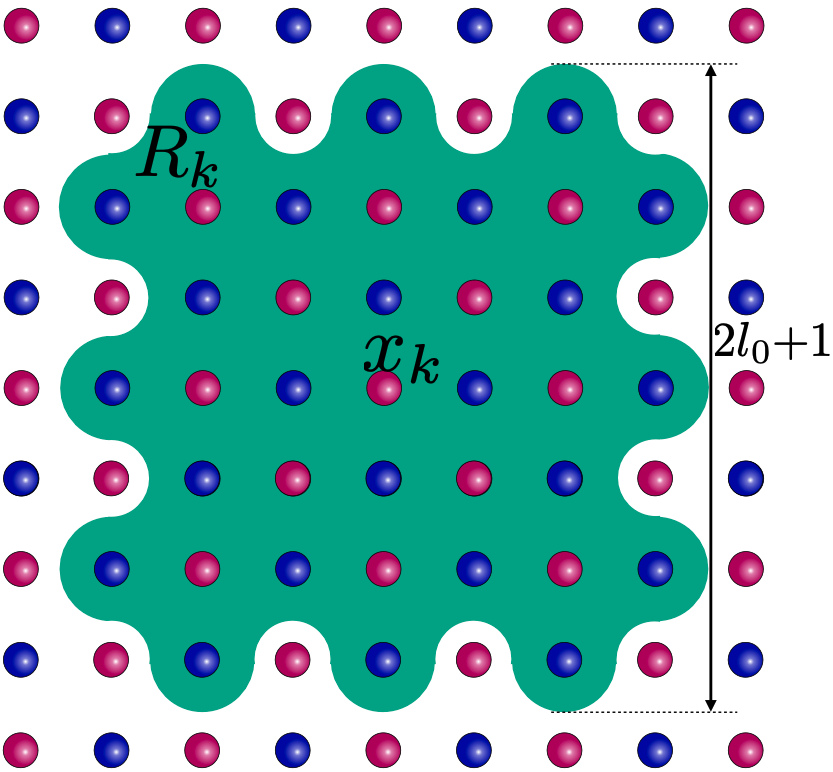}
    \caption{Example of a coarse-grain set from \Cref{def:hypercubes} in 2D. Here, we consider the \textcolor{purpledots}{$\Gamma_0$}, sites of index 0, to be the sites of \textcolor{purpledots}{purple} color, and \textcolor{bluedots}{$\Gamma_1$} those of \textcolor{bluedots}{blue}. The boundary of \textcolor{darkgreen2}{$R_k$} is completely composed of sites in purple. }
    \label{fig:ameba}
\end{figure}

For the suitable partitions $C_k,D_k$, the idea is to take jagged hyper-rectangles with sufficient overlap as the sets `$C_k,D_k$'. Then, we apply the approximate tensorization statement  $D $ times to get \eqref{equ:approxtens}, but with $2^D$ terms on the right hand side, where the conditioned relative entropy is on sub-regions with a strictly smaller diameter. The approximate tensorization statement we obtain then yields, through iterative application as in the proof of Lemma \ref{lemma:OmegacMLSI}, the analogous following Lemma.
\begin{lemma}\label{lemma:OmegacMLSIcubes} For sub-lattices of $\mathbb{Z}^D$ of form $x_j+[-L,L]^D$ for some $j\in K$, \eqref{def:C-Funktion} holds, with the  coarse-grained sets defined in \eqref{def:R_kcubes}, with  
\begin{enumerate}
     \item[\phantom{a}] \underline{\text{Case 1:}}  $C(L)=\mathcal{O}(1)_{|\Gamma|\to\infty}$ is uniformly upper bounded by a $C(\infty)<\infty$.
     \item[\phantom{a}] \underline{\text{Case 2:}}  $C(L)=\mathcal{O}(L)_{L\to\infty}=\mathcal{O}(\diam|\Gamma|)_{|\Gamma|\to\infty}=\mathcal{O}(|\Gamma|^{\frac{1}{D}})_{|\Gamma|\to\infty}$.
\end{enumerate}
\end{lemma}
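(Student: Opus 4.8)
The plan is to rerun the Cesaro-averaged divide-and-conquer argument behind \Cref{lemma:OmegacMLSI}, but performing the cuts one coordinate direction at a time. Fix $j\in K$ and set $\Gamma=x_j+[-L,L]^D$. Instead of splitting $\Gamma$ into a root-ball and a union of disjoint subtrees, we split it along the first coordinate axis into two \emph{jagged hyper-rectangles} $C,D$ --- jagged exactly in the sense of the coarse-grained sets of \Cref{def:hypercubes}, so that the three boundaries $\partial C,\partial D,\partial(C\cup D)$ lie entirely in $\Gamma_1$. By \Cref{prop:commutativityofSC} this makes $E^S_{C*},E^S_{D*},E^S_{(C\cup D)*}$ commute with $E^S_{\Gamma_0*}$, which is what licenses the use of \Cref{thm:ApproxTensorizationForOmega}. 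Taking the overlap of $C$ and $D$ to be a slab of width $l$ in the first coordinate and applying \Cref{thm:ApproxTensorizationForOmega} gives
\begin{align}
    D(\omega\|E^S_{\Gamma*}(\omega))\ \leq\ \frac{1}{1-2\eta(l)}\Big[D(\omega\|E^S_{C*}(\omega))+D(\omega\|E^S_{D*}(\omega))\Big].
\end{align}
Since $C$ and $D$ are themselves jagged hyper-rectangles, we recurse, splitting along the second axis, then the third, and so on; after one pass through all $D$ axes we have expressed $D(\omega\|E^S_{\Gamma*}(\omega))$ in terms of $2^D$ jagged hyper-rectangles, each with every side-length reduced, at the price of a factor $\prod_{i=1}^{D}(1-2\eta(l_i))^{-1}$, where $l_i$ is the overlap width used along axis $i$. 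Taking each $l_i\ge l_{\min}$ for a fixed $l_{\min}$ with $\eta(l_{\min})<\tfrac12$ keeps this factor bounded by a $D$-dependent constant.

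Next, exactly as in \Cref{lemma:OmegacMLSI}, we average. Along each axis there are $\Theta(L/l)$ admissible positions for the splitting slab such that slabs of distinct choices are pairwise disjoint; averaging the corresponding approximate-tensorization inequalities and invoking monotonicity of $D_{\bullet}(\omega)$, the near-additivity \eqref{eq:additiverels}, and the fact that each coarse-grained set $R_k$ from \eqref{def:R_kcubes} meets only $\mathcal{O}(1)$ of these slabs, replaces the sum over the $2^D$ pieces by $C(\text{reduced linear size})\cdot(1+\mathcal{O}(l/L))\,D_\Gamma(\omega)$, just as in the one-dimensional computation. This yields a recursion of the shape
\begin{align}
    C(L)\ \leq\ C(\epsilon L)\cdot\Big(\tfrac{1+\mathcal{O}(l/L)}{1-2\eta(l)}\Big)^{D},
\end{align}
with $l=\lfloor\sqrt L\rfloor$ and $\epsilon<1$ fixed in \underline{Case 1}, and with $l=\Theta(L)$ (small enough to fit a suitable constant number $m$ of disjoint slabs per direction, with reduction factor $\epsilon<1$ taken close to $1$) in \underline{Case 2}. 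Unrolling down to the base scale $l_0$, at which $C(l_0)=1$ by construction of the $\{R_k\}$: in \underline{Case 1} the overlaps grow like $\sqrt L$, so $\eta(\sqrt L)$ decays faster than any power, and the telescoped product $\prod_{k\ge0}\big(\tfrac{1+\mathcal{O}(1/\sqrt{L_k})}{1-2\eta(\sqrt{L_k})}\big)^{D}$ over scales $L_k=l_0\epsilon^{-k}$ converges, giving $C(L)=\mathcal{O}(1)$ uniformly in $L$. In \underline{Case 2} there are $\mathcal{O}(\log L)$ steps, each contributing a constant factor $A^{D}$ with $A=1+\mathcal{O}(1/m)$; choosing $\epsilon$ and $m$ so that $A^{D}\le 1/\epsilon$ --- possible since $A\downarrow1$ as $m\to\infty$ --- forces $C(L)\le(1/\epsilon)^{\log_{1/\epsilon}L}\cdot\mathcal{O}(1)=\mathcal{O}(L)$, and $\mathcal{O}(L)=\mathcal{O}(\diam\Gamma)=\mathcal{O}(|\Gamma|^{1/D})$ because $|\Gamma|=\Theta(L^{D})$; the bounded product $\prod_k(1-2\eta(l_k))^{-D}$ only adds an $\mathcal{O}(1)$ factor.

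The main obstacle is the $D$-dimensional geometric bookkeeping in the second paragraph: one must exhibit, simultaneously for all $D$ axes and at every level of the recursion, jagged hyper-rectangular $C,D$ whose three relevant boundaries lie in $\Gamma_1$, whose overlap slabs across distinct translates are pairwise disjoint, and which decompose into unions of the fixed coarse-grained sets $\{R_k\}$ of \Cref{def:hypercubes}, so that $D_{\bullet}(\omega)$ is additive up to boundary terms and the base case $C(l_0)=1$ is exact. The quantitatively delicate point is \underline{Case 2}: there the per-step factor $A^{D}$ must be squeezed below $1/\epsilon$, which is what upgrades a generic $\mathcal{O}(L^{c})$ bound into the sharp $\mathcal{O}(L)$ one; this is paid for by averaging over enough slabs per direction and by letting the overlap be a vanishing-in-$L$ fraction of the regions --- precisely the content of the hypotheses $\diam(C),\diam(D)=\mathcal{O}(l)$, together with the small-correlation-length condition \eqref{def:treecondition} on exponential graphs. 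All remaining steps are identical to those of \Cref{lemma:OmegacMLSI} and the hypercubic construction of \cite{art:2localPaper}.
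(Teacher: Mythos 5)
Your proposal follows essentially the same route as the paper, which itself only sketches this lemma: jagged hyper-rectangular cuts along each of the $D$ coordinate axes (chosen so that $\partial C,\partial D,\partial(C\cup D)$ avoid $\Gamma_0$ and the conditional expectations commute with $E^S_{\Gamma_0*}$), $D$ successive applications of \Cref{thm:ApproxTensorizationForOmega} producing $2^D$ terms, Cesaro averaging over disjoint slab positions, and the same recursion for $C(L)$ as in \Cref{lemma:OmegacMLSI}. If anything you supply more detail than the paper does — in particular the explicit requirement $A^{D}\le 1/\epsilon$ on the per-step factor needed to turn the $\mathcal{O}(\log L)$-step recursion of Case 2 into the stated $\mathcal{O}(L)$ bound, a point the paper leaves implicit.
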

For the $2$-D graph in \underline{Case 2}, which is the only use of this \underline{Case 2} in our main results, this gives us
$C(L)=\mathcal{O}(L)_{L\to\infty}=\mathcal{O}(\sqrt{|\Gamma|})_{|\Gamma|\to\infty}$. \underline{Case 1} instead yields the constant MLSI at high temperature. 

\subsubsection{Putting everything together}\label{sec:puttingtogether}
In the last steps, we generalise the result above for the state $\omega=E_{\Gamma_0}^S(\rho)$ and the Schmidt conditional expectations to arbitrary states $\rho$ and to the Davies conditional expectations. This finalises the bounds on the MLSI for the Davies generators.

\begin{proof}[Proof of \Cref{thm:technicalMainThm} (and hence of \cref{thm:mainGap} and \Cref{thm:mainTemp})]

Let $(\Lambda,\Phi,\beta)$ be as assumed in \Cref{thm:technicalMainThm} 
, $\Lambda = \{\Lambda_0,\Lambda_1\}$ be a two colouring, and $\{E^S_\Gamma\}_{\Gamma\subset\subset\Lambda}, \{E^D_\Gamma\}_{\Gamma\subset\subset\Lambda}$ the families of Schmidt and Davies conditional expectations. 
Let $\Gamma\subset\subset\Lambda$\footnote{e.g. for the $b-$ary tree, $b\in\mathbb{N}$, wlog $\Gamma=B_{0,L}$.} be a complete connected finite subgraph, where we set $\{x_k\}_{k\in K}=\Gamma_0:=\Gamma\cap\Lambda_0$, and set $\omega:=E^S_{\Gamma_0}(\rho)$ for a state $\rho\in\mathcal{D}(\mathcal{H}_\Gamma)$.  
We first apply the chain rule for the relative entropy \eqref{def:relentChainRule}, with $\sigma\equiv\sigma^\Gamma=E^D_{\Gamma*}(\rho)=E^S_{\Gamma_0*}(\sigma)$, so that
\begin{align}
    D(\rho\|E^D_{\Gamma*}(\sigma))=D(\rho\|\sigma) = D(\rho\|E^S_{\Gamma_0*}(\rho)) + D(E^S_{\Gamma_0*}(\rho)\|\sigma) = D(\rho\|\omega)+ D(\omega\|\sigma). 
\end{align}
The first summand $D(\rho\|\omega)$ satisfies exact tensorization (a form of strong subadditivity), since the Schmidt conditional expectations of two sets with distance two between each other commute, $E^S_{\{x_k\}*}\circ E^S_{\{x_j\}*}= E^S_{\{x_j\}*}\circ E^S_{\{x_k\}*} = E^S_{\{x_k\}\cup \{x_j\}*}$ for all $x_k,x_j\in\Gamma_0$. See Proposition \ref{prop:commutativityofSC}  and \cite{art:CompleteEntropicInequalities_GaoRouze_2022, art:Petz_RelEntropyAdditivity_1991}. This means that
\begin{align}
    D(\rho\|\omega)=D(\rho\|E^S_{\Gamma_0*}(\rho)) \leq \sum_{x_k\in\Gamma_0}D(\rho\|E^S_{\{x_k\}*}(\rho)) \overset{\text{Lemma }\ref{lem:RelEntCondExpBound}}{\leq}  \sum_{x_k\in\Gamma_0}D(\rho\|E^D_{\{x_k\}\partial *}(\rho)).
\end{align}
We can also bound the second summand $D(\omega\|\sigma)$ as follows, using Lemma \ref{lemma:OmegacMLSI}, the DPI for the relative entropy, and finally Lemma \ref{lem:RelEntCondExpBound}.
\begin{align}
    D(\omega\|\sigma) = D(\omega\|E^S_{\Gamma*}(\omega)) &\overset{\text{Lemma }\ref{lemma:OmegacMLSI}}{\leq} C\sum_{x_k\in\Gamma_0}D(\omega\|E^S_{R_k*}(\omega)) = C\sum_{x_k\in\Gamma_0}D(E^S_{\Gamma_0*}(\rho)\|(E^S_{R_k*}\circ E^S_{\Gamma_0*})(\rho)) \\ &\overset{DPI}{\leq} C\sum_{x_k\in\Gamma_0}D(\rho\|E^S_{R_k*}(\rho)) \overset{\text{Lemma }\ref{lem:RelEntCondExpBound}}{\leq} C\sum_{x_k\in\Gamma_0}D(\rho\|E^D_{R_k\partial*}(\rho)),
\end{align} where in the second line we used that $E^S_{R_k*}\circ E^S_{\Gamma_0*}=E^S_{\Gamma_0*}\circ E^S_{R_k*}$, which holds by the construction of the sets $R_k$ in \eqref{def:R-k-sets}.\footnote{This is since $\partial R_k\cap\Gamma_0=\emptyset$. Recall that $\Gamma_0$ is the union of single vertices each with distance 2 from each other. Hence the claim follows from Proposition \ref{prop:commutativityofSC}.}
Recall that by Lemma \ref{lemma:OmegacMLSI} in \textit{\underline{Case 1}}  the constant $C$ is independent of system size, whereas in \textit{\underline{Case 2}} for trees it scales logarithmically with system size.
Importantly, the regions $\{\{x_k\}\partial,R_j\partial\}_{k,j}$ are of fixed finite size. Hence by \Cref{thm:finiteregioncMLSI} there exists cMLSI constants $\alpha_0,\alpha_1>0$, s.t. for any $j,k$
\begin{align}
    \alpha_0D\left(\rho\|E^D_{\{x_k\}\partial*}(\rho)\right) \leq \EP_{\mathcal{L}^D_{\{x_k\}\partial}}(\rho), \hspace{1cm} \alpha_1D\left(\rho\|E^D_{R_j\partial*}(\rho)\right) \leq \EP_{\mathcal{L}^D_{R_j\partial}}(\rho).
\end{align}
Putting everything above together, we have 
\begin{align}
    D(\rho\|E^D_{\Gamma*}(\rho) = D(\rho\|\sigma) &\leq \frac{1}{\min\{\alpha_0,\alpha_1\}}\left(\sum_{x_k\in\Gamma_0}\EP_{\mathcal{L}^D_{\{x_k\}\partial}}(\rho) +C\sum_{k|x_k\in\Gamma_0}\EP_{\mathcal{L}^D_{R_k\partial}}(\rho)\right) \\ &\leq \frac{2mC}{\min\{\alpha_0,\alpha_1\}}\EP_{\mathcal{L}^D_{\Gamma}}(\rho),
\end{align} where in the last inequality we used the positivity and additivity of the entropy production and the fact that, by construction, each site $x\in\Gamma$ is contained in at most a constant number, say $2m$, of regions $R_k\partial$, since they are of fixed finite size. The same holds for $\{x_k\}\partial$ with $\leq2m$ regions. Thus, it follows that
\begin{align}
    \alpha(\mathcal{L}^D_{\Gamma})\geq \frac{\min\{\alpha_0,\alpha_1\}}{2mC} > 0,
\end{align}
so that the final scaling is $\Omega(C^{-1})$.
Therefore in \textit{\underline{Case 1}} from Lemma \ref{lemma:OmegacMLSI} we get $\alpha(\mathcal{L}^D_{\Gamma})= \mathcal{O}(1)_{|\Gamma|\to\infty}$ whereas in \textit{\underline{Case 2}} we get $\alpha(\mathcal{L}^D_{\Gamma})= \Omega((\diam|\Gamma|)^{-1})_{|\Gamma|\to\infty}$. Note that for trees $\diam(B_{0,L})=\mathcal{O}(\log|B_{0,L}|)$.
\end{proof}

\section{Further applications}\label{sec:applications}

We have established a constant lower bound on the cMLSI as our main result for subexponential graphs. Here we explore the consequences of the system size independence of this bound for physical and statistical properties of such systems under dissipative evolution and of its steady state $\sigma$. 
Assume $(\Lambda,\Phi,\beta)$ suitable, such that the system admits a system size independent lower bound on the Davies cMLSI constant $\alpha(\mathcal{L}^D_\Gamma)$ as in \ref{thm:main1D}.

\subsection{Exponential convergence to Gibbs states in the thermodynamic limit}
A direct consequence of the spin-system size $|\Gamma|$ independence of the cMLSI constant $\alpha(\Li^D_\Gamma)$ is that in the thermodynamic limit we have an exponential decay of the relative entropy density between the initial state at time $t$ and the thermal state, with a fixed decay rate $\alpha$. This illustrates how quantum materials in the thermodynamic limit thermalize when weakly coupled to an external environment. 
\begin{corollary}
For local Hamiltonians $D(\rho_t\|\sigma^\Gamma)=\mathcal{O}(|\Gamma|)$, and hence
\begin{align}
    \lim_{\Gamma\uparrow\Lambda}\frac{1}{|\Gamma|}D(\rho_t\|\sigma^\Gamma) \leq e^{-\alpha t}\lim_{\Gamma\uparrow\Lambda}\frac{1}{|\Gamma|}D(\rho\|\sigma^\Gamma).
\end{align}
\begin{proof}
For a local Hamiltonian we have $\|H_\Gamma\|\leq\sum_{\underset{\text{diam}(X)\leq r}{X\subset\Gamma}}\|\Phi_X\| = \mathcal{O}(|\Gamma|)$ and hence
\begin{align}
   d^{|\Gamma|}e^{-\beta\mathcal{O}(|\Gamma|)} = \Tr[\1e^{-\beta\|H\|}]\leq \Tr[e^{-\beta H}]\leq \Tr[\1e^{\beta\|H\|}] = d^{|\Gamma|}e^{\beta\mathcal{O}(|\Gamma|)},
\end{align} hence taking logarithms gives $\log Z_\Gamma= \log \Tr[e^{-\beta H_\Gamma}]=\mathcal{O}(|\Gamma|)$ and we can bound
\begin{align}
    D(\rho\|\sigma^\Gamma) \leq -\Tr\left[\rho\log\frac{e^{-\beta H}}{Z_\Gamma}\right] =\log Z_\Gamma+\beta\Tr[\rho H_\Gamma] =\mathcal{O}(|\Gamma|).
\end{align}
The result follows directly from MLSI with system size independent constant  $\alpha$ by dividing through $|\Gamma|$ and taking the limit.
\end{proof}
\end{corollary}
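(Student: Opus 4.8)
The plan is to first establish the extensivity bound $D(\rho\|\sigma^\Gamma)=\mathcal{O}(|\Gamma|)$ \emph{uniformly} over all states $\rho$, and then feed this into the system-size independent MLSI of the main results. For the first part I would use non-negativity of the von Neumann entropy to write $D(\rho\|\sigma^\Gamma)=-S(\rho)-\Tr[\rho\log\sigma^\Gamma]\le -\Tr[\rho\log\sigma^\Gamma]$, and then substitute $-\log\sigma^\Gamma=\beta H_\Gamma+(\log Z_\Gamma)\1$ with $Z_\Gamma=\Tr[e^{-\beta H_\Gamma}]$, reducing the problem to bounding $\|H_\Gamma\|$ and $\log Z_\Gamma$ separately.

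For $\|H_\Gamma\|$: since $\Phi$ is a uniform, $J$-bounded, geometrically-$r$-local potential on a graph of finite growth constant, each vertex lies in only a bounded number of supports $X$ with $\Phi_X\neq 0$ (the bound depending on $J,r,\nu,d$ but not on $\Gamma$), so $\|H_\Gamma\|\le\sum_{X\subset\Gamma}\|\Phi_X\|\le J\,|\{X:\Phi_X\neq0\}|=\mathcal{O}(|\Gamma|)$. For $\log Z_\Gamma$: sandwiching $\Tr[e^{-\beta H_\Gamma}]$ between $d^{|\Gamma|}e^{-\beta\|H_\Gamma\|}$ and $d^{|\Gamma|}e^{\beta\|H_\Gamma\|}$ and taking logarithms gives $\log Z_\Gamma=|\Gamma|\log d+\mathcal{O}(|\Gamma|)=\mathcal{O}(|\Gamma|)$. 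Combining, $D(\rho\|\sigma^\Gamma)\le\beta\|H_\Gamma\|+\log Z_\Gamma=\mathcal{O}(|\Gamma|)$, and in particular $D(\rho_t\|\sigma^\Gamma)=\mathcal{O}(|\Gamma|)$ for all $t\ge0$. I would then invoke the MLSI with a system-size independent constant $\alpha$ (\Cref{thm:main1D} in $1$D, or \Cref{thm:mainTemp} at high temperature), which yields $D(\rho_t\|\sigma^\Gamma)=D(e^{t\mathcal{L}^D_{\Gamma*}}(\rho)\|E^D_{\Gamma*}(\rho))\le e^{-\alpha t}D(\rho\|\sigma^\Gamma)$, using that $E^D_{\Gamma*}(\rho)=\sigma^\Gamma$ by primitivity of $\mathcal{L}^D_\Gamma$ on the finite region $\Gamma$. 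Dividing by $|\Gamma|$ and passing to the limit $\Gamma\uparrow\Lambda$ along an increasing exhaustion gives the claimed inequality between relative entropy densities, the $\mathcal{O}(|\Gamma|)$ bound ensuring these densities stay finite.

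The only genuine subtlety — and it is where all the real work of the paper sits — is that $\alpha$ must be truly $|\Gamma|$-independent; otherwise the rate would degrade in the thermodynamic limit and the statement would be vacuous. That independence is exactly the content of the main theorems, so here I would simply quote it. A minor technical point is that the limit of the relative entropy density need not exist for an arbitrary sequence of initial data: if one wants a statement without existence assumptions one reads the left-hand side as a $\limsup$ and the right-hand side as a $\liminf$, or restricts to translation-invariant initial states where existence of the limiting density is standard.
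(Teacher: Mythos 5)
Your proof is correct and follows essentially the same route as the paper: bound $\|H_\Gamma\|$ linearly via locality, sandwich $\log Z_\Gamma=\mathcal{O}(|\Gamma|)$, drop $-S(\rho)\le 0$ to get $D(\rho\|\sigma^\Gamma)\le\beta\|H_\Gamma\|+\log Z_\Gamma=\mathcal{O}(|\Gamma|)$, and then divide the integrated MLSI by $|\Gamma|$. Your closing caveat about reading the limits as $\limsup$/$\liminf$ when existence is not guaranteed is a reasonable refinement the paper leaves implicit.
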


\subsection{Local Mixing}

The relative entropy $D(e^{t\mathcal{L}}(\rho) \| \sigma )$, as it appears in the rapid mixing results, is rarely an adequate figure of merit in many-body experiments, since it is associated with expectation values of observables with support across the whole system. Even if two states are very different in relative entropy, that difference may be invisible to reasonable read-out capabilities.

Instead, the measurements that typically take place there are of observables with support on a small number of subsystems, such as local magnetization or currents. These, however, are always expected to thermalize in $\mathcal{O}(1)$ time irrespective of the initial conditions, much faster than global observables. 

That this is the case can be shown to follow already from the relative entropy decay. The proof only requires rapid mixing with a constant rate and the Lieb-Robinson bound for local Lindbladians which, under the assumption that $N(l) = e^{o(l)}$, reads as follows \cite{art:BarthelKliesch}.
\begin{theorem}
Let $\mathcal{L}$ be the Lindbladian defined on the whole lattice $\Lambda$. Also, let $Y$ be a region of the lattice, and $V_l$ a larger region such that $Y \subset V_l \subset \Lambda$ and $l \equiv \textup{dist}(Y, \Lambda\setminus V_l)$. For any observable $O_Y$ with $\textup{supp}(O_Y)\subset Y$,
\begin{equation}
    \| e^{t \mathcal{L}^\dagger} (O_Y)- e^{t \mathcal{L}_{V_l}^\dagger} (O_Y)  \| \le  \| O_Y \| e^{-\Omega (l-vt)},
\end{equation}
where $\mathcal{L}_{V_l}=\sum_{\textup{supp}(\mathcal{L}_X) \subset V_l} \mathcal{L}_X $.
\end{theorem}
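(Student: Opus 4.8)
The plan is to derive this as the standard Lieb--Robinson bound for local Lindbladians, in the form proved in \cite{art:BarthelKliesch}; what follows is the route I would take. First I would write the difference of the two evolutions through the interpolation (Duhamel) identity
\begin{align}
e^{t\mathcal{L}^\dagger}(O_Y) - e^{t\mathcal{L}_{V_l}^\dagger}(O_Y) = \int_0^t e^{(t-s)\mathcal{L}^\dagger}\big(\mathcal{L}^\dagger - \mathcal{L}_{V_l}^\dagger\big)\big(e^{s\mathcal{L}_{V_l}^\dagger}(O_Y)\big)\,ds \, .
\end{align}
Since $e^{(t-s)\mathcal{L}^\dagger}$ is a unital completely positive map, it is a contraction in operator norm (Russo--Dye/Kadison), so
\begin{align}
\big\| e^{t\mathcal{L}^\dagger}(O_Y) - e^{t\mathcal{L}_{V_l}^\dagger}(O_Y) \big\| \le \int_0^t \big\| \big(\mathcal{L}^\dagger - \mathcal{L}_{V_l}^\dagger\big)\big(e^{s\mathcal{L}_{V_l}^\dagger}(O_Y)\big) \big\|\,ds \, .
\end{align}

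Next I would observe that $\mathcal{L}^\dagger - \mathcal{L}_{V_l}^\dagger = \sum_{X} \mathcal{L}_X^\dagger$, the sum running over interaction terms whose support is \emph{not} contained in $V_l$. Since the family is geometrically $\tilde r$-local with $\|\mathcal{L}_{X*}\|_{1\to1}\le \tilde J$, each such $\mathrm{supp}(\mathcal{L}_X)$ lies at distance at least $l-\tilde r$ from $Y$, and $\mathcal{L}_X^\dagger(A)$ depends on $A$ only through its restriction to $\mathrm{supp}(\mathcal{L}_X)$, vanishing when $A$ is trivial there. Hence the integrand quantifies how far $e^{s\mathcal{L}_{V_l}^\dagger}(O_Y)$ has leaked to the boundary region $\Lambda\setminus V_l$, and this is controlled by iterating the same Duhamel step: expanding $e^{s\mathcal{L}_{V_l}^\dagger}(O_Y)$ into the Lieb--Robinson series, the term requiring $n$ ``hops'' of interaction supports to connect $Y$ to $\mathrm{supp}(\mathcal{L}_X)$ contributes at most $(2\tilde J)^n s^n/n!$ times the number of connected sequences of $n$ supports of diameter $\le\tilde r$ anchored in $Y$. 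By the finite growth constant this count is bounded by $|Y|\,\nu^{\mathcal{O}(n)}$, and the number of boundary terms in the outer sum is $e^{o(l)}$ by the hypothesis $N(l)=e^{o(l)}$. Only sequences with $n\gtrsim (l-\tilde r)/\tilde r$ can reach the boundary, so summing this tail and using $\sum_{n\ge n_0}(cs)^n/n!$ yields
\begin{align}
\big\| e^{t\mathcal{L}^\dagger}(O_Y) - e^{t\mathcal{L}_{V_l}^\dagger}(O_Y) \big\| \le \|O_Y\|\, e^{-\Omega(l - v t)} \, ,
\end{align}
with Lieb--Robinson velocity $v = \mathcal{O}(\tilde J\nu)$ (more precisely $v$ depends on $\tilde J$, $\tilde r$ and the growth of $N(\cdot)$), and all constants independent of $\Gamma$ and of the choice of $V_l$.

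The hard part will be the bookkeeping of this combinatorial series: bounding the number of connected sequences of interaction supports and verifying that the factorial suppression $1/n!$ dominates their sub-exponential proliferation, so that one obtains a genuine linear light cone rather than merely a polynomial tail. This is exactly where $N(l)=e^{o(l)}$ (equivalently, finite growth constant) enters and is essential; on an exponential graph the naive version of the estimate degrades, which is why the statement is flagged under that hypothesis. Since this is all classical Lieb--Robinson technology, I would in the write-up simply reduce to the theorem of \cite{art:BarthelKliesch} after checking that our uniform geometric-locality assumptions on $\{L_X\}$ match its hypotheses, and spell out only the two-term Duhamel reduction above.
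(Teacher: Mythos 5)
This theorem is not proved in the paper at all: it is imported verbatim from the cited reference \cite{art:BarthelKliesch} (with the remark that $N(l)=e^{o(l)}$ suffices in place of polynomial growth), so there is no in-paper argument to compare against. Your Duhamel interpolation plus the combinatorial tail estimate over connected sequences of interaction supports is exactly the standard route taken in that reference, and your sketch of it — including the contraction property of unital CP maps, the vanishing of $\mathcal{L}_X^\dagger$ on observables trivial on $\mathrm{supp}(\mathcal{L}_X)$, and the role of the growth constant in taming the $\nu^{\mathcal{O}(n)}$ proliferation against the $1/n!$ — is correct in all essentials, so deferring the remaining bookkeeping to \cite{art:BarthelKliesch} is the right call.
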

This means that the time evolution of $O_Y$ is to a good approximation generated by the jump operators around its vicinity. Notice that in \cite{art:BarthelKliesch} it is assumed that $N(l)=\text{poly}(l)$, but sub-exponential is also enough. With this, the result on local thermalization is as follows.

\begin{corollary}\label{co:co1}
    Assuming that the system has rapid mixing with constant decay rate $\alpha$, as in Corollary \ref{thm:main1D}, we have that, for any region $A$, 
\begin{equation}\label{eq:localrapid}
    D( \tr_{A^c}[e^{t \mathcal{L}}(\rho)] \| \sigma_A ) \le \mathcal{O}\left ( \vert A \vert + t^{\kappa+1} \right) \times e^{-\alpha t},
\end{equation}
    where $\sigma_A=\tr_{A^c}[\sigma]$ is the local marginal of the Gibbs state.
\end{corollary}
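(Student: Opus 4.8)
The plan is to split the local relative entropy into two contributions: one coming from the global relative entropy decay (which the rapid mixing hypothesis controls), and one coming from the error made when we replace the global evolution by a truncated one supported near $A$, which the Lieb--Robinson bound controls. Concretely, fix a region $A$ and an enlarged region $V_l$ with $A\subset V_l\subset\Lambda$ and $l\equiv\mathrm{dist}(A,\Lambda\setminus V_l)$; the parameter $l$ will be optimized at the end as a function of $t$. Write $\rho_t:=e^{t\mathcal{L}_*}(\rho)$ and let $\tilde\rho_t:=e^{t\mathcal{L}_{V_l *}}(\rho)$ be the state evolved under the truncated Lindbladian $\mathcal{L}_{V_l}=\sum_{\mathrm{supp}(\mathcal{L}_X)\subset V_l}\mathcal{L}_X$. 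The triangle-type inequality I would use is
\begin{align}
 D(\tr_{A^c}[\rho_t]\,\|\,\sigma_A) \;\le\; D(\tr_{A^c}[\rho_t]\,\|\,\tr_{A^c}[\tilde\rho_t]) \;+\; \big|D(\tr_{A^c}[\rho_t]\,\|\,\sigma_A)-D(\tr_{A^c}[\tilde\rho_t]\,\|\,\sigma_A)\big| \,,
\end{align}
or, more cleanly, bound the first term by the global relative entropy of the truncated evolution and the second term by a continuity estimate in trace distance. For the first piece: by the data processing inequality under the partial trace, $D(\tr_{A^c}[\tilde\rho_t]\|\sigma_A)\le D(\tilde\rho_t\|\sigma)$, but $\tilde\rho_t$ is not the globally evolved state; instead I would note that the Gibbs state $\sigma$ is (approximately) stationary for $\mathcal{L}_{V_l}$ only up to boundary terms, so it is cleaner to go via $\rho_t$ itself: by DPI, $D(\tr_{A^c}[\rho_t]\|\sigma_A)\le D(\rho_t\|\sigma)\le e^{-\alpha t}D(\rho\|\sigma)=\mathcal{O}(|\Lambda|)e^{-\alpha t}$ — but this has the wrong ($|\Lambda|$ instead of $|A|$) prefactor, which is exactly the point of the corollary. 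So the genuine work is to \emph{localize} the prefactor.

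The mechanism to localize is the Lieb--Robinson bound, used in the Heisenberg picture: for any observable $O_A$ with $\|O_A\|\le 1$ and $\mathrm{supp}(O_A)\subset A$,
\begin{align}
 \big|\Tr[\rho_t O_A] - \Tr[\tilde\rho_t O_A]\big| \;=\; \big|\Tr[\rho\,(e^{t\mathcal{L}^\dagger}(O_A) - e^{t\mathcal{L}_{V_l}^\dagger}(O_A))]\big| \;\le\; e^{-\Omega(l-vt)}\,,
\end{align}
hence $\|\tr_{A^c}[\rho_t]-\tr_{A^c}[\tilde\rho_t]\|_1\le e^{-\Omega(l-vt)}$. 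Now I would estimate $D(\tr_{A^c}[\tilde\rho_t]\|\sigma_A)$ directly: since $\mathcal{L}_{V_l}$ acts nontrivially only on $V_l$, the reduced state $\tr_{A^c}[\tilde\rho_t]$ depends on $\rho$ only through $\tr_{V_l^c}[\rho]$, and — using that $\mathcal{L}_{V_l}$ is itself a quantum Gibbs sampler whose restriction to $V_l$ has the same structure — one can bound $D(\tr_{A^c}[\tilde\rho_t]\|\sigma_A)\le D(\tilde\rho_t|_{V_l}\|\sigma_{V_l})\le e^{-\alpha t}D(\rho|_{V_l}\|\sigma_{V_l})=\mathcal{O}(|V_l|)e^{-\alpha t}$, invoking the local-Hamiltonian bound $D(\cdot\|\sigma_{V_l})=\mathcal{O}(|V_l|)$ from the preceding corollary. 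Combining, and using a continuity bound for relative entropy in terms of trace distance (e.g. an Audenaert--Fannes-type estimate, which introduces at most a factor $\mathcal{O}(\log\dim\mathcal{H}_A)=\mathcal{O}(|A|)$ and an additive binary-entropy term controlled by $e^{-\Omega(l-vt)}$), gives
\begin{align}
 D(\tr_{A^c}[\rho_t]\|\sigma_A) \;\le\; \mathcal{O}(|V_l|)\,e^{-\alpha t} \;+\; \mathcal{O}(|A|)\,e^{-\Omega(l-vt)} \;+\; e^{-\Omega(l-vt)}\,.
\end{align}
Finally I would choose $l = v t + \kappa'\,(\alpha/\Omega(1))\, t$ with $\kappa'$ large enough that $e^{-\Omega(l-vt)}\le e^{-\alpha t}$; then $|V_l|\le N(l)\cdot|A| \le |A|\,e^{o(l)} = |A|\,e^{o(t)}$ for a sub-exponential graph, but to get the clean polynomial form stated I would instead take $V_l$ to be the set of sites within distance $l$ of $A$, so $|V_l|=\mathcal{O}(|A|+ l^{\,\kappa}) = \mathcal{O}(|A|+t^{\,\kappa})$ where $\kappa$ is the (polynomial) growth exponent of the lattice; the extra factor of $t$ from optimizing $l\sim t$ against the $e^{-\Omega(l-vt)}$ term accounts for the $t^{\kappa+1}$ in \eqref{eq:localrapid}. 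This yields $D(\tr_{A^c}[e^{t\mathcal{L}}(\rho)]\|\sigma_A)\le \mathcal{O}(|A|+t^{\kappa+1})\,e^{-\alpha t}$.

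The main obstacle I anticipate is the second term — controlling $D(\tr_{A^c}[\tilde\rho_t]\|\sigma_A)$ and passing from the Lieb--Robinson \emph{trace-distance} closeness of $\tr_{A^c}[\rho_t]$ and $\tr_{A^c}[\tilde\rho_t]$ to a \emph{relative-entropy} statement. Relative entropy is not continuous in trace distance in general (it can blow up when the smaller argument is nearly singular), so one must use that the second argument here is the fixed full-rank state $\sigma_A$, whose minimal eigenvalue is bounded below by $e^{-\mathcal{O}(|A|)}$; this is precisely what forces the $\mathcal{O}(|A|)$ (rather than $\mathcal{O}(1)$) prefactor and why a quantitative continuity bound such as the Audenaert--Fannes/Winter inequality for relative entropy with a fixed reference state is needed. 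A secondary technical point is justifying $D(\tilde\rho_t|_{V_l}\|\sigma_{V_l})\le e^{-\alpha t}D(\rho|_{V_l}\|\sigma_{V_l})$: this requires that the truncated generator $\mathcal{L}_{V_l}$, viewed as a Davies-type generator on $V_l$ with appropriate boundary conditions, also satisfies an MLSI with a system-size-independent constant, which follows from the same main theorem applied to $V_l$ (possibly with a different but still $\Omega(1)$ constant), so this is routine given the hypotheses but should be stated carefully.
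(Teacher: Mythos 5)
Your overall architecture (Lieb--Robinson localization, continuity of relative entropy against the fixed full-rank reference $\sigma_A$ with the $\log\|\sigma_A^{-1}\|=\mathcal{O}(|A|)$ factor, and the optimization $l\propto t$) matches the paper's, but the step that actually localizes the prefactor is different from the paper's and, as written, has a genuine gap. You truncate the \emph{generator} and claim
$D(\tr_{A^c}[\tilde\rho_t]\|\sigma_A)\le D(\tilde\rho_t|_{V_l}\|\sigma_{V_l})\le e^{-\alpha t}D(\rho|_{V_l}\|\sigma_{V_l})=\mathcal{O}(|V_l|)e^{-\alpha t}$.
This does not follow from the MLSI for $\mathcal{L}_{V_l}$. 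The truncated Lindbladian $\mathcal{L}_{V_l}$ is not primitive: its fixed-point projection is the conditional expectation $E_{V_l*}$, so the MLSI gives decay of $D(\tilde\rho_t\|E_{V_l*}(\rho))$, not of a relative entropy with second argument $\sigma$ or $\sigma_{V_l}$. The limit state $E_{V_l*}(\rho)$ remembers the initial data of $\rho$ outside $V_l$ and "Gibbs-ifies" the interior \emph{conditioned on that boundary data}; its marginal on $A$ is not $\sigma_A$. To repair your route you would need an additional local-indistinguishability argument to show $\tr_{A^c}[E_{V_l*}(\rho)]$ is $e^{-\Omega(l)}$-close to $\sigma_A$, plus a second continuity step — none of which you supply.

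The paper sidesteps this entirely by modifying the \emph{initial state} instead of the generator: it runs the \emph{full} dynamics started from $\rho_{AB}\otimes\tau_C$, where $B$ is a shell of radius $l_0$ around $A$ and $\tau_C=e^{-\beta H_C}/Z_C$. The point is that this state already agrees with $\sigma$ far from $A$ up to boundary terms, so a direct computation gives $D(\rho_{AB}\otimes\tau_C\|\sigma)\le\mathcal{O}(|A|+|B|)$; global rapid mixing (whose fixed point genuinely is $\sigma$) then yields $D(e^{t\mathcal{L}}(\rho_{AB}\otimes\tau_C)\|\sigma)\le\mathcal{O}(|A|+|B|)e^{-\alpha t}$, and DPI pushes this down to $A$ against $\sigma_A$. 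The Lieb--Robinson bound is used only to show that the marginals on $A$ of the two full evolutions (started from $\rho$ and from $\rho_{AB}\otimes\tau_C$) agree up to $e^{-\Omega(l_0-vt)}$, with the truncated evolution appearing merely as an intermediate comparison state in a triangle inequality. This is the key idea your proposal is missing: apply the global entropy decay to a surrogate initial state with small global relative entropy, rather than trying to extract a local entropy decay from a truncated generator whose stationary structure is not the Gibbs state.
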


\begin{proof}
Our starting point is a local Lindbladian of the form of Eq. \eqref{eq:locallind}, as a sum of local terms. 

We now define three regions $A,B,C$ such that $A$ is our subsystem of interest, and $B$ shields $A$ from $C$ with $l_0=\text{dist}(A,C)$, and also such that $A \cup B \cup C=\Lambda$. We also define the Gibbs state on these regions as $\tau_\nu=\frac{e^{-\beta H_\nu}}{Z_\nu}$, with $H_{\nu}$ the subset of Hamiltonian terms with support in the respective $\nu \in \{A,B,C \}$ and $Z_\nu= \text{Tr}[e^{-\beta H_\nu}]$. Notice that these states are different from the respective marginals of the global Gibbs state $\sigma_{\nu}=\tr_{\nu^c} [\sigma]$.  Also $\tau_{AB},Z_{AB}$ are those defined on $A \cup B$.

Consider an arbitrary initial state on the lattice $\rho$, with $\rho_{AB}=\tr_C[\rho]$. A direct calculation shows that
\begin{equation}
    D(\rho_{AB} \otimes \tau_C \| \sigma)= D(\rho_{AB} \| \tau_{AB})+\log{\frac{Z}{Z_{AB}Z_C}}+\text{Tr}[(\rho_{AB} \otimes \tau_C)H_I],
\end{equation}
where $H_I=H-H_{AB}-H_C$, the interaction between $C$ and $AB$. Notice that $D(\rho_{AB} \| \tau_{AB})= \mathcal{O}(\vert A \vert + \vert B \vert)$, so that
\begin{equation} \label{eq:relent}
     D(\rho_{AB} \otimes \tau_C \| \sigma) \le D(\rho_{AB} \| \tau_{AB}) + \mathcal{O}\left(\vert \partial_{BC} \vert \right) \le \mathcal{O}(\vert A \vert + \vert B \vert),
\end{equation}
where $\vert \partial_{BC} \vert$ is the size of the boundary between $B$ and $C$, which by definition is smaller than $\vert B \vert$.

We now consider the time evolution of this state. By data processing, we have
\begin{equation}
    D( e^{t \mathcal{L}}(\rho_{AB} \otimes \tau_C) \| \sigma) \ge D( \tr_{BC}[e^{t \mathcal{L}}(\rho_{AB} \otimes \tau_C)] \| \sigma_A).
\end{equation}
We can compare this state with the one evolved under $\mathcal{L}_{AB}$,
\begin{align}
    \| \tr_{BC}[e^{t \mathcal{L}}(\rho_{AB} \otimes \tau_C)]  - \tr_{B}[e^{t \mathcal{L}_{AB}}(\rho_{AB} )] \|_1 &= \max_{\| O_A \|=1} \left \vert \Tr[O_A (e^{t \mathcal{L}}(\rho_{AB} \otimes \tau_C)  - e^{t \mathcal{L}_{AB}}(\rho_{AB} ) \otimes \tau_C)]  \right\vert \\
    & = \max_{\| O_A \|=1}  \left \vert \text{Tr}[(e^{t \mathcal{L}^\dagger}(O_A)- e^{t \mathcal{L}_{AB}^\dagger}(O_A))\rho_{AB} \otimes \tau_C]  \right \vert\\ & \le \max_{\| O_A \|=1} \| e^{t \mathcal{L}^\dagger}(O_A)- e^{t \mathcal{L}_{AB}^\dagger}(O_A) \| \le e^{-\Omega (l_0-vt)},
\end{align}
where we have used the definition of the $1$-norm and the Lieb-Robinson bound. Similarly, we also obtain
\begin{align}
    \| \tr_{BC}[e^{t \mathcal{L}}(\rho)]- \tr_{BC}[e^{t \mathcal{L}_{AB}}(\rho_{AB} \otimes \tau_C)] \|_1 \le e^{-\Omega (l_0-vt)}, 
\end{align}
so that by the triangle inequality,
\begin{equation}
\| \tr_{BC}[e^{t \mathcal{L}}(\rho_{AB} \otimes \tau_C)]  - \tr_{BC}[e^{t \mathcal{L}}(\rho)]  \|_1  \le e^{-\Omega (l_0-vt)}.
\end{equation}
From continuity, this translates into a bound on relative entropy
\begin{align} \label{eq:contrelent}
  \left  \vert D( \tr_{BC}[e^{t \mathcal{L}}(\rho_{AB} \otimes \tau_C)] \| \sigma_A ) - D( \tr_{BC}[e^{t \mathcal{L}}(\rho)] \| \sigma_A )  \right \vert \le \varepsilon \log \| \sigma_A^{-1} \| + 2h\left (\frac{\varepsilon}{1+\varepsilon}\right),
\end{align}
where $\varepsilon= e^{-\Omega (l_0-vt)}$ and $h(p)$ is the binary entropy. Notice that $\log \| \sigma_A^{-1} \|=\mathcal{O} \left( \vert A \vert \right)$.

Additionally, by assumption, $\mathcal{L}$ is such that it obeys rapid mixing with constant $\alpha$ independent of system size.
Hence, applied to the initial state $\rho_{AB}\otimes \tau_C$, it implies that 
\begin{equation}
    D(e^{t \mathcal{L}}(\rho_{AB}\otimes \tau_C) \| \sigma ) \le D( \rho_{AB}\otimes \tau_C \| \sigma) e^{-\alpha t}.
\end{equation}
Let us choose $l_0$ such that $e^{-\Omega (l_0-vt)} \le e^{-\alpha t}$, which is possible as long as the radius of $B$ increases linearly as $l_0 \propto t$. Data processing, together with Eq. \eqref{eq:contrelent} and Eq. \eqref{eq:relent}, then imply that
\begin{align}
D( \tr_{BC}[e^{t \mathcal{L}}(\rho)] \| \sigma_A ) \le \mathcal{O}\left ( \vert A \vert + t^{\kappa+1} \right) \times e^{-\alpha t},
\end{align}
which finishes the proof.
\end{proof}

Eq. \eqref{eq:localrapid} captures the most easily measurable character of the thermalization process, which occurs even in the thermodynamic limit. If we look at an individual subsystem, both the environment and the remainder of the system $A^c$ can be seen as the entire bath that the subsystem $A$ is coupled to. There is a crucial difference, however, between the two parts of this extended bath: while $A$ is weakly coupled to the external environment, it is strongly coupled to $A^c$. This means that, as Eq. \eqref{eq:localrapid} shows, the thermalization happens towards the state $\sigma_A$, often called the \emph{mean force} Gibbs state \cite{art:MeanForce}, instead of to the bare Gibbs state $\sigma^A=\frac{e^{-\beta H_A}}{Z_A}$ with $H_A=\sum_{\text{supp}(h_X) \subset A} h_X$. A similar global-to-local reduction of rapid mixing was already proven in \cite{art:StabilityCubitt_2015} for the $1$-norm.

\subsection{Wasserstein distance and transportation cost inequalities}

A distance on $\mathcal{D}(\mathcal{H}_\Gamma)$, which is important in what follows is the \textit{quantum Wasserstein distance of order 1} \cite{art:quantumWassersteinDistance1} between two finite dimensional quantum states $\rho,\sigma\in\mathcal{D}(\mathcal{H}_\Gamma)$. It is defined as
\begin{align}
    W_1(\rho,\sigma)&\equiv\|\rho-\sigma\|_{W_1}:= \label{def:WassersteinDistance1}\\&\frac{1}{2}\min\left\{\sum_{i\in\Lambda}\|X^{(i)}\|_1\Bigg|\Tr[X^{(i)}]=0, X^{(i)*}=X^{(i)}, \tr_iX^{(i)}=0 \ \forall i\in\Gamma, \rho-\sigma=\sum_{i\in\Lambda}X^{(i)}\right\}. \notag
\end{align} 
Its dual norm with respect to the Hilbert-Schmidt inner product distance is the \textit{Lipschitz distance} \cite{art:quantumWassersteinDistance1}, i.e. for any self-adjoint observable $A\in\mathcal{B}(\mathcal{H}_\Gamma)$
\begin{align}
    \|A\|_L:=\max\{\Tr[AX]|\Tr[X]=0, X=X^*, \|X\|_{W_1}\leq1\} = 2\max_{i\in\Gamma}\min_{A^{(i)}\in\mathcal{B}(\mathcal{H}_{\Gamma\setminus\{i\}})}\|A-\1_i\otimes A^{(i)}\|,
\end{align}where $\1_i\in\mathcal{B}(\mathcal{H}_i)$ is the identity on system $i$ and $A^{(i)}$ does not act on system $i$.
Thus by definition it holds that $|\Tr[AX]|\leq \|X\|_{W_1}\|A\|_L$ for suitable $X,A$. For a thorough overview and some properties see \cite{art:quantumWassersteinDistance1}.
From the existence of an MLSI constant $\alpha$ independent of the spin-system size $|\Gamma|$, one can show that the following transport cost inequality holds with a transport cost $c^\prime=c\frac{|\Gamma|}{\alpha}$ linear in spin-system size $|\Gamma|$ 
\cite{art:De_PalmaRouzeConcentrationInequalities}.
\begin{align}\label{equ:TransportInequality}
    \|\rho-\sigma\|_{W_1}\leq \sqrt{c^\prime D(\rho\|\sigma)}.
\end{align} 
For a proof of this see \cite[Proposition 16 and Theorem 5]{art:De_PalmaRouzeConcentrationInequalities} and \cite[Theorems 3 and 4]{art:WassersteinDistancesDattaRouze}. The fact that the transport cost scales linearly in $|\Gamma|$ has the following important consequences.

\subsubsection{Tighter bounds on the entropy difference and convergence via relative entropy}
The first application of this concerns bounding the von-Neumann entropy difference between two quantum states $\rho,\sigma$ by their relative entropy.
\begin{corollary}
Let $\rho\in\mathcal{D}(\mathcal{H}_\Gamma)$ be arbitrary and $\sigma\equiv\sigma^{\Gamma}$ the invariant Gibbs state in $\Gamma$. The following two bounds hold
\begin{align}
    i) \ & |S(\rho)-S(\sigma)|\leq g\left(c\sqrt{\frac{|\Gamma|}{\alpha}}\sqrt{D(\rho\|\sigma)}\right)+c\sqrt{\frac{|\Gamma|}{\alpha}}\log{(d^2|\Gamma|)}\sqrt{D(\rho\|\sigma)} \notag \\ &\hspace{2.3cm}= \mathcal{O}(\sqrt{|\Gamma|}\log |\Gamma|)_{|\Gamma|\to\infty}\sqrt{D(\rho\|\sigma)},
    \label{equ:EntropyDifferenceBound1}\\
    ii) \ & |S(\rho_t)-S(\sigma)| = \mathcal{O}\left(|\Gamma|\log |\Gamma|,\frac{e^{-\frac{\alpha t}{2}}}{\sqrt{\alpha}}\right),
\end{align} where $g(t)=(t+1)\log(t+1)-t\log t=o(t)_{t\to\infty}$,  $\rho_t:=e^{t\mathcal{L}_{\Lambda*}}(\rho)$, and $c$ is some constant independent of $|\Gamma|$ depending only on the locality of the Lindbladian $\mathcal{L}^D_\Lambda$.
\end{corollary}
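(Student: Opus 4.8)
The plan is to derive both bounds from the transportation cost inequality \eqref{equ:TransportInequality} combined with the known continuity estimate of the von Neumann entropy with respect to the quantum Wasserstein-$1$ distance from \cite{art:quantumWassersteinDistance1}. That estimate says that for any two states $\rho,\sigma\in\mathcal{D}(\mathcal{H}_\Gamma)$ on $|\Gamma|$ qudits of dimension $d$ one has $|S(\rho)-S(\sigma)|\leq \|\rho-\sigma\|_{W_1}\log(d^2|\Gamma|)+g(\|\rho-\sigma\|_{W_1})$, with $g(t)=(t+1)\log(t+1)-t\log t$ as in the statement. The rest is substitution and elementary monotonicity/asymptotics.

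For part i), first I would observe that the right-hand side of the continuity estimate is monotonically non-decreasing in $\|\rho-\sigma\|_{W_1}$: the term $t\mapsto t\log(d^2|\Gamma|)$ is increasing, and $g'(t)=\log(1+1/t)>0$. Then I would insert \eqref{equ:TransportInequality} in the form $\|\rho-\sigma\|_{W_1}\leq\sqrt{c'D(\rho\|\sigma)}$ with $c'=c|\Gamma|/\alpha$, which is legitimate precisely because $\alpha$ is independent of $|\Gamma|$; monotonicity then turns the continuity estimate into \eqref{equ:EntropyDifferenceBound1} directly. The claimed scaling $\mathcal{O}(\sqrt{|\Gamma|}\log|\Gamma|)\sqrt{D(\rho\|\sigma)}$ follows because $g(t)=t\log(1+1/t)+\log(t+1)=O(\log t)=o(t)$, so the $g$-contribution is dominated by the linear term $c\sqrt{|\Gamma|/\alpha}\log(d^2|\Gamma|)\sqrt{D(\rho\|\sigma)}$.

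For part ii), I would apply part i) to the evolved state $\rho_t=e^{t\mathcal{L}_{\Lambda*}}(\rho)$. By the MLSI with system-size-independent constant $\alpha$ one has $D(\rho_t\|\sigma)\leq e^{-\alpha t}D(\rho\|\sigma)$, and by the corollary of the previous subsection $D(\rho\|\sigma)=\mathcal{O}(|\Gamma|)$ for local Hamiltonians; hence $\sqrt{D(\rho_t\|\sigma)}\leq e^{-\alpha t/2}\,\mathcal{O}(\sqrt{|\Gamma|})$. Substituting this into \eqref{equ:EntropyDifferenceBound1}, the linear term becomes $\mathcal{O}(|\Gamma|\log|\Gamma|)\,e^{-\alpha t/2}/\sqrt{\alpha}$ and the $g$-term is again of strictly smaller order, so collecting the estimates yields $|S(\rho_t)-S(\sigma)|=\mathcal{O}(|\Gamma|\log|\Gamma|,\,e^{-\alpha t/2}/\sqrt{\alpha})$.

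The main point to get right — rather than a genuine obstacle — is the precise form of the Wasserstein continuity bound for the entropy and the matching constants: one has to make sure that the $\log(d^2|\Gamma|)$ prefactor and the exact definition of $g$ agree with \cite{art:quantumWassersteinDistance1}, and that the transport cost constant $c'=c|\Gamma|/\alpha$ is exactly the one coming out of \cite{art:De_PalmaRouzeConcentrationInequalities,art:WassersteinDistancesDattaRouze}. Once those inputs are fixed, both displayed inequalities and their asymptotics follow immediately.
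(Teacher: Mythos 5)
Your proposal is correct and follows essentially the same route as the paper: the continuity bound for the entropy in Wasserstein-$1$ distance from \cite{art:quantumWassersteinDistance1}, combined with the transportation cost inequality \eqref{equ:TransportInequality} for part i), and then the integrated MLSI together with $D(\rho\|\sigma)=\mathcal{O}(|\Gamma|)$ for part ii). Your explicit monotonicity check before substituting the Wasserstein upper bound is a small detail the paper leaves implicit, but otherwise the arguments coincide.
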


These inequalities  represent a $\mathcal{O}\left(\frac{\sqrt{|\Gamma|}}{\log |\Gamma|}\right)$ improvement compared to Pinsker's inequality (see \cite{art:quantumWassersteinDistance1}) as 
\begin{equation}
    \|\rho-\sigma\|_{W_1}\leq \frac{|\Gamma|}{2}\|\rho-\sigma\|_{1}\leq \frac{|\Gamma|}{\sqrt{2}}\sqrt{D(\rho\|\sigma)} \, ,
\end{equation}
or to the following inequality from \cite{art:Reeb_2015Wolf}
\begin{equation}
  |S(\rho)-S(\sigma)|\leq \sqrt{3}\log(d^{|\Gamma|})\sqrt{D(\rho\|\sigma)}=\mathcal{O}(|\Gamma|)\sqrt{D(\rho\|\sigma)}  \, .
\end{equation}
The difference is that these apply to arbitrary quantum states $\rho,\sigma$, whereas our result in Eq.  \eqref{equ:EntropyDifferenceBound1} requires $\sigma$ to be the Gibbs state of a suitable Hamiltonian. 
Note that these entropy difference bounds are optimal in their scaling in $|\Gamma|$ up to logarithmic correction, since the entropy difference is an extensive quantity, i.e. it scales as $\mathcal{O}(|\Gamma|)$.

\begin{proof}
We first use the following continuity bound from \cite[Theorem 1]{art:quantumWassersteinDistance1}, that states that for any two density matrices $\rho,\sigma\in\mathcal{D}(\mathcal{H}_\Gamma)$
\begin{align}
    |S(\rho)-S(\sigma)|\leq g(\|\rho-\sigma\|_{W_1})+\|\rho-\sigma\|_{W_1}\log(d^2|\Gamma|),
\end{align} where $d$ is the local Hilbert space dimension, $\|\cdot\|_{W_1}$ the quantum Wasserstein distance of order 1 (see \eqref{def:WassersteinDistance1} for the definition), and $g(t)=(t+1)\log(t+1)-t\log t$. By the transport cost inequality from \cite[Proposition 16, Theorem 5]{art:De_PalmaRouzeConcentrationInequalities} and \cite[Theorems 3 and 4]{art:WassersteinDistancesDattaRouze}\footnote{Note that in that paper a different normalization convention for the quantum Wasserstein distance was used, as compared to here or the other references in this section}, which holds under the assumptions of the Corollary, we get Eq. \eqref{equ:TransportInequality}.
Combining this with the inequality just above gives i). For ii) use i), the bound $D(\rho\|\sigma^\Gamma)\leq\mathcal{O}(|\Gamma|)$ from above, and the MLSI in its integrated form  $D(\rho_t\|\sigma)\leq e^{-\alpha t}D(\rho\|\sigma)$.
\end{proof}

 
\subsubsection{Gaussian concentration bound} 
A system size independent cMLSI constant also yields the following Gaussian concentration bound.
\begin{corollary} 
Let $O\in\mathcal{B}(\mathcal{H}_\Lambda)$ be a $k$-local observable, i.e. 
\begin{align}
    O=\sum_{\underset{|X|\leq k}{X\subset\Lambda}}o_X,
    \label{def:LongrangeLocalObservables}
\end{align}
such that for all $i\in\Lambda$, \ $\sum_{X\subset\Lambda:X\ni i}\|o_X\|\leq g$, where each $o_X$ acts only non trivially on sites $X\subset\Lambda$. Let $\sigma$ be the Gibbs state of some geometrically-$2$-local, uniformly bounded, commuting Hamiltonian with uniform exponential decay of correlations, at any fixed inverse temperature $\beta>0$. Then for $r\geq 0$ it holds that
\begin{align}
    \mathbb{P}_\sigma(|O-\Tr[O\sigma]|\geq s)\leq 2\exp\left[-\frac{\alpha s^2}{\mathcal{O}(|\Gamma|)}\right].
    \label{equ:CorolarryGaussianConcentration}
\end{align}
\end{corollary}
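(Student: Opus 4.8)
The plan is to derive the Gaussian concentration inequality \eqref{equ:CorolarryGaussianConcentration} from the transportation cost inequality \eqref{equ:TransportInequality} via the standard Herbst argument, combined with a Lipschitz estimate for $k$-local observables in the Lipschitz norm $\|\cdot\|_L$ dual to the Wasserstein distance $W_1$. First, note that under the hypotheses of the corollary (geometrically-$2$-local, uniformly bounded, commuting Hamiltonian with uniform exponential decay of correlations), by \Cref{thm:mainGap} (or \Cref{thm:mainTemp} in the relevant regime) there is a system-size independent cMLSI constant $\alpha$ for the associated Davies generator, and hence by \cite{art:De_PalmaRouzeConcentrationInequalities} the transport cost inequality $\|\rho-\sigma\|_{W_1}\le\sqrt{c'D(\rho\|\sigma)}$ holds with $c'=c|\Gamma|/\alpha$.

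Second, I would bound the Lipschitz norm of $O$. Since $O=\sum_{|X|\le k}o_X$ and, for each site $i$, $\sum_{X\ni i}\|o_X\|\le g$, we have $\|O\|_L=2\max_i\min_{A^{(i)}}\|O-\1_i\otimes A^{(i)}\|\le 2\max_i\sum_{X\ni i}\|o_X\|\le 2g$, by choosing $A^{(i)}$ to discard exactly the terms $o_X$ with $i\in X$. This is the only place where the locality structure of $O$ enters, and the bound is uniform in $|\Gamma|$.

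Third, I would run the Herbst argument. For $t\in\R$, consider the tilted state $\rho_t$ with density proportional to $e^{tO/2}\sigma e^{tO/2}$ (or more simply work with the classical measure $\mathbb{P}_\sigma$ induced by measuring $O$, which commutes with $\sigma$ is not assumed — so one should work with $\rho_t\propto e^{tO/2}\sigma e^{tO/2}$ and use $\Tr[O\rho_t]=\frac{d}{dt}\log Z(t)$ with $Z(t)=\Tr[e^{tO/2}\sigma e^{tO/2}]$). One computes $D(\rho_t\|\sigma)=t\,\Tr[O\rho_t]-\log Z(t)$, and using the duality $|\Tr[O(\rho_t-\sigma)]|\le\|O\|_L\|\rho_t-\sigma\|_{W_1}\le\|O\|_L\sqrt{c'D(\rho_t\|\sigma)}$ together with the identity $\frac{d}{dt}\big(\frac{1}{t}\log Z(t)\big)=\frac{1}{t^2}D(\rho_t\|\sigma)$, one obtains the differential inequality $\frac{d}{dt}\big(\frac1t\log Z(t)\big)\le \frac{\|O\|_L^2 c'}{t^2}\cdot\frac{D(\rho_t\|\sigma)}{D(\rho_t\|\sigma)}$ — more carefully, $D(\rho_t\|\sigma)\le \|O\|_L^2 c'$ after absorbing, so $\log Z(t)\le t\Tr[O\sigma]+\frac{c'\|O\|_L^2}{4}t^2$, i.e. a sub-Gaussian moment generating function bound with variance proxy $\frac{c'\|O\|_L^2}{2}=\mathcal{O}(|\Gamma|/\alpha)$. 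A Markov/Chernoff bound then gives $\mathbb{P}_\sigma(|O-\Tr[O\sigma]|\ge s)\le 2\exp(-\alpha s^2/\mathcal{O}(|\Gamma|))$, which is \eqref{equ:CorolarryGaussianConcentration}.

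The main obstacle is the non-commutativity: $O$ and $\sigma$ need not commute, so one cannot directly reduce to a classical concentration statement, and the manipulations with $\rho_t\propto e^{tO/2}\sigma e^{tO/2}$ require care (in particular establishing $D(\rho_t\|\sigma)= t\Tr[O\rho_t]-\log Z(t)$ and the derivative identity relating $\log Z(t)$ to the relative entropy). This is exactly the content of the quantum Herbst argument as developed in \cite{art:De_PalmaRouzeConcentrationInequalities}, so in practice the proof reduces to: (i) invoking the system-size independent cMLSI from the main theorems, (ii) the Lipschitz bound $\|O\|_L\le 2g$ proven above, and (iii) citing the general concentration-from-transport-cost machinery of \cite{art:De_PalmaRouzeConcentrationInequalities} (see also \cite{art:WassersteinDistancesDattaRouze}), with $c'=\mathcal{O}(|\Gamma|/\alpha)$, to conclude.
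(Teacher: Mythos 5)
Your overall architecture (system-size independent MLSI $\Rightarrow$ transportation cost inequality with $c'=\mathcal{O}(|\Gamma|/\alpha)$, then a Lipschitz bound for $k$-local observables, then the quantum Herbst/Chernoff machinery of \cite{art:De_PalmaRouzeConcentrationInequalities}) is the same as the paper's, but there is a genuine gap in step (ii): you bound $\|O\|_L\le 2g$, whereas the concentration theorem you are citing requires a bound on the Lipschitz norm of the \emph{modular-conjugated} observable $\Delta_\sigma^{1/2}(O)=\sigma^{1/2}O\sigma^{-1/2}$. This is not a cosmetic difference. Your intermediate identity $D(\rho_t\|\sigma)=t\Tr[O\rho_t]-\log Z(t)$ for $\rho_t\propto e^{tO/2}\sigma e^{tO/2}$ is false when $[O,\sigma]\neq 0$, because $\log(e^{tO/2}\sigma e^{tO/2})\neq tO+\log\sigma$; this is exactly why the quantum Herbst argument does not close with $\|O\|_L$ alone and why the Gaussian bound in \cite[Theorem 7 and Lemma 7]{art:De_PalmaRouzeConcentrationInequalities} is stated in terms of $\|\Delta_\sigma^{1/2}(O)\|_L$. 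Deferring the non-commutativity to the cited reference while feeding it the wrong Lipschitz constant leaves the proof incomplete.

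The missing (and in fact the only substantive) step is to show $\|\Delta_\sigma^{1/2}(O)\|_L=\mathcal{O}(1)_{|\Gamma|\to\infty}$, and this is where the hypotheses that the Hamiltonian is commuting, geometrically local, and uniformly bounded are used a second time, beyond establishing the MLSI — a sign your argument was missing something, since you never otherwise invoke them. Concretely, for each local term one has
\begin{align}
\Delta_\sigma^{1/2}(o_X)=\sigma^{1/2}o_X\sigma^{-1/2}=e^{-\frac{\beta}{2}\sum_{B:\,B\cap X\neq\emptyset}\Phi_B}\,o_X\,e^{\frac{\beta}{2}\sum_{B:\,B\cap X\neq\emptyset}\Phi_B},
\end{align}
because all Hamiltonian terms not overlapping $X$ commute through $o_X$ and cancel; since only $\mathcal{O}(1)$ many terms overlap a set of diameter $\le k$ on a graph of finite growth constant, this gives $\|\Delta_\sigma^{1/2}(o_X)\|\le\|o_X\|\exp(\beta Jc_{r,k,\nu}/2)$, and hence, repeating your site-by-site Lipschitz estimate for $\Delta_\sigma^{1/2}(O)$ in place of $O$, one gets $\|\Delta_\sigma^{1/2}(O)\|_L\le 4g\exp(\beta Jc_{r,k,\nu}/2)=\mathcal{O}(1)$. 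With that bound in hand, your conclusion via the transport cost $c'=c|\Gamma|/\alpha$ goes through exactly as you describe.
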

This means that quantum spin systems with geometrically-local, commuting Hamiltonians with uniform exponential decay of correlations give rise to a sub-Gaussian random variable in their thermal equilibrium states for any observables of the above form. 
This bound reproduces the scaling of independent random variables from Hoeffding's inequality, and is thus optimal. Eq. \eqref{equ:CorolarryGaussianConcentration} constitutes a tightening in terms of its $|\Gamma|$-dependence and generalization to a larger class of observables and $s$-values of the previous best known Gaussian concentration bound in \cite{art:StateOfTheArtGausianConcentrationBoundAnshu_2016}\footnote{The statement there \cite[Theorem 4.2]{art:StateOfTheArtGausianConcentrationBoundAnshu_2016} is equivalent to $\mathbb{P}_\sigma(|O-\Tr[O\sigma]|\geq s)\leq 2\exp\left[-\frac{\alpha s}{\mathcal{O}(\sqrt{|\Gamma|})}\right] $.}.

\begin{proof}
This follows from \Cref{thm:mainGap} and \cite[Theorem 7 and Lemma 7]{art:De_PalmaRouzeConcentrationInequalities}, when using the transport cost $c^\prime=c\frac{|\Gamma|}{\alpha}$ in Eq. \eqref{equ:TransportInequality} and the fact that $\|\Delta^{\frac{1}{2}}_\sigma(O)\|_L\leq 4gC=\mathcal{O}(1)_{|\Gamma|\to\infty}$, since
\begin{align}
    \|\Delta^{\frac{1}{2}}_\sigma(O)\|_L &\leq 2\max_{i\in\Lambda}\left\|\Delta_\sigma^{\frac{1}{2}}(O)-\1^{(i)}_d\otimes\hat{\text{tr}}_i\Delta_\sigma^{\frac{1}{2}}(O)\right\| \\
    &=2\max_{i\in\Lambda}\left\|\sum_{X:|X|\leq k}\Delta_\sigma^{\frac{1}{2}}(o_X)-\1^{(i)}_d\otimes\hat{\text{tr}}_i\sum_{X:|X|\leq k}\Delta_\sigma^{\frac{1}{2}}(o_X)\right\| \\ &=2\max_{i\in\Lambda}\left\|\sum_{X:X\partial\ni i, |X|\leq k}\left(\Delta_\sigma^{\frac{1}{2}}(o_x)-\1^{(i)}_d\otimes\hat{\text{tr}}_i\Delta_\sigma^{\frac{1}{2}}(o_x)\right)\right\| \\
    &\leq 2\max_{i\in\Lambda}\sum_{X:X\partial\ni i, |X|\leq k}2\left\|\Delta_\sigma^{\frac{1}{2}}(o_x)\right\| \leq 4\max_{i\in\Lambda}\sum_{X:|X|\leq k, X\partial\ni i}\|o_X\|\left\|\text{exp}\left(\frac{\beta}{2}\sum_{\overset{B:B\cap X\neq\emptyset}{\text{diam}(B)\leq r}}\Phi_B\right)\right\| \\ &\leq 4g\text{exp}\left(\frac{\beta Jc_{r,k,\nu}}{2}\right) \equiv 4gC,
\end{align} where $\hat{\text{tr}}_i$ is the normalized partial trace. The last two inequalities follow from
\begin{align}
    \|\Delta_\sigma^{\frac{1}{2}}(o_X)\| &=\|\sigma^{\frac{1}{2}}o_X\sigma^{-\frac{1}{2}}\| = \|e^{-\frac{\beta}{2}\sum_{B\cap X\neq\emptyset}\Phi_B}o_Xe^{\frac{\beta}{2}\sum_{B\cap X\neq\emptyset}\Phi_B}\| \\ &\leq \|o_X\| \|\text{exp}\left(\frac{\beta}{2}\sum_{\overset{B:B\cap X\neq\emptyset}{|X|\leq k, \text{diam}(B)\leq r}}\Phi_B\right)\| \leq \|o_X\|\text{exp}\left(\frac{\beta J c_{r,k,\nu}}{2}\right).
\end{align}
Here $c_{r,k,\nu}$ is a constant which depends only on the locality of $O$, the geometric locality $r$ of the Hamiltonian, and the growth constant $\nu$ of the graph. 
\end{proof}

\subsubsection{Ensemble equivalence under (long-range) Lipschitz observables}
The canonical ensemble with inverse temperature $\beta$ is given by the Gibbs state $\sigma_\beta$, where we now write the temperature dependence explicitly. 

Let $\sigma_{E,\delta}$ be the microcanonical ensemble with average energy $E$ defined on an energy-shell width $\delta$, so that if $P$ is the spectral measure of the Hamiltonian $H=\sum_{E}EP(E)\equiv \sum_{m}E_m\ketbra{E_m}{E_m}$ then
\begin{align}
    \sigma_{E,\delta}:=\frac{P((E-\delta,E])}{\Tr[P((E-\delta,E])]}=\frac{1}{\mathcal{N}_{E,\delta}}\sum_{E_m\in(E-\delta,E]}\ketbra{E_m}{E_m}.
\end{align}
Now, let 
$E=\text{arg}\max_{E\in\mathbb{R}}\left(e^{-\beta E}\mathcal{N}_{E,\delta}\right)$, where $\mathcal{N}_{E,\delta}:=\Tr[P((E-\delta,E])]$ is the number of eigenstates in the energy interval $(E-\delta,E]$. This defines the average energy of the microcanonical ensemble corresponding to the energy of $\sigma_\beta$ \cite{art:Tasaki2018,art:KuwaharaSaitoETHClustering_2020}.

Two ensembles represented respectively by the families of states $\{\sigma_1^\Gamma,\sigma_2^\Gamma\}_{\Gamma\subset\subset\Lambda}$ are said to be \textit{equivalent} if, in the thermodynamic limit, they produce the same expectation values on averaged geometrically-local observable $\frac{O}{|\Gamma|}=\frac{1}{|\Gamma|}\sum_{i=1}^{|\Gamma|} O_i$, with $\|O_i\|\leq g$ \cite{art:KuwaharaSaitoETHClustering_2020}. That is, if for any such observable
\begin{align}
    \left|\Tr\left[\sigma_1^\Gamma \frac{O}{|\Gamma|}\right]-\Tr\left[\sigma_2^\Gamma \frac{O}{|\Gamma|}\right]\right|= \frac{1}{|\Gamma|}|\Tr[(\sigma_1^\Gamma-\sigma_2^\Gamma) O]|\overset{|\Gamma|\rightarrow\infty}{\longrightarrow} 0 .
\end{align} 
In \cite{art:KuwaharaSaitoEnsembleEquivalence_2020} it was shown that the microcanonical and canonical ensembles are equivalent in this sense when the system satisfies suitable concentration bounds, such as Eq. \eqref{equ:CorolarryGaussianConcentration}. 
We extend this notion of equivalence in the 1D case to the more general class of Lipschitz observables, defined as $O\in\mathcal{B}(\mathcal{H}_\Gamma)$, s.t. $\|O\|_L<\infty$. These notably include long-range locally bounded, $k-$local observables of the form of Eq.  \eqref{def:LongrangeLocalObservables}.
\begin{corollary}[Corollary 2 of \cite{art:De_PalmaRouzeConcentrationInequalities} applied to our setting]
For any Lipschitz observable $O\in\mathcal{A}_\Lambda$, i.e. $O\in\mathcal{A}_\Lambda$ s.t. $\|O\|_L<\infty$ and for $\sigma_{E,\delta}, \sigma_\beta$ the micro- and canonical ensemble states with the same energy $E=\textup{arg}\max_{E\in\mathbb{R}}\left(e^{-\beta E}\mathcal{N}_{E,\delta}\right)$, respectively,  it holds that 
\begin{align}
    \frac{1}{|\Gamma|}|\Tr[\sigma_{E,\delta}O]-\Tr[\sigma_\beta O]|\leq \|O\|_L\hspace{1mm}o(1)_{|\Gamma|\to\infty},
\end{align}
for all $\sigma_E,\delta$ such that $\delta=e^{-\mathcal{O}(|\Gamma|)}$.
\end{corollary}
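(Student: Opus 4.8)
## Proof Plan

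The plan is to reduce the ensemble-equivalence statement to the Gaussian concentration bound that was already established (Eq.~\eqref{equ:CorolarryGaussianConcentration}), exactly as in the strategy of \cite{art:De_PalmaRouzeConcentrationInequalities, art:KuwaharaSaitoEnsembleEquivalence_2020}. First I would recall the transport cost inequality \eqref{equ:TransportInequality} with $c' = c|\Gamma|/\alpha$ linear in system size, valid under our standing assumptions (1D commuting nearest-neighbour Hamiltonian with uniform decay of correlations, so that \Cref{thm:mainGap} applies and $\alpha = \Omega(1)$). Combining this with the dual characterisation of the Wasserstein-1 norm, for any Lipschitz observable $O$ we get the deviation bound
\begin{align}
    |\Tr[\rho O] - \Tr[\sigma_\beta O]| \leq \|O\|_L \, \|\rho - \sigma_\beta\|_{W_1} \leq \|O\|_L \sqrt{c'\, D(\rho\|\sigma_\beta)}
\end{align}
for any state $\rho$. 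The point is that the prefactor $\|O\|_L$ plays the role that $g$ (the single-site bound) played in the $k$-local observable case, so the same machinery goes through verbatim for this wider class.

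Next I would control $D(\sigma_{E,\delta}\|\sigma_\beta)$ for the microcanonical state with the matched energy $E = \mathrm{arg}\max_E (e^{-\beta E}\mathcal{N}_{E,\delta})$. Writing $\sigma_\beta = e^{-\beta H}/Z_\beta$, one has $D(\sigma_{E,\delta}\|\sigma_\beta) = -S(\sigma_{E,\delta}) + \beta\Tr[\sigma_{E,\delta}H] + \log Z_\beta$. Since $\sigma_{E,\delta}$ is supported on the energy window $(E-\delta,E]$, its energy expectation is $E + O(\delta)$ and its entropy is $\log \mathcal{N}_{E,\delta}$. The variational/saddle-point choice of $E$ is precisely the one for which $\beta E + \log Z_\beta \approx \log \mathcal{N}_{E,\delta}$ up to corrections that are subextensive — this is the content of the energy-matching results \cite{art:Tasaki2018, art:KuwaharaSaitoETHClustering_2020}; more quantitatively, the Gaussian concentration bound \eqref{equ:CorolarryGaussianConcentration} applied to $H$ itself (an extensive $1$-local observable with $g = O(1)$) shows that the density of states is sharply peaked around the canonical mean energy with width $O(\sqrt{|\Gamma|})$, which pins down $D(\sigma_{E,\delta}\|\sigma_\beta) = o(|\Gamma|)$ once $\delta = e^{-\mathcal{O}(|\Gamma|)}$ is small enough to kill the $\log(1/\delta)$-type corrections to the entropy counting. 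Plugging $D(\sigma_{E,\delta}\|\sigma_\beta) = o(|\Gamma|)$ and $c' = \mathcal{O}(|\Gamma|)$ into the displayed inequality gives
\begin{align}
    \frac{1}{|\Gamma|}|\Tr[\sigma_{E,\delta}O] - \Tr[\sigma_\beta O]| \leq \|O\|_L \frac{\sqrt{\mathcal{O}(|\Gamma|)\cdot o(|\Gamma|)}}{|\Gamma|} = \|O\|_L\, o(1)_{|\Gamma|\to\infty},
\end{align}
which is the claim.

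The main obstacle is the precise estimate $D(\sigma_{E,\delta}\|\sigma_\beta) = o(|\Gamma|)$: one must argue carefully that the matched energy $E$ makes $\beta\Tr[\sigma_{E,\delta}H] + \log Z_\beta - \log\mathcal{N}_{E,\delta}$ subextensive, and that the exponentially small shell width $\delta = e^{-\mathcal{O}(|\Gamma|)}$ is exactly what is needed so that the number of eigenstates in the shell still grows like $e^{\Theta(|\Gamma|)}$ rather than collapsing (which would make the entropy, and hence the relative entropy, extensive). This is where the concentration bound \eqref{equ:CorolarryGaussianConcentration} for the energy observable does the real work, via a standard large-deviations / Legendre-transform argument relating the canonical free energy to the microcanonical entropy density; everything else is bookkeeping with the transport inequality and Hölder-type duality of $\|\cdot\|_{W_1}$ and $\|\cdot\|_L$.
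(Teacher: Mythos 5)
Your overall route is the same as the paper's: the paper proves this corollary by directly invoking Corollary 2 of \cite{art:De_PalmaRouzeConcentrationInequalities} together with the linearity of the transport cost $c'=c|\Gamma|/\alpha$ in system size, and your reduction via $|\Tr[\rho O]-\Tr[\sigma_\beta O]|\leq \|O\|_L\|\rho-\sigma_\beta\|_{W_1}\leq \|O\|_L\sqrt{c'D(\rho\|\sigma_\beta)}$ is exactly the content of that citation. The one place where you diverge — and where your argument is left incomplete — is the estimate $D(\sigma_{E,\delta}\|\sigma_\beta)=o(|\Gamma|)$. You propose to obtain it from the Gaussian concentration bound applied to $H$ plus a Legendre-transform argument, but no concentration input is needed: writing $D(\sigma_{E,\delta}\|\sigma_\beta)=\beta\Tr[H\sigma_{E,\delta}]+\log Z_\beta-\log\mathcal{N}_{E,\delta}\leq \log\bigl(Z_\beta/(e^{-\beta E}\mathcal{N}_{E,\delta})\bigr)$ and partitioning the spectrum into $O(\|H\|/\delta)$ shells of width $\delta$ gives $Z_\beta\leq (2\|H\|/\delta)\,e^{\beta\delta}\max_{E'}e^{-\beta E'}\mathcal{N}_{E',\delta}$, so the variational choice of $E$ yields $D(\sigma_{E,\delta}\|\sigma_\beta)\leq \beta\delta+\log(2\|H\|/\delta)$ directly. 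This is the elementary bookkeeping underlying the cited corollary, and it makes transparent exactly what condition on $\delta$ is required: one needs $\log(1/\delta)=o(|\Gamma|)$ for the final bound to be $o(1)$ after dividing by $|\Gamma|$ (your heuristic that the shell must still contain $e^{\Theta(|\Gamma|)}$ states is the right intuition, but the quantitative route is through the partition-function counting, not through energy concentration). Apart from this soft spot, the structure of your argument matches the intended proof.
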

\begin{proof}
This follows directly from \cite[Corollary 2]{art:De_PalmaRouzeConcentrationInequalities}  when employing the linearity of the transport cost in the system size. The proof follows from the transport cost inequality, which is used to bound the relative entropy between the microcanonical and canonical ensembles, which yields the result.
\end{proof}

\section{Outlook}

In this paper, we have addressed the problem of estimating the speed of convergence of certain quantum dissipative evolutions governed by Lindbladians to their steady states, given by Gibbs states of local, commuting Hamiltonians. We have particularly considered a large class of Davies generators with nearest neighbour, commuting Hamiltonians, and derived for them a positive (and, in some cases, system-size independent) MLSI from either a positive spectral gap in the Lindbladian, or at high enough temperature. 

After the completion of this work, the knowledge about existence of positive MLSIs, and therefore rapid mixing, for various physically relevant systems stands as follows:

\begin{itemize}
    \item For 1D systems, for Davies generators with k-local, commuting, translation-invariant Hamiltonians, at any finite, inverse temperature $\beta$, a log-decreasing MLSI $\alpha(\mathcal{L}_\Gamma^D) = \Omega ((\log |\Gamma| )^{-1})_{|\Gamma| \rightarrow \infty}$ was derived in \cite{art:EntropyDecayOf1DSpinChain-Cambyse, art:ImplicationsAndRapidTermalization-Cambyse}. Here, we have lifted that to a constant MLSI  $\alpha(\mathcal{L}_\Gamma^D) = \Omega (1)_{|\Gamma| \rightarrow \infty}$ and extended it to non-translation-invariant Hamiltonians. Both results yield rapid mixing.  
    \item For 2D systems, for Davies generators with k-local, commuting, Hamiltonians, at any finite, inverse temperature $\beta$, a positive spectral gap was shown to be equivalent to $\mathbb{L}_2$-clustering in \cite{art:QuantumGibbsSamplers-kastoryano2016quantum}. Here, we have shown that, whenever the correlation length is small enough, this is further equivalent to a square-root decreasing MLSI  $\alpha(\mathcal{L}_\Gamma^D) = \Omega (\sqrt{|\Gamma|}^{-1})_{|\Gamma| \rightarrow \infty}$, yielding a $\mathcal{O}(\sqrt{|\Gamma|} \log |\Gamma|)$-mixing time.  
    \item For high D systems and their Schmidt generators with nearest neighbour, commuting, Hamiltonians, a positive, constant MLSI $\alpha(\mathcal{L}_\Gamma^D) = \Omega (1)_{|\Gamma| \rightarrow \infty}$ was shown to hold at high-enough temperature in \cite{art:2localPaper}. Here, we have extended this to Davies generators of nearest neighbour, commuting, Hamiltonians at high-enough temperature, concluding for them rapid mixing. 
    \item For trees, we have provided the first results in the direction of rapid mixing, showing a log-decreasing MLSI $\alpha(\mathcal{L}_\Gamma^D) = \Omega ((\log |\Gamma| )^{-1})_{|\Gamma| \rightarrow \infty}$ whenever the correlation length is small enough. 
\end{itemize}
This shows definite progress in our understanding of the mixing times of Davies generators with commuting Hamiltonians, for which we aim to eventually have a complete picture. In this respect, there are still unsolved natural questions, such as:
\begin{itemize}
    \item Can we extend the results for high D systems to geometrically-$k$-local Hamiltonians? This would require a totally new approach, as the Schmidt generators cannot be defined beyond nearest neighbour interactions.
    \item Can we provide a more accurate result for specific models? For instance, for Davies generators associated to quantum double models, a positive spectral gap was shown to hold at any positive temperature in \cite{lucia2021thermalization}, extending the prior \cite{AlickiFannesHorodecki2009toriccode,Komar2016abelianquantumdoubles}. Therefore, it might be possible that a similar behaviour translates to the mixing time, yielding rapid mixing at every positive temperature.    
\end{itemize}
The natural extension of this problem to the context of non-commuting Hamiltonians is much less understood. As the Davies generator loses its desirable properties whenever the Hamiltonian considered is non-commuting, specially the locality, an alternative, physically-relevant Lindbladian needs to be considered in this case.  The recent \cite{art:chen2023noncommuting} introduced a possible quasi-local Lindbladian that by-passes the problems induced by the Davies, which some of the authors of this manuscript recently showed to be gapped in \cite{art:rouze2024gapnoncommuting}. The question whether this can be lifted to a positive MLSI, or any other means to derive rapid mixing, even for any specific case (e.g. 1D systems) remains open. An alternative approach to estimate mixing times of Lindbladians has just been introduced in \cite{fang2024hypocoercivity}, but it is unclear whether it could be employed in this context.

The problem of estimating mixing times of Lindbladians has been well explored in the quantum community in the past few years, with a notable increase of interest very recently. The reason for this partly stems from an upraise in the tools at our disposal to tackle this problem, and the appearance of more applications derived from it. In particular, rapid mixing constitutes a fundamental approach for quantum Gibbs sampling, as stated in \cite{art:QuantumGibbsSamplers-kastoryano2016quantum,art:chen2023fast,art:chen2023noncommuting,art:2localPaper}, which nevertheless can also be explored with alternative approaches as in \cite{tang2024separableGibbsstates}. It has also been recently shown \cite{trivedi2022simulators,trivedi2024simulators} that systems with rapid mixing are promising candidates for quantum advantage experiments in the context of analogue open quantum simulation.


\vspace{0.5cm}

\noindent \emph{Acknowledgments:} A.C.  acknowledges the support of the Deutsche Forschungsgemeinschaft (DFG, German Research Foundation) – Project-ID 470903074 – TRR 352. A.M.A. acknowledges support from the Spanish Agencia Estatal de Investigacion through the grants ``IFT Centro de
Excelencia Severo Ochoa CEX2020-001007-S" and ``Ram\'on
y Cajal RyC2021-031610-I", financed by
MCIN/AEI/10.13039/501100011033 and the European
Union NextGenerationEU/PRTR. This project was funded within the QuantERA II Programme that has received funding from
the EU’s H2020 research and innovation programme under the GA No 101017733.

\bibliographystyle{abbrv}
\bibliography{lit}

\appendix
\section{Proof of properties of the relation from Definition \ref{def:Relation}}\label{app:proofs_relation}

This section contains the proofs of Proposition \ref{prop:properties_relation}, Corollary \ref{cor:Relation}, and Proposition \ref{prop:RelationDmax}.

\begin{proof}[Proof of Proposition \ref{prop:properties_relation}]
\begin{enumerate} 
    \item[$1)$] The reflexivity is trivial.
    \item[$2)$] For symmetry, notice that the spectra and thus the spectral radii of $B^{-\frac{1}{2}}AB^{-\frac{1}{2}}-\1$ and $A^{\frac{1}{2}}B^{-1}A^{\frac{1}{2}}-\1$ are the same.  Since they are both normal, we have $\|B^{-\frac{1}{2}}AB^{-\frac{1}{2}}-\1\|\leq\epsilon$ which proves the second implication. Now the first implication follows via 
\begin{align}
    \|B^{\frac{1}{2}}A^{-1}B^{\frac{1}{2}}-\1\| = \|(B^{-\frac{1}{2}}AB^{-\frac{1}{2}})^{-1}-\1\|& = \Big\|\sum_{k=1}^\infty{(-1)^k}(B^{-\frac{1}{2}}AB^{-\frac{1}{2}}-\1)^k\Big\| \\
    &\leq \sum_{k=1}^\infty\|B^{-\frac{1}{2}}AB^{-\frac{1}{2}}-\1\|^k \\ &\leq \frac{\epsilon}{1-\epsilon},
\end{align}
where in the last line we used that by assumption $\|B^{-\frac{1}{2}}AB^{-\frac{1}{2}}-\1\|\leq\epsilon$.
 \item[$3)$] For transitivity, notice that 
\begin{align}
    \|A^{\frac{1}{2}}C^{-1}A^{\frac{1}{2}}-\1\| &= \|A^{\frac{1}{2}}B^{-\frac{1}{2}}(B^{\frac{1}{2}}C^{-1}B^{\frac{1}{2}}-\1+\1)B^{-\frac{1}{2}}A^{\frac{1}{2}}-\1\| \\
    &\leq \|A^{\frac{1}{2}}B^{-\frac{1}{2}}(B^{\frac{1}{2}}C^{-1}B^{\frac{1}{2}}-\1)B^{-\frac{1}{2}}A^{\frac{1}{2}}\| + \|A^{\frac{1}{2}}B^{-1}A^{\frac{1}{2}}-\1\| \\ &\leq \|(B^{\frac{1}{2}}C^{-1}B^{\frac{1}{2}}-\1)(B^{-\frac{1}{2}}AB^{-\frac{1}{2}})\| + \|A^{\frac{1}{2}}B^{-1}A^{\frac{1}{2}}-\1\| \\ &\leq \epsilon_2(1+\epsilon_1) + \epsilon_1 = \eta.
\end{align}
For the second inequality, note that $\|XYX^*\|=|\text{spec}(XYX^*)|=|\text{spec}(X^*XY)|\leq \|X^*XY\|$ holds, for $X,Y$ some operators with $Y$ self-adjoint, where $|\text{spec}(X)|$ denotes the spectral radius of $X$. Using this with $X=A^{\frac{1}{2}}B^{-\frac{1}{2}}$ and $Y=B^{\frac{1}{2}}C^{-1}B^{\frac{1}{2}}-\1$ then gives the second inequality. The third inequality follows from the assumptions and the fact that $\|B^{-\frac{1}{2}}AB^{-\frac{1}{2}}\|\leq 1+\epsilon_1$ by the second implication of symmetry.
\item[$4)$]  For the tensor multiplicativity, notice that
\begin{align}
    \|(A\otimes F)^{\frac{1}{2}}(\Tilde{A}\otimes \Tilde{F})^{-1}(A\otimes F)^{\frac{1}{2}}-\1\| &= \|A^{\frac{1}{2}}\Tilde{A}^{-1}A^{\frac{1}{2}}\otimes F^{\frac{1}{2}}\Tilde{F}^{-1}F^{\frac{1}{2}}-\1\otimes\1\| \\ &\leq 
    \|(A^{\frac{1}{2}}\Tilde{A}^{-1}A^{\frac{1}{2}}-\1)\otimes(F^{\frac{1}{2}}\Tilde{F}^{-1}F^{\frac{1}{2}}-\1)\| \\&\; \; \; +\|(A^{\frac{1}{2}}\Tilde{A}^{-1}A^{\frac{1}{2}}-\1)\otimes\1\|+\|\1\otimes(F^{\frac{1}{2}}\Tilde{F}^{-1}F^{\frac{1}{2}}-\1)\| \\ &\leq \epsilon_1\epsilon_2+\epsilon_1+\epsilon_2 = \eta.
\end{align}
\item[$5)$]  The following chain of implications holds
\begin{align}
    \|D^{\frac{1}{2}}E^{-1}D^{\frac{1}{2}}-\1\|\leq \epsilon &\Leftrightarrow (1-\epsilon)\,\1 \leq D^{\frac{1}{2}}E^{-1}D^{\frac{1}{2}} \leq (1+\epsilon)\,\1 \\ &\Leftrightarrow \frac{1}{1+\epsilon}\,\1 \leq D^{-\frac{1}{2}}ED^{-\frac{1}{2}} \leq \frac{1}{1-\epsilon}\,\1 \\
    & \Leftrightarrow \frac{D}{1+\epsilon} \leq E \leq \frac{D}{1-\epsilon} \, .
\end{align}
Since $P$ is self-adjoint the above implies that the following holds:
\begin{align}
    \frac{PDP}{1+\epsilon} \leq PEP \leq \frac{PDP}{1-\epsilon} \implies \frac{\tr_{\mathcal{H}^\prime}PDP}{1+\epsilon} \leq \tr_{\mathcal{H}^\prime}PEP \leq \frac{\tr_{\mathcal{H}^\prime} PDP}{1-\epsilon} \, ,
\end{align}
where we identified $P$ with $\1\otimes P$ for sake of clarity. The above holds also in reverse on $\supp{\tr_{\mathcal{H}^\prime}(PDP)}=\supp{\tr_{\mathcal{H}^\prime}(PEP)}$. The fact that the (operator algebraic) supports of these two are the same is directly implied by the previous sandwich inequality and positivity. So it implies that 
\begin{equation}
   \|(\tr_{\mathcal{H}^\prime}PDP)^{\frac{1}{2}}(\tr_{\mathcal{H}^\prime}PEP)^{-1}(\tr_{\mathcal{H}^\prime}PDP)^{\frac{1}{2}}-\1_{\supp(\tr_{\mathcal{H}^\prime}(PDP)})\|\leq \epsilon \, ,
\end{equation}
where $(\cdot)^{-1}$ represents the generalized inverse. 
\item[$6)$] First note that if $D\overset{\epsilon}{\sim} E$, then $D\overset{\mu}{\sim} \lambda E$ for $\lambda>0$, where $\mu=\frac{\epsilon}{\lambda}+|1-\frac{1}{\lambda}|$, since
\begin{align}
    \|D^{\frac{1}{2}}(\lambda E)^{-1}D^{\frac{1}{2}}-\1\| \le \lambda^{-1}\,\|D^{\frac{1}{2}}E^{-1}D^{\frac{1}{2}}-\1\|+\|\lambda^{-1}\1-\1\|.
\end{align}
Now since $D\overset{\epsilon}{\sim} E \implies \tr_{\mathcal{H}^\prime}PDP \overset{\epsilon}{\sim}\tr_{\mathcal{H}^\prime}PEP \Leftrightarrow \frac{\tr_{\mathcal{H}^\prime}PDP}{1+\epsilon}\leq \tr_{\mathcal{H}^\prime}PEP \leq \frac{\tr_{\mathcal{H}^\prime}PDP}{1-\epsilon} \implies 1+\epsilon \geq \frac{\Tr[PDP]}{\Tr[PEP]}\geq 1-\epsilon$ by the proof of property 5). It follows that
\begin{align}
\|\mathscr{X}\|&\equiv\Big\|\left(\frac{\tr_{\mathcal{H}^\prime}PDP}{\Tr[PDP]}\right)^{\frac{1}{2}}\left(\frac{\tr_{\mathcal{H}^\prime}PEP}{\Tr[PEP]}\right)^{-1}\left(\frac{\tr_{\mathcal{H}^\prime}PDP}{\Tr[PDP]}\right)^{\frac{1}{2}}-\1\Big\| \\&= \| (\tr_{\mathcal{H}^\prime}PDP)^{\frac{1}{2}}(\tr_{\mathcal{H}^\prime}PEP)^{-1}(\tr_{\mathcal{H}^\prime}PDP)^{\frac{1}{2}}\lambda-\1\|,
\end{align}
with $1-\epsilon \leq\lambda^{-1}=\frac{\Tr[PDP]}{\Tr[PEP]}\leq 1+\epsilon$.
Hence $\|\mathscr{X}\|\leq \epsilon\lambda^{-1}+|1-\lambda^{-1}|\leq \epsilon(1+\epsilon)+\epsilon=\epsilon(2+\epsilon)$.
\end{enumerate}
\end{proof}

\begin{proof}[Proof of Corollary \ref{cor:Relation}]
The corollary is easily proved by induction using the transitivity of the relation, spelled out here for convenience. Assume
$  A_i\overset{\epsilon}{\sim}A_{i+1}$ for all $i$ and set $A_0\overset{\eta_k}{\sim}A_k$, then
by transitivity we have the recursion 
\begin{align}
    \eta_{k+1} \leq \eta_k(1+\epsilon)+\epsilon.
\end{align}
For $k=1$, noting that $\eta_1=\epsilon$, this gives $\eta_2=\epsilon(1+\epsilon)+\epsilon=\epsilon^2+2\epsilon=(1+\epsilon)^2-1$.
For any $k$ we have by induction that 
\begin{align}
    \eta_{k+1} \leq \eta_k(1+\epsilon)+\epsilon = ((1+\epsilon)^{k}-1)(1+\epsilon)+\epsilon = (1+\epsilon)^{k+1}-1.
\end{align}
\end{proof}

\begin{proof}[Proof of Proposition \ref{prop:RelationDmax}]
Assume that $\supp(\omega)\subset\supp(\tau)$, else the statement clearly holds. We may now WLOG assume the supports to be equal.
    Set $\epsilon = \|\omega^{\frac{1}{2}}\tau^{-1}\omega^{\frac{1}{2}}\|$, then the lower bound follows from
    \begin{align}
        D_\text{max}(\omega\|\tau)=\log\|\omega^{\frac{1}{2}}\tau^{-1}\omega^{\frac{1}{2}}\|\leq\log\|\omega^{\frac{1}{2}}\tau^{-1}\omega^{\frac{1}{2}}-\1+\1\|\leq\log(1+\epsilon)\leq\epsilon.
    \end{align}
    For the upper bound see that for $X:=\omega^{\frac{1}{2}}\tau^{-1}\omega^{\frac{1}{2}} \geq 0$ we have
    \begin{align}
        \|\1-X\| &= \|\1-\exp\log X\| =\left\|\1-\sum_{n=0}^\infty \frac{(\log X)^n}{n!}\right\| = \left\|\sum_{n=1}^\infty \frac{(\log X)^n}{n!}\right\| = \left\|\sum_{n=1}^\infty \frac{(\log X)^n}{n!}\right\| \\ &= \left\|\log X\sum_{n=0}^\infty \frac{1}{n}\frac{(\log X)^n}{n!}\right\|\leq \|\log X\|\exp(\|\log X\|),
    \end{align} where the last inequality follows from subadditivity aubmultiplicativity of the norm and $\frac{1}{n}\leq 1$. This is the desired upper bound, hence we are done.
\end{proof}

\section{Proof of Lemma \ref{lemma:ImportantCommutingLemma1}}\label{sec:proof_lemma_commuting}

\begin{proof}[Proof of \Cref{lemma:ImportantCommutingLemma1}.]
Let us prove each item separately:
\begin{enumerate}
\item[$1)$]  If $\Lambda$ and $\Phi$ are chosen as in the statement of the lemma, then there exists a constant $c_{r,\nu}$ depending only on $r,\nu$ such that
\begin{align}
    \|E_{A,B}^{\pm1}\| &=\|\operatorname{e}^{\mp \beta H_{AB}}\operatorname{e}^{\pm\beta H_A\pm\beta H_B}\| \\&= \Big\|\exp \Big\{\mp\beta\sum_{X \cap A\neq \emptyset, X \cap B\neq \emptyset} \Phi_X \Big\}\Big\| \\ &\leq \exp \Big\{ \Big(\beta J \sum_{\overset{X \cap A\neq \emptyset, X \cap B\neq \emptyset}{ \text{diam}(X)\leq r}}1 \Big) \Big\} \\&\leq \exp(\beta J c_{r,\nu}\min\{|\partial_B A|,|\partial_A B|\})=:K_{A,B}=\exp{(\mathcal{O}(\beta|\partial_{A,B}|))}, 
\end{align}
where we are denoting $\partial_A B := (\partial B),  \cap A$, $\partial_B A := (\partial A),  \cap B$, and $\partial_{A,B}:=\min\{|\partial_B A|,|\partial_A B|\}$. Hence it also holds
\begin{align}
     &\|E_{A,B}^{\pm1}E_{AB,C}^{\pm1}\| \leq \|E_{A,B}^{\pm1}\| \|E_{AB,C}^{\pm1}\| \leq K_{A,B}K_{B,C}.
\end{align}
    \item[$2)$]  Consider the map
\begin{equation}
    \begin{array}{cccc}
        \mathcal{B}(\mathcal{H}_\Gamma)  & \to & \mathcal{B}(\mathcal{H}_{\Gamma\setminus B}) & \\
       Q  & \mapsto & \tr_{B}[\sigma^BQ] & =\frac{1}{\Tr[e^{-H_B}]}\tr_B[e^{-\frac{1}{2}H_B}Qe^{-\frac{1}{2}H_B}] \, ,
    \end{array}
\end{equation}     
   which is positive and unital. Note that if $Q>0$ is strictly positive, then so is $Q^{-1}>0$ and $\lambda_{\min}(Q)\1=\|Q^{-1}\|^{-1}\1\leq Q\leq \|Q\|\1=\lambda_{\max}(Q)\1$. Applying the aforementioned map to this inequality immediately gives that 
\begin{align}
   & \|Q^{-1}\|^{-1}\1_{B^c}\leq \tr_B[\sigma^BQ]\leq \|Q\|\1_{B^c}, \\
    &\|Q\|^{-1}\1_{B^c}\leq \tr_B[\sigma^BQ]^{-1}\leq \|Q^{-1}\|\1_{B^c},
\end{align} since inversion of two commuting operators is order reversing. We conclude by taking norms.
\item[$3)$] This is just a special case of $2)$ given $1)$, since each of the $E_{A,B}$ are strictly positive, as the potential $\Phi$ is commuting, e.g. by self-adjointness and the spectral theorem.
\end{enumerate}
\end{proof}

\section{Proof of strong local indistinguishability in 1D}\label{sec:proof_stronglocalind_1d}

This section is devoted to the proof of \Cref{thm:StrongLocalIndist1D}. 

\begin{proof}[Proof of \Cref{thm:StrongLocalIndist1D}:]
The proof follows a similar path as the proof of \cite[Proposition 8.1]{art:Bluhm2022exponentialdecayof} and uses local indistinguishability (see e.g. \cite[Theorem 5]{art:FiniteCorrelationLengthImpliesEfficient}, \cite[Corollary 2]{art:LocalityofTemperature}), which follows from the assumptions of the theorem. Thus, we know that
\begin{equation}
    \|\tr_{BC}\sigma^{ABC}-\tr_{B}\sigma^{AB}\|_1\leq K^\prime e^{-a^\prime l} \, ,
\end{equation}
 with $K^\prime,a^\prime>0$ depending only on $r,J\beta$.
 
Using the notation $E_{X,Y}:=e^{-H_{XY}}e^{H_X+H_Y}$ for $X,Y\subset I$ disjoint, we rewrite:
\begin{align}
    &(\tr_{BC}[\sigma^{ABC}])(\tr_{B}[\sigma^{AB}])^{-1} \\& \hspace{1.5cm}= \tr_{BC}[e^{-H_{ABC}}]\tr_{B}[e^{-H_{AB}}]^{-1}\Tr[e^{-H_{ABC}}]^{-1}\Tr[e^{-H_{AB}}] \\
    & \hspace{1.5cm} = \tr_{BC}[e^{-H_{ABC}}]e^{H_A}e^{-H_A}\tr_{B}[e^{-H_{AB}}]^{-1}\Tr[e^{-H_{ABC}}]^{-1}\Tr[e^{-H_{AB}}] \\[1mm]
    &\hspace{1.5cm} = \tr_{BC}[e^{-H_{BC}}e^{-H_{ABC}}e^{H_A}e^{H_{BC}}]\tr_{B}[e^{-H_B}e^{-H_{AB}}e^{H_A}e^{H_B}]^{-1}\Tr[e^{-H_{ABC}}]^{-1}\Tr[e^{-H_{AB}}] \\[1mm]
    &\hspace{1.5cm} = \tr_{BC}[\sigma^{BC}E_{A,BC}]\tr_B[\sigma^BE_{A,B}]^{-1}\underbrace{\frac{\Tr[e^{-H_{AB}}]\Tr[e^{-H_{BC}}]}{\Tr[e^{-H_{ABC}}]\Tr[e^{-H_{B}}]}} \\ &\hspace{1.5cm} \equiv \tr_{BC}[\sigma^{BC}E_{A,BC}]\tr_B[\sigma^BE_{A,B}]^{-1}\hspace{0.45cm}\cdot\hspace{0.45cm}\lambda_{ABC}^{-1}.
\end{align}
In the second line we multiplied by $\1=e^{H_A}e^{-H_A}$, which in the third line we can separate and pull into the partial traces, since neither of them trace out the region $A$. 
Thus we may rewrite
\begin{align}
    \|(\tr_{BC}\sigma^{ABC})(\tr_{B}\sigma^{AB})^{-1}-&\1\| \\ \leq &\|(\tr_{BC}\sigma^{ABC})(\tr_{B}\sigma^{AB})^{-1}- (\tr_{BC}[\sigma^{BC}E_{A,BC}])(\tr_B[\sigma^BE_{A,B}])^{-1}\| \\[1mm] &+ \|(\tr_{BC}[\sigma^{BC}E_{A,BC}])(\tr_B[\sigma^BE_{A,B}])^{-1}-\1\| \\[1mm]
    \leq  &\|\tr_{BC}[\sigma^{BC}E_{A,BC}]\|\|(\tr_B[\sigma^BE_{A,B}])^{-1}\||\lambda_{ABC}^{-1}-1| \\[1mm] &+ \|(\tr_B[\sigma^BE_{A,B}])^{-1}\| \|\tr_{BC}[\sigma^{BC}E_{A,BC}]-\tr_B[\sigma^BE_{A,B}]\| \, .
\end{align}
By  \cite[Corollary 4]{art:Bluhm2022exponentialdecayof}, it holds that $\|(\tr_B[\sigma^BE_{A,B}])^{-1}\|\leq C$ and $\|\tr_{BC}[\sigma^{BC}E_{A,BC}]\|\leq C$, for the same constant $C$ depending only on $r,J, \beta$. Furthermore in \cite[Step 2]{art:Bluhm2022exponentialdecayof}, it is proven that $\lambda_{ABC}-1$ decays exponentially, thus by the geometric series so does $\lambda_{ABC}^{-1}-1$, i.e. there exist $K^{\prime\prime},a^{\prime\prime}>0$, depending only on $r,J\beta$ s.t. 
\begin{equation}
  |\lambda_{ABC}^{-1}-1|\leq K^{\prime\prime}e^{-a^{\prime\prime}l}  \, .
\end{equation}
Hence, it remains to bound $\|(\tr_{BC}[\sigma^{BC}E_{A,BC}])-(\tr_B[\sigma^BE_{A,B}])^{-1}\|$ exponentially in $l$. To do this we adopt a similar strategy to \cite[Step 3]{art:Bluhm2022exponentialdecayof}. Let us split $B = B_1B_2$ into to halves, such that $|B_1|=|B_2|=l$, and write
\begin{align}\label{equ:AidIneq1}
    \|\tr_{BC}[\sigma^{BC}E_{A,BC}]-\tr_B[\sigma^BE_{A,B}]\| \leq &\|\tr_{BC}[\sigma^{BC}E_{A,BC}]-\tr_{BC}[\sigma^{BC}E_{A,B_1}]\|   \\[1mm] & + \|\tr_{BC}[\sigma^{BC}E_{A,B_1}]-\tr_B[\sigma^BE_{A,B_1}]\|  \\[1mm]& + \|\tr_B[\sigma^BE_{A,B_1}]- \tr_B[\sigma^BE_{A,B}]\| \, . 
\end{align}
Here we just used the triangle inequality of the operator norm twice.
For the first and third summand in Equation \eqref{equ:AidIneq1} use that the map $Q\mapsto\tr_X[\sigma^XQ]$ is a contraction in $B(\mathcal{H}_\Gamma)\to B(\mathcal{H}_{\Gamma\setminus X})$ by the Russo Dye theorem (see e.g lemma \ref{lemma:ImportantCommutingLemma1} or \cite[Section 3.4]{art:Bluhm2022exponentialdecayof}), and thus 
\begin{equation}
  \|\tr_{BC}[\sigma^{BC}E_{A,BC}]-\tr_{BC}[\sigma^{BC}E_{A,B_1}]\|\leq \|E_{A,BC}-E_{A,B_1}\| \, , 
\end{equation}
where the right-hand side is exponentially decaying in $|B_1|=l$ by \cite[Corollary 3.4 and Remark 3.5]{art:Bluhm2022exponentialdecayof}. The same holds true for 
\begin{equation}
 \|\tr_B[\sigma^BE_{A,B_1}]-\tr_B[\sigma^BE_{A,B}]\|\leq\|E_{A,B_1}-E_{A,B}\|  \, .   \end{equation}
 For the second summand in Equation \eqref{equ:AidIneq1} we use  \cite[Proposition 8.5]{art:Bluhm2022exponentialdecayof}, which gives 
\begin{equation}
  \|\tr_{BC}[\sigma^{BC}E^*_{A,B_1}]-\tr_B[\sigma^BE^*_{A,B_1}]\|\leq \Tilde{K}e^{-\Tilde{a}l}  
\end{equation}
for some $\Tilde{K},\Tilde{a}>0$, depending only on $r,J,\beta$. The same holds true for the adjoints, which is exactly the second summand. In total we have that there exist $K^{\prime\prime\prime},a^{\prime\prime\prime}>0$ depending only on $r$,$J, \beta$ such that 
\begin{align}
    \|\tr_{BC}[\sigma^{BC}E_{A,BC}]-\tr_B[\sigma^BE_{A,B}]\| \leq K^{\prime\prime\prime}e^{-a^{\prime\prime\prime}l}.
\end{align}
Now putting all of the above together we have our desired result.
\begin{align}
    \|(\tr_{BC}[\sigma^{ABC}])(\tr_B[\sigma^{AB}])^{-1}-\1\| &\leq \|\tr_{BC}[\sigma^{BC}E_{A,BC}]\|\|(\tr_B[\sigma^BE_{A,B}])^{-1}\||\lambda_{ABC}^{-1}-1| \\[1mm]&\quad + \|(\tr_B[\sigma^BE_{A,B}])^{-1}\| \|\tr_{BC}[\sigma^{BC}E_{A,BC}]-\tr_B[\sigma^BE_{A,B}]\| \\[1mm] &\leq C^2K^{\prime\prime}e^{-a^{\prime\prime}l}+CK^{\prime\prime\prime}e^{-a^{\prime\prime\prime}l} \, .
\end{align}
\end{proof}

\section{Proof of properties of Schmidt conditional expectations}\label{sec:proof_Schmidt}

\begin{proof}[Proof of Lemma \ref{lem:SchmidAlgebra}]
If $\dist(A_1,A_2)\geq2$, then by definition of the algebras $\mathcal{N}_{A_1}$ and $\mathcal{N}_{A_2}$ have only $\1 = \mathcal{N}_{\emptyset}=\mathcal{N}_{A_1\cap A_2}$ in common.
Since it holds, that $\partial A_1 \cup \partial A_2 = \partial(A_1\cup A_2)$ their union is given by 
\begin{align}
\mathcal{N}_{A_1}\cup\mathcal{N}_{A_2}&=\mathcal{B}(\mathcal{H}_{A_1})\otimes\mathcal{B}(\mathcal{H}_{A_2})\otimes\1_{\mathcal{H}_{(A_1)^c}}\otimes\1_{\mathcal{H}_{(A_2)^c}} \bigotimes_{i\in I_{A_1}}\bigotimes_{j\in J^{(i)}\setminus\{0\}}\mathcal{A}^j_{b_j}\otimes \bigotimes_{i\in I_{A_2}}\bigotimes_{j\in J^{(i)}\setminus\{0\}}\mathcal{A}^j_{b_j} \\ &= \mathcal{B}(\mathcal{H}_{A_1\cup A_2})\otimes\1_{\mathcal{H}_{(A_1\cup A_2)^c}}\otimes\bigotimes_{i\in I_{A_1\cup A_2}}\bigotimes_{j\in J^{(i)}\setminus\{0\}}\mathcal{A}^j_{b_j} \\ &= \mathcal{N}_{A_1\cup A_2}
\end{align}
For the other case we can WLOG assume $A_1\subset A_2$, hence $A_1\cup A_2=A_2$ and $A_1\cap A_2=A_1$. Then clearly $\mathcal{N}_{A_1}\cup \mathcal{N}_{A_2}=\mathcal{N}_{A_2}=\mathcal{N}_{A_1\cup A_2}$. Similarly $\mathcal{N}_{A_1}\cap\mathcal{N}_{A_2}=\mathcal{N}_{A_1}=\mathcal{N}_{A_1\cap A_2}$.
\end{proof}

\section{Proof of Case 2 in \Cref{lemma:OmegacMLSI}}\label{sec:proof_2ndMartinelli}

\begin{proof}[Proof of Case 2 in \Cref{lemma:OmegacMLSI}]
Let $N\in\mathbb{N}\setminus\{1\}$ and $\frac{1}{2}<\epsilon<1, s.t. \frac{1}{2}+\frac{1}{2N}\geq \epsilon$.
We enumerate a maximal set of partitions of $B_{x_j,L}$ into $\{C_j^{\tilde{l}},D_j^{\tilde{l},\frac{L}{N}}\}$, s.t.  $\height(C_j^{\tilde{l}})=\tilde{l},\height(D_j^{\tilde{l},\frac{L}{N}})=L+\frac{L}{N}-\tilde{l}\leq \epsilon L$ and s.t. different partitions have disjoint overlaps, i.e. $\left(C_j^{\tilde{l}_1}\cap D_j^{\tilde{l}_1,\frac{L}{N}}\right)\cap \left(C_j^{\tilde{l}_2}\cap D_j^{\tilde{l}_2,\frac{L}{N}}\right) = \emptyset$, whenever $\tilde{l}_1\neq \tilde{l}_2$.
There exist $\frac{L}{\frac{L}{N}}=N$ of these partitions, since their overlap is of 'width' $\frac{L}{N}$. Call these partitions $\{C_i,D_i\}^{N}_{i=1}$. Now we average over all the approximate tensorization results of these partitions to get 
\begin{align}
    D(\omega\|E^S_{B_{x_j,L}*}(\omega))&\leq \frac{1}{N}\sum_{i=1}^{N} \frac{1}{1-2\eta_{C_i,D_i}(\frac{L}{N})}\left[D(\omega\|E^S_{C_i*}(\omega))+D(\omega\|E^S_{D_i*}(\omega)) \right] \\
    &\leq \frac{1}{1-2\eta(\frac{L}{N})}\frac{1}{N}\sum_{i=1}^{N} C(\height(C_i))D_{C_i}(\omega)+C(\height(D_i))D_{D_i}(\omega) \\
    &\leq C(\epsilon L)\frac{1}{1-2\eta(\frac{L}{N})}\frac{1}{N}\Bigg(\sum_{i=1}^{N}\left(2D_{C_i\cap D_i}(\omega)+D_{C_i\setminus D_i \cup D_i\setminus C_i}(\omega) \right. \\ & \quad +\sum_{\underset{R_k\cap (C_i\setminus D_i)\neq \emptyset}{R_k\cap (C_i\cap D_i)\neq \emptyset}}D(\omega\|E^S_{R_k*}(\omega))+\sum_{\underset{R_k\cap (D_i\setminus C_i)\neq \emptyset}{R_k\cap (C_i\cap D_i)\neq \emptyset}}D(\omega\|E^S_{R_k*}(\omega)) \Bigg) \\
    &\leq C(\epsilon L)\frac{1}{1-2\eta(\frac{L}{N})}\frac{1}{N}\left(2D_{\bigcup_{i=1}^{N}(C_i\cap D_i)}(\omega)+\sum_{i=1}^{N}D_{(C_i\setminus D_i)\partial\cup (D_i\setminus C_i)\partial}(\omega)\right) \\
    &\leq C(\epsilon L)\frac{1}{1-2\eta(\frac{L}{N})}\frac{1}{N}\left(2+N\right)D_{B_{x_j,L}}(\omega),
\end{align} 
where in the second line we used the definition of $C(\height(\cdot))$ and
 in the third line we used that $\height(C_i),\height(D_i)\leq C(\epsilon L)$ since $\height(C_i)=\tilde{l},\height(D_i)=L+\frac{L}{N}-\tilde{l}\leq \epsilon L$.
 
Hence it follows that $C(L)\leq C(\epsilon L)\frac{1}{1-2\eta(\frac{L}{N})}(1+\frac{2}{N})=:C(\epsilon L)\tilde{f}(L)(1+\frac{2}{N})$.
Repeating this $M=\mathcal{O}(\log L)$ times s.t. $\epsilon^M L=l_0=:L_{\text{min},1}$ gives
\begin{align}
    C(L)&\leq C(l_0)(1+\frac{2}{N})^M\prod_{k=1}^M\tilde{f}(\epsilon^kL)\leq C(l_0)(1+\frac{2}{N})^M\prod_{k=0}^\infty \tilde{f}(l_0\epsilon^{-k})\\ &= \mathcal{O}\left(\left(1+\frac{2}{N}\right)^{\mathcal{O}(\log L)}\right)=\mathcal{O}(L),
\end{align}
where the infinite product converges since $\eta\left(\frac{L}{N}\right)$ 
is exponentially decaying in $L$ under the condition on the decay length $\xi,\xi^\prime$. Finally, notice that for exponential graphs (i.e. trees) we have that $|B_{x_j,L}|=b^L$ and hence $\mathcal{O}(L)=\mathcal{O}(\log|B_{x_j,L}|)$. Again, note that it is independent of the $x_j$ which we fixed, so the result follows.
The optimal $N$ for which this works is $N=2$, hence it suffices if we have decay when $\diam(C\cup D)=\diam B_{x_j,L}=L=2l=2\dist(C\setminus D,D\setminus C)$, which is the one we have in \underline{Case 2}.
\end{proof}

\end{document}